\documentclass[10pt, journal, cspaper, compsoc]{IEEEtran}
\usepackage{amsmath}
\usepackage{amssymb}
\usepackage{algorithm}
\usepackage{graphicx}
\usepackage{subfigure}
\usepackage{mdwlist}
\usepackage{algpseudocode}
\usepackage{caption}
\usepackage[left=1.2cm, right=1.2cm, top=1.2cm, bottom=1.2cm]{geometry} 
\setlength{\columnsep}{0.28cm}

\setcounter{MaxMatrixCols}{10}

\newtheorem{lemma}{Lemma}
\newtheorem{theorem}{Theorem}
\newtheorem{proof}{Proof}
\newtheorem{definition}{Definition}

\begin{document}

\title{Socio-Spatial Group Queries for Impromptu Activity Planning}
\author{Chih-Ya Shen, De-Nian Yang,\IEEEmembership{~Senior Member,~IEEE}, Liang-Hao Huang, \\ Wang-Chien Lee,\IEEEmembership{~Member,~IEEE}, and Ming-Syan Chen,\IEEEmembership{~Fellow,~IEEE}%
\IEEEcompsocitemizethanks{
\IEEEcompsocthanksitem \noindent C.-Y. Shen and M.-S. Chen are with the Research Center for Information Technology Innovation, Academia Sinica, Taipei, Taiwan. E-mail: chihya@citi.sinica.edu.tw, mschen@citi.sinica.edu.tw.
\IEEEcompsocthanksitem D.-N. Yang and L.-H. Huang are with the Institute of Information Science, Academia Sinica, Taipei, Taiwan. E-mail: dnyang@iis.sinica.edu.tw, lhhuang@iis.sinica.edu.tw.
\IEEEcompsocthanksitem W.-C. Lee is with the Department of Computer Science and Engineering, Pennsylvania State University, PA, USA. E-mail: wlee@cse.psu.edu.}}

\IEEEcompsoctitleabstractindextext{
\begin{abstract}
The development and integration of social networking services and smartphones have made it easy for individuals to
organize impromptu social activities anywhere and anytime. 
Main challenges arising in organizing impromptu activities are mostly due to the requirements of making timely invitations in accordance with the potential activity locations, corresponding to the locations of and the relationships among the candidate attendees.
Various combinations of candidate attendees and activity locations create a large solution space. Thus, in this paper, we propose Multiple Rally-Point Social Spatial Group Query (MRGQ), to select an appropriate activity location for 
a group of nearby attendees with tight social relationships. We first consider a special case of MRGQ, namely the Socio-Spatial Group Query (SSGQ), to determine a set of socially acquainted attendees while minimizing the total spatial distance to a specific activity location. We prove that SSGQ is NP-hard and formulate an Integer Linear Programming optimization model for SSGQ. We then develop an efficient algorithm, called SSGS, 
which employs effective pruning techniques to reduce the running time to determine the optimal solution. 
Moreover, we propose a heuristic algorithm for SSGQ to efficiently produce
good solutions. We next consider the more general MRGQ. Although MRGQ is NP-hard, the number of attendees in practice is usually small enough such that an optimal solution can be found efficiently.
Therefore, we first propose an Integer Linear Programming optimization model for MRGQ. We then design an efficient algorithm, called MAGS, which employs effective search space exploration and pruning strategies 
to reduce the running time for finding the optimal solution. We also propose to further optimize efficiency by indexing
the potential activity locations. A user study demonstrates the strength of using SSGS and MAGS over manual
coordination in terms of both solution quality and efficiency. Experimental results on real datasets show that our algorithms can process SSGQ and MRGQ efficiently and significantly outperform other baseline algorithms, 
including one based on the commercial parallel optimizer IBM CPLEX.

\end{abstract}
\begin{keywords}
Query Processing, Group Query, Spatial Indexing, Social Networks
\end{keywords}}

\maketitle

\section{Introduction\label{Intro}}

\baselineskip=11.3pt
The successful development and integration of social networking services and smartphones have driven the recent emergence of location-based social networking (LBSN) services. 
Such services, including applications on Foursquare, Meetup, Facebook, and Google+, allow users to connect with friends, comment on events and places (e.g., restaurants, theaters, stores, etc.), and share their happenings and current locations. This availability of users' locations and their social
information allows mobile users to instantly organize impromptu social activities anywhere anytime.

As an LBSN application, an impromptu activity planning service needs to account for both spatial and social factors. In other words, both the locations and friends considered need to be suitable for the activity, i.e., the location should be close to the participants so that they arrive in a timely manner, and the invited friends should already be acquainted with each other to ensure comity. 
Thus, a major challenge
for impromptu activity planning lies in factoring in the distances from invitees' current locations to the activity locations, along with their shared social connectivity. Note that close friends may not be located near
a specific activity location, while friends near a potential activity location may not enjoy tight social relationships. Moreover, 
when the number of candidate attendees increases, or when the number of activity locations grows, selecting the most
suitable attendees and activity location becomes tedious and time-consuming. Therefore, impromptu activity planning would benefit significantly from efficient
query processing algorithms that automatically recommend both attendees and an activity location.

\begin{figure}[tbp]
\centering
\includegraphics[scale=0.33]{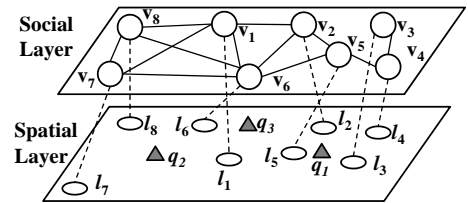}
\caption{Finding friends for impromptu social activity.}
\vspace{-0.7cm}
\label{FIG_IlluExample}
\end{figure}

\noindent \textbf{Motivating Example.} The interplay of social relationships among activity attendees and the activity locations creates significant
challenges for the organization of impromptu social activities. Figure \ref{FIG_IlluExample} shows a 
database of 8 candidate attendees $\{v_{1},..,v_{8}\}$ with three potential activity locations $Q=\{q_{1}, q_{2}, q_{3}\}$. 
The social relationships among the candidate attendees are captured as a social graph (shown as the social layer in the figure), while
the locations of the candidate attendees are shown as the spatial layer. Given a desired group size, $4$, and a social constraint where each attendee can only be unfamiliar with at most $1$ other attendee, an approach to select a group and the corresponding activity location with minimized total spatial distance is to issue a 4-nearest neighbor (4NN) query on each activity location. In the result, we obtain $F_{1}=\{v_{2}, v_{3}, v_{4},v_{5}\}$ with the activity location $q_{1}$. 
However, in this case, $F_{1}$ does not satisfy the required social constraint
because both $v_{2}$ and $v_{3}$ are unacquainted with more than $1$ other group member. Instead, if we focus on social tightness, we obtain group 
$F_{2}=\{v_{1},v_{6},v_{7},v_{8}\}$ with activity location $q_{2}$, where each attendee is familiar with all the other members. 
However, this group incurs a large spatial distance and thus is not suitable for an impromptu activity. In contrast, $F_{3}=\{v_{1},v_{2},v_{5},v_{6}\}$ with activity location $q_{3}$ is probably the most suitable solution because each attendee in $F_{3}$ is unacquainted with no more than $1$ other group member while incurring a small total spatial distance to $q_{3}$. 


In this paper, we propose a new query, namely Multiple Rally-Point Social Spatial Group Query (MRGQ), to determine a suitable activity location and a socially acquainted group which minimizes the total spatial distance to the activity location.
MRGQ seeks a set of most-suitable 
attendees with a corresponding activity location by considering both social and spatial factors of impromptu activity planning. 
MRGQ is beneficial for real social network applications (e.g., Facebook) and can integrate with group buying websites (e.g., Groupon) to provide \textit{social-aware} location-based advertisements. We will discuss these issues in Section \ref{scenarios}.
Here, we assume that the service provider has access to the users' underlying social relationships along with their current locations. Let $G=(V,E)$ be a social graph, where each vertex $v\in V$ is associated with a location $l_{v}$, and two mutually acquainted vertices $u$ and $v$ are connected by an edge $e_{u,v}$. Given a set of potential activity locations $Q=\{q_1,..,q_n\}$, the planned number of activity attendees $p$, the number of unacquainted people each attendee may have $k$, and the maximum spatial distance $t$ (i.e., spatial radius) from the chosen activity location to each of the selected attendees, MRGQ aims to find a set of $p$ attendees from the social graph and an activity location $q^*$ from the potential activity location list, such that the total distance from each attendee to the activity location $q^*$ is minimal, and the distance from each attendee to the activity location $q^*$ is bounded by $t$.\footnote{In most cases a user can specify $p$ and $Q$ according to the motivation of the corresponding group activity, such as a "buy three and get one free" coupon in a chain restaurant. While it may be more difficult for a user to specify the exact values of $k$ and $t$, one promising way is to let the user select the ranges of the two parameters. Accordingly, the algorithm returns multiple solutions with different $k$ and $t$ so that the user can choose the most desirable one.} Notice that MRGQ includes a social constraint (i.e., $k$) to ensure the familiarity 
between each attendee, i.e., each attendee can be unfamiliar with at most $k$ other people in the selected group. By setting $k$, 
the coordinator can freely adjust the social atmosphere of the activity to accommodate different types of social activities.
Formally, MRGQ is formulated as follows.

\vspace{+5pt}
\noindent \textbf{Problem: }Multiple Rally-Point Social Spatial Group Query (MRGQ).

\noindent \textbf{Given: }A social graph $G=(V,E)$, location $l_{v}$ for each $v\in V$, 
the number of attendees $p$, the set of potential activity locations $Q$, the familiarity constraint $k$, and the spatial radius $t$.

\noindent \textbf{Objective: }$MRGQ(p,Q,k,t)$ finds $\langle F, q^* \rangle$ where $F\subseteq V$, $q^*\in Q$, 
such that $|F|=p$, $\sum_{v\in F}d_{v,q^*}$ is minimal\footnote{$d_{v,q^*}$ is the spatial distance from $v$ to $q^*$.}, $d_{v,q^*}\leq t$, and $unfamiliar(v,F)$ $\leq k$\footnote{The number of vertices in $F$ which share no edge with $v$.}, $\forall v \in F$. 
\vspace{+5pt}

A straightforward approach for processing MRGQ is to enumerate all possible groups of $p$ attendees for each activity location and eliminate those not satisfying the constraints on social familiarity and spatial radius. Then, this approach  returns the pair of group and activity location which incur the minimum total spatial distance. This straightforward approach
needs to enumerate $|Q|\cdot C^{|V|}_{p}$ candidate pairs of groups and locations, entailing an enormous search space. Indeed, as we 
show in the next section, MRGQ is NP-hard. However, as the size of $p$ is relatively
small in most practical impromptu activity scenarios, the problem can be solved efficiently. 
By carefully exploring the social and spatial constraints in MRGQ, we develop several processing strategies to obtain the optimal solution efficiently. We systematically examine the search space to avoid examining all combinations of candidate attendees and the activity locations. We incrementally select attendees with the corresponding activity location by giving priority to those attendees (i) who are close to an activity location, and (ii) who are close friends. Obtaining a group which satisfies
both (i) and (ii) is non-trivial because an algorithm that addresses (i) should simultaneously choose suitable attendees and the nearest activity location. However, while achieving (i) may quickly obtain a group with small total spatial distance, it does not always result in a feasible group that satisfies the familiarity constraint. Alternatively, we can address (ii) by prioritizing the search for a group of attendees who know each other well. However, the group may not have the minimum spatial distance to the closest activity location. 
In summary, efficiently processing MRGQ requires carefully designed algorithms to select the attendees along with their 
nearby activity location while simultaneously satisfying the familiarity constraint.

To efficiently process MRGQ, we propose to index the attendees' locations and the activity locations. In addition, we design effective strategies for traversing the search space, including Socio-Spatial Ordering and All-Pair Distance Ordering, as well as a number of search space pruning rules, including Inner-Triangle Distance Pruning, Outer-Triangle Distance Pruning, Activity Location Distance Pruning, and Familiarity Pruning, to reduce the processing time. During the selection of the attendees and the activity location, we address both the spatial distance among the candidate locations, and from attendees to activity locations. Meanwhile, the social connectivity of the attendees is also carefully explored. As such, we effectively prune redundant search space to find the optimal solution efficiently.

The contributions of this paper are summarized as follows.

\begin{itemize*}
\item We identify the organization of impromptu social activities as a new social networking application and formulate a novel query, 
MRGQ, to obtain the optimal set of invitees and a suitable activity location. MRGQ is unique because it specifies the familiarity constraint among the invitees. We prove that the problem is NP-hard and inapproximable within any factor.

\item We consider a special case of MRGQ, namely SSGQ, for considering only a single activity location. 
We prove that SSGQ is NP-hard and propose SSGS with various
strategies for finding the optimal solution efficiently. 
%
In addition, we propose a heuristic algorithm for SSGQ, namely SSGMerge, which effectively exploits the structures of intermediate solutions, to obtain good solutions in polynomial time. We also propose an Integer Linear Programming (ILP) optimization model for SSGQ and demonstrate that SSGS outperforms ILP.

\item To efficiently process MRGQ, we propose to index the locations of candidate attendees and the activity locations and propose an efficient algorithm, namely MAGS, which enables various search space traversing and pruning strategies to find the optimal solution efficiently. We also propose an Integer Linear Programming (ILP) optimization model for MRGQ and demonstrate that MAGS outperforms ILP, even if it runs on a commercial integer programming optimizer with parallel computation.

\item 
We conduct a user study with 206
people. The results demonstrate that our proposed algorithms significantly outperform
manual coordination in terms of both solution quality and efficiency for both SSGQ and MRGQ. 
We also implement SSGQ in Facebook.

\item
We evaluate the performance of the proposed algorithms by conducting extensive experiments on real datasets. Experimental results manifest that SSGS and SSGMerge require much less time than the ILP optimization model with the commercial parallel optimizer IBM CPLEX \cite{CPLEX}. Likewise, for MRGQ, MAGS outperforms the baseline algorithms in terms of both solution quality and efficiency, and is much more efficient than the ILP optimization model.
\end{itemize*}

The rest of this paper is summarized as follows. Section \ref{analysis} analyzes MRGQ and proves that it is NP-hard. Section \ref{RelatedWork} introduces the related works. 
Section \ref{SSGQ} studies a special case of MRGQ, namely SSGQ and details the proposed algorithms. 
Section \ref{MRGQ} details the proposed algorithm to efficiently process MRGQ.
Section \ref{Exp} shows the results of our user study and experiments. Finally, Section \ref{Conclu} concludes this paper.

\section{Problem Analysis and Applications}\label{analysis}

An MRGQ includes four parameters, i.e., $p$, $Q$, $k$ and $t$, which respectively
determine the size of the answer group, activity locations, familiarity
constraint and spatial radius of the query, and all of which have
a significant impact on processing strategies. First, as
the size of group, $p$, increases, the solution space (which consists of all
candidate groups) grows rapidly. While we prove that processing MRGQ is
an NP-hard problem and thus very challenging,
it can still be processed efficiently since the size of $p$ is usually small in most
practical cases. Second, candidate attendees located close to a candidate activity
location $q_{i}$ could be prioritized for processing, as the search criteria aim to
minimize the total spatial distance from the selected attendees to $q_{i}$.
As the size of $Q$ increases, the search space also grows. Third, $k$
dictates the tightness of social relationships among members in the invited group. 
A smaller $k$ in MRGQ indicates that candidate attendees with tighter
social relationships should be given priority.
Finally, $t$ reflects the need to avoid selecting candidates that are
unacceptably far away from the selected activity location. These spatial and
familiarity constraints can be employed for pruning of unqualified candidate groups.
In the following, we first analyze the hardness of MRGQ and then discuss concrete application scenarios for MRGQ. 

\subsection{Problem Analysis}

We prove that MRGQ is NP-hard and inapproximable within any factor, i.e., no approximation algorithm exists for MRGQ. 

\begin{theorem} \label{MRGQ_NPHARD}
MRGQ is NP-hard and is inapproximable within any factor unless $P=NP$.
\end{theorem}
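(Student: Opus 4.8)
The plan is to reduce from \textbf{Clique}, which is NP-complete: given a graph $H$ and an integer $p$, decide whether $H$ contains a clique of size $p$. The key observation is twofold. First, with familiarity constraint $k=0$ every selected attendee must be adjacent to all others in the group, so a feasible group is exactly a $p$-clique; hence even deciding feasibility of MRGQ is NP-hard. Second, if we place every candidate attendee at the same point as the (unique) activity location, the optimal objective becomes $0$, which upgrades the NP-hardness of the decision version into inapproximability within any factor for the optimization version.

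Concretely, given a Clique instance $\langle H=(V,E),p\rangle$, I would build the MRGQ instance: let the social graph be $G=H$; introduce a single potential activity location $Q=\{q\}$ (so the same construction also works for SSGQ, the single-location special case); set $d_{v,q}=0$ for every $v\in V$; choose any spatial radius $t\ge 0$; set the group size to $p$; and set $k=0$. This transformation is polynomial. Under it, a pair $\langle F,q\rangle$ is feasible if and only if $|F|=p$ and every $v\in F$ is adjacent in $G=H$ to all other vertices of $F$, i.e., $F$ is a $p$-clique of $H$; and whenever a feasible pair exists, the objective $\sum_{v\in F}d_{v,q}$ equals $0$, so $\mathrm{OPT}=0$.

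Now suppose for contradiction that some polynomial-time algorithm $\mathcal{A}$ approximates MRGQ within a factor $\alpha$, where $\alpha$ may be any constant or even a polynomially bounded function of the input size; note that an approximation algorithm for MRGQ must respect the hard constraints on $t$ and $k$ and, by definition, must output a feasible solution whenever one exists. Run $\mathcal{A}$ on the instance above. If $H$ has a $p$-clique, then $\mathrm{OPT}=0$, so $\mathcal{A}$ returns a feasible solution of cost at most $\alpha\cdot\mathrm{OPT}=0$ — in particular a feasible $F$, which is a $p$-clique of $H$. If $H$ has no $p$-clique, the MRGQ instance is infeasible and $\mathcal{A}$ returns no solution. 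Hence $\mathcal{A}$ decides Clique in polynomial time, forcing $P=NP$. The same argument, applied without any approximation guarantee, shows there is no polynomial-time feasibility test for MRGQ unless $P=NP$, so MRGQ is NP-hard; and the $\mathrm{OPT}=0$ construction shows it is inapproximable within any factor.

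The step I expect to require the most care is the precise formalization of ``inapproximable within any factor'' for a constrained optimization problem: the whole content is that an $\mathrm{OPT}=0$ instance makes every multiplicative guarantee collapse to exactness, and one must argue cleanly that an approximation algorithm cannot output an infeasible answer on a feasible instance. A secondary, purely technical point is the reading of $unfamiliar(v,F)$: I would state explicitly that it counts the vertices of $F\setminus\{v\}$ that are non-adjacent to $v$, so that $k=0$ characterizes exactly the cliques; if one prefers a convention in which $v$ counts itself, a trivial padding of the Clique instance (e.g., adding a universal vertex and asking for a $(p{+}1)$-clique, with $k$ set accordingly) sidesteps the issue.
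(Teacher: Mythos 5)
Your NP-hardness half is essentially the paper's: a reduction from $p$-clique with a single activity location, $k=0$, and constant distances, so that a feasible group is exactly a $p$-clique. (The paper uses $d_{v,q}=1$ where you use $0$; this is immaterial for hardness.) Your inapproximability half, however, takes a genuinely different route. You collapse the optimum to $0$ by co-locating everyone with $q$, so that any multiplicative guarantee $\alpha\cdot\mathrm{OPT}=0$ forces exactness, and you additionally lean on the fact that your ``no'' instances are infeasible, so the approximation algorithm's output (or lack thereof) decides Clique. This is logically sound under the definition you state, but it rests on two degeneracies: a zero optimum, under which ``within any factor'' is vacuous for many standard formalizations of approximation ratio (which presuppose $\mathrm{OPT}>0$), and the NP-hardness of mere feasibility, which makes the statement ``no approximation algorithm exists'' true for the uninteresting reason that no algorithm can even produce a feasible point. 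The paper instead uses a gap-introducing reduction: it appends a complete graph $K_p$, joined to all of $G_c$, whose vertices sit at distance $l\gg p$ from $q$. Every instance is then trivially feasible ($K_p$ itself is a solution one can write down by inspection) and the optimum is always positive, yet it equals $p$ when $G_c$ has a $p$-clique and is at least $p-1+l$ otherwise; since $l$ is arbitrary, no finite ratio can separate the cases. That construction is immune to both objections above, so it establishes the stronger and more informative statement. If you want your write-up to match the claimed theorem in full force, you should either adopt a padding of this kind or explicitly restrict the claim to your weaker notion of inapproximability.
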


\begin{proof}
We prove that MRGQ is NP-hard with the reduction from $p$-clique. Decision
problem $p$-clique, given a graph $G_{c}$, determines whether the graph
contains a clique, i.e., a complete graph of $p$ vertices and with an edge
connecting every two vertices. In MRGQ, let $G=G_{c}$, $k=0$, $t=\infty $%
, $Q=\{q\}$ and $d_{v,q}=1$ for every vertex $v$ $\in V$. We first prove the
necessary condition. If $G_{c}$ contains a $p$-clique, there must exist a
group with the same vertices in the $p$-clique such that every person has
social relationship with all the other attendees in the group, and the total
spatial distance is $p$. We then prove the sufficient condition. If $G$ in
MRGQ has a group of size $p$ and $k=0$, $G_{c}$ in problem $p$%
-clique must contain a solution of size $p$, too. Therefore, MRGQ is NP-hard.

We prove the inapproximability of MRGQ with a gap-introducing reduction from the $p$-clique
problem. Given a graph $G_{c}$, the decision problem $p$-clique determines
whether the graph contains a clique of size $p$, i.e., a complete graph of $%
p $ vertices with an edge connecting every two vertices. For any instance of
the $p$-clique problem in graph $G_{c}$, we construct an instance of MRGQ as
follows. The input graph of MRGQ, $G$, is constructed by adding a complete
graph $K_{p}$ with $p$ vertices to $G_{c}$, i.e., $G=G_{c}\cup K_{p}$, where
each vertex $v\in K_{p}$ connects to every vertex $u\in G_{c}$. We set $%
Q=\{q\}$, where $q$ is any spatial object, and the spatial distance from
each vertex $u\in G_{c}$ to $q$ is set to $1$, i.e., $d_{u,q}=1,\forall u\in
G_{c}$. By contrast, the spatial distance from each vertex $v\in K_{p}$ to $%
q $ is set to an arbitrary value $l$ much larger than $p$, i.e., $%
d_{v,q}=l,\forall v\in K_{p}$. Moreover, $k=0$ and $t=\infty $ in MRGQ. Now,
if there is a $p$-clique in $G_{c}$, there exists a feasible solution of
MRGQ, i.e., $F\subseteq G_{c}$ in $G$, with the total spatial distance as $%
\sum_{v\in F}d_{v,q}=p$ (i.e., $F\cap K_{p}=\varnothing $). If no $p$-clique exists 
in $G_{c}$, MRGQ has at least one feasible solution, such as $%
K_{p}$, but it is not possible to extract a feasible solution from $%
G_{c}$ alone. Therefore, the optimal solution $F$ returned by MRGQ must include at
least one vertex in $K_{p}$ with a total spatial distance of $\sum_{v\in
F}d_{v,q}\geq (p-1+l)>p$ (i.e., $F\cap K_{p}\neq \varnothing $). MRGQ cannot
be approximated within any factor smaller than $(p-1+l)/p$; otherwise, the
approximation algorithm could solve the $p$-clique decision problem since it
can distinguish the two cases in MRGQ. Since $l$ can be set as an arbitrary
value much larger than $p$, MRGQ cannot be approximated within any ratio.
The theorem follows.
\end{proof}
%

We also propose an Integer Linear Programming (ILP) optimization model for MRGQ which, via a commercial solver, such as CPLEX \cite{CPLEX}, can obtain the optimal solution. 
We first define a number of decision variables in the ILP
formulation. Let binary variable $\phi _{u}$ denote whether vertex $u$ is in 
$F$. Let binary variable $\pi _{q}$ denote whether activity location $q$ is
chosen in the solution. When $u$ is an attendee and thus joins $F$, let
integer variable $\mu _{u}$ denote the number of attendees in $F$ not
acquainted with $u$, $\mu _{u}\geq 0$. Let variable $\delta _{u}$ denote the
distance from $u$ to the activity location if $u$ is selected in $F$, $%
\delta _{u}\geq 0$; otherwise, $\delta _{u}=0$. The problem is to minimize
the total spatial distance from the selected activity location to the
attendees, i.e., $\min \sum_{u\in V}\sum_{q\in Q}\pi _{q}\phi _{u}d_{u,q}$.
However, this simple formula does not serve well as the objective function because it
is not linear. On the other hand, the formula, $\min \sum_{u\in F}\delta _{u}$, also does not serve well as the objective function since $F$ is unknown. Therefore, we
formulate the objective function of MRGQ as follows. 
\begin{equation*}
\min \sum_{u\in V}\delta _{u}.
\end{equation*}%
This objective function can correctly find out the total spatial distance
from the selected activity location to the attendees since only $\delta _{u}$
of each attendee $u$ in $F$ will be assigned a non-zero value, as shown in
the constraint (9) detailed later. In other words, $\delta _{u}$ will be $0$
in the objective function if $u$ is not an attendee.

The ILP formulation for MRGQ is equipped with the following constraints.

\begin{center}
\begin{tabular}{cr}
$\sum_{u\in V}\phi _{u}=p,$ & (A) \\ 
$\sum_{q\in Q}\pi _{q}=1,$ & (B) \\ 
$\left( p-1\right) \phi _{u}-\sum_{v\in N_{u}}\phi _{v}\leq \mu _{u},$ \ \ $%
\forall u\in V$ & (C) \\ 
$\sum_{u\in V}\mu _{u}\leq kp,$ & (D) \\ 
$d_{u,q}(\phi _{u}+\pi _{q}-1)\leq \delta _{u},$ $\forall u\in V,\forall
q\in Q$ & (E) \\ 
$\delta _{u}\leq t,$ \ \ $\forall u\in V$ & (F)%
\end{tabular}
\end{center}

In the above, constraint $(A)$ guarantees that exactly $p$ vertices are
selected in solution set $F$, while constraint $(B)$ states that only one
location is selected for the activity. Constraints $(C)$ and $(D)$ specify
the familiarity condition. Specifically, if $u$ participates in $F$, i.e., $%
\phi _{u}=1$, this constraint becomes $\left( p-1\right) -\sum_{v\in
N_{u}}\phi _{v}\leq \mu _{u}$. In other words, the left-hand-side (LHS) of constraint $%
(C)$ is identical to the number of attendees in $F$ not knowing $u$, and
constraint $(D)$ enforces that the total number of unfamiliar attendees
not to exceed $kp$.

Constraint $(E)$ assigns $\delta _{u}$ as $d_{u,q}$ if $u$ and $q$ are
chosen as an attendee and the activity location, respectively. More
specifically, $\phi _{u}$ and $\pi _{q}$ are both $1$ in this case, and
constraint $(E)$ thus becomes $d_{u,q}\leq \delta _{u}$. Since the objective
function is a minimization function, $\delta _{u}$ will be assigned as $d_{u,q}$
in the optimal solution. On the other hand, if $u$ is not an
attendee, or if $q$ is not the activity location, constraint $(E)$ becomes $%
0\leq \delta _{u}$, and thus non-restrictive to $\delta _{u}$. Therefore,  $%
\delta _{u}$ will be $0$ in the objective function if $u$ is not an
attendee. Constraint $(F)$ ensures that the spatial distance from each
attendee to the activity location not to exceed spatial radius $t$.

We have the following observations from the above constraints.

\begin{enumerate}
\item Constraint $(C)$ cannot be substituted with $\left( p-1\right) \phi
_{u}-\sum_{v\in N_{u}}\phi _{v}=\mu _{u}$. Otherwise, if $u$ does not join $F
$, i.e., $\phi _{u}=0$, this constraint becomes $-\sum_{v\in N_{u}}\phi
_{v}=\mu _{u}$. Therefore, constraint $(D)$ cannot correctly sum up the
number of unfamiliar attendees in $F$, because it considers every person $u$
in $V$. To address this issue, an approach is to replace constraint $(D)$
with $\sum_{u\in F}\mu _{u}\leq kp$, such that only the attendees in $F$
will be considered. However, constraint $(D)$ in this case becomes non-linear
because the set $F$ also needs to be decided too. In contrast, the proposed
constraints $(C)$ and $(D)$ can effectively avoid the above issue. When $%
\phi _{u}=0$, constraint $(C)$ becomes $-\sum_{v\in N_{u}}\phi _{v}\leq \mu
_{u}$, which allows $\mu _{u}$ to be $0$ for constraint $(D)$, such that we
are able to sum up $\mu _{u}$ of every person in $V$, even when $u$ is not
in $F$. Note that $\mu _{u}$ is also allowed to be assigned larger than the
LHS of constraint $(C)$. However, if constraint $(D)$ still holds when $%
\left( p-1\right) -\sum_{v\in N_{u}}\phi _{v}<\mu _{u}$, it guarantees that
assigning $\mu _{u}=\left( p-1\right) -\sum_{v\in N_{u}}\phi _{v}$ also
leads to a solution that does not contradict $(C)$, because the LHS of $(C)$
becomes smaller in this case. Therefore, the familiarity condition can be
enforced with the design of $\mu _{u}$ together with constraints $(C)$ and $%
(D)$. Similarly, constraint $(E)$ cannot be replaced with $d_{u,q}(\phi
_{u}+\pi _{q}-1)=\delta _{u}$.

\item The complexity of this formulation (correlated to the number of
integral decision variables) can be significantly reduced by relaxing the
integrality constraint that enforces $\mu _{u}$ to be a non-negative
integer. In this case, $\mu _{u}$ can be any non-negative real number, and
the number of integer variables in this formulation are significantly
reduced. This formulation in this case is still correct because $\phi _{u}$
in the objective function still needs to be an integer variable. In
addition, for any solution with $\mu _{u}$ not an integer number, replacing $%
\mu _{u}$ with the largest integer number not exceeding $\mu _{u}$ must also
be a feasible solution, since the LHS of constraint $(C)$ needs to be an
integer number.
\end{enumerate}


\subsection{Application Scenarios} \label{scenarios}
We discuss the reasons why MRGQ is beneficial for real social applications, such as Facebook and Groupon.

1) The initiator is a person included in the solution group.
The proposed MRGQ can be employed in various online social network
applications, e.g., Facebook, to initiate impromptu activities. Facebook's \textit{Event}
function allows a user to initiate an activity by specifying the
location and invitees. However, it may be difficult for the initiator to select a set of invitees with tight social
relationships in real time, and the multiple candidate locations, e.g., branches in a popular chain restaurant, may make it difficult for the initiator to manually select a suitable location and the corresponding attendees. 
If MRGQ can be integrated with Facebook, the initiator only needs to specify a set of
candidate activity locations along with the query parameters to quickly identify the invitees and a suitable activity location. 


2) The initiator is \textit{not} a person and thus not included in the solution group.
In addition, deal-of-the-day services such as Groupon, can also benefit 
from MRGQ. Currently, Groupon recommends offered deals (e.g., coupons) to users according to their
preferences or purchase histories. To take advantage of a given deal, a customer may need to organize a certain number of friends (e.g., "buy three get one"), and may be less inclined to buy the coupon if identifying a likely group poses difficulty. 
To address this issue, Groupon can exploit MRGQ to provide \textit{social-aware} location-based advertisement. For example,
to promote a chain restaurant, Groupon can identify
groups with tight social relationships and thus identify branches suitable for each group. The social recommendation can
be attached in the location-based advertisement to increase the chance of
the customer purchasing the coupon. In this case, Groupon is an initiator not included in the solution group.

\vspace{-10pt}
\section{Related Work}

\label{RelatedWork}
\baselineskip=11pt
Some LBSN applications, e.g., Meetup, have been available for activity
coordination for some time. However, they are designed mainly for periodical meetings, 
e.g., a reading club or a user group for 3D printing. 
In this paper, we emphasize the scenarios of impromptu social activities 
where the time and effort for organizing an activity need to be minimized. 
As manual identification of candidate attendees, a common practice today, is tedious and time-consuming, we argue and show in this paper that, MRGQ is very useful
for such scenarios as it recommends a group of suitable attendees and an activity location by taking both the social and spatial factors into account.

Researches on finding groups of socially connected
members, e.g., team formation \cite{LLT09}\cite{LS10}, community search \cite%
{SG10}, Social-Temporal Group Query \cite{YCLC11} and Circle of Friend Query \cite{LCHJ12}, 
have been reported in
the literature. Nevertheless, their research context and objectives are
totally different from our research goal, i.e., exploring both the spatial and social
dimensions in finding a group of friends and a location for an impromptu activity. Specifically,
team formation \cite{LLT09}\cite{LS10} finds a group of experts with the
required skills, while aiming to minimize the communication cost between
these experts. Community search \cite{SG10} finds a compact community that
contains particular members, aiming to minimize the total degree in the
community. Social-Temporal Group Query \cite{YCLC11} checks the available
times of attendees to find the group with the most suitable activity time. 
Circle of Friend Query \cite{LCHJ12} finds a group of friends by considering
their social and spatial properties. The friends are not grouped to specific
activity locations because no activity location is given in this query, and this query
thus is not suitable for impromptu activity planning. 

Relevant to our work, spatial queries for selecting a set of spatial points,
aiming to minimize the total spatial distance, have been proposed for
various scenarios~\cite{PSTM04,TPS02,GZLC09,GZ09}. However, in these works,
the (social) connectivity among the spatial points is not considered.
Specifically, given two sets of points $P$ and $Q$, together with the number
of points to be selected $k$, Group Nearest Neighbor Query \cite{PSTM04}
finds a set of $k$ points in $P$\ such that the total spatial distance of
the points to all points in $Q$ is minimized. On the other hand, for a line
segment and a set of points, Continuous Nearest Neighbor Search \cite{TPS02}
returns the nearest neighbor of each point on the line segment. 
Meanwhile, Continuous Visible Nearest Neighbor Queries \cite{GZLC09} and Continuous Obstructed 
Nearest Neighbor Query \cite{GZ09} extend Continuous
Nearest Neighbor Search \cite{TPS02} by incorporating the obstacles in the
problem designs, which may affect the visibility or distance between two
points and lead to different results. 
Therefore, the above-mentioned queries focus only on the
spatial dimension and thereby are not applicable to our scenario of LBSN
applications. 

To the best knowledge of the authors, researches on finding groups that
consider constraints in both the spatial and social dimensions just started.
Our work examines the interplay in both social and
spatial dimensions, with an objective to find a group of mutually familiar
attendees such that the total spatial distance to an activity location is
minimized. We envisage that our research result can be
employed in various LBSN applications for group recommendation.

\vspace{-10pt}
\section{Socio-Spatial Group Query (SSGQ)}
\label{SSGQ}
\baselineskip=11.2pt
The challenges for processing MRGQ lie in the interplay of social and spatial dimensions, along with the large solution space. In this section, we first consider a relaxed version of MRGQ with single activity location, i.e., Socio-Spatial Group Query (SSGQ). We formulate SSGQ and propose an Integer Linear Programming (ILP) optimization model for SSGQ, which acts as a baseline for comparison with the proposed algorithms for SSGQ. We then propose an algorithm, called \textit{SSGS}, to efficiently process SSGQ. We also propose a heuristic algorithm for SSGQ, namely \textit{SSGMerge}, to find good solutions very efficiently. 

Specifically, SSGQ is formally defined as follows.

\noindent \textbf{Problem:} Socio-Spatial Group Query (SSGQ).

\noindent \textbf{Given:} A social graph $G=(V,E)$, location $l_{v}$ for each $v\in V$, 
and an $SSGQ(p,q,k,t)$ where $p$ is the number of attendees, $q$ is the activity location, 
$k$ is the familiarity constraint, and $t$ is the spatial radius.

\noindent \textbf{Objective:} To find a set $F\subseteq V$ where $|F|=p$ and 
minimize the total spatial distance from $F$ to $q$, i.e., $\sum_{v\in F}d_{v,q}$,
where $d_{v,q}\leq t,\forall v\in F$, and $unfarmiliar(v,F)$ $\leq k$\footnote{The average number of vertices in $F$ sharing no edge with $v$.}, $\forall v \in F$.

\begin{theorem}\label{SSGQ_NPHARD}
SSGQ is NP-hard.
\end{theorem}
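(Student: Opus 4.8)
The plan is to reduce from the $p$-clique decision problem, mirroring the first half of the proof of Theorem~\ref{MRGQ_NPHARD}. Given an instance $G_{c}$ of $p$-clique, I would construct an SSGQ instance by setting the social graph $G=G_{c}$, introducing a single activity location $q$, assigning $d_{v,q}=1$ for every vertex $v\in V$, and setting $k=0$ and $t=\infty$. Since $t=\infty$, the spatial-radius constraint is vacuous, and since $d_{v,q}=1$ uniformly, every feasible group of size $p$ has total spatial distance exactly $p$; hence the optimization question collapses to a pure feasibility question, namely whether there exists $F\subseteq V$ with $|F|=p$ and $unfamiliar(v,F)\leq 0$ for all $v\in F$.

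Next I would establish the two directions of the equivalence. For the necessary condition, if $G_{c}$ contains a clique on $p$ vertices, then taking $F$ to be exactly those vertices makes every member adjacent to all other $p-1$ members, so $unfamiliar(v,F)=0\leq k$ for every $v\in F$, yielding a feasible SSGQ solution of cost $p$. For the sufficient condition, any feasible SSGQ solution $F$ with $k=0$ must have every vertex of $F$ adjacent to every other vertex of $F$, i.e., $F$ induces a clique of size $p$ in $G=G_{c}$. Thus SSGQ restricted to these instances decides $p$-clique, and since the construction is polynomial-time, SSGQ is NP-hard.

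I do not expect a serious obstacle here: the construction is essentially the single-location specialization of the MRGQ reduction already carried out, so the only points needing care are (i) confirming that the constraint $unfamiliar(v,F)\leq k$ with $k=0$ is exactly the statement that $F$ induces a clique, and (ii) noting that, unlike Theorem~\ref{MRGQ_NPHARD}, the present claim is only NP-hardness of the decision problem, so the gap-introducing gadget $K_{p}$ used for the inapproximability argument is not needed.
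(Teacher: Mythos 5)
Your proposal is correct and follows essentially the same route as the paper: a reduction from $p$-clique with $G=G_{c}$, $k=0$, $t=\infty$, and $d_{v,q}=1$ for every vertex, arguing both directions of the equivalence. The paper's own proof is exactly this argument (and, as you note, omits the gap-introducing $K_{p}$ gadget that is only needed for the inapproximability claim in Theorem~\ref{MRGQ_NPHARD}).
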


\begin{proof}
We prove that SSGQ is NP-hard with the reduction from $p$-clique. Decision
problem $p$-clique is given a graph $G_{c}$ to find whether the graph
contains a clique, i.e., a complete graph with an edge connecting every two
vertices, with $p$ vertices. In SSGQ, we let $G=G_{c}$, $k=0$, $t=\infty $, and $%
d_{v,q}=1 $ for every vertex $v$ $\in V$. We first prove the necessary
condition. If $G_{c}$ contains a $p$-clique, there must exist a group with
the same vertices in the $p$-clique such that every person has social
relationship with all the other attendees of the group, and the total spatial
distance is $p$. We then prove the sufficient condition. If $G$ in SSGQ
contains a group with the size as $p$ and $k$ as $0$, $G_{c}$ in problem $p$%
-clique must contain a solution with size $p$, too. The theorem follows.
\end{proof}

%


In the following, we present an Integer Linear Programming (ILP)
optimization model for SSGQ. We first define a number of decision variables in
the formulation. Let binary variable $\phi _{u}$ denote whether vertex $u$
is in $F$. When $u$ joins $F$, let integer variable $\mu _{u}$ denote the
number of attendees in $F$ not acquainted with $u$, $\mu _{u}\geq 0$. The
problem is to minimize the total spatial distance from each vertex in $F$ to $%
q $, i.e., 
\begin{equation*}
\min \sum_{u\in V}d_{u}\phi _{u}
\end{equation*}%
s.t.

\begin{center}
\begin{tabular}{cr}
$\sum_{u\in V}\phi _{u}=p,$ \ \ $\forall u\in V$ & (G) \\ 
$d_{u}\phi _{u}\leq t,$ \ \ $\forall u\in V$ & (H) \\ 
$\left( p-1\right) \phi _{u}-\sum_{v\in N_{u}}\phi _{v}\leq \mu _{u},$ \ \ $%
\forall u\in V$ & (I) \\ 
$\sum_{u\in V}\mu _{u}\leq kp.$ & (J)%
\end{tabular}
\end{center}

In the above, constraint $(G)$ guarantees that exactly $p$ vertices are selected in
solution set $F$, while constraint $(H)$ ensures that the spatial distance
from each selected attendee to $q$ does not exceed spatial radius $t$.
Constraints $(I)$ and $(J)$ specify the familiarity condition.
Specifically, if $u$ participates in $F$, i.e., $\phi _{u}=1$, this
constraint becomes $\left( p-1\right) -\sum_{v\in N_{u}}\phi _{v}\leq \mu
_{u}$. In other words, the left-hand-side (LHS) of $(I)$ is identical to the
number of attendees in $F$ not knowing $u$, and constraint $(J)$ enforces
that the total number of unfamiliar attendees must not exceed $kp$. 

We make the following observations from the above constraints.

\begin{enumerate}
\item Constraint $(I)$ cannot be substituted with $\left( p-1\right) \phi
_{u}-\sum_{v\in N_{u}}\phi _{v}=\mu _{u}$. Otherwise, if $u$ does not join $%
F $, i.e., $\phi _{u}=0$, this constraint becomes $-\sum_{v\in N_{u}}\phi
_{v}=\mu _{u}$. Therefore, constraint $(J)$ cannot correctly sum up the
number of unfamiliar attendees in $F$, because it considers every person $u$
in $V$. To address this issue, an approach is to replace constraint $(J)$
with $\sum_{u\in F}\mu _{u}\leq kp$, such that only the attendees in $F$
will be considered. However, this constraint in this case becomes non-linear
because the set $F$ also needs to be decided too. In contrast, the proposed
constraints $(I)$ and $(J)$ can effectively avoid the above issue. When $%
\phi _{u}=0$, constraint $(I)$ becomes $-\sum_{v\in N_{u}}\phi _{v}\leq \mu
_{u}$, which allows $\mu _{u}$ to be $0$ for constraint $(J)$, such that we
are able to sum up $\mu _{u}$ of every person in $V$, even when $u$ is not
in $F$. Note that $\mu _{u}$ is also allowed to be assigned larger than the
LHS of constraint $(I)$. However, if constraint $(J)$ still holds when $%
\left( p-1\right) -\sum_{v\in N_{u}}\phi _{v}<\mu _{u}$, it guarantees that
assigning $\mu _{u}=\left( p-1\right) -\sum_{v\in N_{u}}\phi _{v}$ also
leads to a solution that does not contradict $(J)$, because the LHS of $(J)$
becomes smaller in this case. Therefore, the familiarity condition can be
enforced with the design of $\mu _{u}$ together with constraints $(I)$ and $%
(J)$.

\item The complexity of this formulation (correlated to the number of
integral decision variables) can be significantly reduced by relaxing the
integrality constraint that enforces $\mu _{u}$ to be a non-negative
integer. In this case, $\mu _{u}$ can be any non-negative real number, and
the number of integer variables in this formulation are significantly
reduced. This formulation in this case is still correct because $\phi _{u}$
in the objective function still needs to be an integer variable. In
addition, for any solution with $\mu _{u}$ not an integer number, replacing $%
\mu _{u}$ with the largest integer number not exceeding $\mu _{u}$ must also
be a feasible solution, since the LHS of constraint $(I)$ needs to be an
integer number.
\end{enumerate}


\vspace{-10pt}
\subsection{Algorithm Design for SSGQ}
\label{SSGQ_Algo}

Despite only considering a single activity location, processing SSGQ is still challenging since we need to 
account for the interplay between both social and spatial factors, which necessitates a systematic approach for group formation. 
Therefore, in this section, we propose an algorithm, called \textit{SSGS}, to efficiently process SSGQ. 
SSGS adopts a branch-and-bound group formation process to form feasible groups, i.e., those that consist of $p$ members and satisfy the query constraints. The basic idea is to maintain an intermediate group $S_{I}$ and incrementally add a candidate member from the remaining set of candidates, $S_{R}$, based on some ordering strategies to traverse the space of group formation. Given a candidate attendee set $V$ and the activity location $q$,
SSGS initializes $S_{I}=\varnothing $ and $S_{R}$ as the candidate attendees within the spatial radius of $q$. At each subsequent iteration, SSGS moves a candidate attendee from $S_{R}$ into $S_{I}$ until $S_{I}$ becomes a feasible solution. If $S_{I}$ is disqualified during the process, SSGS backtracks to the previous step to choose another candidate attendee from $S_{R}$. When $S_{I}$ becomes feasible, SSGS saves it as the current best solution and backtracks to previous step to continue finding better groups. 
Obviously this process is slow, so the key issue is how to devise a traverse ordering strategy to quickly find a feasible group and devise effective rules to prune redundant groups.

One approach is to use an R-tree which indexes the locations of candidates to provide guidance, and select a candidate from $S_R$
with the shortest spatial distance to the activity location, which is referred as \textit{Distance Ordering}. As such, we can use the spatial properties derived via the maximum bounding rectangles (MBR) in the R-tree and the constraints of SSGQ to prune unqualified candidates and thus reduce the search space. Another approach aims to quickly form a feasible group with small total spatial distance to the activity location for distance-based pruning adopting a Socio-Spatial Ordering, which prioritizes the growth of an intermediate group based on its social tightness. 
Recall that Distance Ordering
first expands $S_{I}$ with the individuals closest to the query point $q$.
For example, consider Figure \ref{fig:FIG_SSO_graph} as the input social graph (the
number besides each node indicates the spatial distance to $q$), where $p=3$
and $k=0$. Figure \ref{fig_do} presents the expansion of $S_{I}$ with only
Distance Ordering, and the number besides each node in the branch-and-bound
tree represents the expansion sequence. As shown in Figure \ref{fig_do}, the
expansion sequence of these nodes is sorted according to the spatial
distance to the query point. The leaf nodes in the branch-and-bound
tree (i.e., the groups of $p$ individuals) can be created according to the
total spatial distance, i.e., a group with a smaller total spatial distance
is generated earlier. However, employing only
the Distance Ordering strategy is not always good because it ignores the
social constraint of the generated groups. As a result, most groups generated at the
early stage, e.g., $\{a,b,c\}$, $\{a,b,d\}$, $\{a,b,e\}$, and $\{a,b,f\}$, 
do not satisfy the familiarity constraint
(i.e., $k=0$) even though they are the top-4 groups with the smallest total
spatial distances. 

To address the weakness of Distance Ordering, we
combine the social connectivity and spatial distance to identify an
intermediate group to be expanded in the next step. Intuitively, when an individual $v$ is chosen by
Distance Ordering, we move it into $S_{I}$ only when $v$ also satisfies the
social condition specified in Eq. (\ref{SSO_eqn}). This social condition
ensures that $S_{I}$ together with $v$ leads to a group with the attendees
familiar with each other. If $v$ does not follow the above social
condition, we find another individual $u$ with Distance Ordering that
satisfies the social condition. As such, both spatial and social factors
are taken into account in Socio-Spatial Ordering. 

More specifically, to ensure that the social connectivity of each selected individual $v$ to the vertices in $S_{I}$ is good, 
a simple approach is to
ensure that $v$ can be selected only when the number of edges between $v$
and the vertices in $S_{I}$ exceeds a given threshold. With a larger
threshold, a candidate attendee that is familiar with more attendees
currently in $S_{I}$ is inclined to be chosen. Nevertheless, parameter $k$ is not
examined for the current attendees in $S_{I}$ when $v$ is added. Consequently, some
attendees in this case may not have a sufficient number of neighbors
in $S_{I}$. By contrast, SSGS selects $v$ only when $v$ satisfies Eq. (\ref%
{SSO_eqn}). Specifically, as Eq. (\ref{SSO_eqn}) assumes that $v$ is
added to $S_{I}$, SSGS examines whether the social connectivity of the
new group $S_{I}\cup \{v\}$ is sufficient according to the criterion $k$.
Let $F(S_{I})$ denote the average number of acquainted members in $S_{I}$,
i.e., $F(S_{I})=\frac{1}{|S_{I}|}\sum_{v\in S_{I}}|N_{v}\cap S_{I}|$, where $%
N_{v}$ is the set of neighbors of $v$ in $V$. Individual $v$ is added to $S_{I}$ if it satisfies Eq. (\ref{SSO_eqn}) as
follows,

\vspace{-10pt}
\begin{equation}
F(S_{I}\cup \{v\})\geq |S_{I}\cup \{v\}|-\frac{\theta \left( |S_{I}\cup
\{v\}|\right) }{p-1}-1  \label{SSO_eqn}
\end{equation}

where $\theta $ here is a dynamically adjusted parameter and set as $%
k$ initially. Intuitively, when $k=p-1$, the activity allows all attendees
to be mutually unfamiliar. In this case, Distance Ordering is the best
strategy. In fact, Eq. (\ref{SSO_eqn}) in this situation becomes $%
F(S_{I}\cup \{v\})\geq -1$, and Socio-Spatial Ordering here is identical to
Distance Ordering. In another extreme case where $k=0$, Eq. (\ref{SSO_eqn})
becomes $F(S_{I}\cup \{v\})\geq |S_{I}\cup \{v\}|-1$, implying that each attendee in $%
S_{I}\cup \{v\}$ needs to be acquainted with all the others in $S_{I}\cup
\{v\}$.

It is worth noting that Eq. (\ref{SSO_eqn}) incorporates the
dynamically adjusted parameter $\theta $. Instead of including $k$ directly,
it properly handles other cases with $0<k<p-1$. When $k=0$, if no
vertex from $S_{R}$ satisfies Eq. (\ref{SSO_eqn}), it is not necessary to
add any individual $v$ from $S_{R}$ to $S_{I}$ because every solution
growing from $S_{I}\cup \{v\}$ does not follow the familiarity constraint. When $%
k>0$, if no individual from $S_{R}$ satisfies Eq. (\ref{SSO_eqn}), it does
not imply that every solution growing from $S_{I}\cup \{v\}$ does not have
sufficient social connectivity. In contrast, it is possible to find an
individual $v$ in $S_{R}$ and a solution growing from $S_{I}\cup \{v\}$ when
other vertices added later bring a sufficient number of edges to the
solution. Therefore, for $k>0$, Socio-Spatial Ordering sets $\theta $ as $k$
initially and increases $\theta $ if no vertex from $S_{R}$ can satisfy Eq. (%
\ref{SSO_eqn}), until at least one vertex follows Eq. (\ref{SSO_eqn}) and
thereby is able to be selected for $S_{I}$. Notice that Eq. (\ref%
{SSO_eqn}) first maintains a high criterion for the social connectivity by
setting $\theta $ as $k$, in order to prioritize a vertex leading to
sufficient social connectivity. If no vertex from $S_{R}$ can satisfy such a
high criterion, Eq. (\ref{SSO_eqn}) increases $\theta $ to avoid filtering
out any feasible solution. Thus, any vertex in $S_{R}$ that did not satisfy
Eq. (\ref{SSO_eqn}) previously will be examined later with a large $\theta $
accordingly.

Figure \ref{fig:FIG_SSO_tree} presents an illustrative example of Socio-Spatial Ordering
with $p=3$ and $k=0$ for the graph in Figure \ref{fig:FIG_SSO_graph}. The exploration of
Socio-Spatial Ordering is shown as the solid line in Figure \ref%
{fig:FIG_SSO_tree}. In this example, $\theta =0$ and $S_{I}=\phi $ initially. Since $a
$ is the vertex with the minimum spatial distance to $q$, and $F(\phi \cup
\{a\})=\frac{0}{1}\geq 1-1-\frac{0\cdot 1}{2}$ satisfies Eq. (\ref{SSO_eqn}%
), SSGS moves vertex $a$ from $S_{R}$ to $S_{I}$ first and lets $S_{I}=\{a\}$%
. However, $F(S_{I}\cup \{b\})=\frac{0}{2}<2-1-\frac{0\cdot 2}{2}$ does not
satisfy Eq. (\ref{SSO_eqn}). Therefore, SSGS examines vertex $c$ and finds
out that $F(S_{I}\cup \{c\})=\frac{1}{2}\cdot 2=1\geq 2-1-\frac{0\cdot 2}{2}$
satisfies Eq. (\ref{SSO_eqn}). Therefore, vertex $c$ is moved into $S_{I}$,
and now $S_{I}=\{a,c\}$. We then expand $S_{I}$ by choosing vertex $d$, and $%
S_{I}=\{a,c,d\}$ now is a feasible solution. In contrast, Distance Ordering
selects vertex $b$ after vertex $a$ (as shown in the dashed-line in Figure %
\ref{fig:FIG_SSO_tree}) and then sequentially constructs four intermediate groups 
$\{a,b,c\}$, $\{a,b,d\}$, $\{a,b,e\}$, and $\{a,b,f\}$. Unfortunately, none
of these meets the familiarity constraint. As shown, this example
illustrates that it is desirable to jointly consider spatial and social
domains in order to find a feasible solution for SSGS earlier, because the obtained
feasible solution is a key factor for the pruning strategy introduced below.
\begin{figure}[tp]
\centering
\subfigure[Graph.] {\  
\includegraphics[scale=0.2]{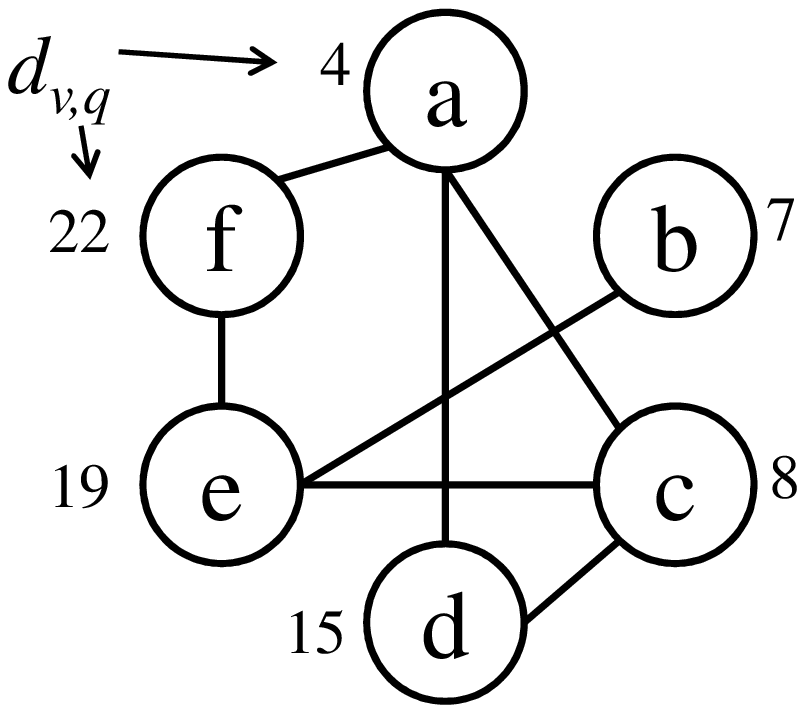} \label{fig:FIG_SSO_graph}} 
\subfigure[][Distance Ordering.] {\  \includegraphics[scale=0.25] {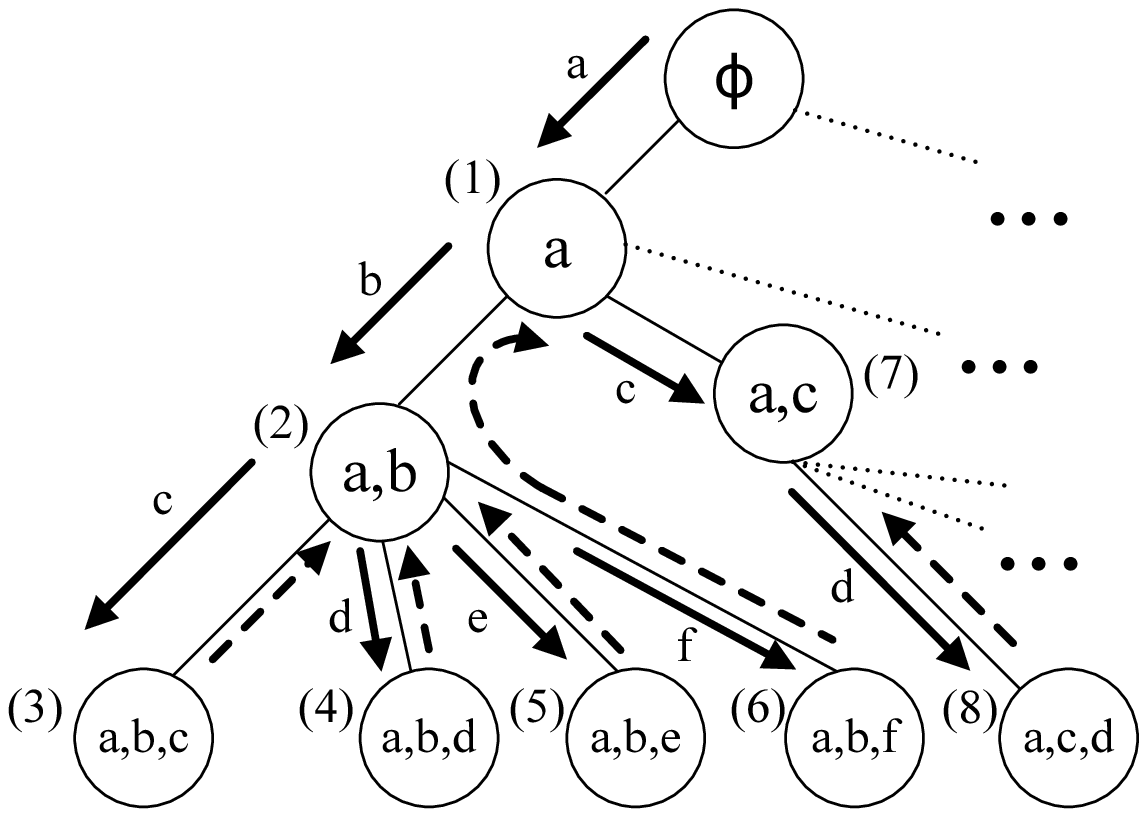}
\label{fig_do} } 
\subfigure[Socio-Spatial Ordering.] {\
\includegraphics[scale=0.25] {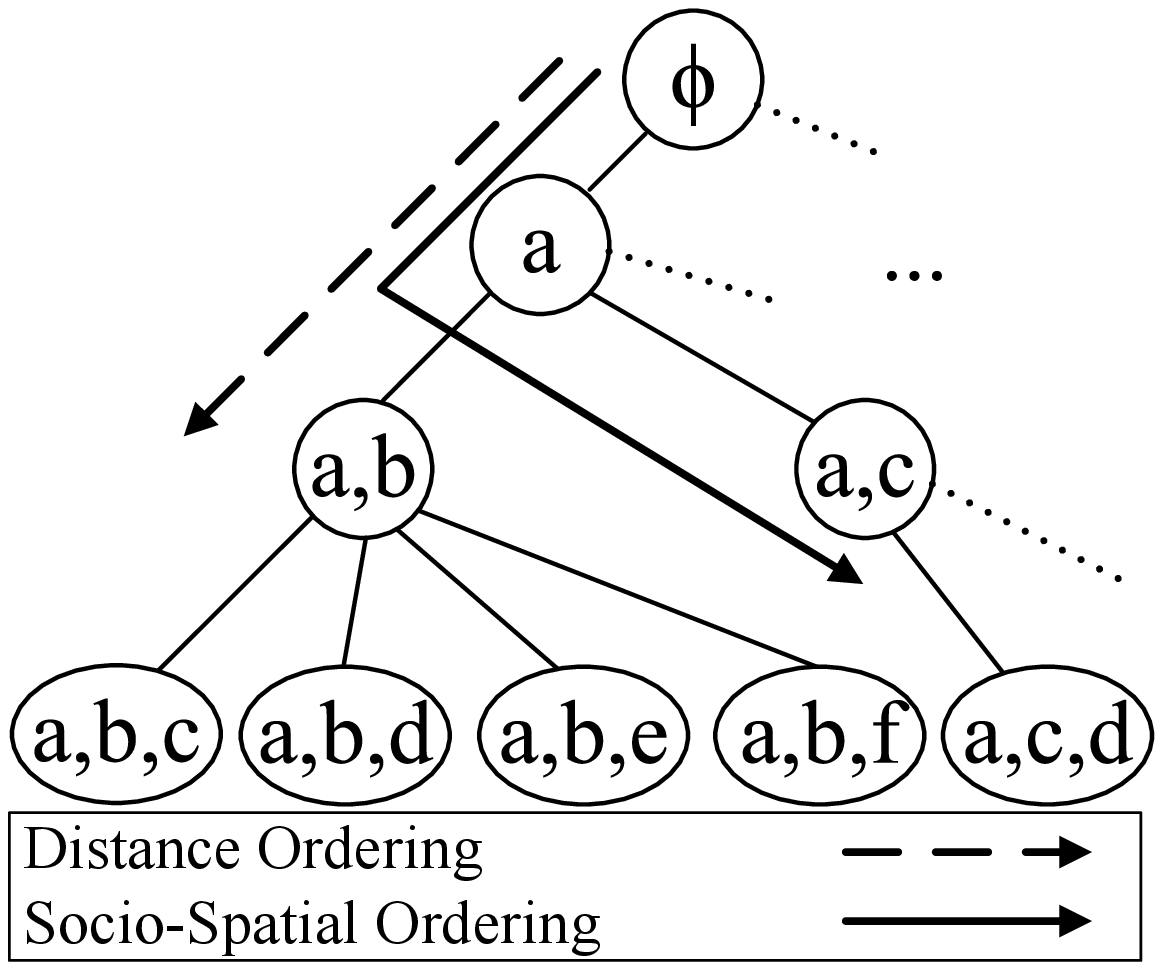} \label{fig:FIG_SSO_tree}}  
\caption{Example of ordering strategies.}
\vspace{-18pt}
\label{fig_sso}
\end{figure}

\subsection{Pruning Strategies for SSGS} \label{pruning_SSGQ}
We also propose two pruning rules, namely Familiarity Pruning and Distance Pruning, which effectively filter out unqualified intermediate groups. The idea of Familiarity Pruning is to derive an upper bound on the number of acquaintances each member may have after new members are included into $S_{I}$. Similarly, Distance Pruning identifies a lower bound on the total spatial distance of each group grown from $S_{I}$. SSGS stops processing $S_{I}$ and backtracks if the current $S_{I}$ is pruned by Familiarity Pruning or Distance Pruning.

\textbf{Familiarity Pruning.} Specifically, the edges in any solution growing from $S_{I}$ can be divided
into three categories: 1) $E_{I}$: the set of edges connecting any two
vertices in $S_{I}$, 2) $E_{R}:$ the set of edges connecting any two
vertices selected from $S_{R}$, and 3) $E_{IR}$: the set of edges connecting
any two vertices in $S_{I}$ and the vertices selected from $S_{R}$.
Apparently, $\left \vert E_{I}\right \vert = \frac{1}{2}\sum
\nolimits_{v\in S_{I}}|N_{v}^{I}|$, where $N_{v}^{I}$ is the set of
acquainted neighbors of $v$ in $S_{I}$. Since the selected vertices in $%
S_{R} $ are not clear, a good way is to find an upper bound on $%
\left
\vert E_{R}\right \vert $, i.e., $\frac{1}{2}(p-|S_{I}|)\max_{v\in
S_{R}}|N_{v}^{R}|$, where $N_{v}^{R}$ is the set of acquainted neighbors of $%
v$ in $S_{R}$. It is an upper bound because the vertex with the maximum
degree in $S_{R}$ is identified, and $(p-|S_{I}|)$ vertices are selected
from $S_{R}$. Similarly, an upper bound on $\left \vert E_{IR}\right \vert = \sum_{v\in S_{I}}|InterEdge(v)|$, where $InterEdge(v)$ is the set of
edges connecting $v$ in $S_{I}$ to any vertices in $S_{R}$.

Notice that the number of edges in a feasible solution is half of the total degree
of all the vertices in the solution. Therefore,
with the above three categories of edges, Familiarity Pruning stops
processing $S_{I}$ when the following condition holds,
\vspace{-5pt}
\begin{eqnarray}
&&\frac{1}{p}\left[ \sum \nolimits_{v\in
S_{I}}|N_{v}^{I}|+(p-|S_{I}|)\max_{v\in S_{R}}|N_{v}^{R}|\right. \nonumber \\
&&\left. +2\cdot \sum_{v\in S_{I}}|InterEdge(v)|\right]<(p-k-1).
\vspace{-12pt}
\end{eqnarray}%
In the above inequality, the left-hand-side is an upper bound on the average number of
attendees acquainted to each person in any feasible solution growing from $%
S_{I}$. The condition states that, on average, each attendee is acquainted with
fewer than $p-k-1$ other attendees. Familiarity Pruning stops processing $%
S_{I}$ and backtracks if solutions
growing from $S_{I}$ via the exploration of $S_{R}$ do not satisfy
the familiarity constraint.

For the social graph in Figure \ref{fig:FIG_SSO_graph} with $p=3$ and $k=0$, if $S_{I}=\{b,d\}$ and $S_{R}=\{e,f\}$, SSGS stops processing $S_{I}$ and backtracks because $\frac{1}{3}(0+1\cdot 1+2\cdot 1)=1<(3-0-1)=2$. In other words, moving any vertex from $S_{R}$ to $S_{I}$ will never generate a feasible solution following the familiarity constraint.

\textbf{Distance Pruning.}
For a given $S_{I}$, $p-|S_{I}|$ vertices must
be selected from $S_{R}$ to $S_{I}$. Apparently, further
processing of $S_{I}$ is unnecessary if $S_{I}$ and the $p-|S_{I}|$ vertices
with the shortest spatial distances to $q$ have a total distance larger than $D$, where $D$ is the best solution value obtained so far. Therefore, Distance Pruning identifies a lower bound and stops
processing $S_{I}$ when the following condition holds,
\vspace{-3pt} 
\begin{equation}
\sum \nolimits_{u\in S_{I}}d_{u,q}+(p-|S_{I}|)d_{v_{\min },q}\geq D
\vspace{-3pt}
\end{equation}%
where the first term is the total spatial distance from the vertices in $%
S_{I}$ to $q$. For $S_{R}$, only the vertex $v_{\min }$ with the smallest
spatial distance to $q$ is accessed here, and $(p-|S_{I}|)d_{v_{\min },q}$
represents a lower bound on the total spatial distance for the above $%
p-|S_{I}|$ vertices in $S_{R}$. 

Consider the social graph in Figure \ref{fig:FIG_SSO_graph} with $p=3$ as an example. After a feasible solution $\{a,c,d\}$ is explored, its total spatial distance 27 is assigned to $D$. When SSGS considers $S_{I}=\{a,b\}$ and $S_{R}=\{e,f\}$, since $\sum \nolimits_{u\in S_{I}}d_{u,q}+(p-|S_{I}|)d_{v_{\min },q}=11+1\cdot 19=30>27$, Distance Pruning removes states $\{a,b,e\}$ and $\{a,b,f\}$, stops processing $S_I=\{a,b\}$, and backtracks to the previous state accordingly.

\subsection{Heuristic Algorithm for SSGQ\label{Heuri_Section}}

As proved earlier, processing SSGQ is an NP-hard problem. In fact, even when the
spatial distance is the same for every candidate, the problem is still NP-hard due
to the familiarity constraint required to address. Therefore, in the
following, we propose an efficient heuristic algorithm to obtain good
solutions very efficiently. SSGS employs the branch-and-bound framework
to incrementally improve the solution and find the optimal solution. A
straightforward approach to develop heuristic algorithms for SSGQ is to stop the branch-and-bound
search after the $i$-th feasible solution is obtained. However, it suffers
two main drawbacks: 1) the running time is still not constrained in polynomial
time, and 2) this approach only maintains the current minimal total spatial
distance for solution space pruning but ignores the possibility to exploit
many intermediate solutions to further improve the efficiency. To
effectively address the above two issues, we propose an algorithm, named 
\textit{SSGMerge}, which effectively utilizes the structures of
intermediate solutions to generate a good feasible solution in polynomial
time. The idea is to iteratively merge good socially tight groups with
small spatial distances in different intermediate solutions obtained in
earlier iterations. 

Figure \ref{FIG_Exp_MergeHeuri} presents a social network and a snapshot of the
branch-and-bound tree with $p=4$, $k=1$, and $\theta =1$
for Socio-Spatial Ordering. After the first feasible solution $\{a,c,d,e\}$
is obtained, $\{a,d\}$ and $\{b,d\}$ are pruned accordingly by Distance
Pruning,
while $\{a,c,d,f\}$ and $\{a,c,e,f\}$ have higher total spatial distances
than $\{a,c,d,e\}$. If the straightforward heuristic approach stops here, $%
\{a,b,c,d\}$ and $\{a,b,c,e\}$, both extending from $\{a,b\}$ and enjoying
smaller total spatial distances, unfortunately are not to be
discovered. In contrast, since $\{a,b,d\}$ and $\{a,c,d\}$ incurs small spatial distances, 
and if other candidates later join these two groups, they will become socially dense
groups with small spatial distances. A promising idea is to merge the two groups into 
$\{a,b,c,d\}$ (similarly, merging $\{a,b,c\}$ and $\{a,c,e\}$ results into $\{a,b,c,e\}$). 
Based on this idea, we design a systematic approach to choose a set of
suitable groups for constructing a good feasible solution. 


\begin{figure}[tp]
\centering
\includegraphics[scale=0.32]{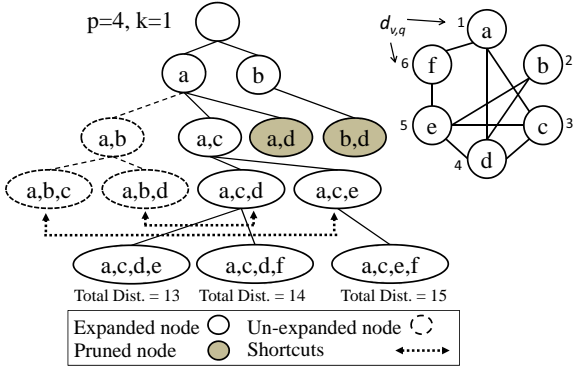}
\caption{Example of SSGMerge.}
\label{FIG_Exp_MergeHeuri}
\end{figure}

The intermediate solutions
expanded according to Socio-Spatial Ordering are created and tailored for each query with the specific parameters and the activity location.
Therefore, it is more efficient for SSGMerge to process 
intermediate solutions directly. 
Given the group size $p$ of SSGQ, we maintain a
set of $p$ intermediate solution queues $\{U_{1},...,U_{p}\}$, 
where each element in $U_{j}$ is an intermediate solution $S_{I}$ with $j$ attendees. 
To prioritize the intermediate solutions in $U_{|S_{I}|}$ with high social tightness and
small spatial distance, we sort the intermediate solutions in $U_{|S_{I}|}$
with a ranking function $R(\cdot )$ based on Socio-Spatial Ordering,


\begin{equation}
R(S_{I})=p\cdot t \cdot \overline{\theta }+\sum_{v\in S_{I}}d_{v,q} \nonumber
\end{equation}%
where $\overline{\theta }\geq k$ is set to the minimum $\theta$
that $S_{I}$ satisfies Socio-Spatial Ordering, and $d_{v,q}$ is the
spatial distance from a candidate $v$ to the activity location $q$. 
The ranking function ranks $S_{I}$ based on its $\overline{\theta }$ value and the total spatial distance to $q$, and it gives a smaller score to the $S_{I}$ which has tighter social relationship and closer to $q$. 
Consider an example with the social graph shown in Figure \ref%
{FIG_Exp_MergeHeuri}. Assume $p=4$, $k=1$, $t=100$, $\overline{S}_{I}=\{a,b,c\}$, and
$\widehat{S}_{I}=\{c,d,e\}$. Then, $R(\overline{S}%
_{I})=4\cdot 100\cdot 2+6=806$ while $R(\widehat{S}_{I})=4\cdot 100\cdot 1+12=412$. 
Therefore, SSGMerge is inclined to choose $\widehat{S}_{I}$.

\begin{table}[tp]
\centering
\caption{Intermediate solutions maintained and constructed.}
\label{Table_Heur}%
\begin{tabular}{ll}
\hline \hline
& Maintained and constructed solutions \\ \hline \hline
$U_{1}$ & $\{a\}$,$\{b\}$ \\ \hline
$U_{2}$ & $\{a,c\}$, \underline{$\mathbf{\{a,b\}}$} \\ \hline
$U_{3}$ & $\{a,c,d\}$,$\{a,c,e\}$,\underline{$\mathbf{\{a,b,c\}}$} \\ \hline
$U_{4}$ & $\{a,c,d,e\}$,\underline{$\mathbf{\{a,b,c,d\}}$},\underline{$\mathbf{\{a,b,c,e\}}$}
\\ \hline
\end{tabular}%
\end{table}

Given the set of $p$ intermediate solution queues $\{U_{1},...,U_{p}\}$, 
the basic idea is to merge different pairs of small groups into larger ones.
That is, for each $U_{i}$, we merge each pair of small intermediate groups
$\widehat{S}_{I}$, $\overline{S}_{I}\in U_{i}$ into a new
intermediate solution $\widetilde{S}_{I}$, i.e., $\widetilde{S}_{I}=\widehat{%
S}_{I}\cup \overline{S}_{I}$, and store $\widetilde{S}_{I}$ in the
corresponding $U_{|\widetilde{S}_{I}|}$. If there are more than $\lambda $
intermediate solutions in $U_{i}$, after inserting merged intermediate solutions, 
$U_{i}$ maintains the $\lambda $ intermediate solutions with the smallest ranking value according to $R(\cdot )$.
In other words, $\lambda $ here is a filtering parameter for controlling the quality and the number of 
the intermediate solutions in each $U_{i}$.
Therefore, by first setting $i$ as $1$ and increasing $i$ by $1$ at each iteration, we can incrementally
construct new intermediate solutions. Finally, we extract the feasible
solution which incurs the minimum spatial distance from $U_{p}$ and return
it as the solution. 

More importantly, SSGMerge employs a pruning strategy to reduce the number of intermediate solutions under examination. 
When SSGMerge merges the intermediate solutions in $U_{i}$, an intermediate solution $\widehat{S}_{I}$
can be discarded if the following condition holds,

\begin{equation}
\sum_{v\in \widehat{S}_{I}}d_{v,q}+(p-|\widehat{S}_{I}|)\min_{i\leq j\leq p-1}\mu _{j}\geq D. \nonumber \label{Heuri_DC}
\end{equation}
In the above condition, $\mu _{j}$ is the minimum spatial distance of the candidates existing in $U_{j}$, and
$D$ is the currently best solution value. 
It measures the minimum increment of the spatial distance of
$\widehat{S}_{I}$ when $\widehat{S}_{I}$ is merged with others and becomes a feasible solution.
If this condition holds, any feasible solution expanded from $\widehat{S}_{I}$ 
(i.e., having $\widehat{S}_{I}$ as a subset), will never become a better solution, and thus $\widehat{S}_{I}$
can be safely discarded. 

In Table \ref{Table_Heur}, the merged solutions constructed by SSGMerge are shown in bold with underscores. 
For example, $\{a,b\}$ is
constructed by merging $\{a\}$ and $\{b\}$ in $U_{1}$, and $\{a,b,c,d\}$ can
be constructed by merging $\{a,c,d\}$ and $\{a,b,c\}$ in $U_{3}$, 
where $\{a,b,c\}$ is the combination of $\{a,c\}$ and $\{a,b\}$ in $U_{2}$.
After the merging process is completed, we extract the feasible solution
with the minimum spatial distance from $U_{p}$, i.e., $U_{4}$ in Table \ref%
{Table_Heur}, which is $\{a,b,c,d\}$. As compared to the best feasible
solution $\{a,c,d,e\}$ obtained in Figure \ref{FIG_Exp_MergeHeuri}, the total
spatial distance of $\{a,b,c,d\}$ is $10$, which is smaller than that of $%
\{a,c,d,e\}$, i.e., $13$.

SSGMerge involves two parameters, $w$ and $\lambda$, and terminates the search process after $w$ states have been
generated in the branch-and-bound tree\footnote{Detailed settings of $w$ and $\lambda$ will be presented in the experimental results.}. SSGMerge then refines the solutions with the above merge approach. By effectively restricting the number
of generated intermediate solutions, SSGMerge can efficiently construct a good feasible solution according to the following theorem.

\begin{theorem} The running time of SSGMerge is $O(p\lambda ^{2}\log p\lambda +wp^{2}+|V|(\log |V|)^{2})$.
\end{theorem}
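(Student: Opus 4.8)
The plan is to split the running time of SSGMerge into three disjoint phases and bound each in turn: (i) the spatial preprocessing that retrieves and orders the candidate set $S_{R}$ of attendees within radius $t$ of $q$; (ii) the branch-and-bound generation phase, which produces at most $w$ states of the search tree and, as a side effect, populates the queues $\{U_{1},\dots,U_{p}\}$; and (iii) the merging phase over those queues together with the final extraction of the best feasible solution from $U_{p}$. The claimed bound then follows by summing the three contributions and absorbing lower-order terms. For phase (i) I would charge the construction and querying of the R-tree indexing the candidate locations, plus sorting the retrieved candidates by distance to $q$ so that Distance Ordering and the lower-bound term $d_{v_{\min},q}$ of Distance Pruning can be read off in amortized $O(1)$ time; this is where the $|V|(\log|V|)^{2}$ term comes from.

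For phase (ii) I would argue that each of the at most $w$ generated states costs $O(p^{2})$: when a vertex is moved between $S_{I}$ and $S_{R}$ it affects at most $O(p)$ current group members, and re-evaluating $F(S_{I})$, the edge counts $|N_{v}^{I}|$ and $|InterEdge(v)|$, and $\max_{v\in S_{R}}|N_{v}^{R}|$ used by Socio-Spatial Ordering and Familiarity Pruning, together with the membership and feasibility checks, all fit in $O(p^{2})$; inserting the new state into the appropriate queue $U_{j}$ while capping it at $\lambda$ costs $O(\log\lambda)$, which is dominated. This yields $O(wp^{2})$ for phase (ii). The extraction at the end scans the at most $\lambda$ elements of $U_{p}$, testing each for feasibility in $O(p^{2})$, and is absorbed.

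The crux is phase (iii). The key structural observation is that each queue $U_{i}$ is capped to its $\lambda$ best intermediate solutions under $R(\cdot)$ only after all insertions into it are complete, and the index $i$ is scanned from $1$ to $p$ so that every merge produced while processing $U_{i}$ lands in some $U_{j}$ with $j>i$ before $U_{j}$ itself is processed; hence, even though merged solutions are themselves re-merged later, the total number of pairwise merges over the whole run is $O(p\lambda^{2})$ — one $O(\lambda^{2})$ batch per queue. I would then bound the per-merge cost of forming $\widetilde{S}_{I}=\widehat{S}_{I}\cup\overline{S}_{I}$, evaluating its rank (equivalently, locating by binary search over $\theta\in\{0,\dots,p-1\}$ the minimum $\overline{\theta}$ satisfying Socio-Spatial Ordering), applying the discarding condition, and inserting $\widetilde{S}_{I}$ into $U_{|\widetilde{S}_{I}|}$: with the groups stored as compact sorted key sets and each $U_{j}$ kept as a bounded priority queue, this is $O(\log(p\lambda))$, giving $O(p\lambda^{2}\log p\lambda)$. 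I expect the main obstacle to be exactly this bookkeeping: rigorously showing that the cascade of merges does not inflate the $O(p\lambda^{2})$ count — which hinges on capping each $U_{i}$ exactly once, after all original and merged-from-below solutions have been inserted, together with the monotone left-to-right processing order — and, secondarily, justifying that the incremental maintenance in phase (ii) really keeps the per-state cost at $O(p^{2})$ rather than $O(|V|p)$ despite the candidate scans performed by Socio-Spatial Ordering.
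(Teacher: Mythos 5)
Your decomposition into the same three phases — R-tree/spatial work giving $|V|(\log|V|)^{2}$, $O(p^{2})$ per branch-and-bound state for $w$ states, and $O(p\lambda^{2})$ total merges at $O(\log p\lambda)$ each — is essentially the paper's own argument, and you correctly identify the two points the paper also treats loosely (the total merge count across cascading queues and the per-state cost of the social checks). The only cosmetic difference is that the paper derives the $|V|(\log|V|)^{2}$ term from the priority queue used in incremental R-tree distance browsing (up to $O(|V|\log|V|)$ node accesses, each insertion costing $O(\log(|V|\log|V|))$) rather than from a one-shot sort, but both land within the stated bound.
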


\begin{proof}
SSGMerge first generates $w$ nodes in the branch-and-bound tree before it 
merges those intermediate solutions and creates feasible solutions. Three operations are performed:
1) Socio-Spatial Ordering, 2) Distance Pruning, and 3) Familiarity Pruning. 
Socio-Spatial Ordering includes Distance Ordering and the checking of Eq. (1) in Section 4.1.
Distance browsing strategy, i.e., iteratively extracting the candidate attendee with the minimum spatial distance to $q$ from R-Tree, in Distance Ordering is performed $w$ times.
Therefore, in the worst case, the number of R-Tree leaf node access is $O(|V|)$, and the traversal from the root
to a leaf node of R-Tree incurs $O(\log |V|)$ R-Tree internal node access. 
Since each R-Tree node access incurs $O(1)$ time for distance computation, 
the time of R-Tree node access is $O(|V|\log |V|)$. 
The priority queue maintained for Distance Ordering takes $O(\log s)$ time for each insertion and deletion operation, where $s$
is the size of the priority queue. Since there are $O(|V|\log |V|)$ elements inserted into the priority queue, 
and the insertion cost of each element is $O(\log (|V|\log |V|))$ (in worst case, the size of the priority queue is $O(|V|\log |V|)$). Therefore, the total cost is $O(|V|\log |V|)\cdot O(\log (|V|\log |V|))=O(|V|(\log |V|)^{2})$.

For checking Eq. (1) of Socio-Spatial Ordering, since the size of $S_{I}$ does not exceed $p$, it requires $O(p^{2})$ time to compute $F(S_{I}\cup \{v\})$ for each examination, i.e., examining if a vertex ${v}$ can be included in the current
$S_{I}$. Therefore, checking Eq. (1) for $w$ times takes $O(wp^{2})$ time.

Familiarity Pruning is performed in $O(wp^{2})$ time for $w$ examinations. 
Distance Pruning at each time examines the first element of the priority queue and 
the total spatial distance in $S_{I}$ with $O(p)$ time. Therefore, Distance Pruning takes 
$O(wp)$ time for $w$ examinations in the branch-and-bound tree.
In summary, the time complexity of $w$ attempts for including a node into $S_{I}$ is 
$O(wp^{2})+O(wp)+O(|V|(\log |V|)^{2})=O(wp^{2}+|V|(\log |V|)^{2})$.

On the other hand, when SSGMerge merges intermediate solutions, at each $U_{i}$, 
it first ranks the intermediate solutions in $U_{i}$ with the ranking function
and then discards those with ranks higher than $\lambda $. This step takes $O(p\lambda^{2} \log p\lambda )$ time
because in the worst case, each merged intermediate solution in $U_{i},$ $1\leq i\leq j$ is inserted into $U_{j}$.
It costs $O(\lambda ^{2})$ time for SSGMerge to combine each pair of intermediate solutions within each $U_{i}$ 
for $1\leq i\leq p$, including checking the pruning condition. Therefore, it takes $O(p\lambda ^{2})$ time for merging the intermediate solutions.
Overall, the running time for SSGMerge is $O(p\lambda ^{2}\log p\lambda +wp^{2}+|V|(\log |V|)^{2})$.
\end{proof}

Please note that $|V|$ in the complexity comes from the worst case of R-Tree distance browsing.
However, with the assumption of uniform distribution of the candidates' locations, the expected time of R-Tree 
distance browsing becomes $O(w\log |V|\cdot \log(w\log |V|))$.  More importantly, 
the experimental results manifest that SSGMerge is much faster than SSGS 
because SSGMerge effectively merges intermediate solutions into good feasible solutions to avoid examining the large search space.

\vspace{-5pt}
\section{Algorithm Design for MRGQ}\label{MRGQ}
\baselineskip=11.2pt
In this section, we turn our attention to Multiple Rally-Point Social Spatial Group Query (MRGQ), which finds 1) the most suitable activity location from a set of candidate locations and 2) a socially acquainted group with the minimal total spatial distance to the activity location. More specifically, MRGQ aims to find a pair $\langle F, q^* \rangle$, where $F$ is a socially acquainted group of $p$ people satisfying the familiarity constraint, and $q^* \in Q$ is a location in $Q$ such that $\langle F, q^* \rangle$ incurs the minimum total spatial distance. MRGQ is more difficult than SSGQ since different candidate social groups are closer to different locations, which need to be carefully considered as well.

To address the issue of multiple candidate locations, a straightforward approach is to repeat the SSGS algorithm $|Q|$ times to sequentially find the best group for each location. Nevertheless, this straightforward approach is not efficient because a spatial correlation may exist among multiple activity locations and thus can be exploited. 
In addition, it is desirable to design some effective index structures 
to facilitate efficient traversal and pruning of the search space. In this work, we propose to index the candidates with an R-Tree, while indexing the activity locations with a BallTree \cite{SO89}. Accordingly, we design new ordering strategies to quickly identify an activity location near an intermediate group of candidates satisfying the familiarity constraint and pruning strategies to avoid generating redundant
$\langle \widehat{F},q_{i}\rangle $ pairs, where $\widehat{F}$ is a group of $p$ candidates satisfying the familiarity constraint. Moreover, two effective strategies for traversing the search space are proposed, including All-Pair Distance Ordering and Single-Reference Distance Ordering. Processing time is also improved by introducing a number of new search space pruning rules, including Inner-Triangle Distance Pruning, Outer-Triangle Distance Pruning, and Activity Location Distance Pruning. In summary, during the process of selecting attendees and an activity location, we exploit both the spatial distances among different candidate locations as well as the distances from attendees to activity locations to effectively prune redundant search space to efficiently find the optimal solution. 

In Section \ref{analysis}, we present an Integer Linear Programming (ILP)
formulation for MRGQ which can obtain an optimal solution via a commercial solver, such as the IBM CPLEX \cite{CPLEX} parallel optimizer, one of the fastest commercial parallel solvers. However, as shown in Section \ref{Exp}, this still requires an unacceptable amount of time to find the optimal solution because MRGQ needs to simultaneously process the spatial and social dimensions. Therefore, in Section \ref{MAGS}, we design a new algorithm to efficiently process MRGQ.

\vspace{-10pt}
\subsection{Baseline Algorithms for MRGQ}\label{Baseline}
The baseline algorithms are extensions of SSGS mentioned in Section \ref{SSGQ}. While Socio-Spatial Ordering and Distance Pruning remain the same, we extend Familiarity Pruning introduced in Section \ref{pruning_SSGQ} to tailor the familiarity constraint for MRGQ. Specifically, 
if one of the following conditions holds, Familiarity Pruning stops moving any candidates into $S_{I}$, and the algorithm backtracks to the previous step to consider other candidate attendees.

\vspace{-13pt}
\begin{align}
|S_{I}|-\min_{v\in S_{I}}|N_{v}\cap S_{I}|>k+1 \text{, or} \label{MRGQ_SP_1}\\
\sum_{v\in S_{R}}|S_{R}\cap N_{v}|<(p-|S_{I}|)(p-|S_{I}|-k-1), \label{MRGQ_SP_2}
\end{align}
where $N_v$ is the set of neighbors of $v$ in $V$.

In Eq. (\ref{MRGQ_SP_1}), $\min_{v\in S_{I}}|N_{v}\cap S_{I}|$
represents the minimum number of neighbors for each individual $v$ in $S_{I}$%
. In other words, $|S_{I}|-\min_{v\in S_{I}}|N_{v}\cap S_{I}|-1$ is the
maximum number of unacquainted members for $v$ in $S_{I}$, and $-1$
is incorporated above to exclude $v$ herself. If $|S_{I}|-\min_{v\in
S_{I}}|N_{v}\cap S_{I}|-1>k$, at least one individual in $S_{I}$ has more
than $k$ unacquainted members in $S_{I}$. This situation violates the
familiarity constraint. Therefore, the pruning strategy holds since any
group growing from the current $S_{I}$ will never satisfy the familiarity
constraint. 

Eq. (\ref{MRGQ_SP_1}) considers the vertex degrees of the
individuals in $S_{I}$. In contrast, the pruning condition specified in Eq. (%
\ref{MRGQ_SP_2}) considers the degrees of the individuals that have not been
moved into $S_{I}$, i.e., those individuals that are in $S_{R}$. In the
right-hand-side (RHS) of Eq. (\ref{MRGQ_SP_2}), $(p-|S_{I}|)$ is the number
of individuals that need to be moved from $S_{R}$ to $S_{I}$. On the other
hand, for any solution group that satisfies the familiarity constraint, the
degree of each member is at least $(p-k-1)$ in the group. Therefore, if $%
S_{R}$ has an individual $u$ with the number of neighbors in $S_{R}$ smaller
than $(p-|S_{I}|-k-1)$, $S_{I}$ will never grow into a feasible solution
when $u$ is selected into $S_{I}$. In other words, if the total number of
neighbors that all individuals in $S_{R}$ have (i.e., $\sum_{v\in
S_{R}}|S_{R}\cap N_{v}|$) is smaller than $(p-|S_{I}|)(p-|S_{I}|-k-1)$,
selecting any $(p-|S_{I}|)$ individuals from $S_{R}$ into $S_{I}$ will never
generate a feasible solution, and thus this intermediate group can be trimmed accordingly.

For example, if $p=4$, $k=0$ and the social graph is shown in Figure \ref{fig:FIG_SSO_graph}. 
If $S_I=\{a,e\}$, then this $S_I$ can be pruned by Eq. (\ref{MRGQ_SP_1}) since $2-0>0+1$ holds, i.e., at least one vertex in current $S_I$ does not have enough friends to satisfy the familiarity constraint. Similarly, if $p=5,k=0$ and $S_I=\{a\}$, $S_R=\{b,c,d,e,f\}$, $S_I$ can also be pruned by Eq. (\ref{MRGQ_SP_2}) because $1+2+1+3+1<(5-1)(5-1-0-1)$ holds, i.e., the candidates in $S_R$ do not provide sufficient social tightness for the current $S_I$ to satisfy the familiarity constraint.

In the following, we introduce two baseline algorithms, namely SSP and SFGP.

\noindent \textbf{Sequential SSGQ Processing (SSP).}
As discussed earlier, 
an intuitive approach for answering MRGQ is to sequentially invoke algorithm SSGS for each activity location. However, even though the intermediate best solution 
can be exploited to prune inferior solutions not yet examined, 
this approach still incurs a huge query processing cost because it does not simultaneously trim multiple activity locations. Therefore, we improve SSP to SFGP as follows.

\noindent \textbf{Sequential Feasible Groups Processing (SFGP).}
In contrast to SSP that sequentially explores $|Q|$ 
branch-and-bound trees (i.e., one for each activity location), SFGP\
constructs only one branch-and-bound tree to facilitate joint exploration of the spatial and social dimensions. 
In addition to $S_{I}$ and $S_{R}$, for each node in the tree, SFGP also maintains a set $Q_{I}$ of remaining activity locations that need to be explored.
Initially, setting $S_{I}=\varnothing $, $S_{R}=V$, and $Q_{I}=Q$, SFGP first finds a reference activity location  $q_{ref}\in Q_{I}$ to guide the exploration, where $q_{ref}$ is the closest location to a candidate attendee $u\in S_R$ (i.e., $q_{ref}$ and $u$ are the spatially closest pair). As such, $q_{ref}$ can lead to a smaller total spatial distance in early stages of SFGP. 
Afterwards, SFGP moves candidates from $S_{R}$ into $S_{I}$ according to Socio-Spatial Ordering (introduced in Section \ref{SSGQ_Algo}) based on $q_{ref}$. After moving a candidate from $S_{R}$ into $S_{I}$, SFGP determines whether $S_{I}$ can be pruned by Familiarity Pruning mentioned in Eqs. (\ref{MRGQ_SP_1}) and (\ref{MRGQ_SP_2}). If $S_I$ is pruned by Familiarity Pruning, SFGP stops moving candidates into the current $S_{I}$ and backtracks because the current $S_I$ cannot grow into any feasible solutions. Moreover, each time a candidate is moved into $S_{I}$, SFGP examines each activity location $q_{i} \in Q_{I}$ with the Distance Pruning condition (introduced in Section \ref{pruning_SSGQ}). An activity location $q_{i}$ will be removed from $Q_{I}$ if it is distant from most members in $S_{I}$ (i.e., $q_{i}$ is pruned by Distance Pruning). While expanding $S_I$, if $Q_{I}$ becomes empty (i.e., all activity locations in $Q_{I}$ are pruned), SFGP stops the expansion and backtracks.

When $S_{I}$ contains exactly $p$ candidates and satisfies the familiarity constraint, 
SFGP computes the spatial distances from $S_{I}$ to each activity location in $Q_I$, 
and extracts the activity location $q \in Q_I$ which incurs the minimum spatial distance to $S_I$. 
If the spatial distance from $S_I$ to $q$ is smaller than the current minimum distance $D$,
SFGP records $\left\langle S_{I},q \right\rangle$, updates $D$ and backtracks to examine other possible solutions. When the search space is explored, SFGP outputs the recorded best solution and the corresponding activity location.

\begin{figure}[tbp]
\centering
\subfigure[Social graph and spatial distances.]
{\	\includegraphics[scale=0.3]{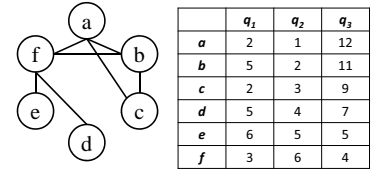} \label{FIG_SFGP_input}}
\subfigure[Branch-and-bound tree of SFGP.]
{\	\includegraphics[scale=0.3]{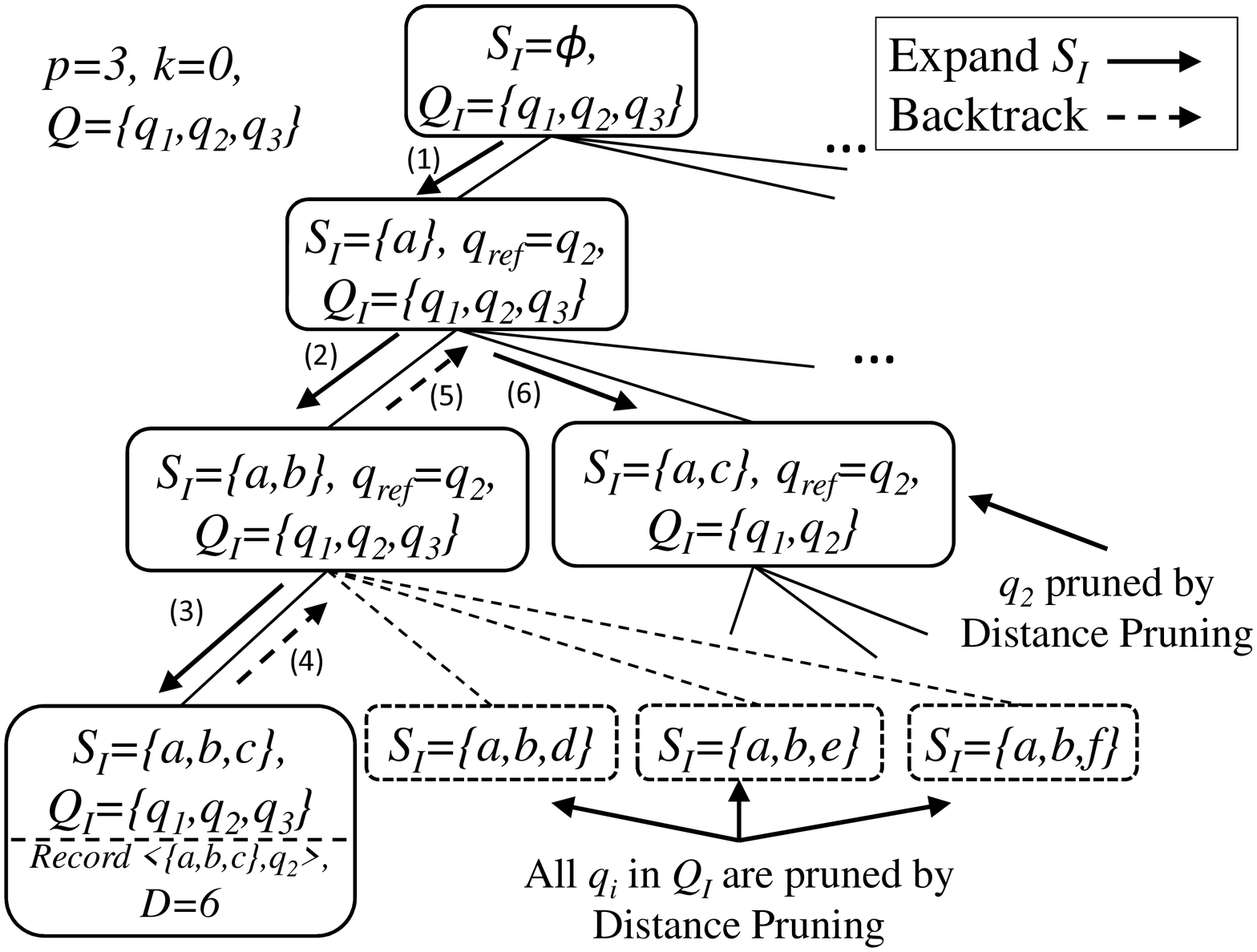}
\label{FIG_SFGP} }
\caption{Example of SFGP.}
\label{FIG_Example_SFGP}
\end{figure}

Figure \ref{FIG_SFGP} presents an example of SFGP to show that the size of $Q_{I}$
can rapidly decrease when a few more candidates are moved into $S_{I}$. The social network and the corresponding spatial distances to each $q_i \in Q_I$ are shown in Figure \ref{FIG_SFGP_input}. Assume $p=3$ and $k=0$, at the beginning, $S_I=\varnothing$, $Q_I=\{q_1,q_2,q_3\}$ and $S_R=V$. In step (1), SFGP first identifies $q_{ref}=q_2$ because $q_2$ and candidate attendee $a$ are the spatially closest pair while $a \in S_R$, and then SFGP moves $a$ from $S_R$ into $S_I$. In step (2), SFGP moves $b$ from $S_R$ into $S_I$. Note that $b$ is the candidate attendee in $S_R$ who is closest to $q_{ref}$. Here, moving $b$ into $S_I$ follows Socio-Spatial Ordering (SSO), and $S_I \cup \{b\}$ is not pruned by Familiarity Pruning. In step (3), SFGP moves $c$ into $S_I$ where $S_I = \{a,b,c\}$ satisfies the familiarity constraint. SFGP then scans over the activity locations in $Q_I$ and extracts $q_2$ because $q_2$ incurs the minimum total spatial distance to $S_I$. SFGP updates the currently best solution $\left\langle \{a,b,c\},q_2\right\rangle $ and its distance value (i.e., $D=6$) and backtracks to the previous state as step (4), i.e., $S_{I}=\{a,b\}$ and $Q_I=\{q_1,q_2,q_3\}$. SFGP then discovers that by applying Distance Pruning, all the activity locations in $Q_I$ can be removed, i.e., moving $d$, $e$ and $f$ into $S_I$ does not generate a better solution. Therefore, SFGP stops expanding the current $S_I$ and backtracks through step (5). Now, $S_I=\{a\}$ and SFGP moves $c$ into $S_I$ in step (6). In this case, $q_3$ can be removed from $Q_I$ because Distance Pruning indicates that $q_3$ will never lead to any better solutions given the current $S_I$. Therefore, SFGP only needs to examine $q_1$ and $q_2$ in the future expansion of the current $S_I$. During the process, if SFGP finds a feasible solution with a distance better than $D$, it records the solution and update $D$. SFGP repeats the above procedures and returns the best solution after the search is complete.

As compared to SSP, SFGP jointly examines the activity locations and candidate attendees, and employs Distance Pruning to effectively remove the activity locations that do not lead to better solutions. It then utilizes Familiarity Pruning to discard the intermediate groups that cannot grow into feasible solutions. Moreover, SFGP avoids the repeated explorations of different social groups, i.e., the same social group may be generated and examined for $|Q|$ times in SSP. As shown in Section \ref{Exp}, SFGP outperforms SSP. However, after carefully examining SFGP, we still find a number of areas that can be further improved, and thus propose a more efficient algorithm as detailed below.

\vspace{-10pt}
\subsection{Algorithm MAGS for MRGQ}\label{MAGS}

Although SFGP is able to prune redundant activity locations, 
it relies on sequential scans over $Q_{I}$ to determine whether a location in $Q_{I}$ can be safely pruned. 
Therefore, for every $q_i$ in $Q_I$, SFGP has to calculate a 
lower bound on the total spatial distance of the feasible solution generated 
from $S_{I}$ and $q_{i}$ according to Distance Pruning.
On the other hand, identifying $q_{ref}$ needs a scan over the activity locations in $Q_{I}$. Moreover, the selected $q_{ref}$ may not always be good because SFGP decides $q_{ref}$ before the first candidate attendee is moved into $S_I$, instead of adaptively changing $q_{ref}$ as $S_I$ grows.

To address these issues, we propose an algorithm, namely
\textit{Multiple Activity-Location Group Selection (MAGS)}, to efficiently process MRGQ. Similar to SFGP, MAGS 
processes multiple activity locations simultaneously. However, MAGS incorporates the following new ideas: a) an index of activity locations, b) new distance ordering strategies, including Single-Reference Distance Ordering and All-Pair Distance Ordering, and c) new distance pruning strategies, including Activity Location Distance Pruning, Outer-Triangle Distance Pruning and Inner-Triangle Distance Pruning. 
Using an index for the activity
locations avoids sequential scans of the activity locations in $Q_{I}$ (i.e., for the selection of $q_{ref}$ and pruning of unnecessary locations).
The new distance ordering strategies obtain $q_{ref}$ more efficiently and enable $q_{ref}$ to change during the expansion of $S_I$. As a result, feasible solutions with smaller total spatial distances can be obtained more effectively.
Moreover, the new distance pruning strategies exploit the interplay between $S_{I}$ and the activity locations, as well as the mutual distances of different locations,
to effectively and simultaneously prune multiple activity locations.

\vspace{-10pt}
\subsection{Indexing the Activity Locations}


\baselineskip=11.3pt
As previously mentioned, SFGP incurs many sequential scans over the activity
locations due to Distance Pruning, i.e., each time a candidate is moved into 
$S_{I}$, $Q_{I}$ needs to be scanned to determine whether some activity locations can
be pruned. Moreover, as SFGP extracts $q_{ref}\in Q_{I}$ and $u\in
S_{R}$ at the beginning, $q_{ref}$ is not always the
closest activity location for $S_{I}$ to be expanded afterward, especially when $%
S_{I}$ does not include $u$. 
Therefore, the proposed All-Pair Distance Ordering (APDO) is designed to
dynamically select $q_{ref}\in Q_{I}$ and $u\in S_{R}$ according to the
current $S_{I}$ (as described in Section 5.4). More specifically, the next
attendee $u$ that will be moved to $S_{I}$ and the corresponding $q_{ref}$
need to minimize the total spatial distance from $S_{I}\cup \{u\}$ to $%
q_{ref}$, i.e., $\min_{u\in S_{R},q_{ref}\in Q_{I}}\left\{
d_{u,q_{ref}}+\sum_{v\in S_{I}}d_{v,q_{ref}}\right\} $. Equipped with APDO,
MAGS finds good feasible solutions more quickly and prunes
search space with distance pruning strategies. However, this approach needs $O(|V|)$
sequential scans over $Q_I$ before a new candidate attendee is identified and moved into $S_I$.

\begin{figure}[tbp]
\centering
\subfigure[Indexed with balls.]
{\	\includegraphics[scale=0.2]{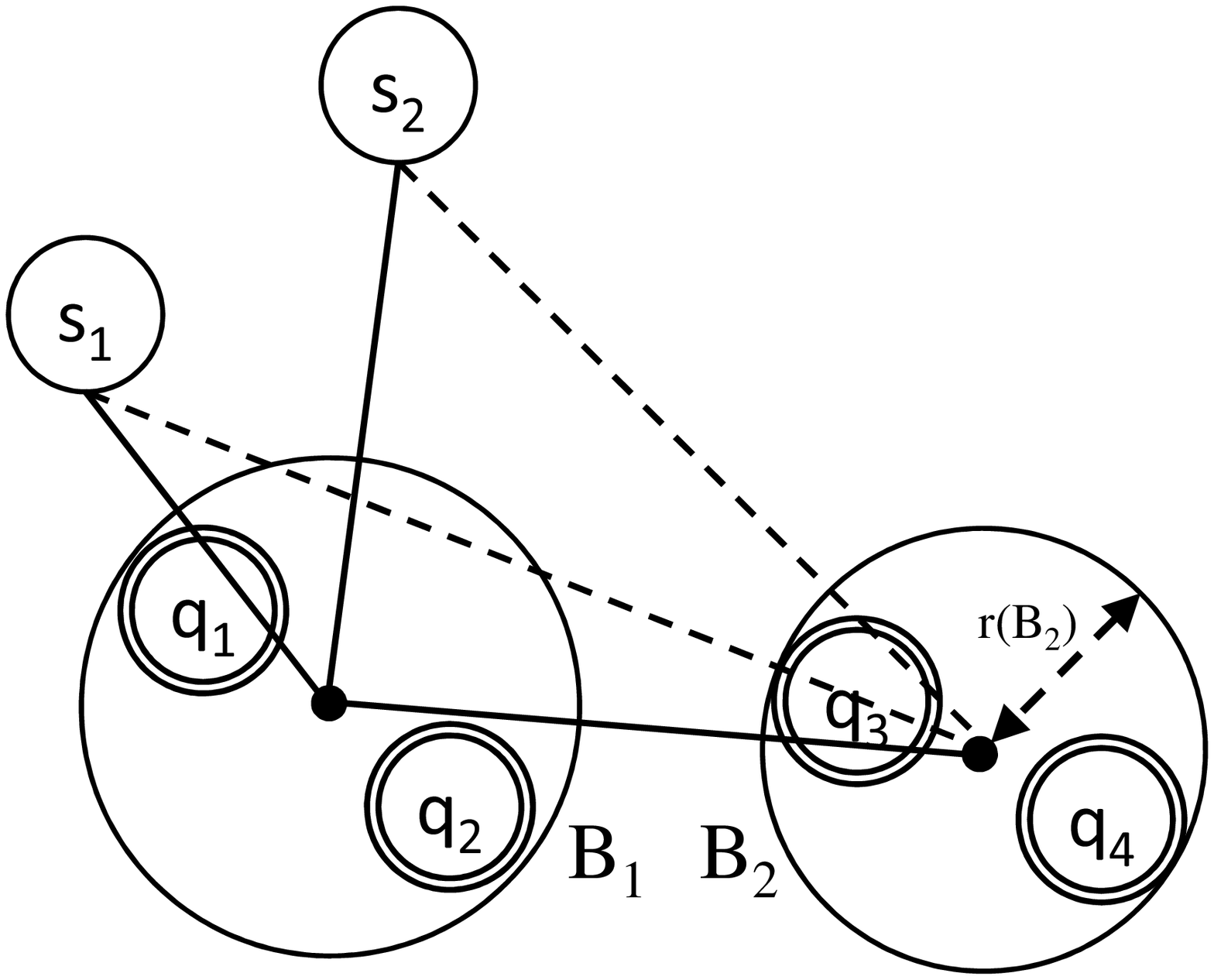}
\label{FIG_WhyTryBallTree} }
\hspace{+15pt}
\subfigure[Indexed with MBRs.]
{\	\includegraphics[scale=0.2]{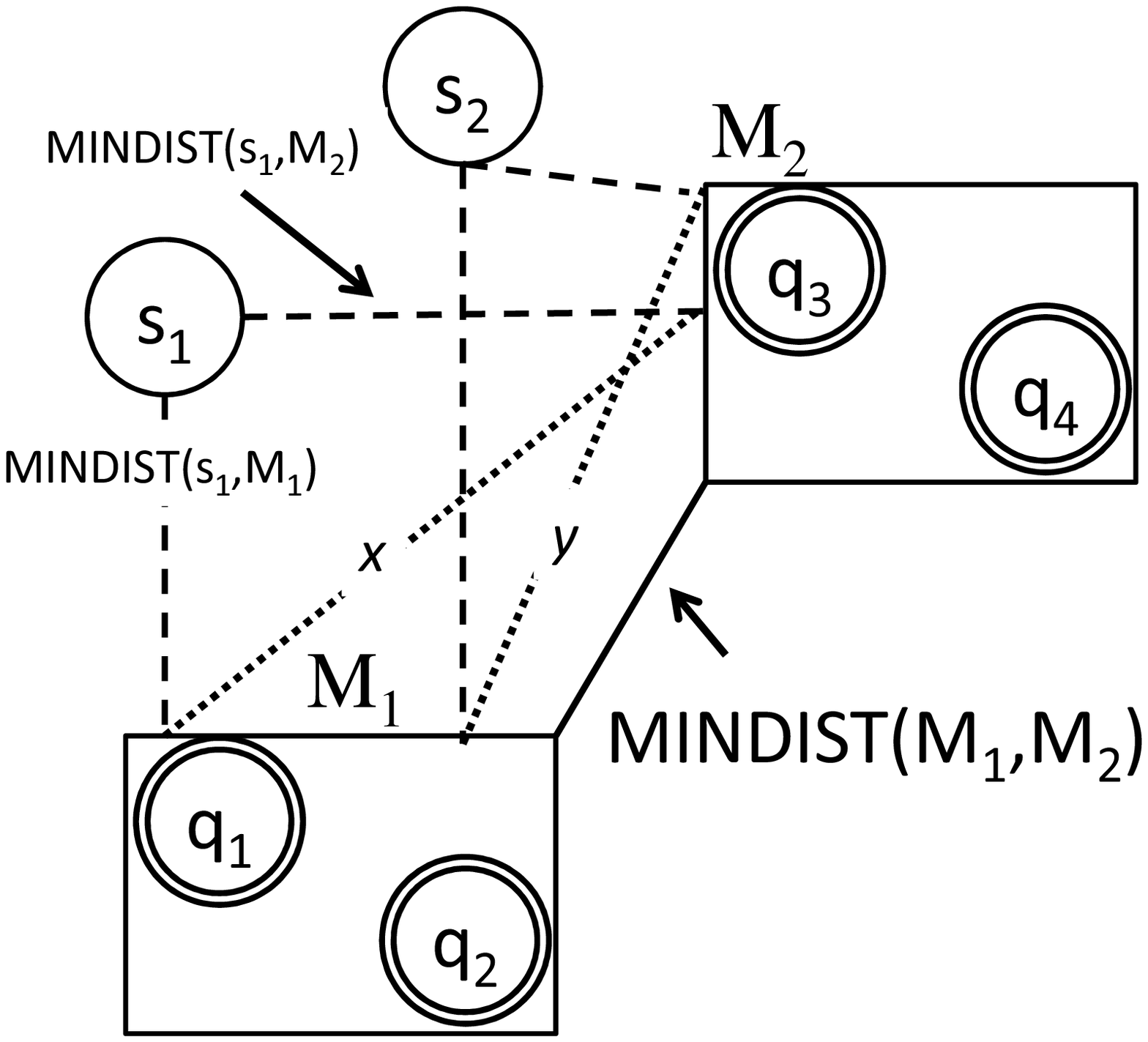} \label{FIG_WhyNotRTree}}
\caption{Comparisons of R-Tree and BallTree.}
\vspace{-20pt}
\label{FIG_IndexSelection}
\end{figure}

One way to avoid sequential scans over $Q_I$ is to index the activity locations in an index structure. This may facilitate rapid estimation of the spatial distances from activity candidates to potential activity locations and thus allow distance pruning strategies to immediately remove redundant activity locations from $Q_I$. With such an index structure, triangular inequality may be exploited in distance pruning strategies to further reduce distance computations (detailed later). Although the index structure has to be constructed at runtime, it can be reused many times in query processing.

We adopt BallTree \cite{SO89} to index the activity locations. In BallTree, each activity location $q_{i}\in Q$ is stored 
as a leaf node, and each internal node in BallTree is the smallest ball covering all the children balls. Here, a ball $B$ is associated with its center $ctr(B)$ and radius $r(B)$. The distance lower bound from a candidate $u$ to a ball $B$ on 2D space can be computed as $MINDIST(u,B)=d_{u,ctr(B)}-r(B)$.
The leaves of the BallTree are the
activity locations, while the internal nodes in the tree corresponds to a ball containing multiple activity locations. 


BallTree enables the removal of many unqualified locations at once, as illustrated in Figure \ref{FIG_WhyTryBallTree}. 
To simultaneously explore and prune multiple activity locations, 
a lower bound on the total spatial distance from $S_I=\{s_1,s_2\}$ to a ball, e.g., $B_1$, can be derived. If this distance lower bound exceeds the currently best solution value $D$, it assures that no activity location in $B_1$ will produce a better solution with any social group grown from $S_I$. Thus, all activity locations in $B_1$ can be safely pruned. In Figure \ref{FIG_WhyTryBallTree}, $\sum_{s_{i}\in S_{I}}MINDIST(s_{i},B_{1})$ serves as a lower bound on the total spatial distance from $S_{I}$ to $q_{1}$ and $q_{2}$. Moreover, we can employ triangular inequality to avoid the distance computation of 
$\sum_{s_{i}\in S_{I}}MINDIST(s_i,B_2)$, i.e., $\sum_{s_{i}\in S_{I}}MINDIST(s_{i},B_{2})=\sum_{s_{i}\in
S_{I}}(d_{s_{i},ctr(B_{2})}-r(B_2))<|S_{I}|\cdot d_{ctr(B_{1}),ctr(B_{2})}+\sum_{s_{i}\in S_{I}}d_{s_{i},ctr(B_{1})}-|S_I|\cdot r(B_{2})
$. Therefore, only the distance from $ctr(B_{1})$ to $ctr(B_{2})$ needs to be computed, together with $\sum_{s_{i}\in S_{I}}d_{s_{i},ctr(B_{1})}$, to derive a lower bound on the spatial distance from $S_{I}$ to $B_{2}$. In summary, instead of invoking sequential scans which need $|S_I|\cdot m$ distance computations to find the total spatial distances from $S_I$ to $m$ activity locations, indexing activity locations in BallTree requires only $|S_I|+(n-1)$ distance computations, where $n$ is the number of balls. 


An alternative index is R-Tree, but we argue that BallTree is more suitable for indexing activity locations here. Figure \ref{FIG_WhyNotRTree} illustrates an example where the activity locations are indexed in an R-Tree. As shown, minimum bounding rectangles (MBRs) are used to provide boundary information over locations inside them. In Figure \ref{FIG_WhyNotRTree}, $\sum_{s_{i}\in S_{I}}MINDIST(s_{i},M_{1})$ serves as a lower bound on the total spatial distance from $S_{I}$ to $q_{1}$ and $q_{2}$, where $MINDIST(s_{i},M_{1})$ denotes the minimum 
distance from $s_i$ to MBR $M_1$. However, it is difficult to employ triangular inequality with R-Tree to quickly obtain a lower bound on $\sum_{s_{i}\in S_{I}}MINDIST(s_{i},M_{2})$. As shown in Figure \ref{FIG_WhyNotRTree}, where $MINDIST(s_1,M_1)+x>MINDIST(s_1,M_2)$ holds, the inequality  
$MINDIST(s_1,M_1)+MINDIST(M_1,M_2)>MINDIST(s_1,M_2)$ is not guaranteed to hold because $MINDIST(M_1,M_2)\leq x$.
Therefore, it is necessary to compute $MINDIST(s_{1},M_{2})$ and $MINDIST(s_{2},M_{2})$ directly,
incurring $|S_{I}|\cdot \hat{n}$ on-line distance computations to derive all lower bounds, where $\hat{n}$ is the number of MBRs. In contrast, BallTree needs only $|S_I|+(n-1)$ distance computations with $n$ balls. Therefore, BallTree is preferable to R-Tree in our MAGS design.

%
%

BallTree brings two advantages to MAGS:
1) BallTree enables the design of efficient distance ordering strategies. By traversing
both R-Tree (for indexing candidate attendees) and BallTree (for indexing activity locations), 
our proposed distance ordering strategies avoid redundant examinations of candidate attendees and activity locations
to extract the reference activity location $q_{ref}$.
The new distance ordering strategies, combined with the original Socio-Spatial Ordering mentioned in Section \ref{SSGQ_Algo}, 
are promising to find good feasible solutions quickly and prune redundant search space effectively. 
2) BallTree enables distance-based pruning of activity locations at once in the early
stages. Moreover, the lower bound on the total spatial distance from a set of balls to $S_{I}$ can be quickly obtained to facilitate pruning.
In the following, we first propose two distance ordering strategies and then introduce the distance pruning strategies based on R-Tree and BallTree.

\vspace{-7pt}
\subsection{Distance Ordering}
\label{DistOdrSOSA}


While Socio-Spatial Ordering in SSGS is applicable to MAGS, its design does not consider selections of activity locations. Here we propose two new distance ordering strategies for MAGS: 
(1) Single-Reference Distance Ordering (SRDO). It selects the activity location along with the first candidate attendee, $v_{seed}$, for $S_{I}$. Note that the total spatial distance of the
feasible solutions obtained by SRDO may not be minimal since only a single location $q_{ref}$ is fixed as a reference. (2) All-Pair Distance Ordering (APDO). It adaptively changes the optimal activity location according to different $S_{I}$, and always chooses the best activity location when a new attendee is included into $S_{I}$ to minimize the total spatial distance from $S_{I}$ to the new reference activity location $q_{ref}$.

\vspace{+3pt}
\noindent \textbf{Single-Reference Distance Ordering (SRDO). }
At the beginning, i.e., $S_{I}=\varnothing $, SRDO starts by selecting a seed candidate $v_{seed}$ and a reference activity location $q_{ref}$ such that $d_{v_{seed},q_{ref}}$ is minimal. However, to avoid excessive distance computations, 
we fix $q_{ref}$ as $S_{I}$ grows. 
While SRDO requires later examination of other activity locations, the minimized distance may effectively eliminate consideration of many potential activity locations. 
To efficiently obtain $v_{seed}$ and $q_{ref}$, we traverse R-Tree
(indexing the candidate attendees) and BallTree (indexing activity locations) simultaneously,
to reduce the number of distance computations. 
To further improve the efficiency, a distance lower bound from any
candidate within an MBR $M_{i}$ to any activity location within a ball $B_{j}$%
, $MINDIST(M_{i},B_{j})$, is derived as $MINDIST(M_{i},B_{j})=MINDIST(M_{i},ctr(B_{j}))-r(B_{j})$, where $MINDIST(M_{i},ctr(B_{j}))$ is the minimum distance from $M_{i}$
to the center of $B_{j}$, and $r(B_{j})$ is the radius of $B_{j}$.
$MINDIST(M_{i},B_{j})$ represents a distance lower bound from any
candidate within $M_{i}$ to any activity location in $B_{j}$, which is particularly useful to determine redundant examinations of candidate attendees and activity locations located in distant MBRs and balls in R-Tree and BallTree.

More specifically, SRDO maintains two lists, $U_{R}$ and $U_{B}$, to record the
traversal status of R-Tree and BallTree. 
Initially, we insert the root of
R-Tree into $U_{R}$ and the root of BallTree into $U_{B}$. Then, at each
stage, we find the MBR $M_{i}$ in $U_{R}$ and the ball $B_{j}$ in $U_{B}$
that incur the minimum $MINDIST(M_{i},B_{j})$. If $M_{i}$ is not a leaf node
in R-Tree, we pop $M_{i}$ from $U_{R}$ and insert its children back into $%
U_{R}$, while a non-leaf node in BallTree is performed similarly. If the extracted $M_{i}$
and $B_{j}$ are both leaf nodes, they are assigned as $v_{seed}$ and $q_{ref}$, respectively.  
Note that the entries in $U_R$ and $U_B$ are popped in accordance with the shortest distance between 
them, $v_{seed}$ and $q_{ref}$ are indeed the closet attendee-location pair. 
Each candidate attendee $v_{i}$ and activity location $q_j$ in any other MBR $M_i$ and ball $B_j$ must incur a larger spatial distance since $MINDIST(M_{i},B_{j})$ is a lower bound, and $d_{v_{seed}, q_{ref}} \leq MINDIST(M_{i},B_{j}) \leq d_{v_{i},q_{j}}$.
Therefore, this approach effectively avoids examining
attendees and locations that are mutually distant because their corresponding MBRs and balls
will never be extracted from the lists. Moreover, if $d_{v_{seed},q_{ref}}>t$ (where $t$ is the spatial radius), MAGS can stop since there is no feasible solution in this case. 

\begin{figure}[tbp]
\centering
\includegraphics[scale=0.3]{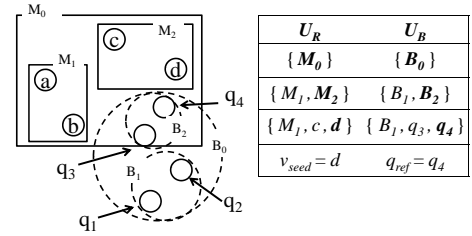}
\caption{Example of SRDO.}
\vspace{-5pt}
\label{FIG_Example_SRDO}
\end{figure}

Figure \ref{FIG_Example_SRDO} presents an illustrative example for SRDO. Assume there are four
candidates $\{a,b,c,d\}$ indexed by an R-Tree and four activity locations $%
\{q_{1},q_{2},q_{3},q_{4}\}$ indexed by a BallTree. To find $v_{seed}$ and $%
q_{ref}$, we first insert the root of R-Tree, $M_{0}$, into $U_{R}$, and
insert the root of BallTree, $B_{0}$, into $U_{B}$. There is only one
element in each list, and $MINDIST(M_{0},B_{0})=0$ since they overlap. Thus, SRDO
extracts $M_{0}$ and $B_{0}$ and insert their children into $U_{R}$ and $%
U_{B} $, respectively. Now, $U_{R}=\{M_{1},M_{2}\}$ and $U_{B}=\{B_{1},B_{2}%
\}$. SRDO then extracts $M_{2}$ and $B_{2}$ from each list since $%
MINDIST(M_{2},B_{2})$ is the smallest one. Afterwards, we insert the children of $M_{2}$
and $B_{2}$ into the lists, respectively, and now $U_{R}=\{M_{1},c,d\}$ and 
$U_{B}=\{B_{1},q_{3},q_{4}\}$. SRDO finds that $d$ and $q_{4}$ 
incur the minimum spatial distance and assigns $v_{seed}$ as $d$ and $q_{ref}$ as $q_{4}$.

Once $v_{seed}$ and $q_{ref}$ are extracted, $q_{ref}$ in SRDO is fixed. The candidate attendees chosen later still need to follow Socio-Spatial Ordering to maintain the required social tightness of $S_{I}$, and Familiarity Pruning is employed to prune the intermediate solutions that will not become feasible groups. Moreover, distance pruning strategies based on R-Tree and BallTree are employed to remove activity locations (detailed later) that will never produce a better solution.

\vspace{+3pt}
\noindent \textbf{All-Pair Distance Ordering (APDO).}
With SRDO, as $S_{I}$ grows, the $q_{ref}$ selected initially may not be the eventual activity location with the minimum
total distance to $S_{I}$. Figure \ref%
{FIG_Example_APDO} presents an illustrative example with $p=3$ and $\{a,b,c,d\}$ as the candidates indexed by
R-Tree, while $\{q_{1},q_{2},q_{3},q_{4}\}$ are activity locations indexed by
BallTree. SRDO finds $d$ for $v_{seed}$ and $q_{4}$ for $q_{ref}$.
Thereafter, $a$ and $c$ are moved into $S_{I}$.
However, since $a$ and $c$ are distant from $q_{4}$, the solution obtained by SRDO, i.e., 
$\left\langle \{a,c,d\},q_4\right\rangle $, incurs a large total spatial distance. In contrast, a better feasible solution is
$\left\langle \{a,b,d\},q_3\right\rangle $, which greatly reduces the
total spatial distance. Therefore, we propose All-Pair Distance Ordering (APDO), 
to select proper candidates from $S_{R}$ and adaptively switch $q_{ref}$ to the most suitable activity location.

\begin{figure}[tbp]
\centering
\includegraphics[scale=0.3]{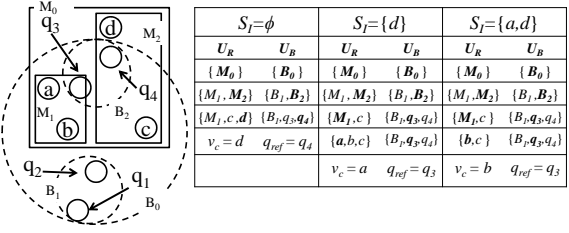}
\caption{Example of APDO.}
\vspace{-20pt}
\label{FIG_Example_APDO}
\end{figure}

We propose All-Pair Distance Ordering (APDO) to select proper candidates from $S_{R}$ and adaptively switch $q_{ref}$ to the most suitable activity location. More specifically, APDO simultaneously chooses $q_{ref}$ and a candidate attendee $v_{c}$ to expand $%
S_{I}$ at each iteration, such that the total spatial distance from $S_{I}$ to the selected $q_{ref}$ is minimized, i.e.,
\vspace{-5pt}
\begin{equation}
\min_{v_{c}\in S_{R},q_{ref}\in Q_{I}}\left \{ d_{v_{c},q_{ref}}+\sum_{v\in
S_{I}}d_{v,q_{ref}}\right \}. 
\vspace{-5pt}
\end{equation}

A straightforward approach to select $v_{c}$ and $q_{ref}$ is to scan over 
the entire sets of $S_{R}$ and $Q_I$. 
However, this approach requires $\left(
|S_{R}|+|S_{I}|\right) \cdot |Q_I|$ distance computations when we move a
candidate into $S_{I}$. To reduce this overhead,
we traverse both R-Tree and BallTree simultaneously, to reduce
unnecessary distance computations.

Two lists $U_{R}$ and $U_{B}$ are maintained during the traversal of 
R-Tree and BallTree. At each stage, MBR\ $M_{i}$
and ball $B_{j}$ are extracted from $U_{R}$ and $U_{B}$ based on the following score function.
\vspace{-8pt}
\begin{eqnarray}
\nonumber
&&\min_{M_{i}\in U_{R},B_{j}\in U_{B}}\left \{ \sum_{v\in
\nonumber
S_{I}}MINDIST(v,B_{j})+ \right. \\
&&\left. MINDIST(M_{i},B_{j})\right \} \label{EQU_APDO}
\end{eqnarray}%
where $MINDIST(v,B_{j})=d_{v,ctr(B_{j})}-r(B_{j})$ and $MINDIST(M_{i},B_{j})$ cannot exceed $t$.
In Eq. (\ref{EQU_APDO}), the first term represents the minimum total spatial distance from $S_I$ to 
any activity location within $B_{j}$, while the second term represents the
minimum spatial distance from a candidate attendee in $M_i$ to an activity location in $B_j$. 


After extracting $M_i$ and $B_j$ from Eq. (\ref{EQU_APDO}), if $M_{i}$ is not a leaf node on R-Tree, 
we pop it from $U_{R}$
and insert its children into $U_{R}$. Similarly, if $B_{j}$ is a non-leaf
node on BallTree, we also pop it from $U_{B}$ and insert its children into $%
U_{B}$. As such, APDO extracts $v_c$ and $q_{ref}$ without accessing the candidate attendees and activity locations distant from each other. We repeat the above procedure until $M_{i}$ and $B_{j}$
are both leaf nodes and $M_{i}\in S_{R}$. 
Finally, we move $v_{c}$ from $S_{R}$ into $S_{I}$ and continue the branch-and-bound
search. Moreover, during the above procedure, if $MINDIST(v_{c},B_{i})>t$ for a ball $B_{i}$,
all the activity locations within $B_{i}$ can be removed from $Q_I$ since no 
activity locations in $B_{i}$ satisfies the spatial radius constraint.
APDO iteratively extracts $v_c$ and $q_{ref}$ which incur the minimum spatial distance so as to avoid the situation where $q_{ref}$ is only close to a small number of candidate attendees but distant from the others. Moreover, APDO also allows for the early pruning of activity locations that are distant from the candidate attendees. This effectively reduces computation overhead when performing distance pruning strategies afterwards. 

Figure \ref{FIG_Example_APDO} presents an example with four candidates $\{a,b,c,d\}$ 
indexed by R-Tree and four activity locations $%
\{q_{1},q_{2},q_{3},q_{4}\}$ indexed by BallTree. 
Initially, when $S_{I}=\varnothing$, APDO finds the first $v_c$ and the corresponding $q_{ref}$ as follows (see the first column in the table). 
APDO first inserts the root of R-Tree, $M_{0}$, into $U_{R}$, and
inserts the root of BallTree, $B_{0}$, into $U_{B}$. There is only one element in each list, and $MINDIST(M_{0},B_{0})=0$ since they overlap. 
Thus, APDO extracts $M_{0}$ and $B_{0}$ and inserts their children into $U_{R}$ and $U_{B} $, respectively. 
Now, $U_{R}=\{M_{1},M_{2}\}$ and $U_{B}=\{B_{1},B_{2}\}$. APDO then extracts $M_{2}$ and $B_{2}$ from each list since 
$MINDIST(M_{2},B_{2})$ is the smallest one. Afterwards, we insert the children of $M_{2}$ and $B_{2}$ into the lists, respectively, and now $U_{R}=\{M_{1},c,d\}$ and $U_{B}=\{B_{1},q_{3},q_{4}\}$. APDO finds that $d$ and $q_{4}$ incur the minimum spatial distance, and the first candidate to be moved into $S_{I}$ is $d$ with the corresponding $q_{ref}$ as $q_{4}$. 

To choose the second candidate $v_{c}$ and update $q_{ref}$ (see the second column in the table), we insert the roots $M_{0}$ and $B_{0}$
of the R-Tree and BallTree into $U_{R}$ and $U_{B}$, respectively. Then $M_{0}$ and $B_{0}$
are extracted to insert their children, i.e., $U_{R}=\{M_{1},M_{2}\}$ and $U_{B}=\{B_{1},B_{2}\}$. 
Since $M_{2}$ and $B_{2}$
minimize Eq. (\ref{EQU_APDO}), their children are inserted, i.e., $U_{R}=\{M_{1},c\}$ and 
$U_{B}=\{B_{1},q_{3},q_{4}\}$. Note that here $d$ is not inserted into $U_{R}$
since it is not within $S_{R}$. Now, $M_{1}$ and $q_{3}$ minimize 
Eq. (\ref{EQU_APDO}) since $MINDIST(M_{1},q_{3})=0$, i.e., $M_1$ and $q_3$ overlap. Therefore, $%
M_{1} $ is popped from $U_{R}$ with its children inserted back into $U_{R}$.
Thus, $U_{R}=\{a,b,c\}$ and $U_{B}=\{B_{1},q_{3},q_{4}\}$. Among them, $%
\sum_{v\in S_{I}}MINDIST(v,q_{3})+MINDIST(a,q_{3})$ is the minimum.
In other words, $v_{c}=a$ and $q_{ref}=q_{3}$. It is worth noting that at this
stage, $q_{ref}$ changes from $q_{4}$ to $q_{3}$ since $q_{3}$ incurs a
smaller total spatial distance to $S_{I}\cup \{a\}$. Therefore, the second candidate to be moved into $S_I$ is $a$. The third column in the table details the extraction of the next $v_c$ and the corresponding $q_{ref}$, where $v_c=b$ and $q_{ref}=q_3$. After $b$ is moved into $S_I$, $S_{I}=\{a,b,d\}$, $q_{ref}=q_{3}$ is the first feasible solution. In addition, APDO does not need to
examine the children of $B_{1}$ since they are far away from the candidates.

\vspace{-15pt}
\subsection{Distance Pruning Strategies}
\baselineskip=11.3pt
To avoid examining redundant activity locations, 
a simple approach is to apply Distance Pruning (see Section \ref{pruning_SSGQ}) to derive the lower bounds on the total spatial distance from $S_{I}$
to each activity location. If the lower bound is larger than the currently best
solution value, the activity location can be safely discarded from
future expansions of $S_{I}$. However, the above approach is computation intensive because the total distance from each attendee in $S_{I}$ to each activity location needs to be obtained. In the following, we introduce a number of new pruning strategies designed to boost the efficiency in trimming redundant search space when a new attendee is added to $S_I$. 


We first propose \textit{Outer-Triangle Distance Pruning (OTDP)} and \textit{Inner-Triangle Distance Pruning (ITDP)} to derive the distance lower bounds with triangular inequality, which incur only small computation overhead. We then propose \textit{Activity Location Distance Pruning (ALDP)}, which derives the lower bounds with the help of R-Tree and BallTree to facilitate pruning of activity locations in balls simultaneously. In the following, we first discuss OTDP and ITDP for pruning single locations (point versions). This is then extended to pruning balls of locations (ball versions). Since points can be viewed as degenerated balls, the point versions of OTDP and ITDP can be treated as special cases of ball versions. 

\vspace{+3pt}
\noindent \textbf{Outer-Triangle Distance Pruning.} 
The strategy is to derive a lower bound on the total
spatial distance from $S_{I}$ to an activity location $q_{y}$ according to the total spatial distance from
$S_{I}$ to another activity location $q_{x}$ derived before.
Here, \textit{Outer-Triangle} indicates that the derivation of triangular inequality is through activity locations, i.e., outside $S_I$. On the other hand, \textit{Inner-Triangle Distance Pruning} (which will be detailed later), derives the distance lower bounds with triangular inequality purely based on the attendees in $S_I$.

Consider an activity location $q_y$ under examination. Let $q_x$ be an examined location, $d_{s_{i},q_{x}}$ denote the
spatial distance from an attendee $s_{i}\in S_{I}$ to $q_{x}$, and $d_{q_{x},q_{y}}$ denote the spatial distance from $q_{x}$ to $q_{y}$. 
As shown in Figure \ref{FIG_OuterTrianglePruning}, the lower bound on the spatial distance from $s_i$ to $q_y$ can be derived 
as $d_{q_x,q_y}-d_{s_i,q_x}<d_{s_i,q_y}$ according to triangular inequality. 
Therefore, a lower bound on the total spatial distance from $S_{I}$ to $q_{y}$ could be computed as
$\sum \nolimits_{i=1}^{|S_{I}|}(d_{q_x,q_y}-d_{s_{i},q_{x}}) = |S_{I}|\cdot d_{q_{x},q_{y}}-\sum \nolimits_{i=1}^{|S_{I}|}d_{s_{i},q_{x}}$. On the other hand, to compose a group with exactly $p$ attendees, MAGS needs to select the remaining $p-|S_{I}|$ attendees from $S_R$ into $S_I$. A lower bound on the total spatial distance of these $p-|S_I|$ attendees to $q_y$ is $(p-|S_I|)\cdot d_{v_{\min },q_{y}}$, where $d_{v_{\min },q_{y}}$ denotes the minimum spatial distance from $q_{y}$ to any candidates in $S_{R}$. Therefore, let $D$ denote the currently best solution value, the following lemma specifies Outer-Triangle Distance Pruning.

\begin{figure}[tbp]
\centering
\subfigure[For points.] {\
\includegraphics[scale=0.31]{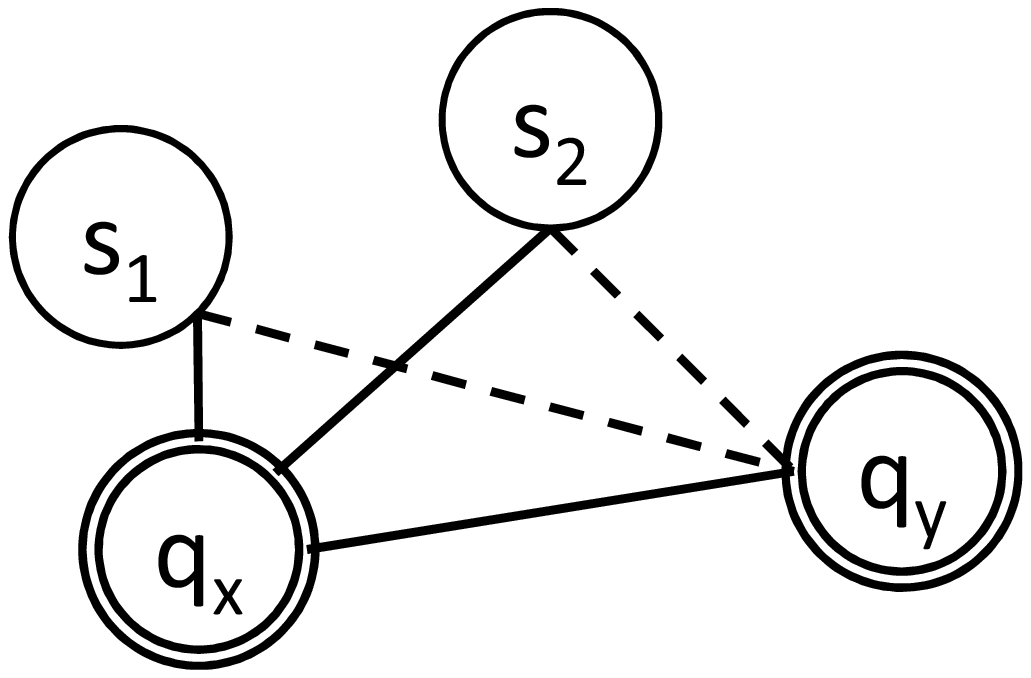}
\label{FIG_OuterTrianglePruning}}
\subfigure[For balls.]{\
\includegraphics[scale=0.31]{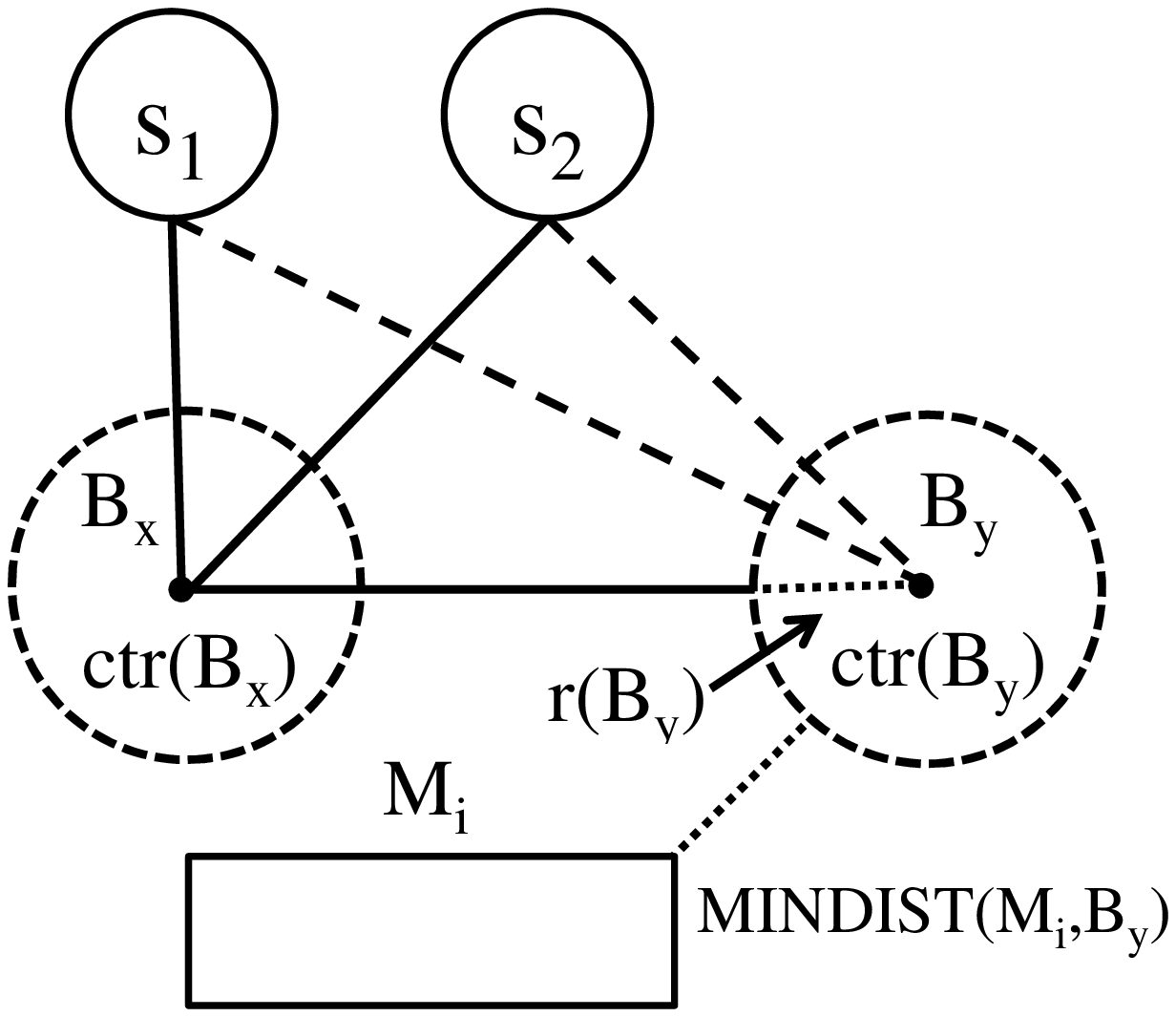}
\label{FIG_Ext_OTDP}}
\caption{Outer-Triangle Distance Pruning.}
\vspace{-20pt}
\end{figure}

\begin{lemma} \label{Lemma_OTDP}
If $|S_{I}|\cdot
d_{q_{x},q_{y}}-\sum \nolimits_{i=1}^{|S_{I}|}d_{s_{i},q_{x}}+(p-|S_{I}|)\cdot d_{v_{\min },q_{y}}\geq D$, 
$q_{y}$ never produces a better solution for any set of candidates expanded from $S_{I}$.
\end{lemma}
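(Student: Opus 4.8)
The plan is to show that the left-hand side of the displayed inequality is a valid lower bound on $\sum_{v\in F}d_{v,q_{y}}$ for \emph{every} group $F$ of size $p$ with $S_{I}\subseteq F$, all of whose vertices lie within the spatial radius; the pruning claim then follows at once, since any candidate solution $\langle F,q_{y}\rangle$ with $S_{I}\subseteq F$ would have objective value at least $D$ and hence cannot beat the incumbent.

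First I would partition such an $F$ into the already-committed part $S_{I}$ and the remaining part $F\setminus S_{I}$, which must consist of exactly $p-|S_{I}|$ vertices chosen from $S_{R}$. For each $s_{i}\in S_{I}$ I would bound $d_{s_{i},q_{y}}$ from below by applying the triangle inequality to the triple $(s_{i},q_{x},q_{y})$: from $d_{q_{x},q_{y}}\le d_{q_{x},s_{i}}+d_{s_{i},q_{y}}$ we get $d_{s_{i},q_{y}}\ge d_{q_{x},q_{y}}-d_{s_{i},q_{x}}$. This estimate remains valid even when its right-hand side is negative, because $d_{s_{i},q_{y}}\ge 0$; it is then merely loose. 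Summing over the $|S_{I}|$ members of $S_{I}$ yields $\sum_{s_{i}\in S_{I}}d_{s_{i},q_{y}}\ge |S_{I}|\cdot d_{q_{x},q_{y}}-\sum_{i=1}^{|S_{I}|}d_{s_{i},q_{x}}$, which is precisely the first two terms of the bound.

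Next I would treat the $p-|S_{I}|$ not-yet-selected attendees. Each of them belongs to $S_{R}$, so its distance to $q_{y}$ is at least $d_{v_{\min},q_{y}}=\min_{v\in S_{R}}d_{v,q_{y}}$, and therefore their total contribution to $\sum_{v\in F}d_{v,q_{y}}$ is at least $(p-|S_{I}|)\cdot d_{v_{\min},q_{y}}$, the third term. Combining the two estimates gives $\sum_{v\in F}d_{v,q_{y}}\ge |S_{I}|\cdot d_{q_{x},q_{y}}-\sum_{i=1}^{|S_{I}|}d_{s_{i},q_{x}}+(p-|S_{I}|)\cdot d_{v_{\min},q_{y}}$. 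Under the hypothesis of the lemma this quantity is $\ge D$, so no group grown from $S_{I}$ and paired with $q_{y}$ can improve upon the current best solution, and $q_{y}$ may be safely discarded.

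I do not anticipate a genuine obstacle; the only two points requiring care are (i) orienting the triangle inequality so that it produces a \emph{lower} bound on $d_{s_{i},q_{y}}$ and observing that a negative right-hand side does no harm, and (ii) taking the minimum $d_{v_{\min},q_{y}}$ over $S_{R}$ rather than over $V$, since the remaining $p-|S_{I}|$ attendees are exactly those drawn from $S_{R}$. It is also worth noting in the proof that $q_{x}$ may be taken to be any previously examined activity location whose aggregate distance $\sum_{i}d_{s_{i},q_{x}}$ has already been cached, so that OTDP reuses earlier computation and needs only the single new distance $d_{q_{x},q_{y}}$.
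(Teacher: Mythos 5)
Your proof is correct and follows essentially the same route as the paper's: the same decomposition of a candidate group into $S_{I}$ and the $p-|S_{I}|$ remaining members of $S_{R}$, the same triangle-inequality bound $d_{s_{i},q_{y}}\geq d_{q_{x},q_{y}}-d_{s_{i},q_{x}}$ summed over $S_{I}$, and the same $(p-|S_{I}|)\cdot d_{v_{\min},q_{y}}$ term for the rest. If anything, your observation (i) is cleaner than the paper's treatment, which asserts that $d_{q_{x},q_{y}}\geq\max_{s_{i}\in S_{I}}d_{s_{i},q_{x}}$ is needed for the bound to be valid, whereas, as you note, the term-by-term inequality holds unconditionally and a negative right-hand side only makes the bound loose, never unsound.
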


\begin{proof}
In the above inequality, the first two terms represent a
lower bound on the total spatial distance from $S_{I}$ to $q_{y}$, and the
third term is a lower bound on the total spatial distance from $q_{y}$ to any $%
(p-|S_{I}|)$ candidates in $S_{R}$.
As shown in Figure \ref{FIG_OuterTrianglePruning}, 
from triangular inequality, if $d_{q_{x},q_{y}}\geq
\max_{s_{i}\in S_{I}}d_{s_{i},q_{x}}$, then $%
d_{q_{x},q_{y}}-d_{s_{i},q_{x}}<d_{s_{i},q_{y}}$, $\forall 1\leq i\leq
|S_{I}|$ must hold. Therefore, $\sum%
\nolimits_{i=1}^{|S_{I}|}(d_{q_{x},q_{y}}-d_{s_{i},q_{x}})\leq
\sum_{i=1}^{|S_{I}|}d_{s_{i},q_{y}}$, which can be written as $|S_{I}|\cdot
d_{q_{x},q_{y}}-\sum \nolimits_{i=1}^{|S_{I}|}d_{s_{i},q_{x}}\leq \sum_{i=1}^{|S_{I}|}d_{s_{i},q_{y}}$. 
Note that $d_{q_{x},q_{y}}\geq
\max_{s_{i}\in S_{I}}d_{s_{i},q_{x}}$ is necessary to be satisfied, otherwise the left-hand-side of $d_{q_{x},q_{y}}-d_{s_{i},q_{x}}<d_{s_{i},q_{y}}$ is not guaranteed to be a non-negative value, and  $\sum
\nolimits_{i=1}^{|S_{I}|}(d_{q_{x},q_{y}}-d_{s_{i},q_{x}})$ is not able to act as a lower bound on $\sum_{i=1}^{|S_{I}|}d_{s_{i},q_{y}}$.
On the other hand, $(p-|S_{I}|)d_{v_{\min },q_{y}}$ is a lower bound on the total spatial from $(p-|S_{I}|)$ candidates in $S_{R}$ to activity location $q_{y}$. Therefore, $|S_{I}|\cdot
d_{q_{x},q_{y}}-\sum%
\nolimits_{i=1}^{|S_{I}|}d_{s_{i},q_{x}}+(p-|S_{I}|)d_{v_{\min },q_{y}}$ is
a lower bound on the total spatial distance from any set of $p$ candidates expanded
from $S_{I}$ to $q_{y}$. In summary, if Outer-Triangle
Distance Pruning condition holds, the total spatial distance from any set of
$p$ candidates expanded from $S_{I}$ to $q_{y}$ always exceeds or equals to $D$. 
\end{proof}

Since $\sum \nolimits_{i=1}^{|S_{I}|}d_{s_{i},q_{x}}$ is computed when
we access $q_{x}$, we only need to compute $d_{q_{x},q_{y}}$
instead of each $d_{s_{i},q_{y}}$. More importantly, it is possible to improve Outer-Triangle Distance Pruning from a single location to a ball of locations, 
as shown in Figure \ref{FIG_Ext_OTDP},
to prune multiple redundant activity locations in the early stages of MAGS. 

Specifically, when we consider two balls $B_x$ and $B_y$ instead of two locations $q_x$ and $q_y$, a lower bound on the spatial distance from $s_i\in S_I$ to any location in $B_y$ can be computed as
$d_{ctr(B_{x}),ctr(B_{y})}-d_{s_i,ctr(B_{x})}-r(B_{y})$, as shown in Figure \ref{FIG_Ext_OTDP}. Therefore, a lower bound on the total spatial distance from $S_I$ to any location in $B_y$ is $|S_{I}|\cdot d_{ctr(B_{x}),ctr(B_{y})}-\sum \nolimits_{s_i\in S_{I}}d_{s_i,ctr(B_{x})}-|S_{I}|\cdot r(B_{y})$. Moreover, MAGS also derives a lower bound of the remaining $p-|S_I|$ attendees as $(p-|S_{I}|)\min_{M_i\in U_{R}}MINDIST(M_{i},B_{y})$, where $\min_{M_i\in U_{R}}MINDIST(M_{i},B_{y})$ denotes a lower bound on the spatial distance from the locations in $B_y$ to its closest candidate attendees (i.e., $M_{i}$).
In summary, given the currently best solution value $D$, 
ball $B_{x}$ and $S_{I}$, a ball $B_{y}$ can be pruned according to the following lemma.

\begin{lemma} \label{Lemma_Ext_OTDP}
If $|S_{I}|\cdot d_{ctr(B_{x}),ctr(B_{y})}-\sum \nolimits_{s_i\in
S_{I}}d_{s_i,ctr(B_{x})}-|S_{I}|\cdot r(B_{y})+
(p-|S_{I}|)\min_{M_i\in U_{R}}MINDIST(M_{i},B_{y})\geq D$,
the activity locations within ball $B_{y}$ never
produce a better solution for any set of $p$ candidates expanded from $S_{I}$.
\end{lemma}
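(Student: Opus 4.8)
The plan is to verify that the left-hand side of the displayed inequality is a valid lower bound on the total spatial distance from any group $F$ of $p$ candidates with $S_I\subseteq F$ to any activity location $q_y$ contained in the ball $B_y$; once this is shown, the lemma follows immediately, because a solution whose total spatial distance is at least the current best value $D$ can never be an improvement. This mirrors the argument for the point version in Lemma~\ref{Lemma_OTDP}, with the activity location $q_x$ replaced by the center $ctr(B_x)$ and with the radius $r(B_y)$ absorbing the slack introduced by replacing a single point $q_y$ by a whole ball.

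First I would bound the contribution of the attendees already in $S_I$. Fix any $q_y\in B_y$. For each $s_i\in S_I$, the triangular inequality gives $d_{s_i,q_y}\geq d_{ctr(B_x),q_y}-d_{s_i,ctr(B_x)}$, and a second application together with $d_{ctr(B_y),q_y}\leq r(B_y)$ (since $q_y$ lies inside $B_y$) gives $d_{ctr(B_x),q_y}\geq d_{ctr(B_x),ctr(B_y)}-r(B_y)$. Chaining these yields $d_{s_i,q_y}\geq d_{ctr(B_x),ctr(B_y)}-r(B_y)-d_{s_i,ctr(B_x)}$, and summing over the $|S_I|$ members of $S_I$ produces exactly the first three terms of the pruning condition. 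As in Lemma~\ref{Lemma_OTDP}, this chain is a genuine (non-negative) lower bound only when $d_{ctr(B_x),ctr(B_y)}-r(B_y)\geq\max_{s_i\in S_I}d_{s_i,ctr(B_x)}$, so I would note that the rule is applied only under this condition, which holds in MAGS because $B_x$ was extracted from the BallTree before $B_y$ and $\sum_{s_i\in S_I}d_{s_i,ctr(B_x)}$ is already available from that earlier step.

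Next I would bound the contribution of the $p-|S_I|$ attendees still to be selected from $S_R$. Each such candidate $v$ lies in some MBR $M_i\in U_R$, so by the definition of $MINDIST$ between an MBR and a ball and by $q_y\in B_y$ we have $d_{v,q_y}\geq MINDIST(M_i,B_y)\geq\min_{M_i\in U_R}MINDIST(M_i,B_y)$; hence these attendees contribute at least $(p-|S_I|)\min_{M_i\in U_R}MINDIST(M_i,B_y)$, the fourth term. Adding the two bounds shows that the total spatial distance from any such $F$ to $q_y$ is at least the left-hand side of the inequality, and since this bound is independent of the particular $q_y$ the conclusion extends to every activity location inside $B_y$. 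The main obstacle I anticipate is the careful handling of the non-negativity of the triangular-inequality estimate: one must ensure each summand $d_{ctr(B_x),ctr(B_y)}-r(B_y)-d_{s_i,ctr(B_x)}$ is used as a lower bound only when it is non-negative, exactly the subtlety flagged in Lemma~\ref{Lemma_OTDP}; otherwise a negative summand could spuriously weaken the bound and render the pruning rule unsound. The remaining manipulations are routine applications of the triangular inequality and of the point-to-ball and MBR-to-ball distance definitions.
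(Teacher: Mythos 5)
Your proof is correct and follows essentially the same route as the paper's: a triangular-inequality bound through $ctr(B_{x})$ for the members of $S_{I}$, the radius term $|S_{I}|\cdot r(B_{y})$ to pass from $ctr(B_{y})$ to an arbitrary location in the ball, and the $MINDIST$ lower bound for the $p-|S_{I}|$ remaining candidates. Your explicit handling of the non-negativity caveat matches the condition the paper flags in Lemma~\ref{Lemma_OTDP} and restates here, so there is no substantive difference.
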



\begin{proof}
As shown in Figure \ref{FIG_Ext_OTDP}, if $\max_{s_i\in
S_{I}}d_{s_i,ctr(B_{x})}<d_{ctr(B_{x}),ctr(B_{y})}$, according to triangular inequality, $%
d_{ctr(B_{x}),ctr(B_{y})}-d_{s_{i},ctr(B_{x})}<d_{s_{i},ctr(B_{y})}$ must
hold. Therefore, $\sum \nolimits_{s_i\in
S_{I}}(d_{ctr(B_{x}),ctr(B_{y})}-d_{s_i,ctr(B_{x})})=|S_{I}|\cdot
d_{ctr(B_{x}),ctr(B_{y})}-\sum \nolimits_{s_i\in S_{I}}d_{s_i,ctr(B_{x})}$ is a
lower bound on $\sum \nolimits_{s_i\in S_{I}}d_{s_i,ctr(B_{y})}$, i.e., the total
distance from $S_{I}$ to $ctr(B_{y})$. In addition, since $%
d_{s_i,ctr(B_{y})}-r(B_{y})$ is a lower bound on the distance from $s_i$ to any activity location
in $B_{y}$, the lower bound on the total spatial distance
from any activity location within $B_{y}$ to $S_{I}$ is thus $|S_{I}|\cdot
d_{ctr(B_{x}),ctr(B_{y})}-\sum \nolimits_{s_i\in
S_{I}}d_{s_i,ctr(B_{x})}-|S_{I}|\cdot r(B_{y})$. Moreover, the lower bound on
the total spatial distance from any activity location within $B_{y}$ to
$(p-|S_{I}|)$ candidates in $S_R$ is $(p-|S_{I}|)\min_{M_i\in
U_{R}}MINDIST(M_{i},B_{y})$. Therefore, if the condition of Outer-Triangle Distance Pruning holds, 
any activity location within $%
B_{y}$ never produces a better solution by incorporating any $(p-|S_{I}|)$
candidates into $S_{I}$, and $B_{y}$ thus can be safely pruned. 
\end{proof}

Compared with Lemma \ref{Lemma_OTDP}, Lemma \ref{Lemma_Ext_OTDP} further aggregates the distance computation by including multiple balls in BallTree. Outer-Triangle Distance Pruning is performed each time $S_I$ is expanded. 
Note that $\sum \nolimits_{s_i\in S_{I}}d_{s_i,ctr(B_{x})}$
is computed when accessing ball $B_{x}$. Therefore, Outer-Triangle Distance Pruning is able to prune multiple balls without recomputing $\sum \nolimits_{s_i\in S_{I}}d_{s_i,ctr(B_{x})}$ each time.

\vspace{+3pt}
\noindent \textbf{Inner-Triangle Distance Pruning (ITDP).} 
Outer-Triangle Distance Pruning derives the distance lower bounds based on the distance from $S_{I}$ to another previously-calculated activity location. On the other hand,
the idea of Inner-Triangle Distance Pruning is that, when the attendees in $S_I$ are sparser, the total spatial distance from $S_I$ to some activity locations may also increase. Therefore,
Inner-Triangle Distance Pruning removes redundant activity locations by deriving the lower bounds of the total spatial distance from attendees in $S_{I}$ to activity locations, based on the spatial distances of attendees in $S_I$. 

\begin{figure}[tbp]
\centering
\subfigure[For points.] {\
\includegraphics[scale=0.31]{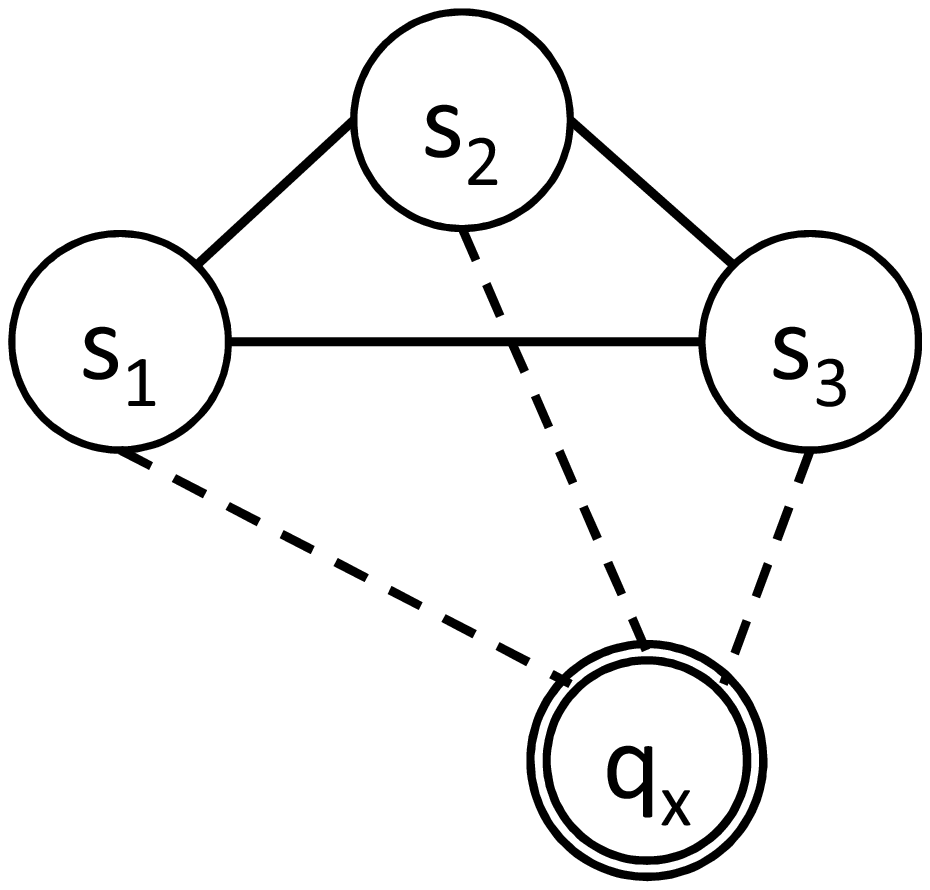}
\label{FIG_InnerTrianglePruning}}
\hspace{+15pt}
\subfigure[For balls.] {\
\includegraphics[scale=0.31]{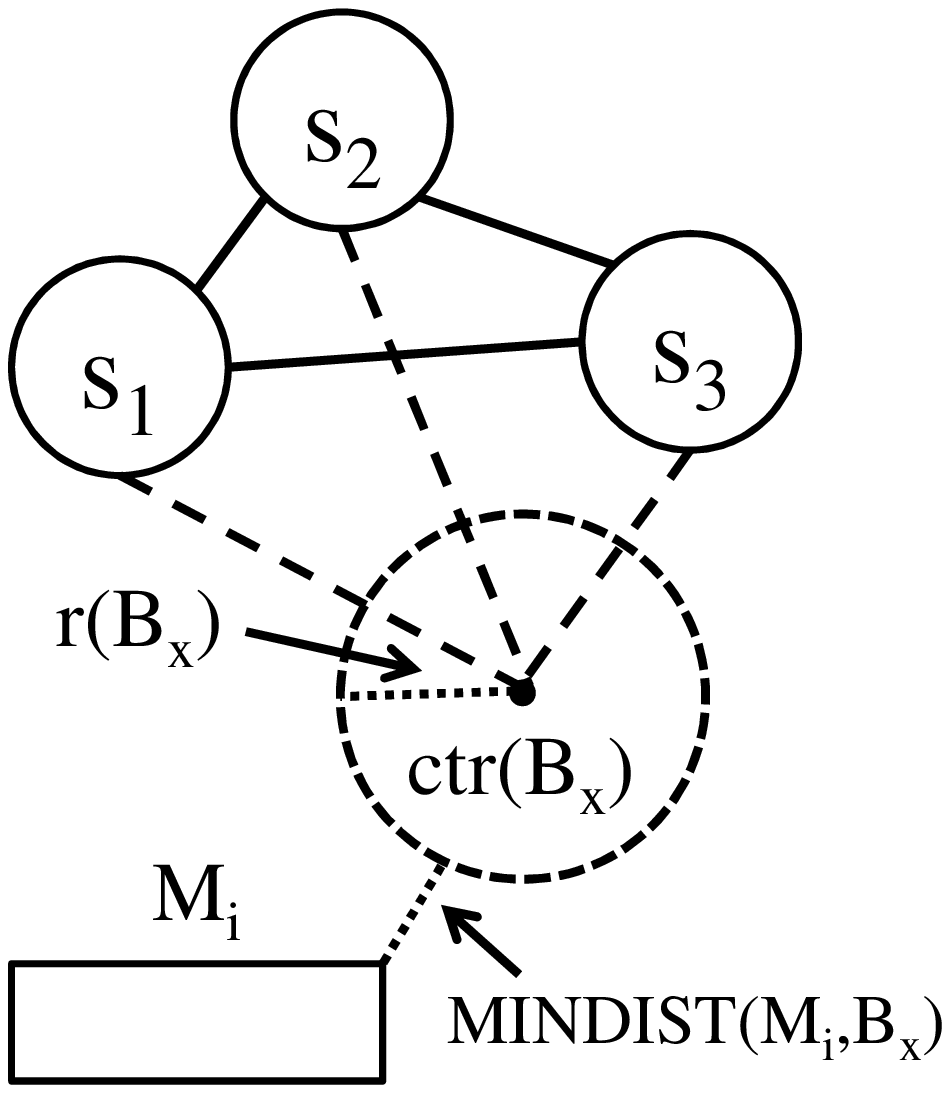}
\label{FIG_Ext_ITDP}}
\caption{Inner-Triangle Distance Pruning.}
\vspace{-20pt}
\end{figure}


Figure \ref{FIG_InnerTrianglePruning} shows a case where $S_I$ contains three attendees. 
In this case, the distance among each pair of attendees in $S_I$, i.e., $d_{s_i,s_j}$ (solid lines) is used to derive a lower bound on the total spatial distance from $s_i$, $s_j$ to $q_x$ (dotted lines), i.e., $d_{s_i,q_x}+d_{s_j,q_x}>d_{s_i,s_j}$. Therefore, Figure \ref{FIG_InnerTrianglePruning} shows a set of lower bounds on the spatial distance from $s_i$ to any location $q_x$: 1)
$d_{s_1,q_x}+d_{s_2,q_x}>d_{s_1,s_2}$, 2) $d_{s_1,q_x}+d_{s_3,q_x}>d_{s_1,s_3}$, and 3) $d_{s_2,q_x}+d_{s_3,q_x}>d_{s_2,s_3}$. Summing them up, we have a lower bound on the total spatial distance from $S_I$ to $q_x$, i.e., $(|S_I|-1)\cdot \sum_{i=1}^{|S_I|}d_{s_i,q_x}>\sum_{i=1}^{|S_I|-1} \sum_{j=i+1}^{|S_I|}d_{s_i,s_j}$. On the other hand, since MAGS needs to move other $p-|S_I|$ attendees to $S_I$, a lower bound on the total spatial distance from them to $q_x$ is $(|p-|S_I|)\cdot d_{v_{\min },q_{x}}$, where $d_{v_{\min },q_{x}}$ denotes the minimum spatial distance from $q_{x}$ to any candidates in $S_{R}$. Therefore, Inner-Triangle Distance Pruning is specified in the following lemma.

\begin{lemma} \label{Lemma_ITDP}
If $\left( \frac{1}{|S_{I}|-1}\sum_{i=1}^{|S_{I}|-1}%
\sum_{j=i+1}^{|S_{I}|}d_{s_{i},s_{j}}\right) +(p-|S_{I}|)d_{v_{\min
},q_{x}}\geq D$ holds, $q_{x}$ never produces a better solution for any set of $p$ candidates
expanded from $S_{I}$.
\end{lemma}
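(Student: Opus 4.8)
The plan is to follow the same template as the proof of Lemma~\ref{Lemma_OTDP}, but to build the lower bound on the contribution of the already-selected attendees $\sum_{s_i\in S_I}d_{s_i,q_x}$ purely from the pairwise distances inside $S_I$, instead of relaying it through a previously examined activity location. First, for every unordered pair $s_i,s_j\in S_I$ the triangular inequality applied to the triangle $s_i s_j q_x$ gives $d_{s_i,q_x}+d_{s_j,q_x}\geq d_{s_i,s_j}$. I would then sum this inequality over all $\binom{|S_I|}{2}$ pairs and observe that each attendee $s_i$ appears in exactly $|S_I|-1$ of these pairs, so the left-hand side collapses to $(|S_I|-1)\sum_{i=1}^{|S_I|}d_{s_i,q_x}$ while the right-hand side is $\sum_{i=1}^{|S_I|-1}\sum_{j=i+1}^{|S_I|}d_{s_i,s_j}$. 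Dividing through by $|S_I|-1$ yields $\sum_{i=1}^{|S_I|}d_{s_i,q_x}\geq \frac{1}{|S_I|-1}\sum_{i=1}^{|S_I|-1}\sum_{j=i+1}^{|S_I|}d_{s_i,s_j}$, i.e. the first term of the pruning condition is a valid lower bound on the total distance from $S_I$ to $q_x$.

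Next I would account for the $p-|S_I|$ attendees still to be drawn from $S_R$: any such candidate $v$ satisfies $d_{v,q_x}\geq d_{v_{\min},q_x}$ by definition of $v_{\min}$ as the candidate in $S_R$ closest to $q_x$, so their aggregate distance to $q_x$ is at least $(p-|S_I|)d_{v_{\min},q_x}$, which is precisely the second term of the condition. Adding the two bounds, any group of $p$ candidates that contains $S_I$ has total spatial distance to $q_x$ at least $\frac{1}{|S_I|-1}\sum_{i=1}^{|S_I|-1}\sum_{j=i+1}^{|S_I|}d_{s_i,s_j}+(p-|S_I|)d_{v_{\min},q_x}$. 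Hence, if this quantity is already $\geq D$, no completion of $S_I$ paired with $q_x$ can beat the incumbent of value $D$, so $q_x$ may be discarded from all future expansions of $S_I$, and the lemma follows.

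The step that deserves the most care is the averaging manipulation, but in fact it is where this proof is cleaner than the Outer-Triangle case: the inequality $d_{s_i,q_x}+d_{s_j,q_x}\geq d_{s_i,s_j}$ is unconditional, so no auxiliary hypothesis analogous to $d_{q_x,q_y}\geq \max_i d_{s_i,q_x}$ is needed to keep the estimate non-negative. The one caveat I would make explicit is that $|S_I|\geq 2$ is implicitly assumed so that the normalization $\tfrac{1}{|S_I|-1}$ is well defined (this is the regime in which the rule is invoked), and that the derived quantity is a genuine lower bound, so using ``$\geq D$'' rather than ``$>D$'' correctly preserves optimality. I would close, as in the passage from Lemma~\ref{Lemma_OTDP} to Lemma~\ref{Lemma_Ext_OTDP}, by remarking that the inner pairwise distances $d_{s_i,s_j}$ can be maintained incrementally as $S_I$ grows so the test costs only $O(|S_I|)$ additional work per expansion, and that the same argument extends to a ball version (Figure~\ref{FIG_Ext_ITDP}) by substituting $ctr(B_x)$ for $q_x$ and subtracting $|S_I|\cdot r(B_x)$ from the bound.
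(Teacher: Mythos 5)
Your proof is correct and follows essentially the same route as the paper's: bound $\sum_{i}d_{s_i,q_x}$ from below by summing the triangle inequality $d_{s_i,q_x}+d_{s_j,q_x}\geq d_{s_i,s_j}$ over all pairs in $S_I$, divide by $|S_I|-1$, and add the $(p-|S_I|)d_{v_{\min},q_x}$ bound for the remaining candidates. The only differences are cosmetic (you count pair memberships directly where the paper doubles both sides over ordered pairs), and your explicit caveat that $|S_I|\geq 2$ is a welcome touch the paper leaves implicit.
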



\begin{proof}
The first term of the above inequality is a lower bound on the total spatial
distance from $S_{I}$ to $q_{x}$, and the
second term is a lower bound on the total spatial distance to $q_{x}$ from any $(p-|S_{I}|)$ candidates in $S_{R}$.
From triangular inequality, we have $%
d_{s_{i},q_{x}}+d_{s_{j},q_{x}}>d_{s_{i},s_{j}}$, $\forall 1\leq i,j\leq
|S_{I}|$ and $i\neq j$, such as $%
d_{s_{1},q_{x}}+d_{s_{2},q_{x}}>d_{s_{1},s_{2}}$ and $%
d_{s_{2},q_{x}}+d_{s_{1},q_{x}}>d_{s_{2},s_{1}}=d_{s_{1},s_{2}}$ as shown in Figure \ref{FIG_InnerTrianglePruning}. 
Therefore, $(|S_{I}|-1)\sum_{i=1}^{|S_{I}|}d_{s_{i},q_{x}}+(|S_{I}|-1)%
\sum_{j=1}^{|S_{I}|}d_{s_{j},q_{x}}>2\cdot
\sum_{i=1}^{|S_{I}|-1}\sum_{j=i+1}^{|S_{I}|}d_{s_{i},s_{j}}$, where $%
(|S_{I}|-1)\sum_{i=1}^{|S_{I}|}d_{s_{i},q_{x}}=(|S_{I}|-1)%
\sum_{j=1}^{|S_{I}|}d_{s_{j},q_{x}}$. 
Consequently, $(|S_{I}|-1)%
\sum_{i=1}^{|S_{I}|}d_{s_{i},q_{x}}>\sum_{i=1}^{|S_{I}|-1}%
\sum_{j=i+1}^{|S_{I}|}d_{s_{i},s_{j}}$, and we have $%
\sum_{i=1}^{|S_{I}|}d_{s_{i},q_{x}}>\frac{1}{|S_{I}|-1}%
\sum_{i=1}^{|S_{I}|-1}\sum_{j=i+1}^{|S_{I}|}d_{s_{i},s_{j}}$. 
In other words, 
$\frac{1}{|S_{I}|-1}\sum_{i=1}^{|S_{I}|-1}%
\sum_{j=i+1}^{|S_{I}|}d_{s_{i},s_{j}}$ acts as a lower bound on the total spatial
distance from $q_{x}$ to $S_{I}$. On the other hand, $%
(p-|S_{I}|)d_{v_{\min },q_{x}}$ is a lower bound on the total spatial to
$q_{x}$ from any $(p-|S_{I}|)$ candidates in $S_{R}$.
Therefore, $\left( \frac{1}{|S_{I}|-1}\sum_{i=1}^{|S_{I}|-1}%
\sum_{j=i+1}^{|S_{I}|}d_{s_{i},s_{j}}\right) +(p-|S_{I}|)d_{v_{\min },q_{x}}$
is a lower bound on the total spatial distance from any set of $p$ candidates
expanded from $S_{I}$ to the activity location $q_{x}$. Therefore, if the condition of 
Inner-Triangle Distance Pruning holds, the total spatial distance
from any set of $p$ candidates expanded from $S_{I}$ to $q_{x}$ must equal to or exceed $D$. 
\end{proof}

Note that $|S_{I}|-1$ must be included in the denominator to prevent overestimation of duplicated distance $d_{s_{i},s_{j}}$. In addition, the first term can be constructed incrementally as $S_{I}$ expands, which does not require recomputation at each iteration. Therefore, Inner-Triangle Distance Pruning can be performed efficiently.

It is more efficient to trim off multiple unnecessary activity locations all together. Since $\left( \frac{1}{|S_{I}|-1}\sum_{i=1}^{|S_{I}|-1}
\sum_{j=i+1}^{|S_{I}|}d_{s_{i},s_{j}}\right)$ is a lower bound on the total spatial distance from $S_I$ to a point (the center of ball $B_{x}$ of locations), we can subtract this term with $|S_{I}|\cdot r(B_{x})$ to obtain a lower bound on the total spatial distance from $S_I$ to any location in $B_{x}$, as
shown in Figure \ref{FIG_Ext_ITDP}. 
Moreover, similar to OTDP, we can replace $(p-|S_{I}|)d_{v_{\min },q_{x}}$ in Lemma \ref{Lemma_ITDP} by its lower bound $(p-|S_{I}|)\min_{M_i\in U_{R}}MINDIST(M_{i},B_{x})$. Therefore, given a ball $B_{x}$, all activity locations within $B_{x}$ can be safely pruned according to the following lemma.

\begin{lemma} \label{Lemma_Ext_ITDP}
If $\left( \frac{1}{|S_{I}|-1}\sum_{i=1}^{|S_{I}|-1}
\sum_{j=i+1}^{|S_{I}|}d_{s_{i},s_{j}}\right) -|S_{I}|\cdot r(B_{x})+ 
(p-|S_{I}|)\min_{M_i\in U_{R}}MINDIST(M_{i},B_{x})\geq D$ holds, 
any activity location within ball $B_{x}$ never
produces a better solution expanded from $S_{I}$.
\end{lemma}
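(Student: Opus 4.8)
The plan is to follow the same two-stage template used for Lemma~\ref{Lemma_Ext_OTDP}: first establish a lower bound on the total spatial distance from $S_{I}$ to the \emph{center} of $B_{x}$ using only the pairwise distances inside $S_{I}$, then inflate by the ball radius and append the contribution of the $p-|S_{I}|$ attendees still to be drawn from $S_{R}$. Concretely, I would fix an arbitrary feasible $p$-candidate group $F\supseteq S_{I}$ and an arbitrary location $q_{x}\in B_{x}$, split $\sum_{v\in F}d_{v,q_{x}}$ into the part contributed by $S_{I}$ and the part contributed by the $p-|S_{I}|$ new attendees, and lower-bound each part separately so that their sum reproduces exactly the left-hand side of the stated inequality.

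For the first part I would reuse the argument in the proof of Lemma~\ref{Lemma_ITDP} verbatim, with the point $q_{x}$ there replaced by $ctr(B_{x})$: summing the triangle inequalities $d_{s_{i},ctr(B_{x})}+d_{s_{j},ctr(B_{x})}\ge d_{s_{i},s_{j}}$ over all unordered pairs gives $\sum_{i=1}^{|S_{I}|}d_{s_{i},ctr(B_{x})}\ge \frac{1}{|S_{I}|-1}\sum_{i=1}^{|S_{I}|-1}\sum_{j=i+1}^{|S_{I}|}d_{s_{i},s_{j}}$. Then, since every $q_{x}\in B_{x}$ satisfies $d_{s_{i},q_{x}}\ge d_{s_{i},ctr(B_{x})}-r(B_{x})$, summing over $S_{I}$ yields $\sum_{i}d_{s_{i},q_{x}}\ge \frac{1}{|S_{I}|-1}\sum_{i<j}d_{s_{i},s_{j}}-|S_{I}|\cdot r(B_{x})$, which is precisely the first two terms of the lemma. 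For the second part, each of the remaining $p-|S_{I}|$ attendees lies in some MBR currently held in $U_{R}$, so its distance to $q_{x}$ is at least $\min_{M_{i}\in U_{R}}MINDIST(M_{i},B_{x})$; multiplying by $p-|S_{I}|$ gives the third term. Adding the two bounds shows that the left-hand side of the stated inequality is a lower bound on $\sum_{v\in F}d_{v,q_{x}}$ for every $q_{x}\in B_{x}$ and every feasible $F\supseteq S_{I}$, so if this quantity is at least $D$ no location in $B_{x}$ can improve on the incumbent and the whole ball may be pruned.

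The triangle-inequality bookkeeping is routine and already carried out for the point version, so the only step needing care is the justification that $\min_{M_{i}\in U_{R}}MINDIST(M_{i},B_{x})$ is a genuine lower bound on the distance from the nearest not-yet-selected candidate to an arbitrary point of $B_{x}$. This rests on the invariant that the MBRs currently in $U_{R}$ collectively cover $S_{R}$ together with the fact that $MINDIST(M_{i},B_{x})=MINDIST(M_{i},ctr(B_{x}))-r(B_{x})$ lower-bounds every point-to-point distance between $M_{i}$ and $B_{x}$, so I would state that invariant explicitly before invoking it. I would also note, as in the remark following Lemma~\ref{Lemma_ITDP}, that the bound is meaningful only for $|S_{I}|\ge 2$ (so that the factor $\frac{1}{|S_{I}|-1}$ is well defined and the pairwise sum is nonempty), and that the first term can be maintained incrementally as $S_{I}$ grows, so the pruning test costs only $O(|S_{I}|)$ per expansion.
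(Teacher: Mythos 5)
Your proposal is correct and follows essentially the same route as the paper's proof: lower-bound $\sum_{s_i\in S_I}d_{s_i,ctr(B_x)}$ via the pairwise triangle inequalities from Lemma~3, subtract $|S_I|\cdot r(B_x)$ to pass from the center to an arbitrary location in the ball, and add $(p-|S_I|)\min_{M_i\in U_R}MINDIST(M_i,B_x)$ for the remaining attendees. Your version is in fact slightly more careful than the paper's (which simply cites Lemma~3 and the figure), since you explicitly flag the $U_R$-covers-$S_R$ invariant and the $|S_I|\ge 2$ requirement, but the underlying argument is identical.
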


\begin{proof}
As illustrated in Figure \ref{FIG_Ext_ITDP} and pointed out 
in Lemma \ref{Lemma_ITDP}, $\left( \frac{1}{|S_{I}|-1}\sum_{i=1}^{|S_{I}|-1}%
\sum_{j=i+1}^{|S_{I}|}d_{s_{i},s_{j}}\right) $ is a lower bound on the total
spatial distance from $S_{I}$ to $ctr(B_{x})$. $|S_I|\cdot r(B_x)$ is necessary to be incorporated to ensure that, $\left( \frac{1}{|S_{I}|-1}%
\sum_{i=1}^{|S_{I}|-1}\sum_{j=i+1}^{|S_{I}|}d_{s_{i},s_{j}}\right)
-|S_{I}|\cdot r(B_{x})$ is a lower bound on the total spatial
distance from $S_{I}$ to any activity location within $B_{x}$. On the other
hand, $(p-|S_{I}|)\min_{M_i\in U_{R}}MINDIST(M_{i},B_{x})$ represents a
lower bound on the total spatial distance from $(p-|S_{I}|)$ candidates in $S_R$ to $B_x$.
Therefore, when the above condition holds, any activity location within $B_{x}$
never produces a better solution with any group of $p$ candidates expanded from $S_I$. 
Thus, $B_{x}$ can be safely pruned. 
\end{proof}

%
\vspace{+3pt}
\noindent \textbf{Activity Location Distance Pruning. } 
Activity Location Distance Pruning exploits the MBRs in R-Tree and the balls in BallTree, 
to quickly filter out unqualified activity locations. 

solution value $D$ and ball $B_{x}$, Activity Location Distance Pruning jointly considers 
a lower bound from the attendees in $S_I$ to $B_{x}$ and a lower bound from  
$(p-|S_I|)$ remaining candidates in $S_R$ to $B_{x}$. If the sum of the two lower bounds 
exceeds $D$, it concludes that $S_I$ and all activity locations within $B_x$ never
produce a better solution with the total spatial distance smaller than $D$. Specifically, 
Activity Location Distance Pruning is based on Lemma \ref{Lemma_ALDP} below.

\begin{figure}[tbp]
\centering
\includegraphics[scale=0.35]{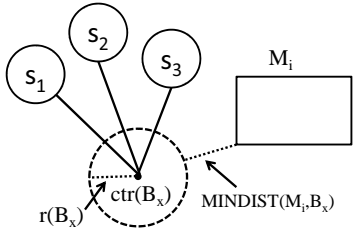}
\caption{Activity Location Distance Pruning.}
\vspace{-20pt}
\label{FIG_Ext_RPDP}
\end{figure}

\begin{lemma} \label{Lemma_ALDP}
If $\sum \nolimits_{s_i\in S_{I}}MINDIST(s_i,B_{x})+ 
(p-|S_{I}|)\min_{M_i\in U_{R}}MINDIST(M_{i},B_{x})\geq D$ holds, 
the activity locations within ball $B_{x}$ do not
produce a better solution than the current solution corresponding to $D$.
\end{lemma}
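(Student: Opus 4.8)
The plan is to reuse the lower-bound decomposition template underlying Lemmas \ref{Lemma_OTDP}--\ref{Lemma_Ext_ITDP}: any completed solution that extends $S_{I}$ and uses some activity location $q_{y}\in B_{x}$ has total spatial distance equal to $\sum_{s_i\in S_{I}}d_{s_i,q_{y}}$ plus the distances from the $p-|S_{I}|$ attendees still to be drawn from $S_{R}$ to $q_{y}$, and I would bound these two parts below separately, then add them.

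For the first part, I would note that every $q_{y}$ inside $B_{x}$ satisfies $d_{q_{y},ctr(B_{x})}\le r(B_{x})$, so triangular inequality gives $d_{s_i,q_{y}}\ge d_{s_i,ctr(B_{x})}-r(B_{x})=MINDIST(s_i,B_{x})$; summing over $s_i\in S_{I}$ shows that $\sum_{s_i\in S_{I}}MINDIST(s_i,B_{x})$ is a lower bound on the contribution of $S_{I}$ to any $q_{y}\in B_{x}$. For the second part, the crucial observation is that the MBRs currently held in $U_{R}$ collectively cover all of $S_{R}$, so every yet-to-be-chosen attendee $v$ lies in some $M_i\in U_{R}$ and hence $d_{v,q_{y}}\ge MINDIST(M_i,B_{x})\ge \min_{M_i\in U_{R}}MINDIST(M_i,B_{x})$; since exactly $p-|S_{I}|$ attendees remain to be selected, their combined distance to $q_{y}$ is at least $(p-|S_{I}|)\min_{M_i\in U_{R}}MINDIST(M_i,B_{x})$. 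Dropping the familiarity and spatial-radius constraints in this step is harmless, because it only enlarges the family of admissible completions and therefore cannot raise this minimum, so the bound remains valid.

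Adding the two bounds, the left-hand side of the lemma is a lower bound on the total spatial distance of any group of $p$ candidates containing $S_{I}$ paired with any activity location in $B_{x}$. Hence, whenever that sum is at least $D$, no such pairing can beat the current best value $D$, so $B_{x}$ is safely prunable and the lemma follows. The only delicate point is the second step --- verifying that $\min_{M_i\in U_{R}}MINDIST(M_i,B_{x})$ is a legitimate per-candidate lower bound --- but this is immediate once we recall that the R-Tree traversal list $U_{R}$ always collectively covers the remaining candidate set $S_{R}$, and that the point-to-ball estimate $MINDIST(M_i,B_{x})$ itself follows from the ball/MBR geometry already used in Section 5.3.
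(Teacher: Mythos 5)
Your proof is correct and follows essentially the same route as the paper's: the same decomposition into the $S_{I}$ contribution bounded below by $\sum_{s_i\in S_{I}}MINDIST(s_i,B_{x})$ and the remaining $p-|S_{I}|$ attendees bounded below by $(p-|S_{I}|)\min_{M_i\in U_{R}}MINDIST(M_i,B_{x})$. You merely make explicit two steps the paper asserts without detail (the triangle-inequality derivation of the point-to-ball bound and the covering invariant of $U_{R}$), which is a welcome but not substantively different elaboration.
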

%
\begin{proof}
As shown in Figure \ref{FIG_Ext_RPDP}. 
$MINDIST(s_i,B_{x})=d_{s_i,B_{x}}-r(B_{x})$ is a lower bound on the distance from $s_i$
to any activity location within $B_{x}$. 
Thus, $\sum \nolimits_{s_i\in S_{I}}MINDIST(s_i,B_{x})$ represents a lower bound on the total
spatial distance from $S_{I}$ and any activity locations within Ball $B_{x}$,
and $(p-|S_{I}|)\min_{M_i\in U_{R}}MINDIST(M_{i},B_{x})$ represents a lower
bound on the total spatial distance from $(p-|S_{I}|)$ remaining candidates to any activity locations in $B_{x}$. 
Therefore, when the above condition holds, any activity location within $B_{x}$ never produces
a better solution when any $(p-|S_{I}|)$ candidates are selected into $S_{I}$. 
Therefore, $B_{x}$ can be safely pruned.
\end{proof}

Although the above strategy is simple, it still incurs high
computation overhead because $|S_{I}|$ distance computations performed
for each ball $B_{x}$ to find the lower bound. In other words, the above strategy incurs $|S_{I}|\cdot n$ distance computations, where $n$ is the 
number of balls. In the following, therefore, we propose two strategies which 
utilize the information of activity locations and the relationship of attendees in $S_{I}$ 
to reduce the number of distance computations.

Here we briefly analyze the above distance pruning strategies. 
Let $m$ denote the number of distance computations for $(p-|S_{I}|)\cdot \min_{M_i\in U_{R}}MINDIST(M_{i},B_{x})$.
Activity Location Distance Pruning incurs the highest computation overhead, 
i.e., $(n\cdot |S_{I}|+m)$, as distance computations are required for $n$ balls
(for each ball, it derives $MINDIST(s_i,B_x),\forall s_i\in S_I$). 
On the other hand, Outer-Triangle Distance Pruning incurs $(|S_I|+n+m-1)$ distance computations, for $n$ balls in the worst case, including
$|S_I|$ computations for the total spatial distance from $S_I$ to $ctr(B_x)$, and
$(n-1)$ computations for the distances from $ctr(B_x)$ to the centers of the other $(n-1)$ balls.
Similarly, when deriving the lower bound on $S_{I}$ and $B_{x}$, Inner-Triangle Distance Pruning 
only considers the distances between each pair of attendees in $%
S_{I} $, which can be computed incrementally and cached in early stages.
Therefore, each time a new attendee is added to $S_I$, Inner-Triangle Distance 
Pruning performs $(|S_I|-1+m)$ distance computations for $n$ balls.
Therefore, Outer-Triangle Distance Pruning and Inner-Triangle Distance Pruning are much more efficient than Activity Location Distance Pruning.

\subsection{Discussions} \label{discussion}

\noindent\textbf{User interests and existence of sponsors.}
We propose a generalized model to support the scenarios in terms of user interests. Let $\eta _{v,q}$ denote the 
\textit{interest measure} (i.e., how an individual $v$ prefers a candidate location $q$) of a person $v$ in an activity to be held at location $q$. A small interest measure $\eta _{v,q}$ implies that $v$ highly prefers the
activities to be associated with $q$. Similar to the spatial radius constraint
in SSGQ and MRGQ, a new interest constraint $\eta _{v,q}<h$ is
added to the two problems, where $h$ denotes the \textit{interest threshold}
of an activity. For a candidate member $v$ that prefers only karaoke studios
and bars, the interest measure from $v$ to coffee shops will be set to a
large value exceeding the threshold. Thus, $v$ in this case will never be
selected for an activity in $q$.

MRGQ and SSGQ can also flexibly handle the case when sponsors of the activity exist. 
Here, we describe a generalized graph model for the scenarios with sponsors.
The sponsors are represented by a set $S$ of new nodes in SSGQ and MRGQ. Here each
sponsor $s$ in $S$ is connected to a person $v$ if $s$ is correlated to 
$v$, e.g., $v$ is an employee, a former student, or a
regular customer of $s$. This link information can be acquired from the 
address directories, personal Facebook profiles, or customer
databases. To support SSGQ and MRGQ with sponsors, the set $S$ is added to the solution at the beginning of SSGS and MAGS. As such, these two
algorithms will automatically find a solution group with correlation to $S$ (i.e., the attendees that $S$ would like to sponsor). 
Moreover, if the activity locations are provided by a sponsor, such as
a chain restaurant, all branches of the chain restaurant group can be
included in the candidate location set $Q$. Note that the group size $p$
needs to be increased by $\left\vert S\right\vert $ in the scenarios with
sponsors, and the representatives of each sponsor can also be initially added to the
solution group.


\noindent\textbf{Dynamically changing user locations.}
To index user
locations, one approach is to employ the fundamental R-Tree. 
If user locations are changed frequently, however, this approach is likely to incur frequent
index updates or otherwise record outdated and inaccurate information.
A more promising approach is to exploit R-Tree extensions that are
specifically designed for dynamic environments, such as Time-Parameterized
R-Tree (TPR-Tree) [22] or an improved version of TPR-Tree, TPR*-Tree [23].
Similar to conventional R-Trees, TPR-Tree adopts Minimum Bounding Rectangles
(MBR) to hierarchically index the spatial objects (i.e., the locations of the
users). However, instead of recording objects' locations at individual
timestamps, TPR-Tree incorporates the velocity of each object to predict
their upcoming positions, and thus updates are only triggered when the velocity changes. 
This strategy significantly reduces the number of updates. In other words, the MBR of an object or a tree node is a function
of time.

More specifically, each dynamically changing location of a user in the
TPR-Tree is represented as 1) an MBR that denotes its extent at some
reference time (a system parameter), and 2) its current velocity vector. The
velocity vector of an MBR is represented by the largest velocity of an
object within the MBR in each direction. This strategy ensures that the MBR
always encloses the underlying objects in the future. With the above
information, the future MBRs and the objects' locations are not stored
explicitly, but are quickly computed based on the locations at the
reference time and the velocity vectors. Figure \ref{TPR_fig} presents an
example of the MBR and velocity vectors of three objects, i.e., $a$, $b$,
and $c$. The velocity vector of each object is shown as solid arrows,
and the velocity vector of the MBR is represented by dashed arrows.
Figure \ref{tpr_fig1} presents the locations of the objects and the
enclosing MBR at reference time $0$. In the next time slot shown in
Figure \ref{tpr_fig2}, the locations of the objects are changed, and the
enclosing MBR is enlarged to enclose all objects. As shown in this
example, TPR-Tree only stores the locations of the objects and the
corresponding MBR at the reference time, and the locations of the objects
with the enclosing MBR in the future can then be computed with the velocity
vectors.

\begin{figure}[tp]
\centering
\subfigure[][Boundaries at reference time $0$.] {\  \includegraphics[scale=0.27] {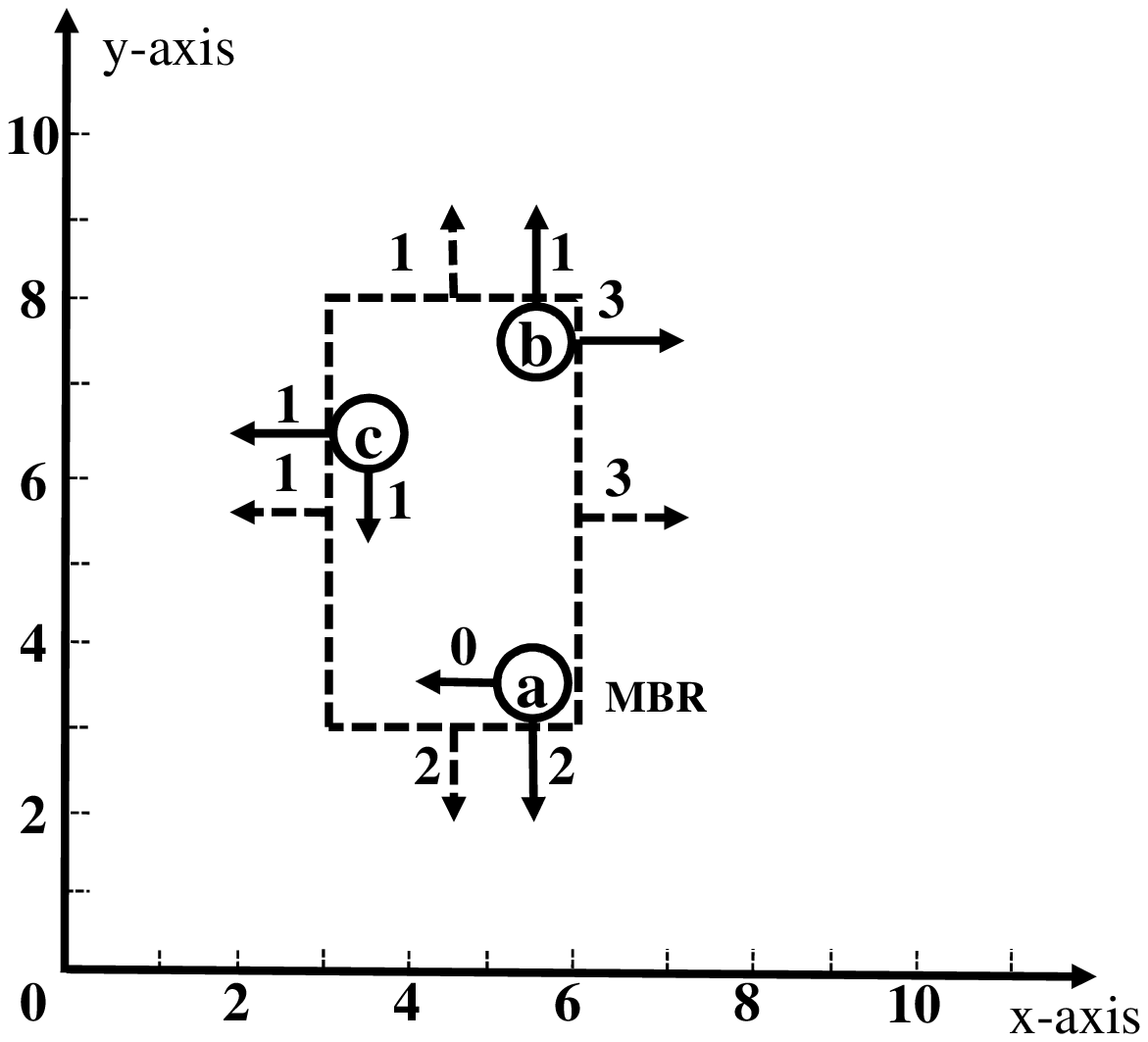}
\label{tpr_fig1} } \hspace{+20pt} 
\subfigure[][Boundaries at time $1$.] {\  \includegraphics[scale=0.27] {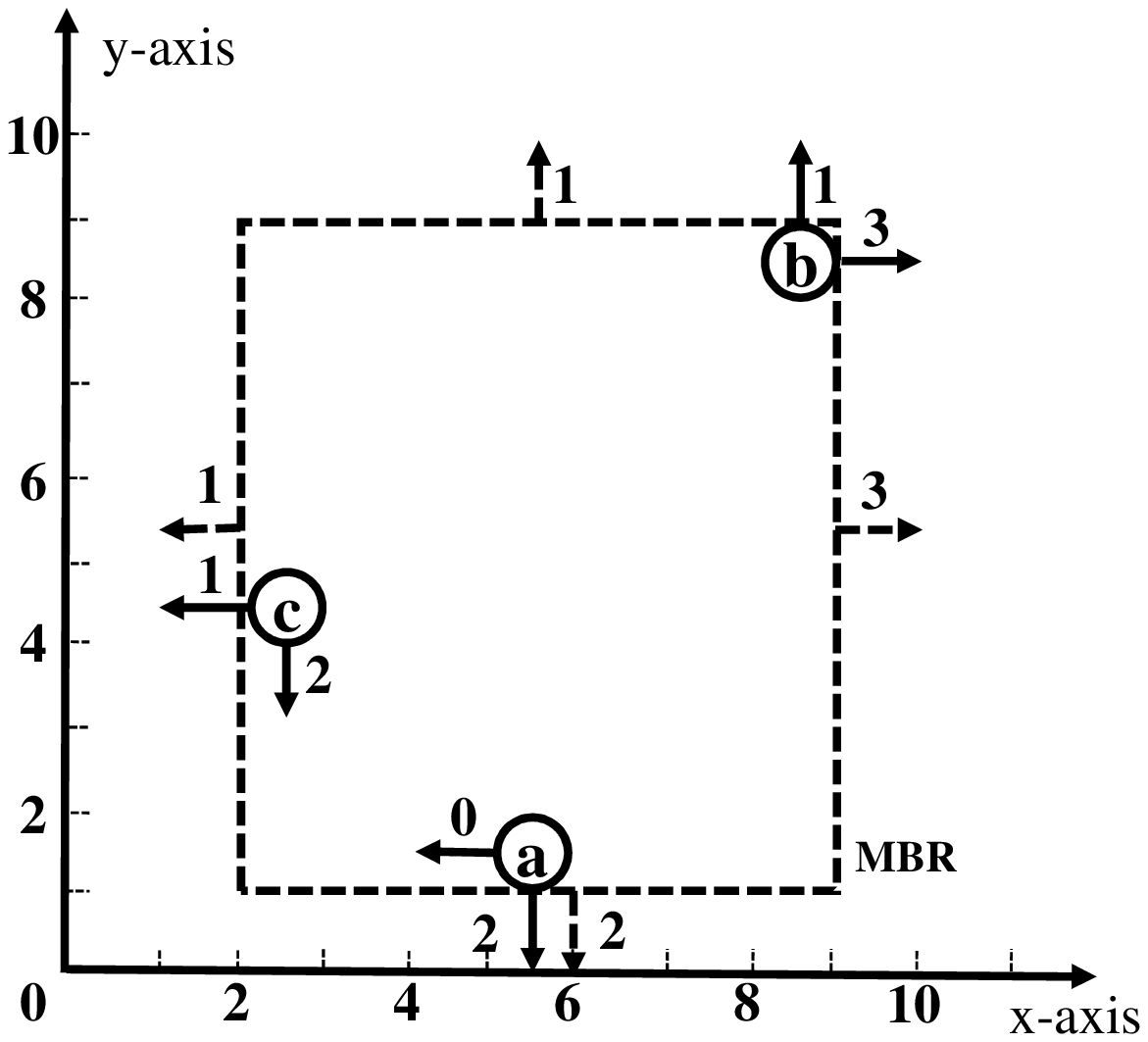}
\label{tpr_fig2} }
\caption{An example of the MBR and objects with velocity vectors.}
\label{TPR_fig}
\end{figure}

To support dynamic environments in this paper, we replace R-Tree with
TPR-Tree. Similar to the TPR-Tree problem, a parameter $t_{f}\geq 0$ is
included in SSGQ and MRGQ to ensure that the optimal group is returned for
dynamic environments from now to $t_{f}$ time units afterward. The
ordering and pruning strategies of SSGS and MAGS are performed in the same
way, except that the extents of the MBRs and the user locations have
to be calculated before SSGS and MAGS. On the other hand, the BallTree in
this paper only indexes the activity locations (e.g., restaurants, theaters,
etc.) and thus does not need to be updated frequently.

\section{Tractability of MRGQ in Threshold Graph}
In Graph Theory, analyzing the tractability of NP-Hard graph
problems in special graph classes is very important for theoreticians, e.g., [16][17].
Therefore, in addition to the inapproximibility of MRGQ
analyzed above, we also prove that MAGS can find the optimal
solution in polynomial time in a special graph class, namely \textit{%
threshold graph}, which is defined as follows [15]. 

\begin{definition}
\label{def_thresholdgraph} Given a threshold $\tau $ and a graph $G=(V,E)$
with a non-negative weight $w_{v}$ for each vertex $v\in V$, $G$ is called a
threshold graph if $w(U)=\sum_{v\in U}w_{v}\leq \tau $ holds for every
subset $U\subseteq V$ with every two nodes in $U$ sharing no edge between
them.
\end{definition}

Let $deg_{G}(v)$ denote the degree of $v\in V$. A \textit{degree partition} $%
D(V)$ of $V$ divides $V$ into $m+1$ non-overlapping subsets $%
D_{0},D_{1},..,D_{m}$, i.e., $\bigcup_{\forall i}D_{i}=V$. Each subset $D_{i}
$ includes all the vertices with degree $\delta _{i}$, and $\delta
_{i}>\delta _{j}$ if $i>j$. When $\tau =m$ and $w_{v}=deg_{G}(v)$, we
exploit the following theorem in the literature [15] to prove that 
MAGS can obtain the optimal solution in polynomial time in a threshold
graph.  

\begin{theorem}
\label{properties} [15] Let $\widehat{G}=(\widehat{V},\widehat{E})$ be a
threshold graph with degree partition $D(\widehat{V})$. For every pair of
vertices $u\in D_{i}$ and $v\in D_{j}$, $u$ is connected to $v$ in $\widehat{%
G}$ if and only if $i+j>m$. Therefore, every vertex in $D_{i}$ shares the
same neighbors.
\end{theorem}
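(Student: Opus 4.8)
The plan is to derive the adjacency rule in three stages: extract a forbidden--induced--subgraph property from Definition~\ref{def_thresholdgraph}; use it to show that neighborhoods are nested and that equal-degree vertices have the same neighbors; and finally pin the adjacency threshold to $m+1-i$ by a counting argument on the strictly increasing degree sequence. (Throughout I read Definition~\ref{def_thresholdgraph} with the weight inequality characterizing independence in \emph{both} directions, i.e.\ $uv\in\widehat{E}\iff w_u+w_v>\tau$ already for pairs; this is the standard reading and is what Step~1 needs.)

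First I would show that $\widehat{G}$ contains no four distinct vertices $a,b,c,d$ with $ab,cd\in\widehat{E}$ and $ac,bd\notin\widehat{E}$ (equivalently, no induced $2K_2$, $P_4$, or $C_4$): if such vertices existed, then $\{a,c\}$ and $\{b,d\}$ are independent, so $w_a+w_c\le\tau$ and $w_b+w_d\le\tau$, whence $w_a+w_b+w_c+w_d\le 2\tau$; but $ab,cd\in\widehat{E}$ force $w_a+w_b>\tau$ and $w_c+w_d>\tau$, whence $w_a+w_b+w_c+w_d>2\tau$, a contradiction. Next I would use this to show that for any two vertices $u,v$ one of $N(u)\setminus\{v\}$, $N(v)\setminus\{u\}$ contains the other: otherwise one can pick $a\in N(u)$ with $a\neq v$, $a\notin N(v)$, and $b\in N(v)$ with $b\neq u$, $b\notin N(u)$, obtaining four distinct vertices $u,a,v,b$ with edges $ua,vb$ and non-edges $va,ub$ --- the configuration just ruled out. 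Comparing sizes, $\deg(u)\le\deg(v)$ then forces $N(u)\setminus\{v\}\subseteq N(v)\setminus\{u\}$; in particular, for $u,v\in D_i$ both inclusions hold, so $N(u)\setminus\{v\}=N(v)\setminus\{u\}$, and a short case split on whether $uv\in\widehat{E}$ shows that all vertices of $D_i$ have the same neighbors outside $D_i$ and that $D_i$ is internally either a clique or an independent set. This already establishes the ``shares the same neighbors'' part of the statement.

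Finally, for the adjacency characterization I would exploit that, by the previous step, the classes $D_0,\dots,D_m$ are linearly ordered by degree, their neighborhoods are upward closed in that order, and (since equal-degree vertices are twins) class-to-class adjacency is ``all or nothing''. Adopting the standard convention that $D_0$ is the set of isolated vertices (possibly empty, in which case the genuine classes are $D_1,\dots,D_m$), let $t(i)\in\{1,\dots,m\}$ be the least index a vertex of $D_i$ is adjacent to. Symmetry of adjacency gives $j\ge t(i)\iff i\ge t(j)$, and monotonicity of neighborhoods gives that $t$ is non-increasing; the key point is that $t$ is \emph{strictly} decreasing on $\{1,\dots,m\}$, because if $t(i)=t(i+1)$ then the neighbor sets of $D_i$ and $D_{i+1}$ differ by at most the self-class term, forcing $\delta_i\ge\delta_{i+1}$ and contradicting the strict increase of the degree partition. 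A strictly decreasing map $\{1,\dots,m\}\to\{1,\dots,m\}$ must be $t(i)=m+1-i$, so $u\in D_i$ is adjacent to $v\in D_j$ iff $j\ge m+1-i$, i.e.\ iff $i+j>m$; the cases $i=0$ or $j=0$ are immediate since isolated vertices have no neighbors and then $i+j\le m$.

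The main obstacle is this last step: the first two stages are routine once the forbidden configuration is in hand, but landing \emph{exactly} on ``$i+j>m$'' (rather than, say, ``$i+j\ge m$'') requires carefully establishing that class-to-class adjacency is all-or-nothing and upward closed and then squeezing $t$ down to $m+1-i$ using the strict monotonicity of the degree sequence, while correctly handling the degenerate isolated-vertex class $D_0$.
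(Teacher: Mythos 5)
The paper offers no proof of this statement: Theorem~\ref{properties} is imported verbatim from the threshold-graph literature (reference [15], Mahadev and Peled), so there is no in-paper argument to compare against. Your proposal is a correct, self-contained derivation of that classical result, following the standard route: the weight condition rules out any alternating four-vertex configuration (induced $2K_2$, $P_4$, or $C_4$), which yields nested neighborhoods, which in turn makes equal-degree vertices twins and forces the class-to-class adjacency to be all-or-nothing and upward closed; the counting argument that the threshold function $t$ is strictly decreasing on $\{1,\dots,m\}$ and hence equals $m+1-i$ is the right way to land exactly on $i+j>m$. Two of your side remarks deserve emphasis because they repair genuine imprecisions in the paper's setup rather than being mere pedantry. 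First, Definition~\ref{def_thresholdgraph} as literally written only asserts one implication (independent $\Rightarrow$ weight at most $\tau$); under that reading every graph would be a threshold graph (take all weights zero), and the theorem would be false, so your biconditional reading is not optional. Second, the paper's description of the degree partition never says that $D_0$ is the possibly empty class of isolated vertices; without that convention the formula $i+j>m$ fails already for $K_2$ (one degree class, $m=0$, $0+0\not>0$). Your handling of both points, and of the degenerate cases $i=0$ or $j=0$, is correct.
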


We first describe the tie-breaking strategy for Socio-Spatial Ordering in
MAGS. Specifically, for any set $\widetilde{V}\subseteq S_{R}$ of vertices
with the same spatial distance to each query point $q_i$, the tie-breaking
strategy checks if the vertices in $\widetilde{V}$ satisfy Eq. (1) in
descending order of their vertex degrees. Therefore, the vertex $v\in 
\widetilde{V}$ with the maximum $deg_{G}(v)$ is first examined by Eq. (1)
and moved into $S_{I}$ if $v$ satisfies Eq. (1). Moreover, we employ a
pre-processing strategy to remove unqualified vertices from the input graph $%
\widehat{G}$, which works both when $\widehat{G}$ is a general graph or $%
\widehat{G}$ is a threshold graph. Given the social network $\widehat{G}$
with parameters $k$ and $p$, the pre-processing strategy iteratively removes
from $\widehat{G}$ every vertex $v$ with fewer than $p-k-1$ neighbors, along
with $v$'s incident edges. The iteration repeats until no more vertices can
be removed and produces a graph $G$. Any removed vertex $v$ cannot form a
feasible graph with other vertices in $\widehat{G}$, because even when all
the neighbors of $v$ (i.e., $N_{v}$) are included in the same group $F$,
there is still at least one vertex in $N_{v}\cup \{v\}$ with fewer than $%
p-k-1$ neighbors in $F$. The above process is called \textit{core
decomposition}, and the remaining graph $G$ is a maximal $(p-k-1)$-core
[18], where $G\subseteq \widehat{G}$ is the largest graph such that each
vertex has at least $p-k-1$ neighbors in $G$. The pre-processing strategy
can be done in $O(|\widehat{E}|)$ time [19]. Intuitively, if $|G|<p$, then there exists
no solution to MRGQ.

Note that $D_{j}$ is the degree partition on $\widehat{G}$, instead of $G$
after the core decomposition. Let $V(G)$ denote the set of vertices in $G$, 
and let $\widehat{j}$ denote the minimum $j$ such
that $D_{j}\cap V(G)\neq \varnothing $. 
In other words, $D_{0},D_{1},..,D_{%
\widehat{j}-1}$ are all removed during core decomposition. The following
theorem first compares $\widehat{G}$ and $G$.

\begin{theorem}
\label{node_identity} For input threshold graph $\widehat{G}$ and the graph $%
G$ after core decomposition, $\bigcup_{i=\widehat{j}}^{m}D_{i}=V(G)$ holds,
and the neighbors of every vertex in $D_{i}$ are also the neighbors of every
vertex in $D_{j}$ if $i<j$.
\end{theorem}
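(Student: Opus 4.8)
The plan is to prove the two assertions in turn, starting with the neighborhood-containment statement, which is a short corollary of Theorem \ref{properties} and serves as the workhorse for the rest.

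For the neighborhood claim, fix $u\in D_i$ and $v\in D_j$ with $i<j$. By Theorem \ref{properties}, a vertex $w\in D_\ell$ is adjacent to $u$ exactly when $i+\ell>m$, so $N_{\widehat{G}}(u)=\big(\bigcup_{\ell>m-i}D_\ell\big)\setminus\{u\}$. Since $i<j$ gives $m-i>m-j$, we have $\bigcup_{\ell>m-i}D_\ell\subseteq\bigcup_{\ell>m-j}D_\ell$, hence $N_{\widehat{G}}(u)\subseteq\bigcup_{\ell>m-j}D_\ell=N_{\widehat{G}}(v)\cup\{v\}$; that is, every neighbor of a vertex of $D_i$ other than the chosen vertex $v\in D_j$ itself is a neighbor of $v$, which is the second assertion. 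I would also record the same-class version: if $x$ and $v$ both lie in one $D_i$, then $N_{\widehat{G}}(x)\subseteq N_{\widehat{G}}(v)\cup\{v\}$, since Theorem \ref{properties} forces the symmetric difference of their neighborhoods into $\{x,v\}$.

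For $\bigcup_{i=\widehat{j}}^{m}D_i=V(G)$, the forward inclusion is immediate: by definition of $\widehat{j}$ the classes $D_0,\dots,D_{\widehat{j}-1}$ contain no vertex of $G$, and $\{D_0,\dots,D_m\}$ partitions $V(\widehat{G})\supseteq V(G)$. For the reverse inclusion I would show that the induced subgraph $H:=\widehat{G}\big[\bigcup_{i\ge\widehat{j}}D_i\big]$ has minimum degree at least $p-k-1$; this suffices, because $G$, being the maximal $(p-k-1)$-core, contains every vertex set inducing a subgraph of minimum degree at least $p-k-1$ (such a set is never deleted in core decomposition: the first of its vertices to be removed would still have all of its $\ge p-k-1$ in-set neighbors present), hence $\bigcup_{i\ge\widehat{j}}D_i\subseteq V(G)$. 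To bound the minimum degree of $H$: by definition of $\widehat{j}$ there is some $x\in D_{\widehat{j}}\cap V(G)$, and since $x$ survives core decomposition, $deg_{G}(x)\ge p-k-1$; as $V(G)\subseteq V(H)$ this gives $deg_{H}(x)\ge p-k-1$. Now take any $v\in D_i$ with $i\ge\widehat{j}$. The neighborhood containment above (for $i>\widehat{j}$) together with the same-class version (for $i=\widehat{j}$) gives $N_{\widehat{G}}(x)\subseteq N_{\widehat{G}}(v)\cup\{v\}$; intersecting with $V(H)$ and using $v\in V(H)$ yields $N_{H}(x)\subseteq N_{H}(v)\cup\{v\}$, and a short split on whether $x\sim v$ upgrades this to $deg_{H}(x)\le deg_{H}(v)$: when $x\sim v$, $x$ lies in $N_{H}(v)$ but not in $N_{H}(x)$, which exactly absorbs the extra $v$ on the right; when $x\not\sim v$, the element $v$ is not in $N_{H}(x)$ at all. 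Hence $deg_{H}(v)\ge deg_{H}(x)\ge p-k-1$ for every vertex of $H$, completing the argument.

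I expect the conceptual step to be recognizing that $\bigcup_{i\ge\widehat{j}}D_i$ induces a $(p-k-1)$-core, so that maximality of $G$ forces the containment; everything else rides on the nested-neighborhood structure of threshold graphs from Theorem \ref{properties}. The only genuinely delicate point is the bookkeeping with the stray $\{v\}$ terms in the neighborhood inclusions, which has to be tracked carefully so that the final degree bound comes out as $p-k-1$ rather than $p-k-2$.
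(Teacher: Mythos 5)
Your proof is correct and takes essentially the same route as the paper's: both halves rest on the nested-neighborhood structure given by Theorem \ref{properties}, and the containment $\bigcup_{i\ge\widehat{j}}^{m}D_{i}\subseteq V(G)$ is in each case obtained by comparing an arbitrary vertex of $D_{r}$ with $r\ge\widehat{j}$ against a surviving vertex $x\in D_{\widehat{j}}\cap V(G)$ and concluding that it retains at least $p-k-1$ neighbors and so cannot be deleted by core decomposition. The only differences are presentational: you phrase the comparison directly (the union induces a subgraph of minimum degree at least $p-k-1$, so maximality of the core absorbs it) where the paper argues by contradiction on a single removed vertex, and you track the stray $\{v\}$ terms in the neighborhood inclusions more carefully than the paper does.
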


\begin{proof}
Apparently, $\bigcup_{i=\widehat{j}}^{m}D_{i}\supseteq V(G)$ since $G\subseteq 
\widehat{G}$, and thus we prove that $\bigcup_{i=\widehat{j}}^{m}D_{i}\subseteq V(G)$
by contradiction. Assume $u\in D_{r}$ for some $r\geq \widehat{j}$, but $u\notin V(G)$%
. Given a vertex $x$ in $D_{\widehat{j}}\cap V(G)$ and any vertex $y\in V(G)$ sharing
an edge with $x$, if $y\in D_{s},$ $\widehat{j}\leq s\leq m$, according to Theorem \ref%
{properties}, $\widehat{j}+s>m$ must hold; otherwise, $x$ and $y$ would not share an
edge. Since $\widehat{j}$ is the minimum number such that $D_{\widehat{j}}\cap V(G)\neq \varnothing 
$, $r+s>m$ must hold, implying that $u$ has an edge with $y$ because $r\geq \widehat{j}
$. From the definition of threshold graph, $u$ also share edges with all
neighbors of $x$. Therefore, the number of $u$'s neighbors in $V(G)$ is no
smaller than $x$ does, and $u$ should not be removed by the pre-processing
strategy because $x\in D_{\widehat{j}}\cap V(G)$. Therefore, this contradiction proves
that $\bigcup_{i=\widehat{j}}^{m}D_{i}=V(G)$ holds. Moreover, since vertex $u$ in $%
D_{i}$ is connected to $w$ in $D_{n}$ if and only if $i+n>m$ according to
Theorem \ref{properties}, vertex $v$ in $D_{j}$ is also connected to $w$ in $%
D_{n}$ if $j>i$ because $j+n>i+n>m$. Therefore, the neighbors of every vertex in $%
D_{i}$ are also the neighbors of every vertex in $D_{j}$ if $i<j$.
\end{proof}

In other words, core decomposition only trims the vertices with smaller
degrees, and if any vertex in $D_{i}$ is removed, the above theorem manifests
that all the other vertices in $D_{i}$ will be pruned as well. According to
Theorem \ref{properties}, every vertex in $D_{i}$ still shares the same
neighbors. In the following, we provide the theoretical result for MAGS in
Theorem \ref{opt_mags}. In Graph Theory, since each vertex $v$ of a
threshold graph is not associated with a spatial distance to $q_{i}\in Q$,
we first assume that $d_{v,q_{i}}=1$ for $\forall v\in V,q_{i}\in Q$.%
\footnote{%
The results are theoretically interesting since no social network belongs to
a threshold graph. On the other hand, the hardness result for a general
social network is presented in Theorem 1.}

\begin{theorem}
\label{opt_mags} Let $\widehat{G}=(\widehat{V},\widehat{E})$ be a threshold
graph with degree partition $\widehat{V}=D_{0}\cup D_{1}\cup \cdots \cup
D_{m}$ and $\beta =\max \{i:|D_{i}|+\cdots +|D_{m}|\geq p\}$. Given an $%
MRGQ(p,Q,k,t)$ for $\widehat{G}$, MAGS stops (either returning the optimal
solution or returning no solution) in polynomial time if $%
d_{v,q_{i}}=1,\forall v\in \widehat{V},q_{i}\in Q$.
\end{theorem}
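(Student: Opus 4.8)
The plan is to reduce $MRGQ(p,Q,k,t)$ on the unit-distance threshold graph $\widehat{G}$ to a pure feasibility question and then show that MAGS settles it along a single, non-backtracking descent. Since $d_{v,q_i}=1$ for all $v\in\widehat{V}$ and $q_i\in Q$, the objective $\sum_{v\in F}d_{v,q^*}$ equals $p$ for \emph{every} feasible pair $\langle F,q^*\rangle$; hence any feasible group is optimal, and the only real question is whether a $p$-set $F$ with $unfamiliar(v,F)\le k$ for all $v\in F$ (equivalently, with $\widehat{G}[F]$ of minimum degree at least $p-k-1$) exists at all. If $t<1$, then $d_{v_{seed},q_{ref}}=1>t$ already at the first SRDO/APDO step and MAGS stops immediately with no solution; otherwise $t\ge1$ makes the radius constraint vacuous, and I proceed under that assumption.

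Next I would run the core-decomposition preprocessing, obtaining the maximal $(p-k-1)$-core $G$ in $O(|\widehat{E}|)$ time. If $|V(G)|<p$, no feasible group exists (every removed vertex cannot lie in any feasible group, as noted before Theorem \ref{node_identity}), and MAGS reports no solution in polynomial time. Otherwise $\sum_{i\ge\widehat{j}}|D_i|=|V(G)|\ge p$ forces $\beta\ge\widehat{j}$, so the set $F^*:=D_m\cup\cdots\cup D_{\beta+1}\cup A$, with $A\subseteq D_\beta$ any subset of size $p-\sum_{i>\beta}|D_i|$, is well defined and contained in $V(G)$. The key lemma — and the part I expect to be the real work — is that $F^*$ is feasible. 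I would argue as follows. Let $v\in D_\beta$ realize the minimum number of neighbours inside $F^*$. If $2\beta>m$, Theorem \ref{properties} makes $D_\beta\cup\cdots\cup D_m$ a clique, so $\widehat{G}[F^*]$ is complete and each vertex is unfamiliar with $0\le k$ others. If $2\beta\le m$, Theorem \ref{properties} gives $N_v=\bigcup_{i>m-\beta}D_i$; since $m-\beta\ge\beta\ge\widehat{j}$, this set lies inside both $\bigcup_{i>\beta}D_i\subseteq F^*$ and $V(G)$, so $v$'s neighbours in $F^*$ number $|N_v|=deg_G(v)\ge p-k-1$ because $v$ survived core decomposition. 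For any other $u\in F^*$, say $u\in D_j$ with $j\ge\beta$, Theorem \ref{node_identity} yields $N_v\setminus\{u\}\subseteq N_u$, and a one-line case split on whether $u\in N_v$ (if it is, then $v\in N_u\cap F^*$ makes up the deficit) shows $u$ also has at least $|N_v|\ge p-k-1$ neighbours in $F^*$. Thus $\widehat{G}[F^*]$ has minimum degree at least $p-k-1$, so $F^*$ is feasible, and by the first paragraph it is optimal.

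It remains to show MAGS actually reaches $F^*$ quickly. With all spatial distances equal, the tie-breaking rule makes Socio-Spatial Ordering examine candidates in non-increasing $\widehat{G}$-degree order, and by Theorem \ref{node_identity} a higher-degree candidate has weakly more edges into the current $S_I$, hence a larger $F(S_I\cup\{v\})$, hence is the first to satisfy Eq.~(\ref{SSO_eqn}) at any value of $\theta$; the $\theta$-escalation guarantees some candidate is always admissible while $S_R\neq\varnothing$. So the descent inserts all of $D_m$, then $D_{m-1},\dots,D_{\beta+1}$, then $|A|$ vertices of $D_\beta$, i.e.\ exactly a realization of $F^*$. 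Every intermediate $S_I$ along this path is a subset of the feasible group $F^*$, so Familiarity Pruning (Eqs.~(\ref{MRGQ_SP_1})--(\ref{MRGQ_SP_2})), being sound, never fires and there is no backtracking; after $p$ insertions MAGS records $F^*$ and sets the incumbent value $D=p$. From then on, for any $S_I$ with $|S_I|<p$ and $S_R\neq\varnothing$ the Distance Pruning bound equals $|S_I|\cdot1+(p-|S_I|)\cdot1=p\ge D$, so every remaining node is pruned on inspection; MAGS unwinds to the root and halts. Hence it visits only $O(p)$ branch-and-bound nodes, each with polynomial work (R-Tree/BallTree distance browsing, $O(p^2)$ for Eq.~(\ref{SSO_eqn}) and Familiarity Pruning, $O(|Q|)$ for the ball-level distance pruning), plus the $O(|\widehat{E}|)$ preprocessing, so MAGS stops in polynomial time — returning $F^*$ when $|V(G)|\ge p$ and ``no solution'' otherwise.

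The main obstacle is the feasibility claim for $F^*$ in the second step: one must combine the nested-neighbourhood structure of threshold graphs (Theorems \ref{properties} and \ref{node_identity}) with the guarantee left behind by core decomposition to show that the lowest-indexed vertices retained in the maximum-degree $p$-set still have at least $p-k-1$ neighbours inside that set. Once that is in place, the remaining pieces — the collapse of the objective to the constant $p$, the degree-ordered backtrack-free descent, and the Distance-Pruning wipeout triggered as soon as the first (hence optimal) feasible solution is found — are routine bookkeeping.
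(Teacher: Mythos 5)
Your proposal is correct and follows essentially the same route as the paper's proof: core decomposition, an explicit top-degree $p$-set whose feasibility follows from the nested neighbourhoods of Theorems \ref{properties} and \ref{node_identity} together with the $(p-k-1)$-core guarantee, a backtrack-free degree-ordered descent generating exactly $p$ branch-and-bound nodes, and the observation that unit distances make the first feasible solution optimal so that the distance-pruning rules immediately terminate the remaining search. The only cosmetic difference is that you organize the feasibility argument around the common neighbourhood $\bigcup_{i>m-\beta}D_{i}$ of the minimum-degree members in $D_{\beta}$, whereas the paper splits on $\beta$ versus $\lfloor m/2\rfloor$; these amount to the same argument.
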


\begin{proof}
We employ All-Pair Distance Ordering for MAGS here. If after core
decomposition, the resulting graph has fewer than $p$ vertices, i.e., $%
|G|<p$, there is no feasible solution, and MAGS stops in $O(|\widehat{E}|)$
time (i.e., only the core decomposition is performed). Otherwise, if $|G|\geq p$, we prove that there exists at least one
feasible solution, and we prove the theorem by examining two different cases
for $\beta $: $\beta >\lfloor \frac{m}{2}\rfloor $ and $\beta \leq \lfloor 
\frac{m}{2}\rfloor $.

(1) $\beta >\lfloor \frac{m}{2}\rfloor $. We prove that MAGS can find the
optimal solution by generating only $p$ nodes in the branch-and-bound tree.
From Theorem \ref{properties} and Theorem \ref{node_identity}, every two
vertices $u$ and $v$ in $\bigcup_{i=\lfloor \frac{m}{2}\rfloor +1}^{m}D_{i}$
are connected by an edge in $G$. Since $|D_{\beta }|+..+|D_{m}|\geq p$, the
first $p$ vertices selected and examined by Socio-Spatial Ordering and the tie-breaking strategy must
belong to $\bigcup_{i=\lfloor \frac{m}{2}\rfloor +1}^{m}D_{i}$ and form a
path of length $p$ from the root in the branch-and-bound tree. Since every
two vertices $u,v\in \bigcup_{i=\lfloor \frac{m}{2}\rfloor +1}^{m}D_{i}$ are
connected according to Theorem \ref{properties}, $F$ is a feasible group. 
In the $p$-th node generated in the branch-and-bound tree, $\langle
F,q_{ref}\rangle $ is a feasible solution in $G$, where $q_{ref}$ can be
derived by All-Pair Distance Ordering. Because $d_{v,q_{i}}=1$
here, $\forall v\in V,q_{i}\in Q$, $\langle F,q_{ref}\rangle $ is the
optimal solution, i.e., $q^{\ast }=q_{ref}$. Afterwards, the
distance pruning strategies of MAGS are performed exactly $p$ times (one
time for each branch-and-bound node except the leaf node, and one time for
the root of the branch-and-bound tree) to conclude that no further search is
required. Since $\langle F,q^{\ast }\rangle $ is the optimal solution in $G$%
, it will also be the optimal solution in $\widehat{G}$ due to $%
d_{v,q_{i}}=1,\forall v\in \widehat{V},q_{i}\in Q$.

(2) $\beta \leq \lfloor \frac{m}{2}\rfloor $. In this case, we prove that
MAGS can also find the optimal solution by generating exactly $p$ nodes in
the branch-and-bound tree. In MAGS, Socio-Spatial Ordering and the
tie-breaking strategy first move the vertices in $\bigcup_{i=\beta
+1}^{m}D_{i}$ into $S_{I}$ (vertices in $D_{i}$ are moved into $S_{I}$
earlier than those in $D_{j}$ if $i>j$), and then move the rest $p-|S_{I}|$
vertices from $D_{\beta }$ to $S_{I}$ to construct the group $F$. Moving
these $p$ vertices into $S_{I}$ creates the first $p$ nodes in the
branch-and-bound tree and builds up a path of length $p$ from the root.

In the following, we prove that $F$ is a feasible group in $G$. We first
examine the vertices in $F$ that are drawn from $\bigcup_{i=\beta }^{\lfloor 
\frac{m}{2}\rfloor }D_{i}$. Let $v\in F$ and $v\in \bigcup_{i=\beta
}^{\lfloor \frac{m}{2}\rfloor }D_{i}$. According to Theorem \ref{properties}
and Theorem \ref{node_identity}, all neighbors of $v$ in $G$ must belong
to $\bigcup_{i=\lfloor \frac{m}{2}\rfloor +1}^{m}D_{i}$. Since $G$ is a $%
(p-k-1)$-core, and the extracted subgraph $F$ contains all the vertices in $%
\bigcup_{i=\lfloor \frac{m}{2}\rfloor +1}^{m}D_{i}$, $v$ must have at least $%
p-k-1$ neighbors in $F$. By contrast, for every vertex $u\in F$ drawn from $%
\bigcup_{i=\lfloor \frac{m}{2}\rfloor +1}^{m}D_{i}$, since the neighbors of
every vertex in $D_{i}$ with $i\leq \lfloor \frac{m}{2}\rfloor $ are also
the neighbors of every vertex in $D_{j}$ with $j>\lfloor \frac{m}{2}\rfloor $
according to Theorem \ref{node_identity}, $u$ must have no fewer neighbors
than that of $v$ in $F$, where $v\in F$ is any vertex drawn from $%
\bigcup_{i=\beta }^{\lfloor \frac{m}{2}\rfloor }D_{i}$. From the above
description, a vertex $v\in F$ drawn from either $\bigcup_{i=\beta
}^{\lfloor \frac{m}{2}\rfloor }D_{i}$ or $\bigcup_{i=\lfloor \frac{m}{2}%
\rfloor +1}^{m}D_{i}$ must have at least $(p-k-1)$ neighbors in $F$.
Therefore, $F$ is a feasible group. Similar to case (1), $\langle
F,q_{ref}\rangle $ is the optimal solution, where $q_{ref}$ is obtained from
All-Pair Distance Ordering. Then, the distance pruning strategies conclude
that no further search is needed, and MAGS stops.

\textbf{Time Complexity. }In the following, we analyze the detailed time
complexity of constructing these $p$ nodes. The pre-processing strategy
takes $O(|\widehat{E}|)$ time. Before each of the $p$ nodes in the
branch-and-bound tree is constructed, MAGS extracts $v_{c}\in S_{R}$ and $%
q_{ref}\in Q_{I}$ from the lists $U_{R}$ and $U_{B}$. During the
examinations of $U_{R}$ and $U_{B}$, when the MBR $M_{i}\in U_{R}$ and ball $%
B_{j}\in U_{B}$ satisfying the score function in Eq. (7) are identified,
each of the three distance pruning strategies in Sec. 5.5 is performed once.
In the worst case, each time when APDO obtains $v_{c}$ and $q_{ref}$, each
internal node of R-Tree and BallTree is accessed. Therefore, $\max
\{|V|,|Q|\}$ times of extracting $M_{i}$ and $B_{j}$ satisfying Eq. (7) are
performed, and $\max \{|V|,|Q|\}$ times of each distance pruning strategy is
also performed. Since extracting $M_{i}$ and $B_{j}$ satisfying Eq. (7)
takes $O(|S_{I}||V||Q|)=O(p|V||Q|)$ time, and OTDP, ITDP and ALDP each takes 
$O(|Q|)$ time, the time complexity for MAGS to extract each pair of $%
v_{c}\in S_{R}$ and $q_{ref}\in Q_{I}$ is thus $O(\max \{|V|,|Q|\}(p|V||Q|))$%
. Therefore, for the $p$ nodes in the branch-and-bound tree, it takes $%
O(p^{2}\max \{|V|,|Q|\}(|V||Q|))$ time for extracting $v_{c}$ and $q_{ref}$
and performing the distance pruning strategies.

On the other hand, for the $p$ nodes in the branch-and-bound tree, $p$ times
of Eq. (1) checking and tie-breaking strategy are performed, which takes $%
O(p^{3}+p|V|\log |V|)$. Also, $p$ times of Familiarity Pruning (Eqs. (4),
(5)) are performed, which takes $O(p|V|^{2})$. Moreover, $p$ additional
times of $v_{c}$ and $q_{ref}$ extractions (with distance pruning strategies) are
performed to conclude that no further search is needed, which takes $%
O(p^{2}\max \{|V|,|Q|\}(|V||Q|))$ time. Therefore, the overall time
complexity of MAGS is $O(|\widehat{E}|)+O(p^{3}+p|V|\log
|V|)+O(p|V|^{2})+2\cdot O(p^{2}\max \{|V|,|Q|\}(|V||Q|))$. Since $|V|=O(|\widehat{V}|)$, where $\widehat{G}=(\widehat{V},%
\widehat{E})$ is the input graph before pre-processing. Therefore, the time
complexity is $O(|\widehat{E}|+p^{2}\max \{|\widehat{V}|,|Q|\}(|%
\widehat{V}||Q|))$.
\end{proof}

\vspace{-6pt}
\section{Experimental Results}

\label{Exp}

\baselineskip=11.2pt
We implement SSGQ in Facebook and recruit 206 people from various backgrounds (e.g., students, and public and private sector workers) to compare solution quality and time overhead for answering SSGQ and MRGQ via manual coordination and our proposed algorithms. Each user completes 24 SSGQ tasks and 20 MRGQ tasks with the social graphs extracted from their social networks in Facebook, together with their spatial locations sampled from their Facebook Checkin records. 

In addition to the real dataset collected from the 206 study participants, 
we evaluate the performance and the solution quality of SSGS, SSGMerge (the heuristic algorithm for SSGQ mentioned in Section \ref{SSGQ}) and MAGS using a large real dataset, \textit{DataSet\_4SQ}, 
obtained by crawling Foursquare \cite{YYLL11}, one of the most
representative LBSNs, for a month. \textit{DataSet\_4SQ} contains both the
social and spatial information of 153,577 individuals. Moreover, we also compare MAGS with two relevant algorithms, namely \textit{Geo-Social Circle of Friend Query (gCoFQ)} \cite{LSCH12} and \textit{p-Nearest Neighbor (pNN)}, to evaluate the solution quality and performance. In addition to \textit{DataSet\_4SQ}, we also evaluate MAGS for MRGQ on a large real dataset, \textit{DataSet\_Youtube} \cite{YL12}, which is a social network extracted from Youtube video-sharing website with 1,134,890 individuals.  
The activity location $q$ for SSGQ and $Q$ for MRGQ are randomly selected from \textit{DataSet\_4SQ}, and we measure 50 samples in each scenario. 


\subsection{User Study}
We perform the user study with 24 and 20 tasks for SSGQ and MRGQ, respectively. 
The 24 tasks in the user study of SSGQ span various $p$ and network
sizes, where the spatial radius, $t$ is fixed to 10 km. Different $k$ are assigned in the first 12 tasks, while $k$ is not
specified in the other 12 tasks, to let each user freely select $p$
people for finding out the familiarity preferred by each person in
activities with different $p$.

\begin{figure}[tp]
\centering
\subfigure[Coordination time.] {\  
\includegraphics[scale=0.15]{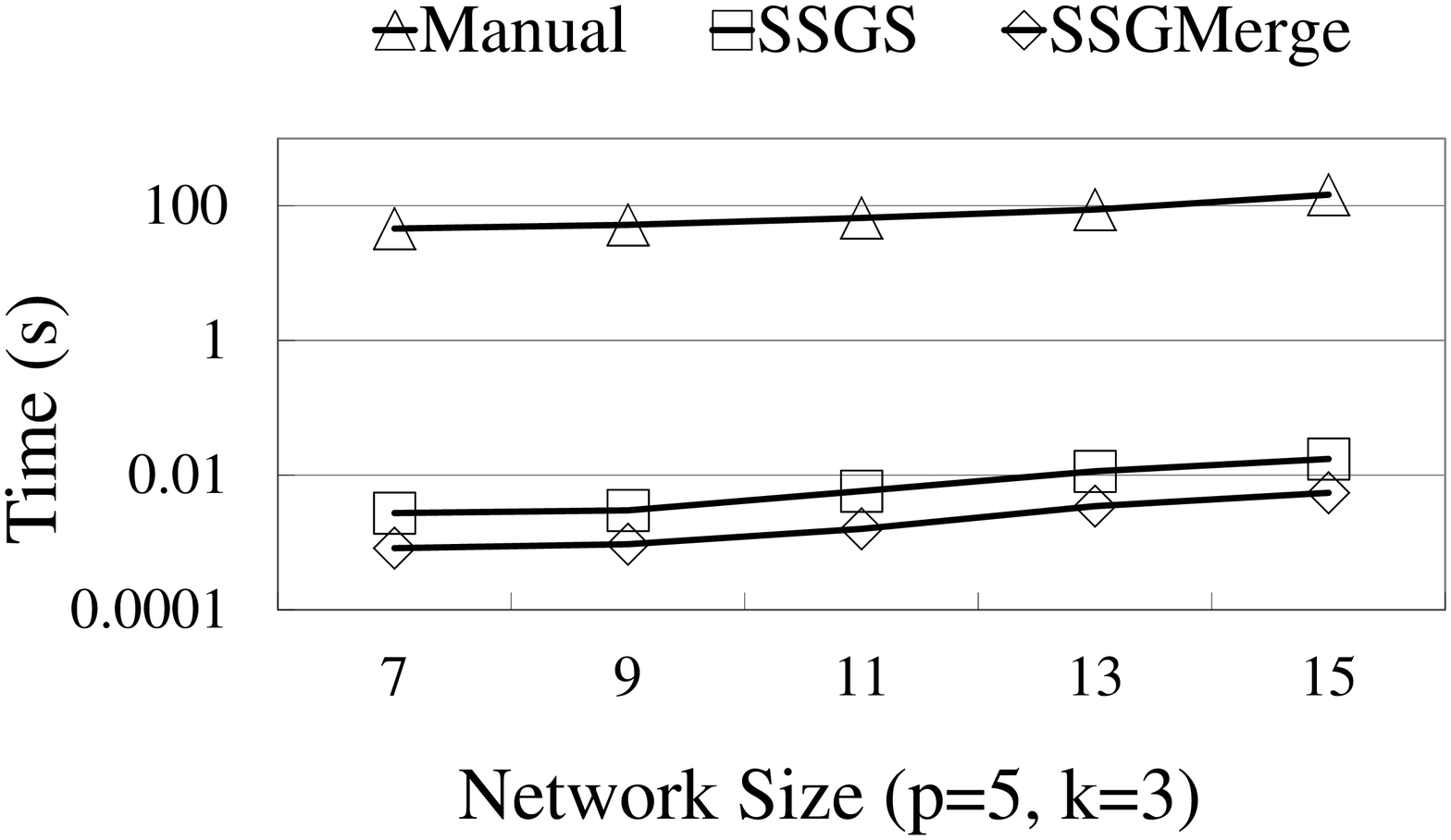} } 
\subfigure[Solution quality.] {\
\includegraphics[scale=0.15]{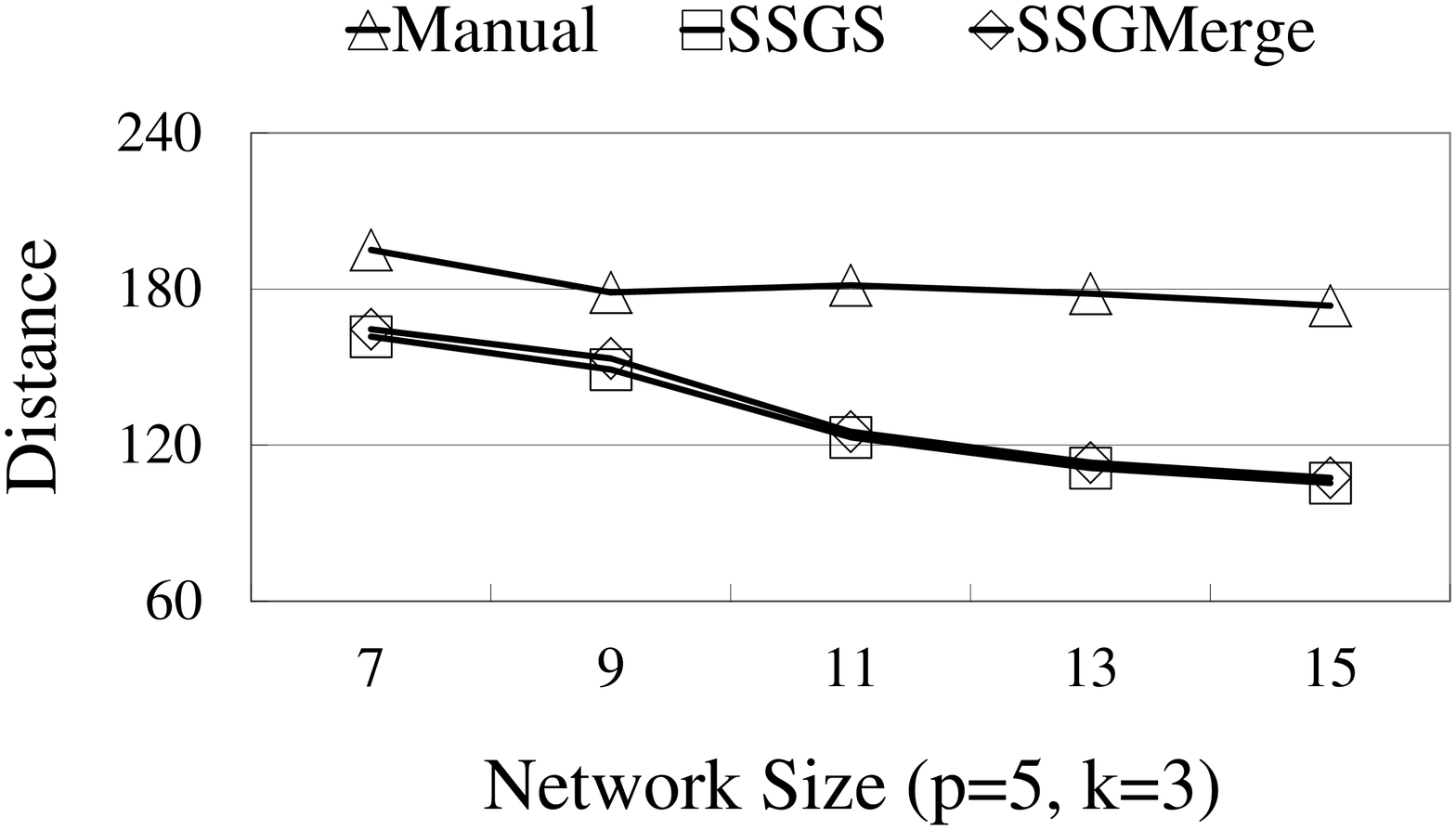} } 
\subfigure[Solutions with given $k$.] {\
\includegraphics[scale=0.15]{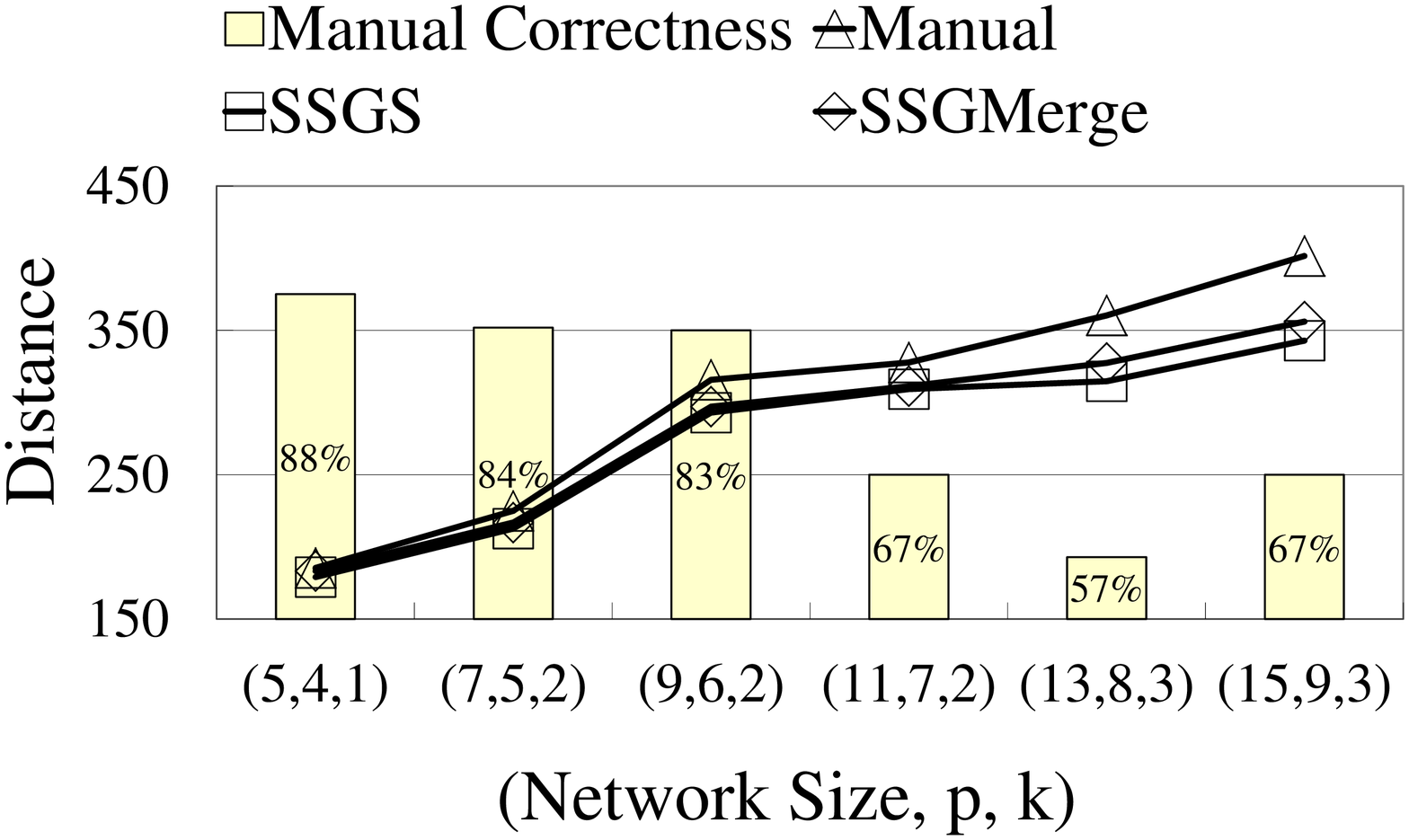} } 
\subfigure[Time without given $k$.] {\
\includegraphics[scale=0.15]{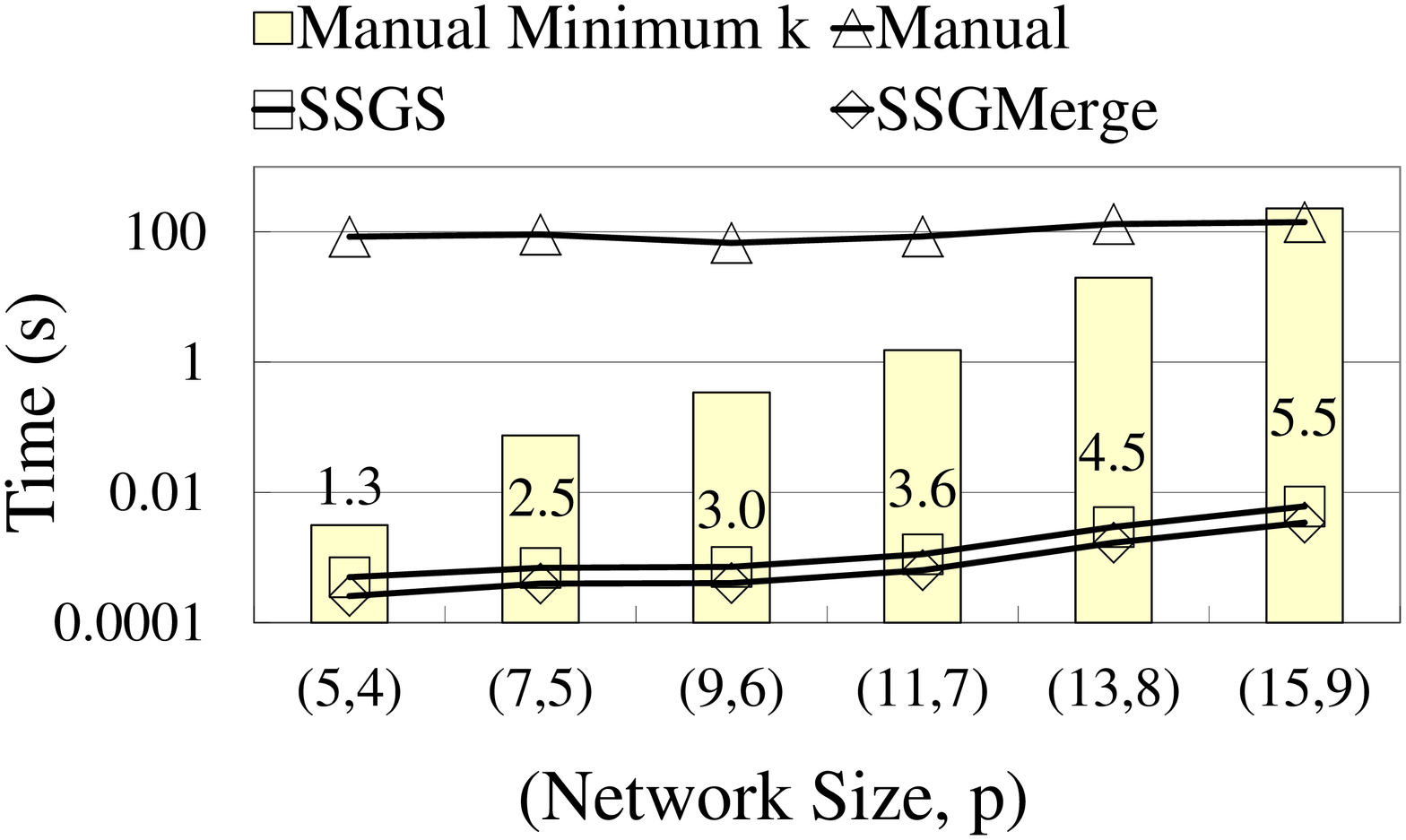} } 
\subfigure[Solutions without given $k$.] {\
\includegraphics[scale=0.15]{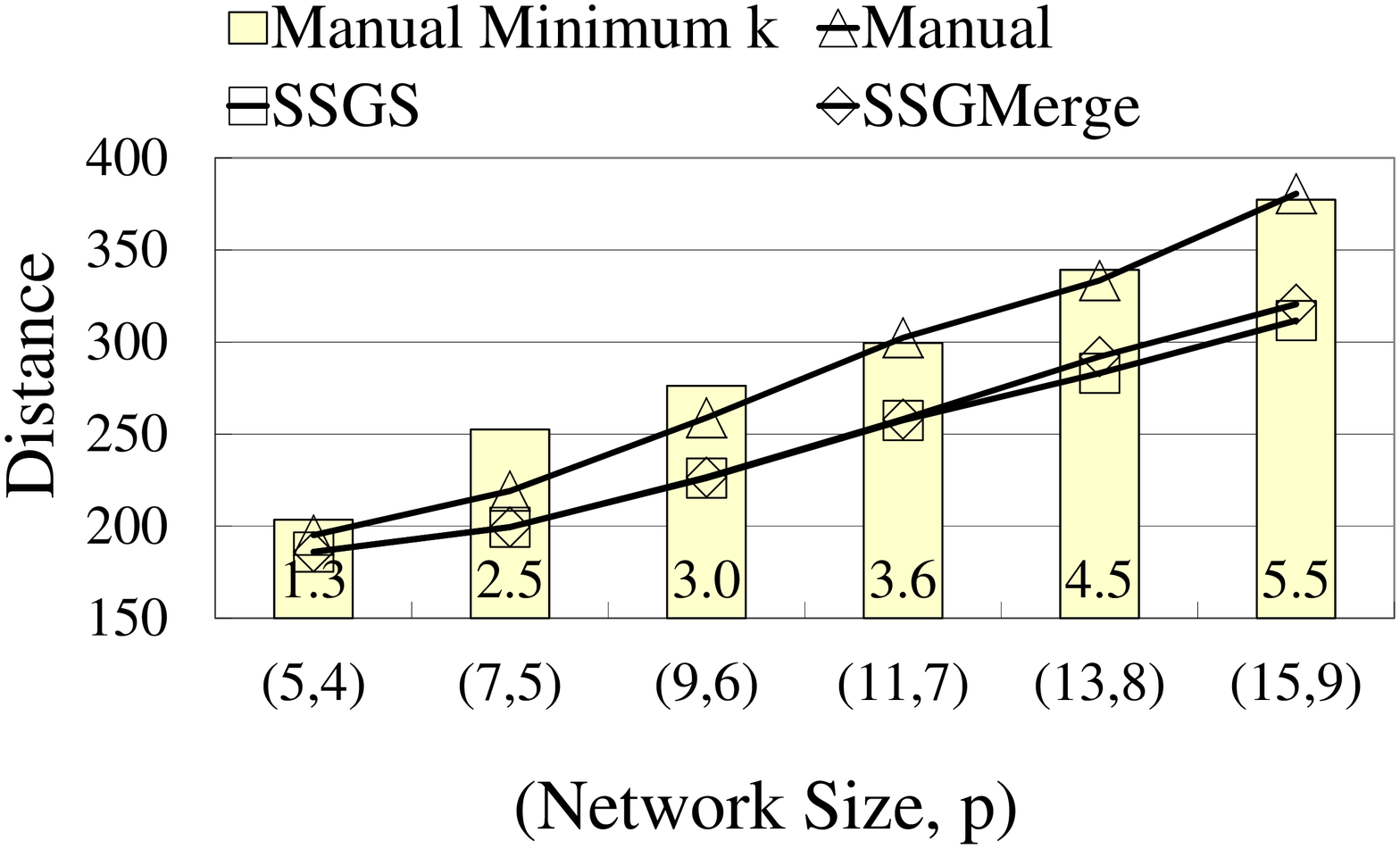} } 
\subfigure[Solutions with different $k$.] {\
\includegraphics[scale=0.15]{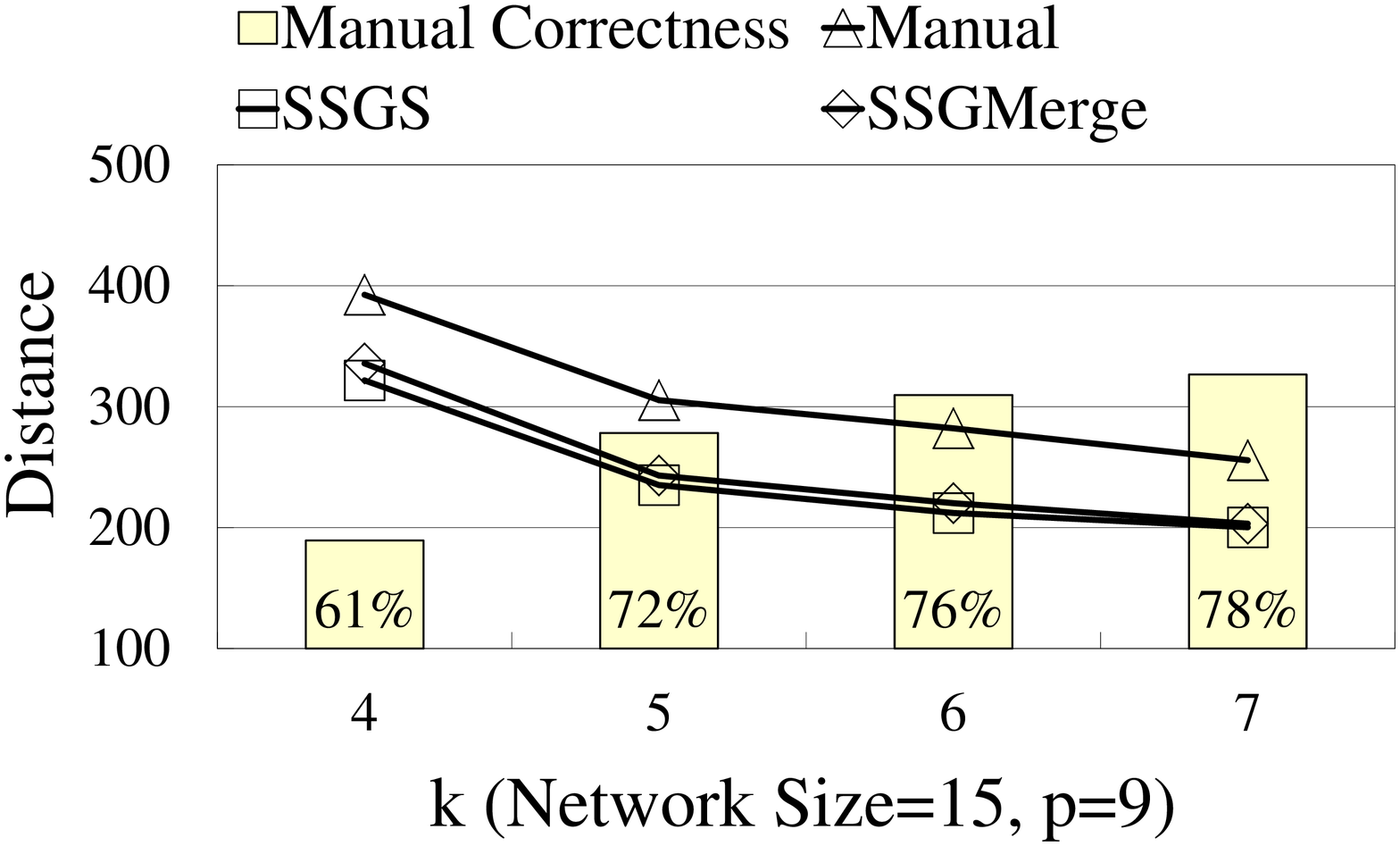} }
\caption{Results of user study.}
\label{fig:FIG_EXPNEW}
\end{figure}

Figures \ref{fig:FIG_EXPNEW}(a)-(f) compare manual coordination,
SSGS and SSGMerge to answer SSGQ in the user study. Figure \ref{fig:FIG_EXPNEW}(a)
presents the time to find the solutions in different scenarios. The result
indicates that SSGQ is challenging for manual coordination, especially for a
large network size. In contrast, SSGS obtains the optimal solution with
less than 0.01 second, while SSGMerge obtains a near-optimal solution with much less time. 
Figure \ref{fig:FIG_EXPNEW}(b) with $p=5$ and $k=3$
demonstrates that the solutions from manual coordination require larger
spatial distance and thereby are not optimal. With a larger network size,
i.e., more friends nearby, it is easier to find a group of attendees with a
smaller total spatial distance to $q$. In addition, the solution quality in
Figure \ref{fig:FIG_EXPNEW}(c) shows that even in $p=5$, the solutions
obtained by manual coordination is not guaranteed to follow the familiarity
constraint, according to the correctness rate shown in Figure \ref%
{fig:FIG_EXPNEW}(c), because it is very challenging for a person to
jointly minimize the total spatial distance and ensure the familiarity
constraint. Moreover, the correctness rate drops dramatically as the network size
increases. On the other hand, as shown in Figure \ref{fig:FIG_EXPNEW}(b) and \ref{fig:FIG_EXPNEW}(c), 
SSGMerge obtains solutions which are very close to the optimal solution, this is because 
SSGMerge effectively utilizes the intermediate solutions to construct good solutions.

In Figures \ref{fig:FIG_EXPNEW}(d) and \ref{fig:FIG_EXPNEW}(e), we let each user freely select $%
p $ people to find out the familiarity preferred by each person in
activities with different $p$. The minimum $k$ here represents the smallest $%
k$ for each manual solution to follow the familiarity constraint. With this
parameter extracted from the manual solution, we regard it as an input
parameter for an SSGQ in the same social network. The results demonstrate
that SSGS and SSGMerge can find better solutions following the same $k$ with
a smaller time. In other words, even when a user does not specify $k$, it is
possible to analyze the previous manual coordination results and find out a
suitable $k$ for the user, such that SSGS and SSGMerge are able to find 
solutions in each query afterward.

Figure \ref{fig:FIG_EXPNEW}(f) with the network size as 15 and $p$ as 9 compares
the results of different $k$. As $k$ decreases, the correctness rate of
manual coordination drops because it becomes more difficult for a user to
find a tighter social group with the same number of attendees. Moreover, the
solution obtained by manual coordination is still worse than the solution of
SSGS and SSGMerge even with a loose requirement on social connectivity, i.e., a
large $k$.

These MRGQ tasks span various $p$, $k$, and $|Q|$, where $t$ is fixed to 10 km. In the user study, MAGS is equipped with APDO and all the proposed pruning strategies. We also compare the solution quality with an algorithm called \textit{GreedyManual (GM)}, which imitates the behavior of manual coordination. GM first finds the candidates within radius $t$ of each activity location and picks the activity location which has the largest number of candidates nearby.
Afterwards, if there exists a feasible group, GM returns it. Otherwise, it repeats the above procedure with
the remaining activity locations. 

\begin{figure}[tp]
\centering
\subfigure[Solution quality.] {\
\includegraphics[scale=0.21]{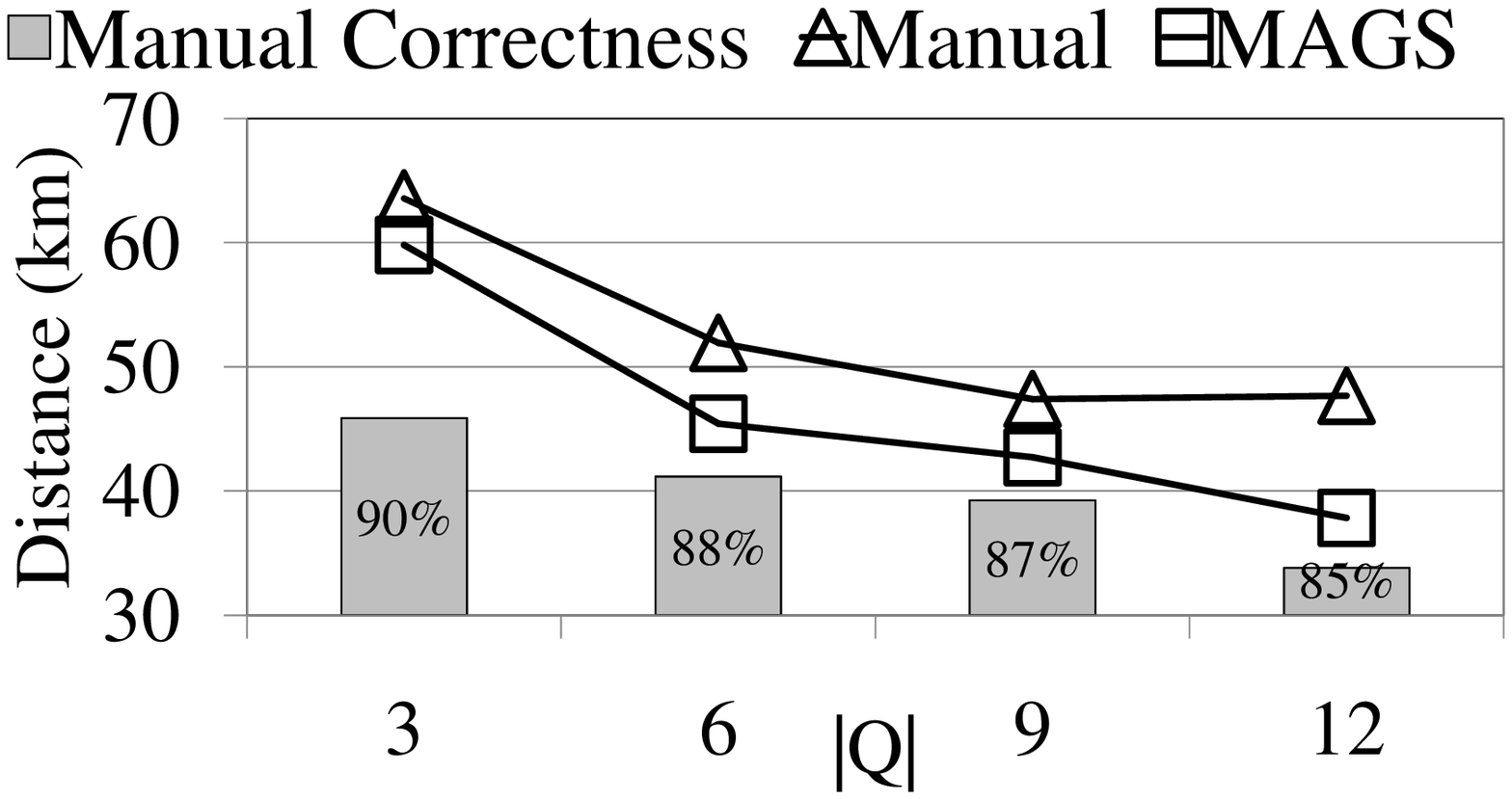} } 
\subfigure[Solutions with given $k$.] {\
\includegraphics[scale=0.21]{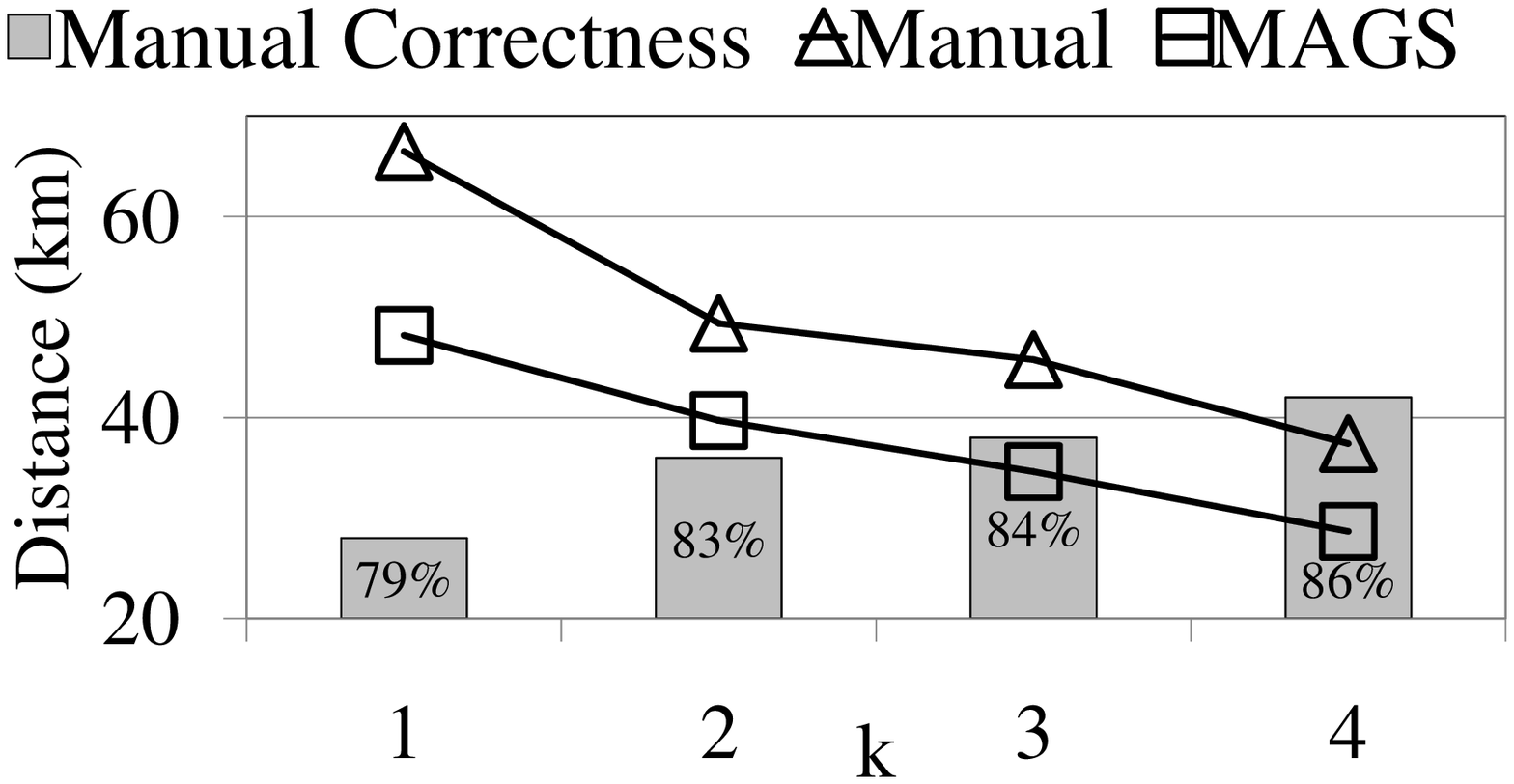} } 
\subfigure[Solutions without given $k$.] {\
\includegraphics[scale=0.21]{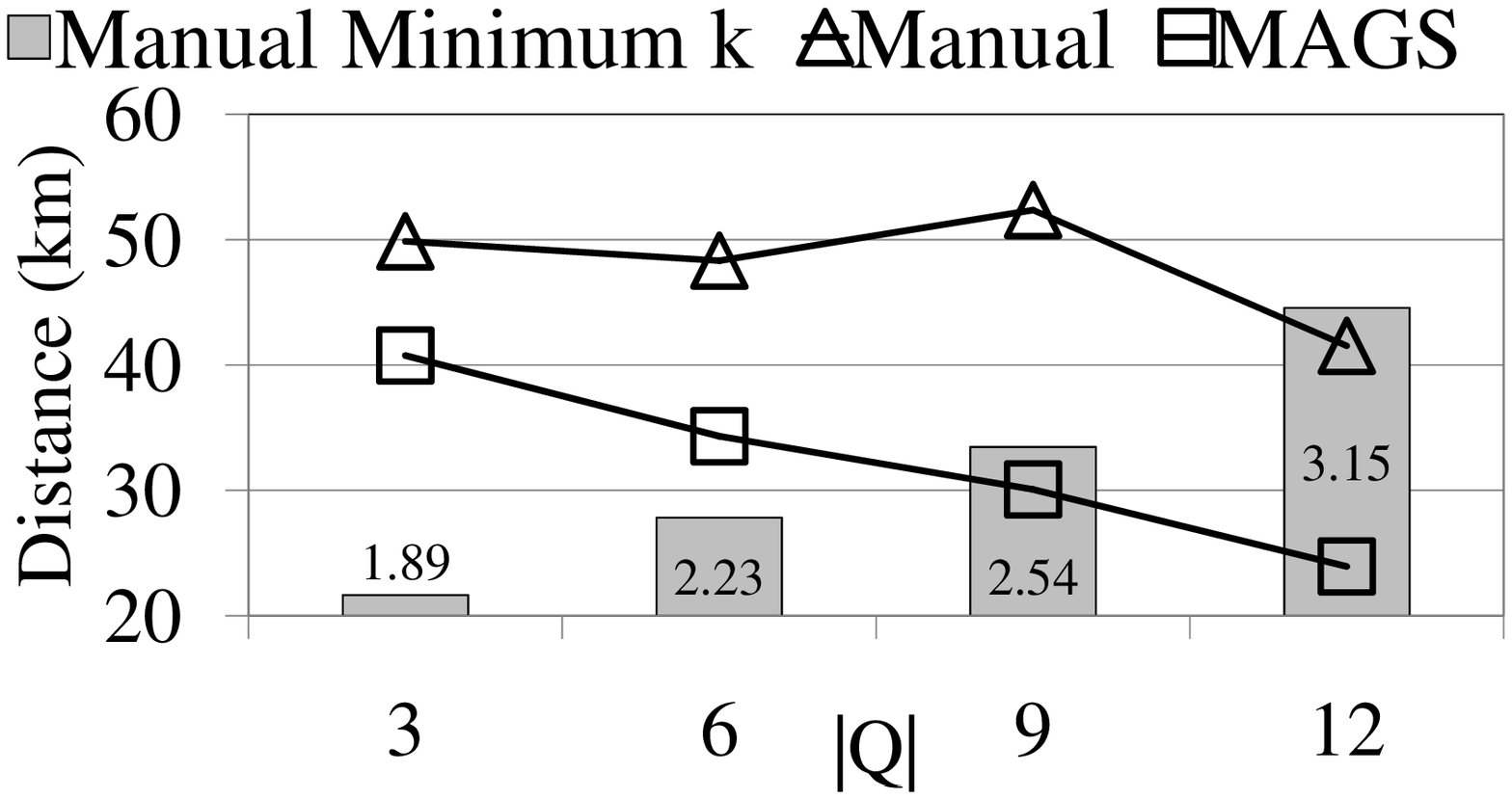} } 
\subfigure[Solutions with given $p$.] {\
\includegraphics[scale=0.21]{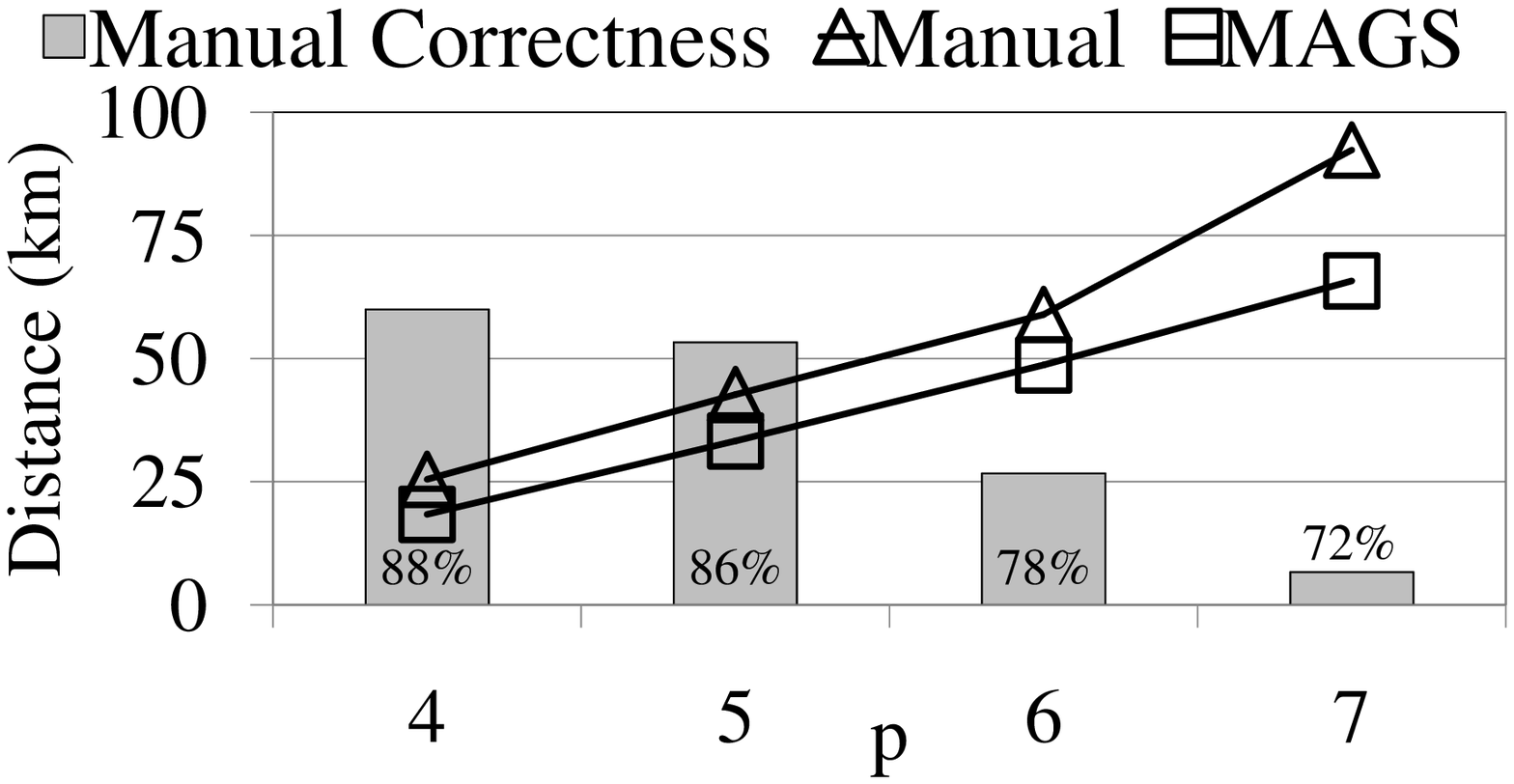} } 
\subfigure[Time without given $p$.] {\
\includegraphics[scale=0.21]{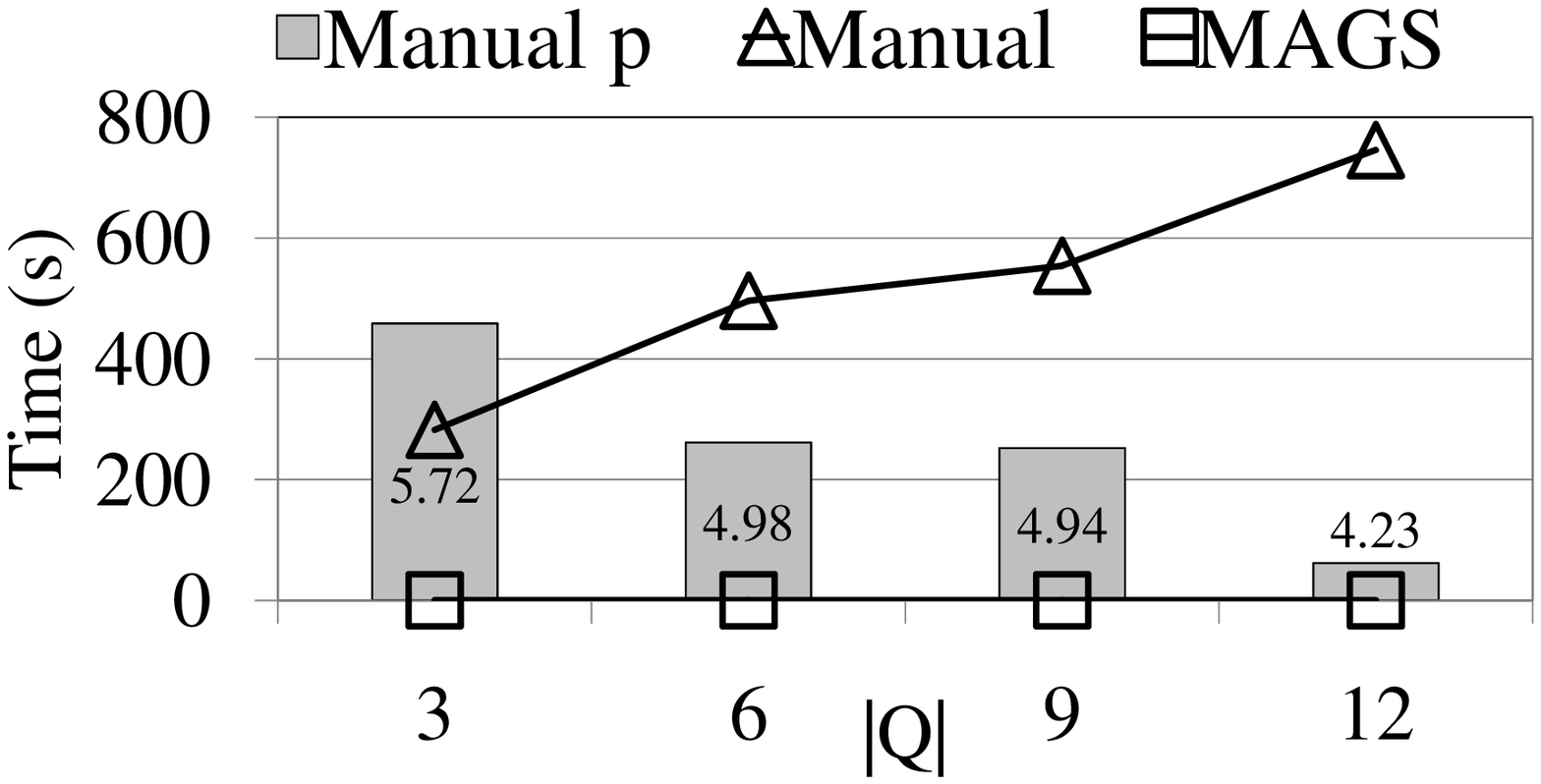} }
\subfigure[Time of activity organization.] {\  
\includegraphics[scale=0.21]{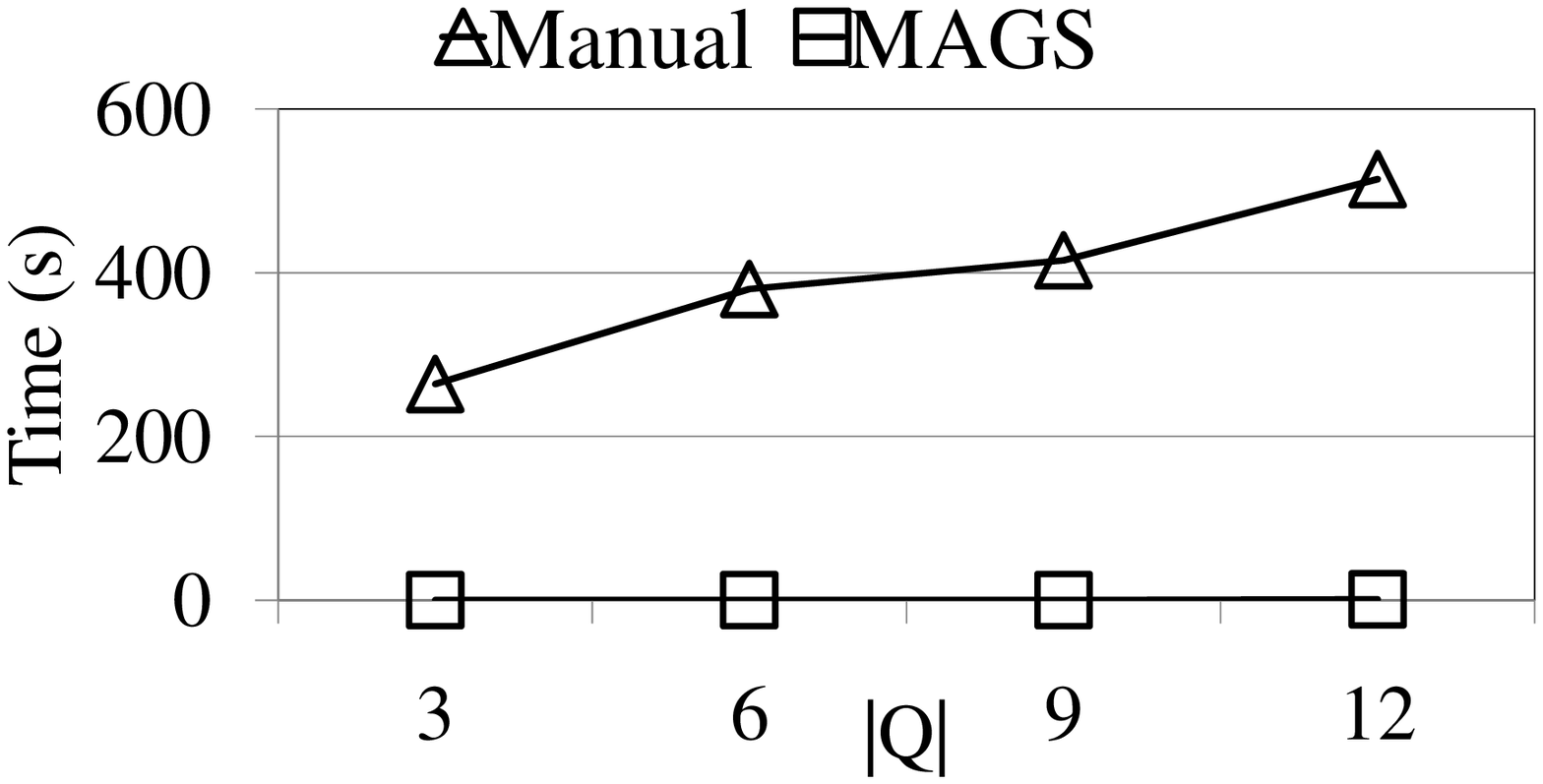} } 
\subfigure[Comparisons of GM.] {\  
\includegraphics[scale=0.21]{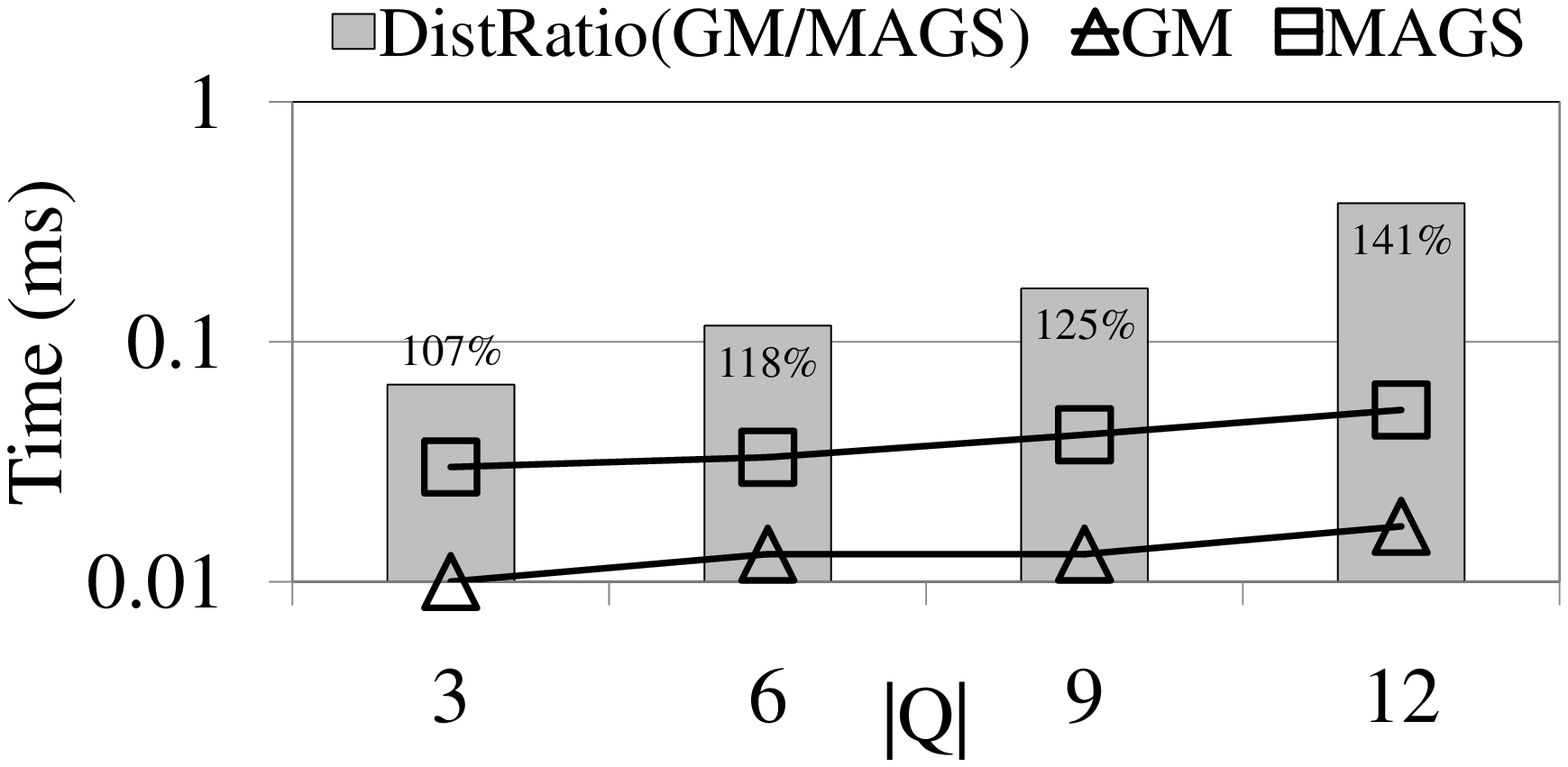} } 
\subfigure[User satisfaction.]{\ \label{fig_US_new}
\includegraphics[scale=0.14] {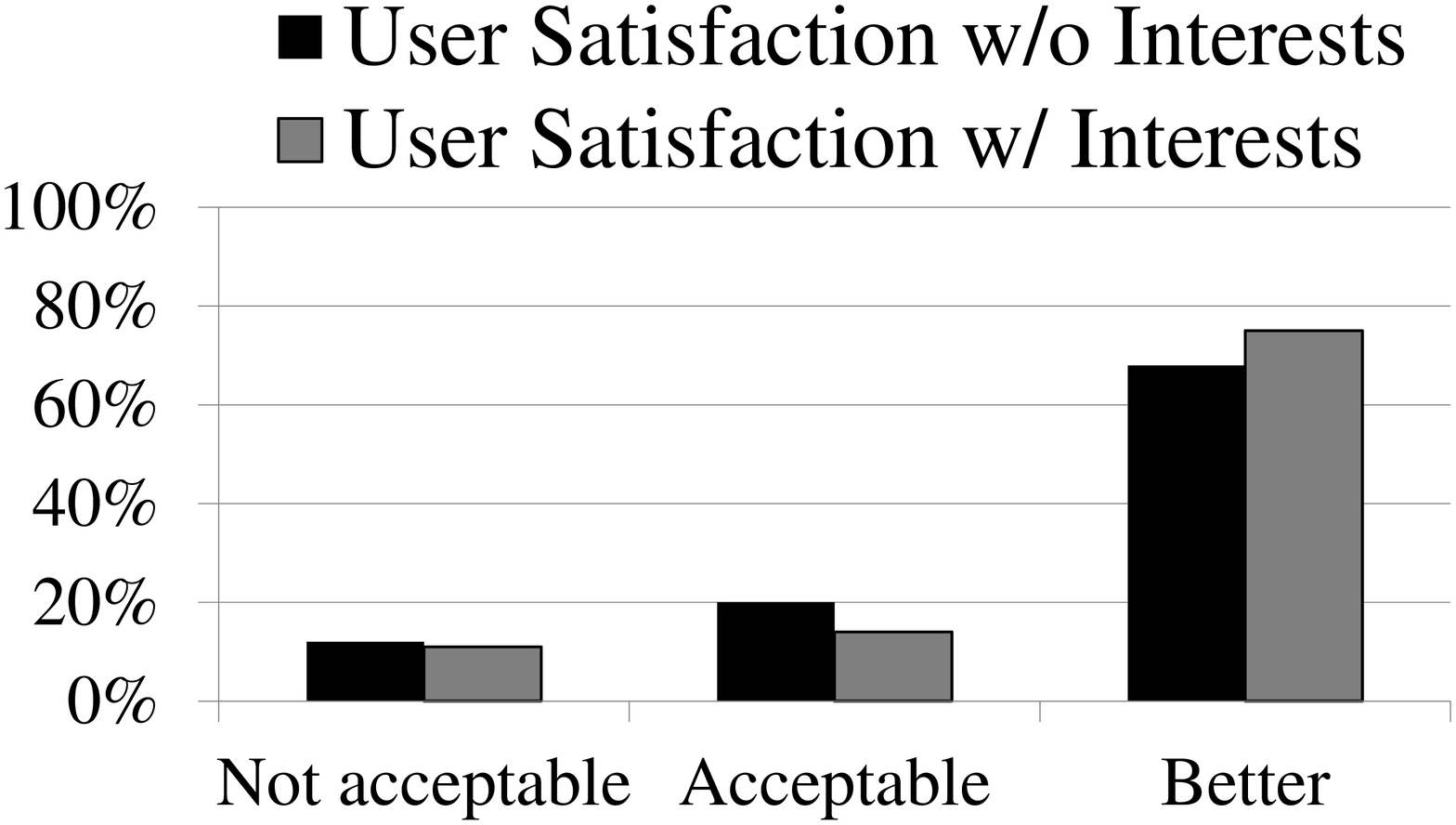} }
\caption{Results of user study.}
\vspace{-20pt}
\label{fig:FIG_US_MRGQ}
\end{figure}

Figures \ref{fig:FIG_US_MRGQ} compares manual coordination and
MAGS to answer MRGQ in the user study.  
Figure \ref{fig:FIG_US_MRGQ}(a) 
demonstrates that the solutions from manual coordination incur larger
spatial distance and thereby are not optimal. When the number of activity locations increases, 
it is easier to find a group of attendees and an activity location with a
smaller total spatial distance. In addition, the correctness rate in
Figure \ref{fig:FIG_US_MRGQ}(b) shows that even when $p=5$, the solutions
obtained by manual coordination are not guaranteed to follow the social
constraint, especially for a smaller $k$, because it is very challenging for a user to
jointly minimize the total spatial distance and ensure the social
constraint. In Figure \ref{fig:FIG_US_MRGQ}(c), we let each user freely select $5$ 
people and analyze the familiarity parameter preferred by each person in
activities. The minimum $k$ here represents the smallest 
$k$ to meet the familiarity constraint in user selection. With this
parameter extracted from the manual solution, we regard it as an input
parameter for an MRGQ query in the same social network. The results demonstrate
that users are difficult to handle small $k$ and large $|Q|$ due to the need to examine many more 
combinations. Thus, the distances
obtained by manual coordination are more deviated from the optimal solution obtained by MAGS.

Figures \ref{fig:FIG_US_MRGQ}(d) shows that as $p$ increases, the correctness rate and
solution quality of manual coordination significantly deteriorate because it becomes more difficult for a user to
find a tight social group. In Figure \ref{fig:FIG_US_MRGQ}(e), each user
can freely select any number of people for forming the group with $k=3$. 
Manual p in this figure indicates the average group
size measured in the user study. As $|Q|$ increases, the selected group size drops because it
becomes more challenging to find the optimal group. Moreover, users need much more time to find the
group when $|Q|$ grows, even with a small group size and a loose requirement on the social
connectivity, i.e., $p=\{4,5\}$ and $k=3$. Finally, Figure \ref{fig:FIG_US_MRGQ}(f)
presents the time spent to find the solutions in different scenarios. The result
indicates that MRGQ is challenging for manual coordination, especially for a
large number of potential candidate locations.

Figure \ref{fig:FIG_US_MRGQ}(g) compares the computation time and solution quality of GreedyManual (GM)
and MAGS. Although GM obtains the solutions within a smaller time, the solution quality is much worse than MAGS. 
This is because GM stops when a feasible group is obtained, which cannot effectively obtain the optimal solution. 

Since MRGQ can also consider the user interests (discussed in Section \ref{discussion}), we also compare the user satisfaction with or without considering user interests. 
We ask the users to choose 20 activity locations in MRGQ, where each location
is tagged as coffee shop, restaurant, bar, etc. The interest measure of each activity location 
$q_{i}$ to each user $u$ is specified as $\eta _{u,q_{i}}$ between 0 and 1 by the user. 
We let each user compare the groups selected by MAGS and the
user herself. Figure \ref{fig_US_new} with $p=7$ and $k=3$ compares
the user satisfaction with and without user interests incorporated. The
results manifest that $68\%$ and $75\%$ of the users agree that the groups
selected by MAGS outperform the manually selected groups before and after
incorporating the interests, respectively. Moreover, the increment of the
users that choose "Better" after incorporating the user interests mainly
come from those who previously chose "Acceptable". The results demonstrate
that incorporating user interests indeed improves the user satisfaction.

\vspace{-10pt}
\subsection{Performance Evaluation of Proposed Algorithms for MRGQ}

We evaluate the effectiveness and efficiency of the proposed algorithms for MRGQ. 
APDO and SRDO denote MAGS with All-Pair Distance Ordering and Single-Reference Distance Ordering (a simplified version of APDO, which is mentioned in Section \ref{DistOdrSOSA}), respectively, while
Socio-Spatial Ordering and Familiarity Pruning mentioned in Section \ref{Baseline} are also included.
In our experiments, unless specifically indicated, we set $k=4$, $p=8$, $|Q|=10,000$, and the maximum value of $t$ is 15 km.

\begin{figure}[tp]
\centering
\subfigure[Computation time.] {\
\includegraphics[scale=0.15]{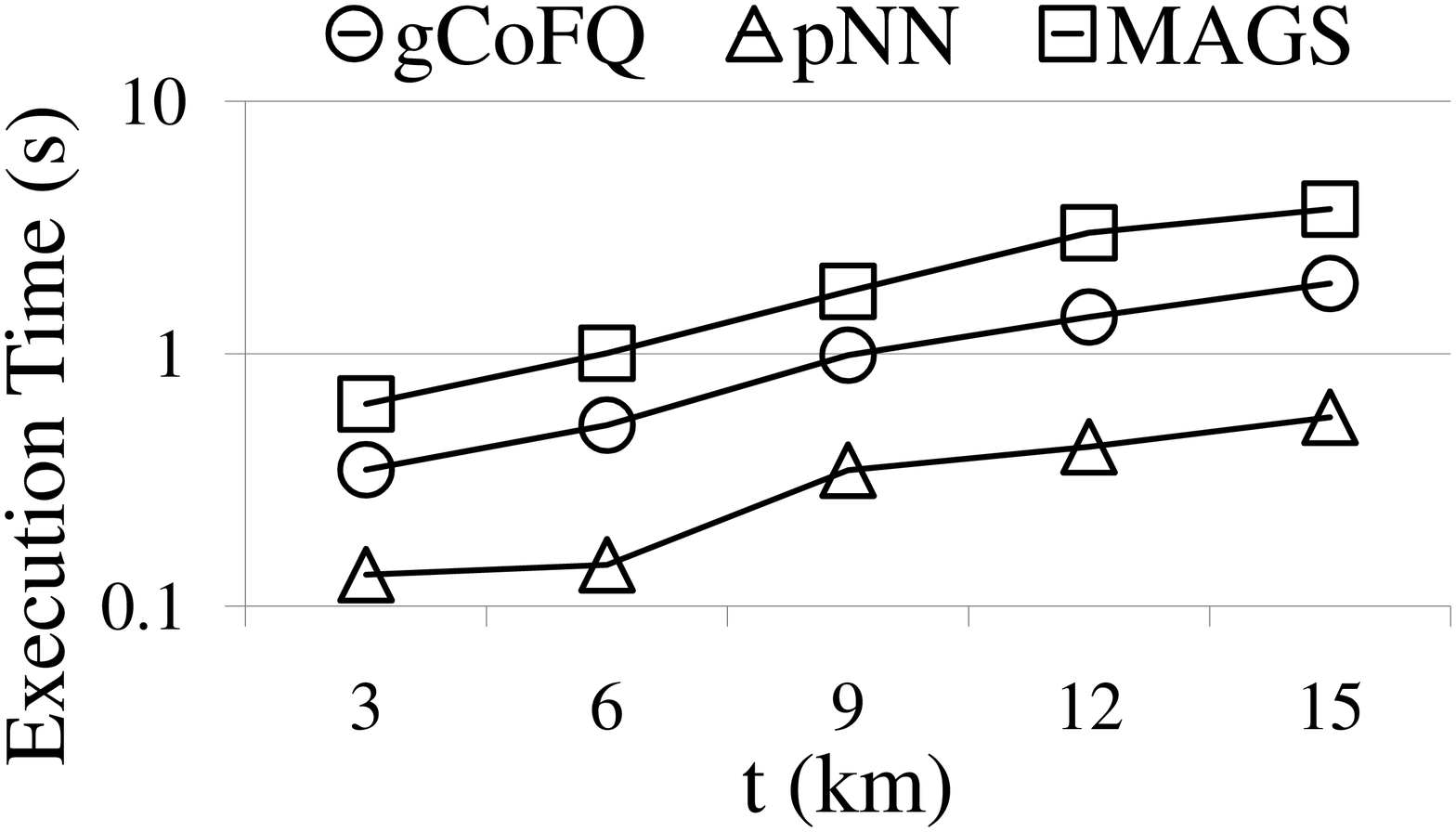} \label{FIG_Exp_RW_Time}} 
\subfigure[Solution quality.] {\
\includegraphics[scale=0.15]{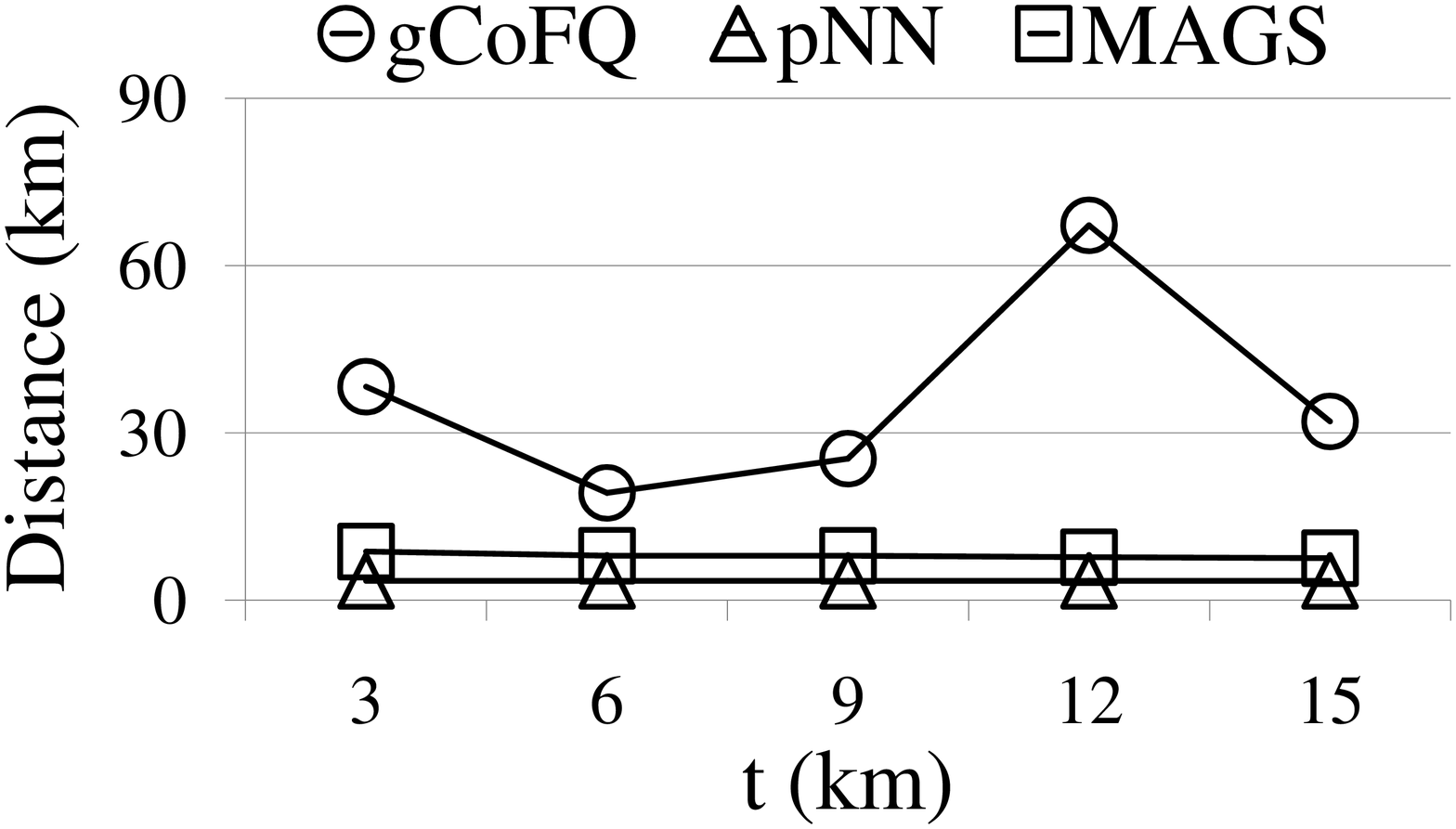} \label{FIG_Exp_RW_Sol}} 
\subfigure[Familiarity constraint.] {\
\includegraphics[scale=0.15]{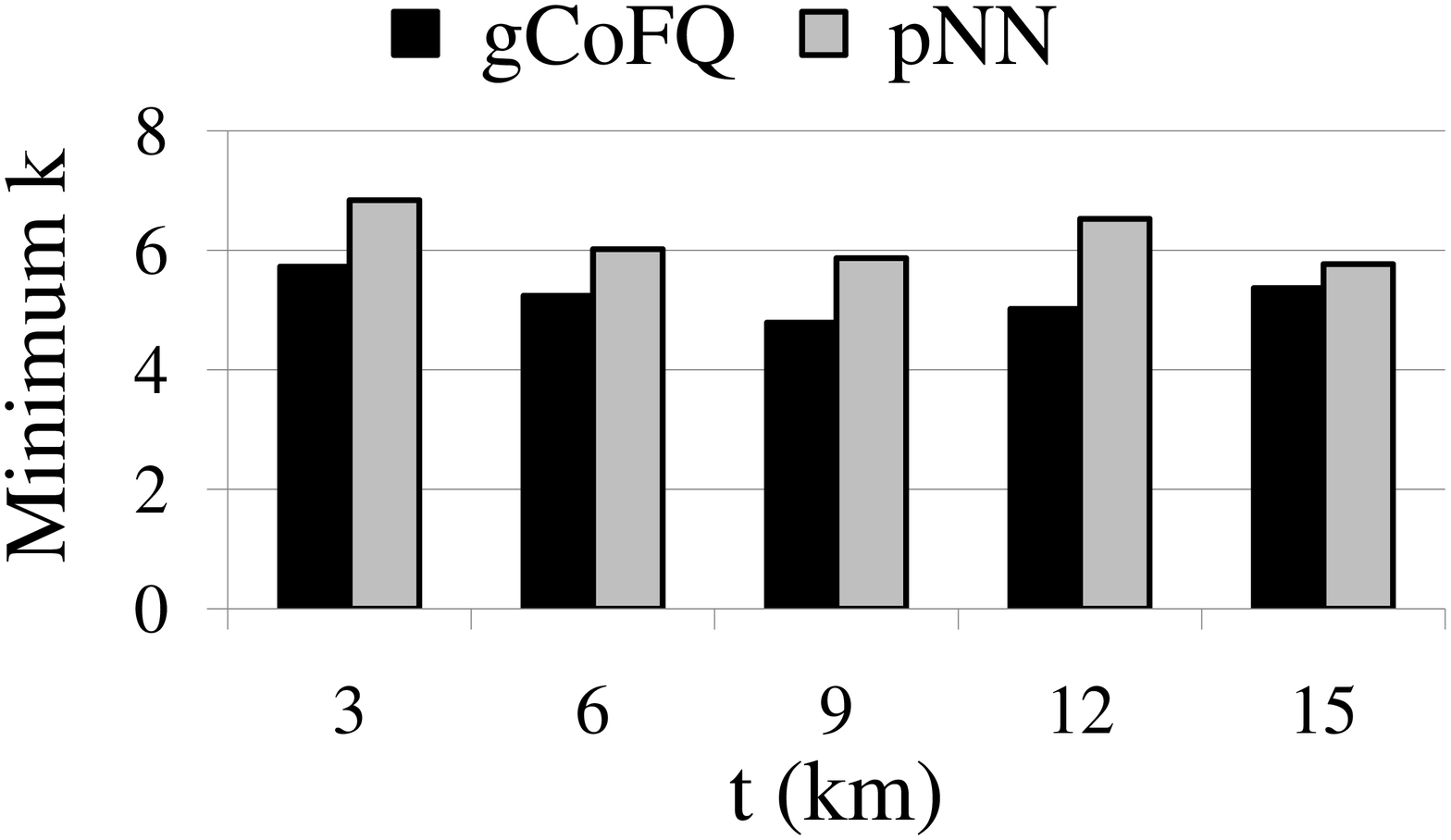} \label{FIG_RW_K}} 
\subfigure[Social diameter.] {\
\includegraphics[scale=0.15]{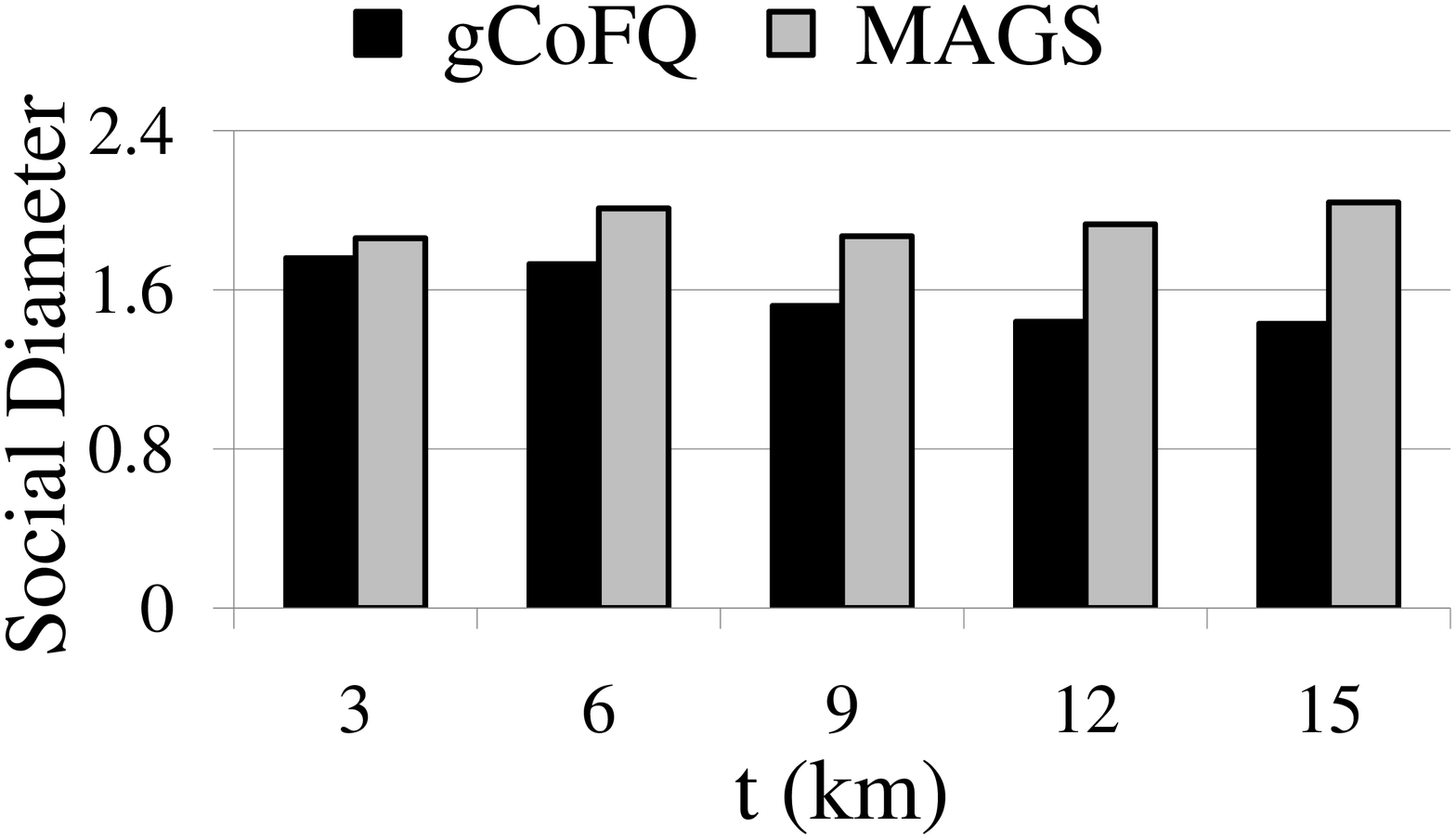} \label{FIG_RW_Dia}} 
\caption{Comparisons with relevant works on \textit{DataSet\_4SQ}.}
\vspace{-5pt}
\label{fig:FIG_RW}
\end{figure}

Figure \ref{fig:FIG_RW} first compares MAGS in MRGQ with the related works on \textit{DataSet\_4SQ}, where MAGS is equipped with APDO, Socio-Spatial Ordering and the proposed pruning strategies.
1) \textit{Geo-Social Circle of Friend Query (gCoFQ)} \cite{LSCH12} aims to find a group of $p$ people to minimize the linear combination of the social diameter and spatial diameter (maximum spatial distance between each pair of group members) of the selected group. In other words, there is no activity location in gCoFQ. In the experiments, gCoFQ is implemented to limit the spatial diameter within $2t$, and the nearest activity location after 
gCoFQ identifies the group is returned as the solution.
On the other hand, 2) \textit{pNN} extracts the group of $p$ members
along with their nearest activity location without considering the familiarity constraint. 
Figures \ref{FIG_Exp_RW_Time} and \ref{FIG_Exp_RW_Sol} compare the computation time and solution quality.
Although pNN obtains the group with the minimum time and distance, as shown in Figure \ref{FIG_RW_K}, 
the minimum $k$ of the obtained group (i.e., the minimum number of unfamiliar members each attendee has in the group) is
far from the specified $k$ value, i.e., $k=4$. In other words, the solution returned by pNN is not feasible to MGRQ. The solution quality of gCoFQ is worse than the other two algorithms
because gCoFQ does not examine activity locations during the group formation process, while Figure \ref{FIG_RW_K} shows that gCoFQ is also difficult to follow the familiarity constraint. In contrast, MAGS follows the familiarity constraint and can identify the optimal group along with the nearest activity location. 
Figure \ref{FIG_RW_Dia} compares the social diameter of the groups obtained by gCoFQ and MAGS, and the results of pNN are not able to be displayed because the groups obtained by pNN 
are usually disconnected. This figure manifests that, although MAGS is not designed to minimize the social diameter, the social diameter is still close to gCoFQ. 

\begin{figure}[tp]
\centering
\subfigure[Different algorithms.] {\
\includegraphics[scale=0.15]{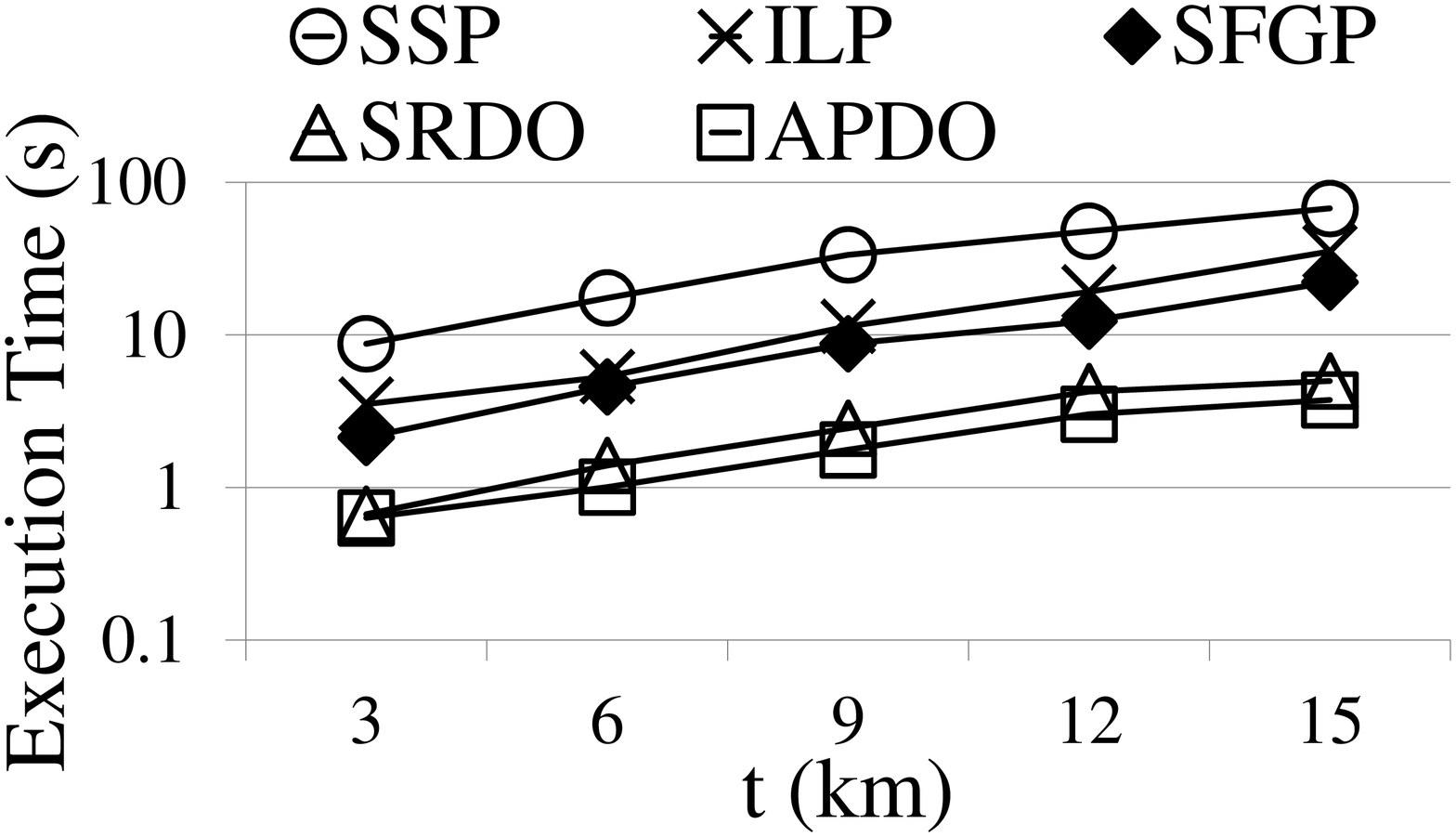} \label{FIG_Exp_Algo}} 
\subfigure[Impact of BallTree.] {\
\includegraphics[scale=0.15]{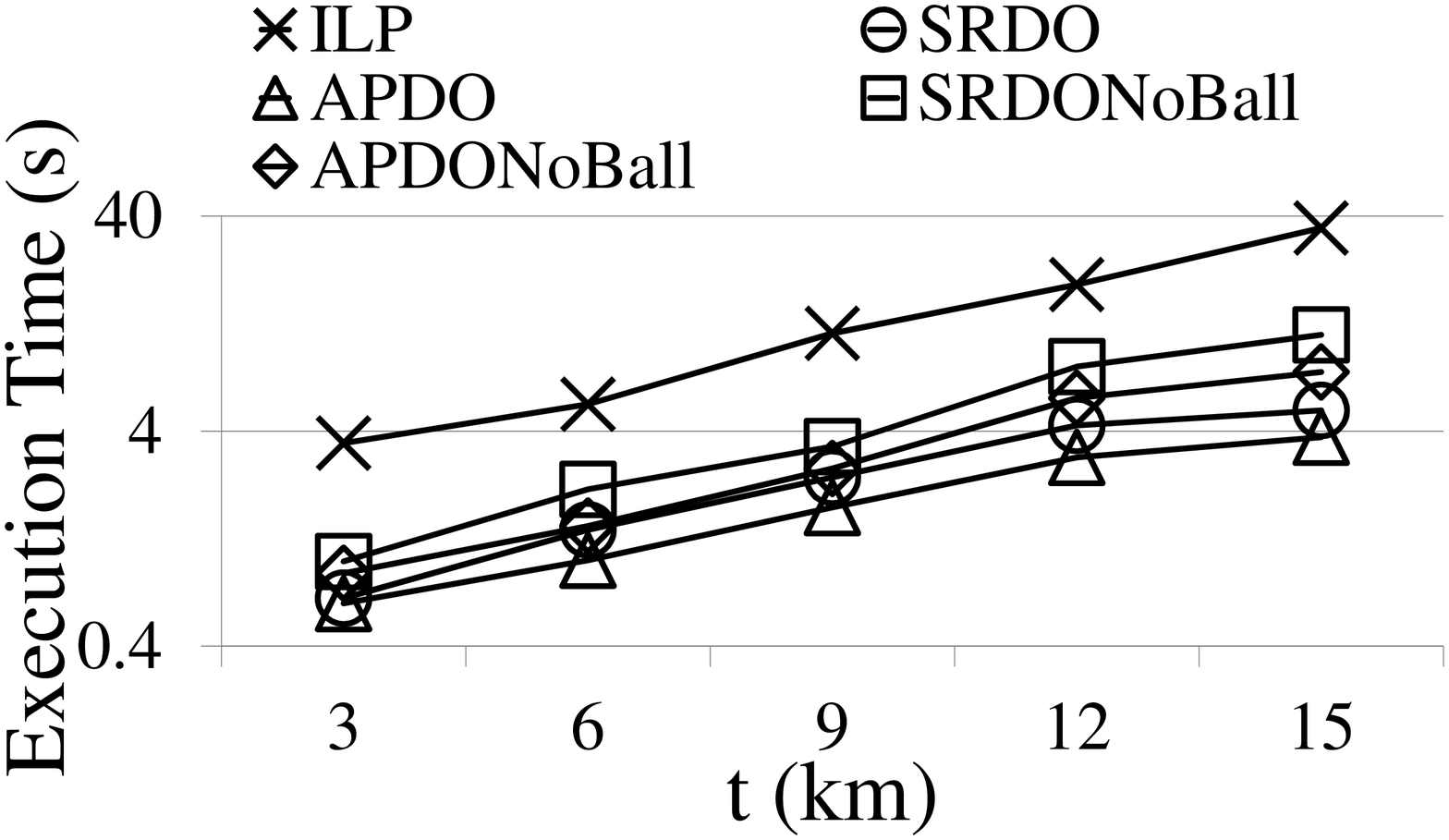} \label{FIG_Exp_Ball}} 
\subfigure[Distance Pruning.] {\
\includegraphics[scale=0.15]{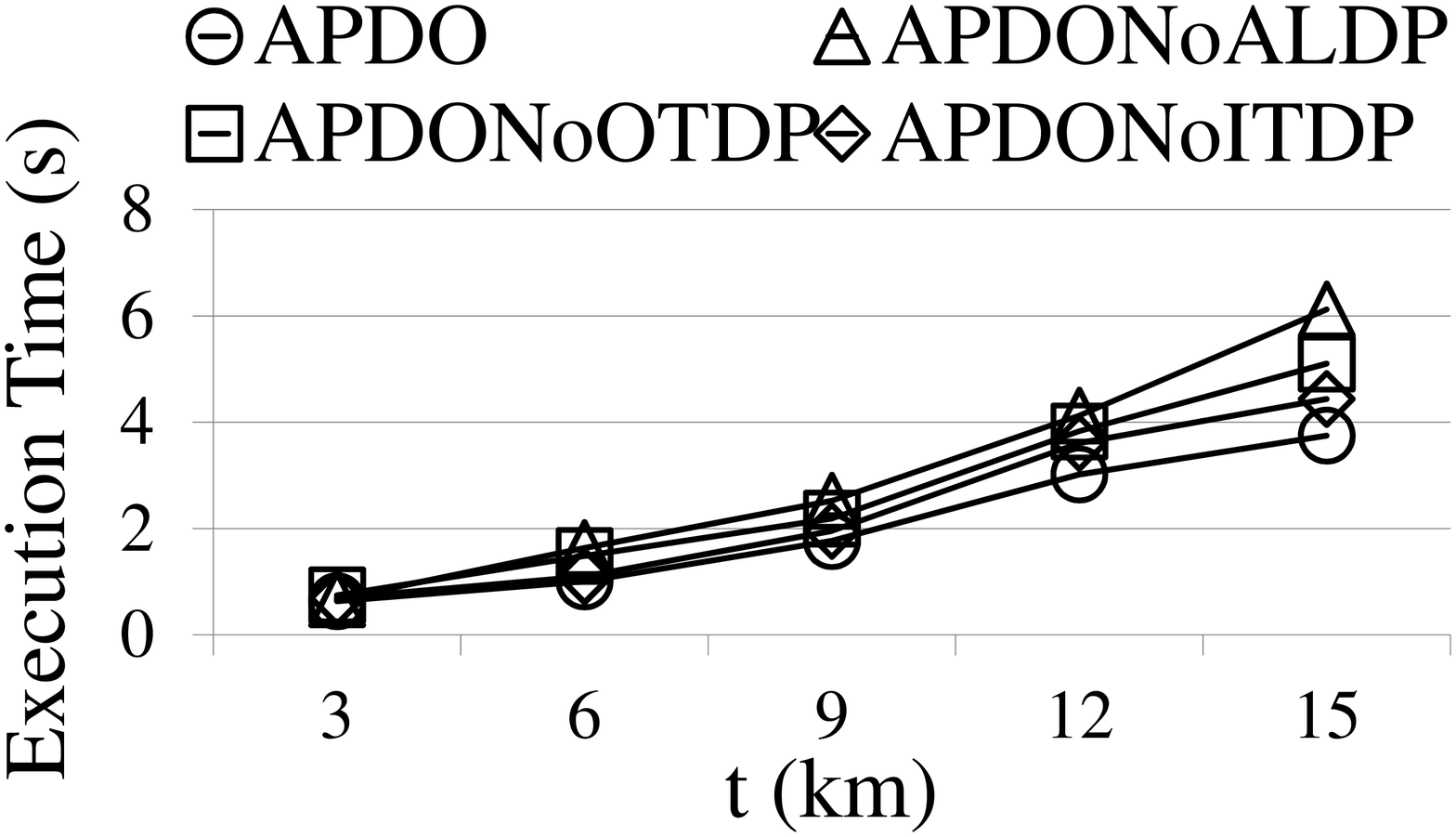} \label{FIG_Exp_DP}} 
\subfigure[Familiarity Pruning.] {\
\includegraphics[scale=0.15]{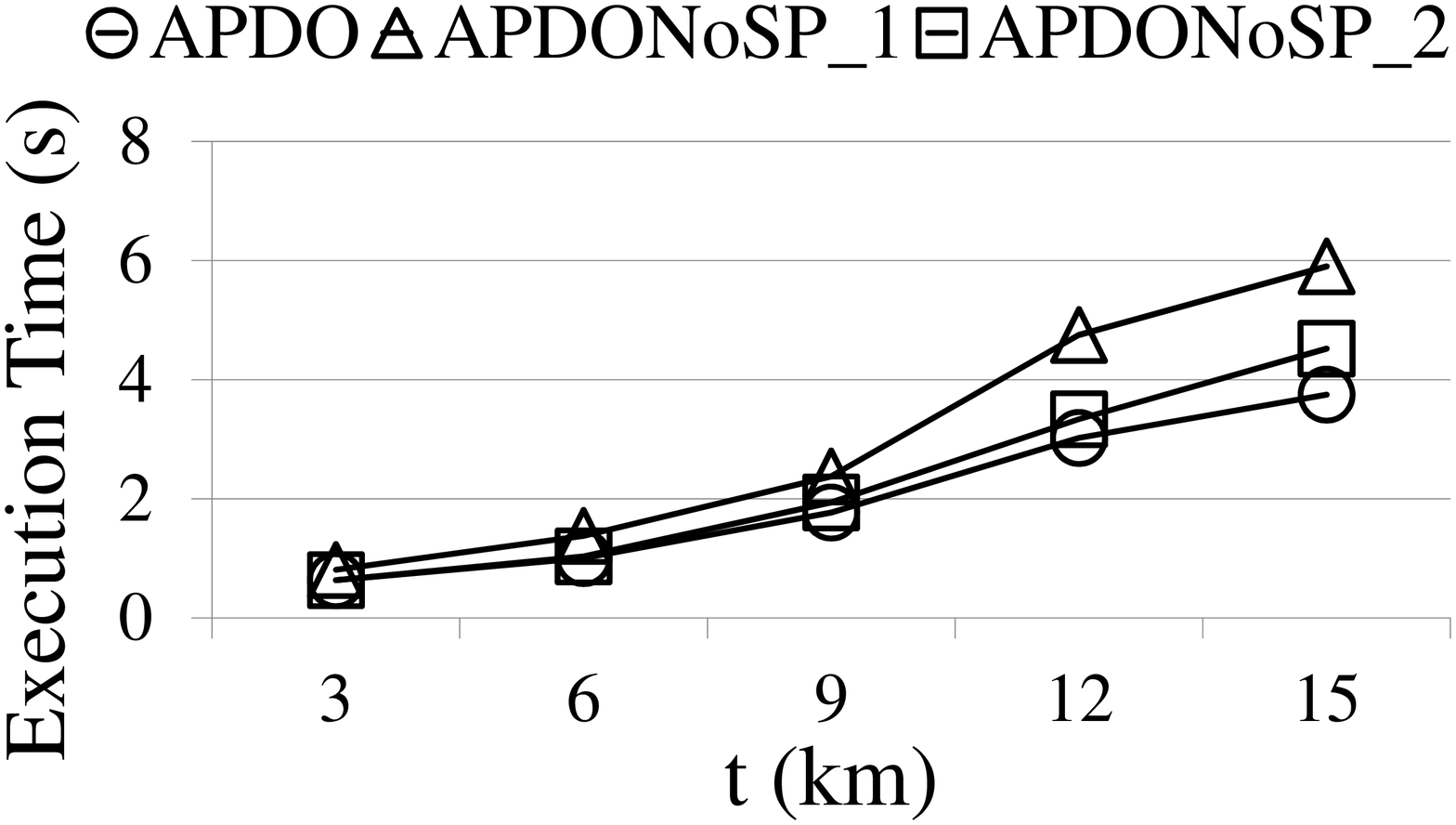} \label{FIG_Exp_SP}} 
\caption{Performance comparisons on \textit{DataSet\_4SQ}.}
\vspace{-5pt}
\label{fig:FIG_Exp_MAGS}
\end{figure}

Figures \ref{fig:FIG_Exp_MAGS} evaluates the efficiency of MAGS on \textit{DataSet\_4SQ}. Figure \ref{FIG_Exp_Algo} compares the computation time of the proposed algorithms with different values of $t$. Given its massive search space, SSP incurs the largest computation time as $t$ grows.
On the other hand, equipped with Socio-Spatial Ordering, BallTree, Distance Pruning, and
Familiarity Pruning, SRDO and APDO effectively reduce the time to acquire the optimal
solution and outperform Integer Linear Programming (ILP). Moreover, APDO requires the minimum computation time because it can effectively minimize
the total spatial distance from $S_I$ to $q_{ref}$ during each expansion of $S_I$.
Figure \ref{FIG_Exp_Ball} compares different algorithms with or without 
BallTree. Both SRDO and APDO require smaller computation time with BallTree and outperform ILP, 
since the proposed Outer-Triangle, Inner-Triangle and Activity Location Distance Pruning strategies are able to effectively remove redundant activity locations (within balls) at early stages.


Figures \ref{FIG_Exp_DP} and \ref{FIG_Exp_SP} present the impact of the proposed pruning strategies, i.e., Outer-Triangle Distance Pruning (OTDP), Inner-Triangle Distance Pruning (ITDP), Activity Location Distance Pruning (ALDP), and Familiarity Pruning shown in Eqs. (\ref{MRGQ_SP_1}) and (\ref{MRGQ_SP_2}) in Section \ref{Baseline} (denoted as SP\_1 and SP\_2). The results manifest
that these pruning strategies effectively process the spatial and social relationships and indeed are critical for efficiently processing MRGQ. Moreover, the first Familiarity Pruning (SP\_1) is more powerful than the second one (SP\_2) since 
it derives a tighter upper bound on the number of people acquainted with each member in $S_I$.

\subsection{Performance Evaluation of Proposed Algorithms for SSGQ}

\begin{figure}[tp]
\centering
\subfigure[Different strategies.] {\
\includegraphics[scale=0.2]{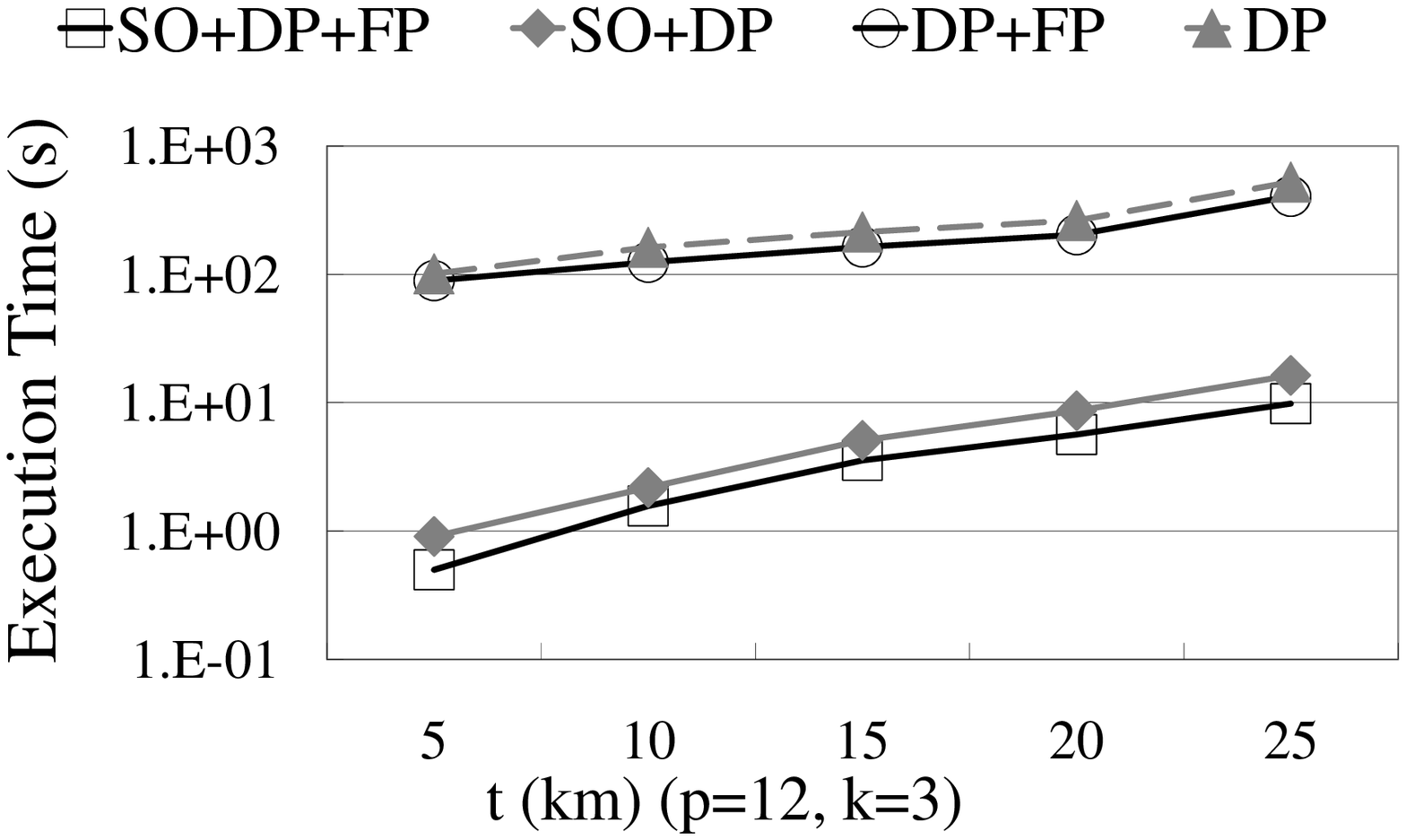} \label{FIG_Exp_SSGSelect_Diff_Pruning}} 
\subfigure[Socio-Spatial Ordering.] {\
\includegraphics[scale=0.2]{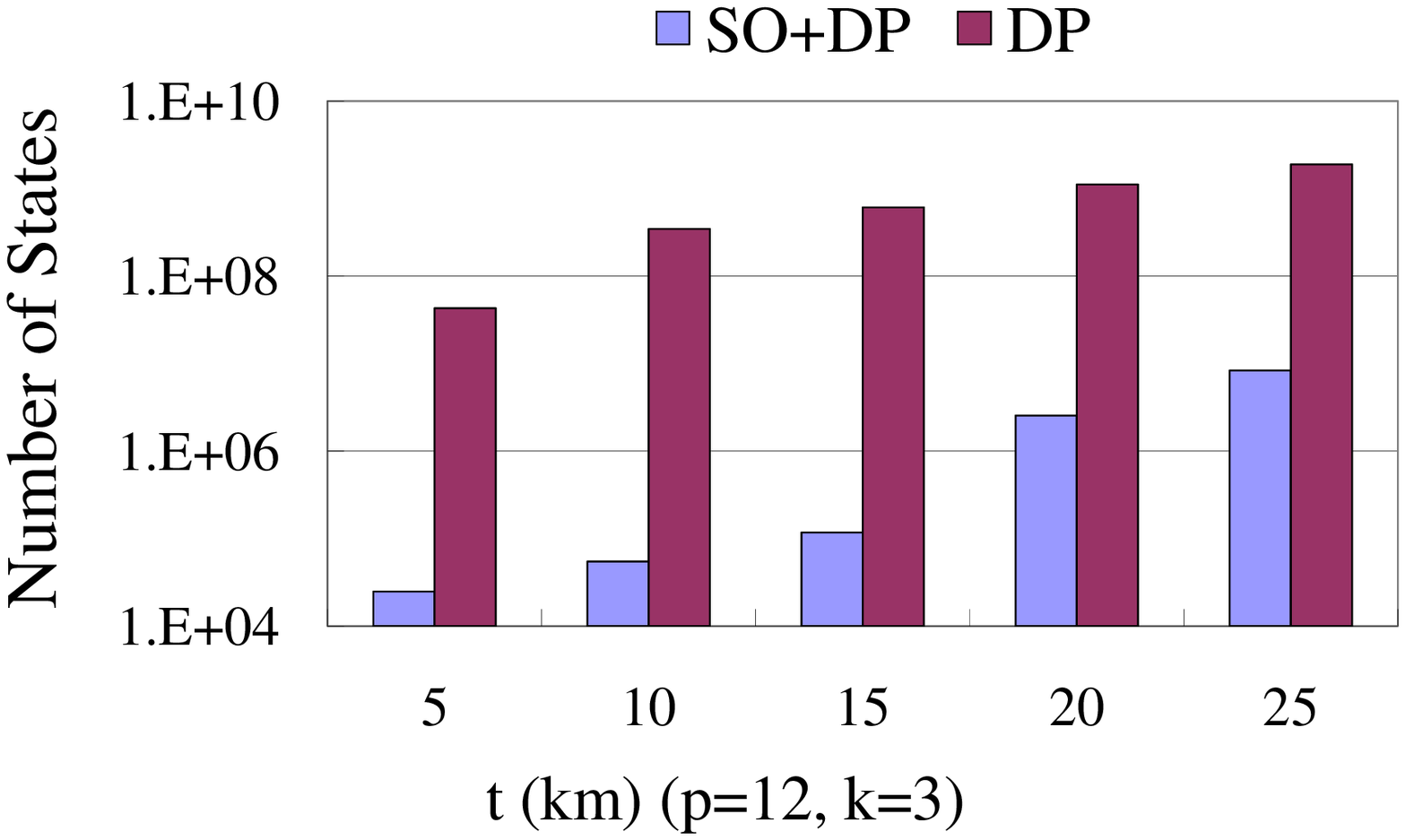} \label{FIG_Exp_SSGSelect_Diff_Pruning_State}} 
\subfigure[Varying $p$.] {\
\includegraphics[scale=0.2]{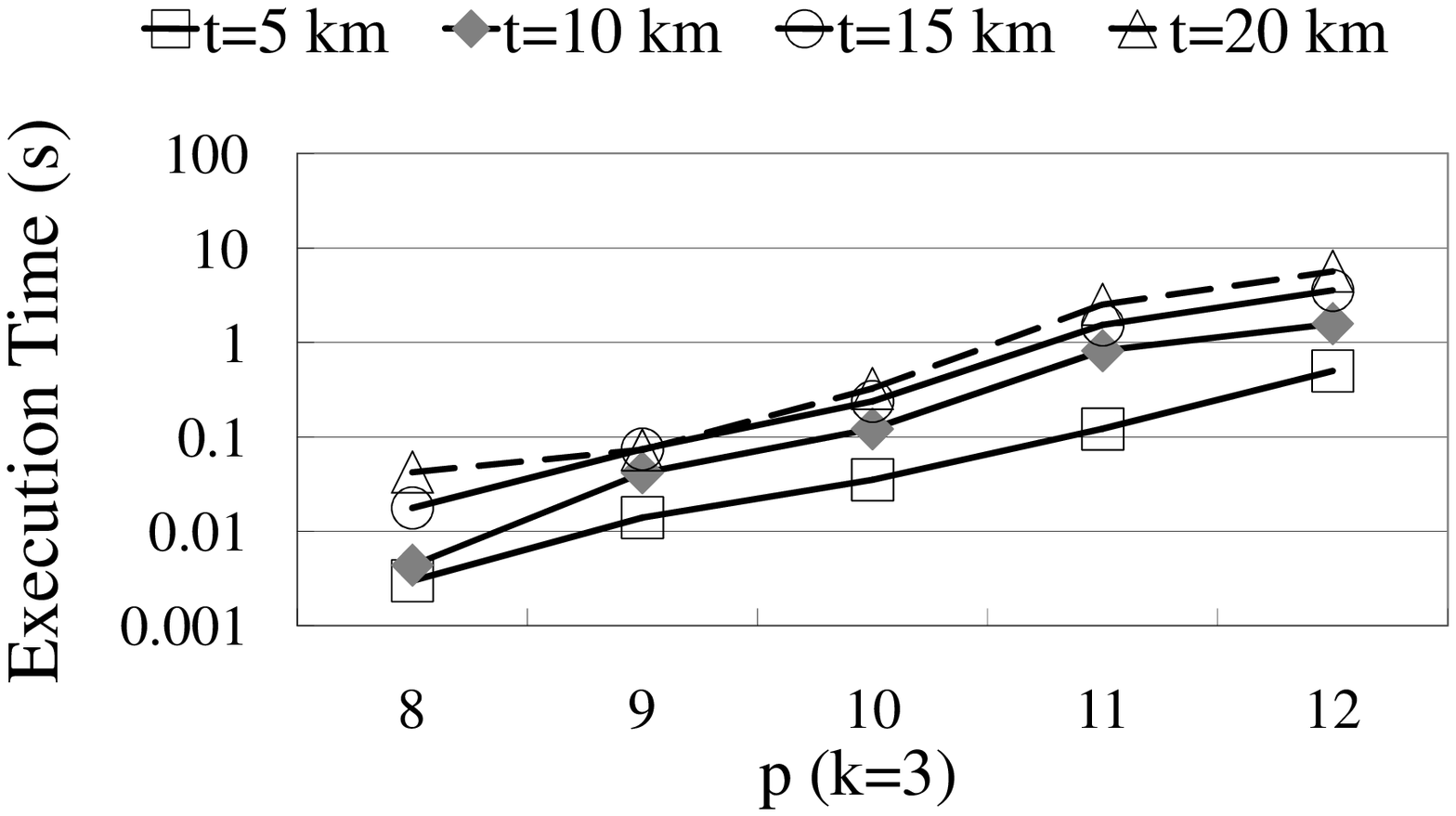} \label{FIG_Exp_Diff_P}} 
\subfigure[Varying $k$.] {\
\includegraphics[scale=0.2]{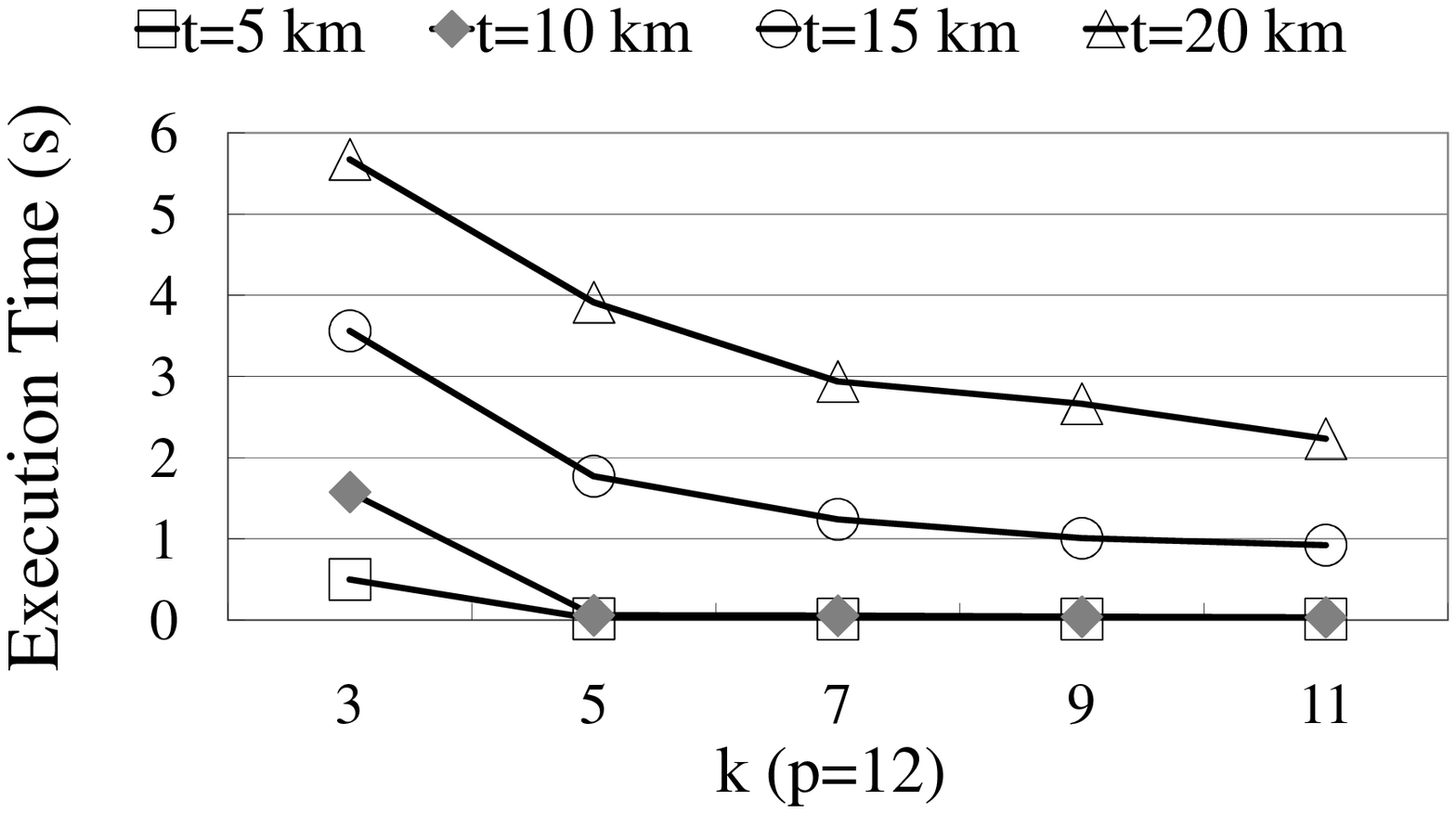} \label{FIG_Exp_Diff_K}} 
\caption{Performance comparisons on \textit{DataSet\_4SQ}.}
\label{fig:FIG_Exp_SSGSelect}
\end{figure}

We set the spatial radius, $t$, as 10km in this set of experiments, which is determined based on the user study. The study indicates that most of the users are willing to participate in impromptu activities within 10km from them.

For SSGMerge, we empirically set $\lambda $ within the range $50\leq \lambda \leq 800$,
because we observe that in this range, SSGMerge incurs small execution time while the obtained solutions achieve 
significant improvement over the straightforward approach, i.e., $i$-th feasible solution. 
On the other hand, $w$ should not be set too small, e.g., smaller than 10000, because the intermediate solutions 
in this case will not be able to include a sufficient number of distinct candidates, and thus limiting the possibility for constructing good solutions.

Figures \ref{fig:FIG_Exp_SSGSelect}(a) and \ref{fig:FIG_Exp_SSGSelect}(b) analyze the proposed strategies in Section 4, where SO, DP and FP denote Socio-Spatial Ordering, Distance Pruning and Familiarity Pruning, respectively.
The result indicates that Socio-Spatial Ordering (SO) is effective in reducing 
the execution time. This is because
Socio-Spatial Ordering considers both spatial and social domains and thereby
is able to guide the efficient search of the feasible solutions and the optimal
solution by exploring fewer states in the branch-and-bound tree, as shown
in Figure \ref{FIG_Exp_SSGSelect_Diff_Pruning_State}. 

Figures \ref{FIG_Exp_Diff_P} and \ref{FIG_Exp_Diff_K} compare SSGS with different parameter settings.
Figure \ref{FIG_Exp_Diff_P} indicates that the execution time increases as $p$ grows, 
because SSGS in this case needs to explore a larger search space to find the optimal
solution. On the other hand, Figure \ref{FIG_Exp_Diff_K} shows that a larger $k$ 
leads to a smaller execution time because it becomes easier to obtain
feasible solutions for Distance Pruning to trim the search space. 


\subsection{Comparisons with ILP for SSGQ}
\begin{figure}[tp]
\centering
\subfigure[Varying $t$.] {\
\includegraphics[scale=0.2]{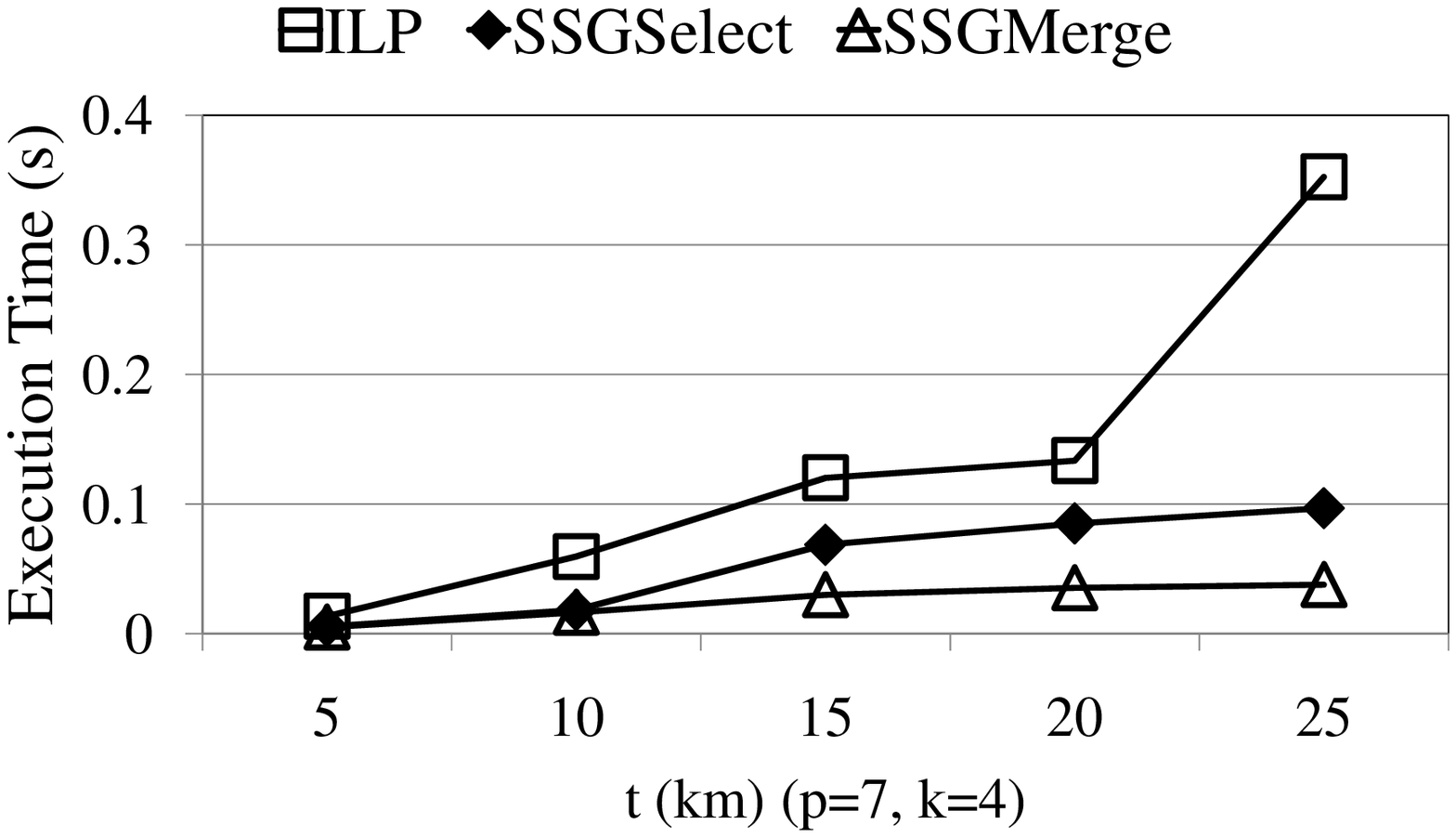} \label{FIG_Exp_ILP_Diff_V} }
\subfigure[Varying $p$.] {\
\includegraphics[scale=0.2]{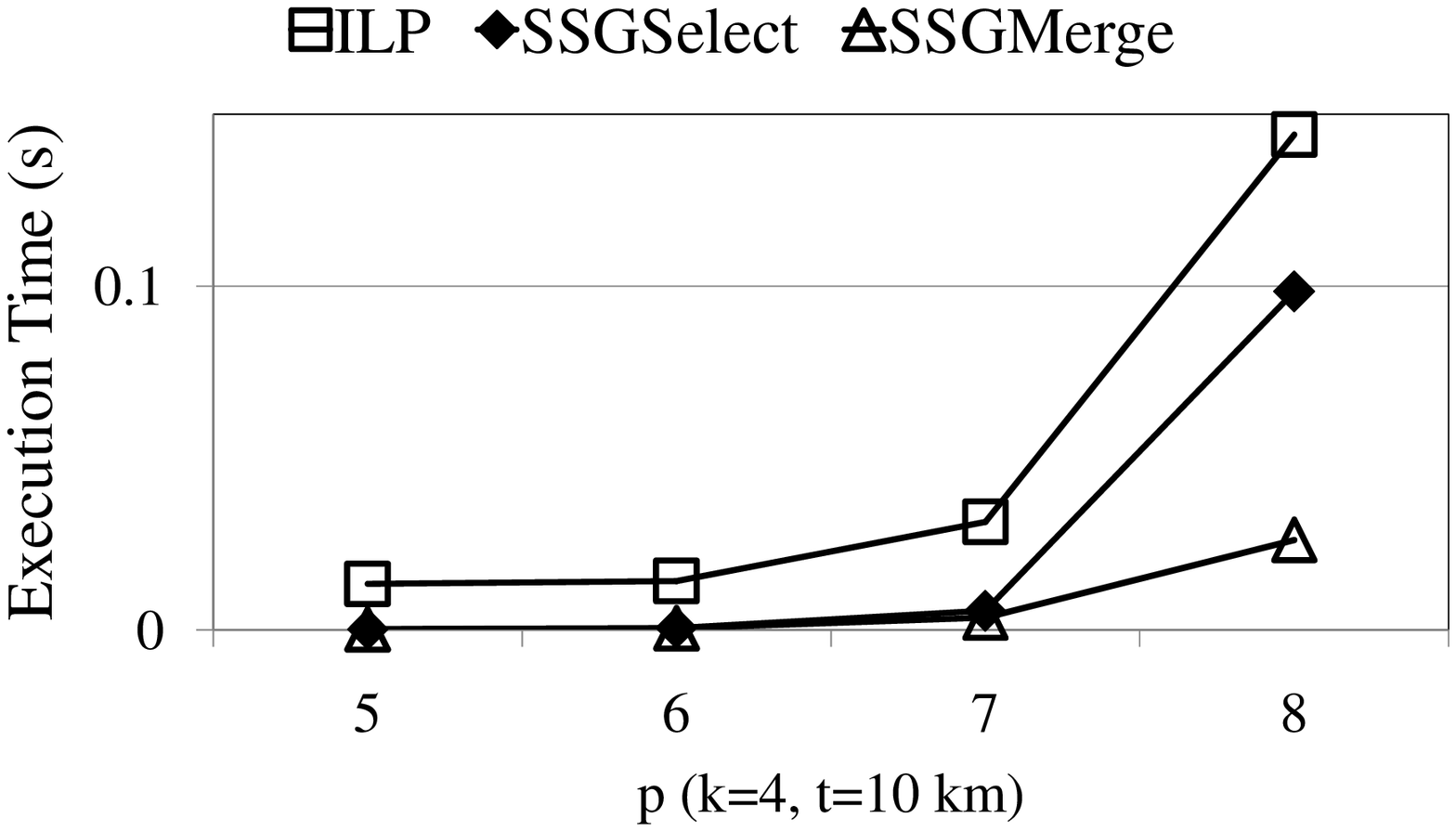} \label{FIG_Exp_ILP_Diff_P}}
\caption{Comparisons of the proposed algorithms with ILP for SSGQ.}
\label{FIG_Exp_ILP}
\end{figure}

Figure \ref{FIG_Exp_ILP} compares the performance of Integer Linear Programming (ILP) with SSGMerge and SSGS. 
In our experiments, a renown general-purposed commercial parallel optimizer, CPLEX [1], is adopted to find the optimal solutions with the proposed ILP formulation, while both SSGMerge and SSGS are single-thread programs. 
ILP here represents a baseline benchmark for examining the efficiency of the proposed algorithms.
Although SSGS and SSGMerge run in single-thread, 
they still outperform ILP.
This is because SSGMerge and SSGS carefully include effective pruning and ordering strategies.
Moreover, SSGMerge exploits the structure of intermediate solutions.
Therefore, SSGMerge achieves superior performance over SSGS and ILP.

\subsection{Performance Evaluation of SSGMerge}
\begin{figure}[tp]
\centering
\subfigure[Solution quality.] {\
\includegraphics[scale=0.2]{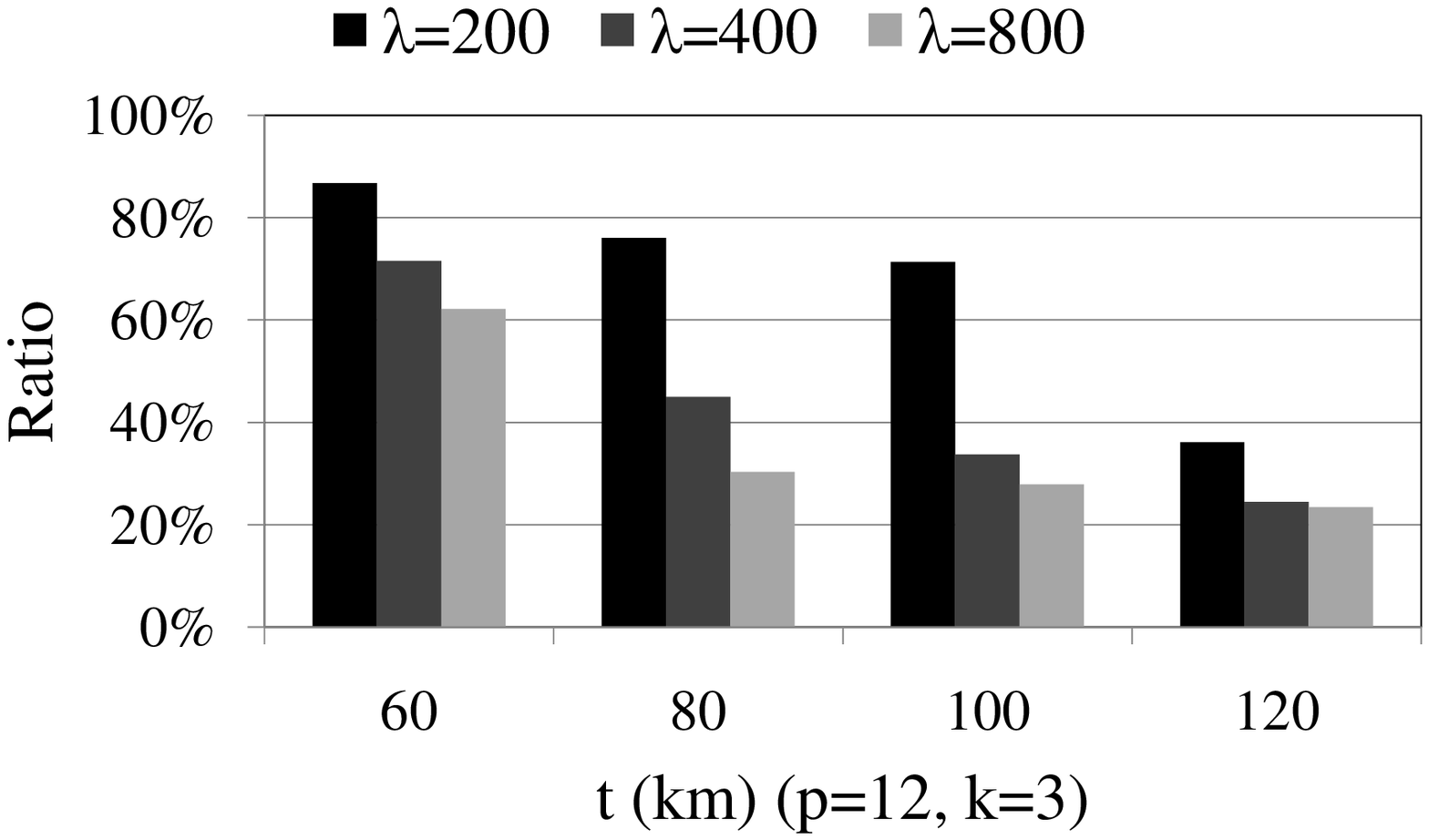} \label{FIG_Exp_Heur_Large_Sol} }
\subfigure[Execution time.] {\
\includegraphics[scale=0.2]{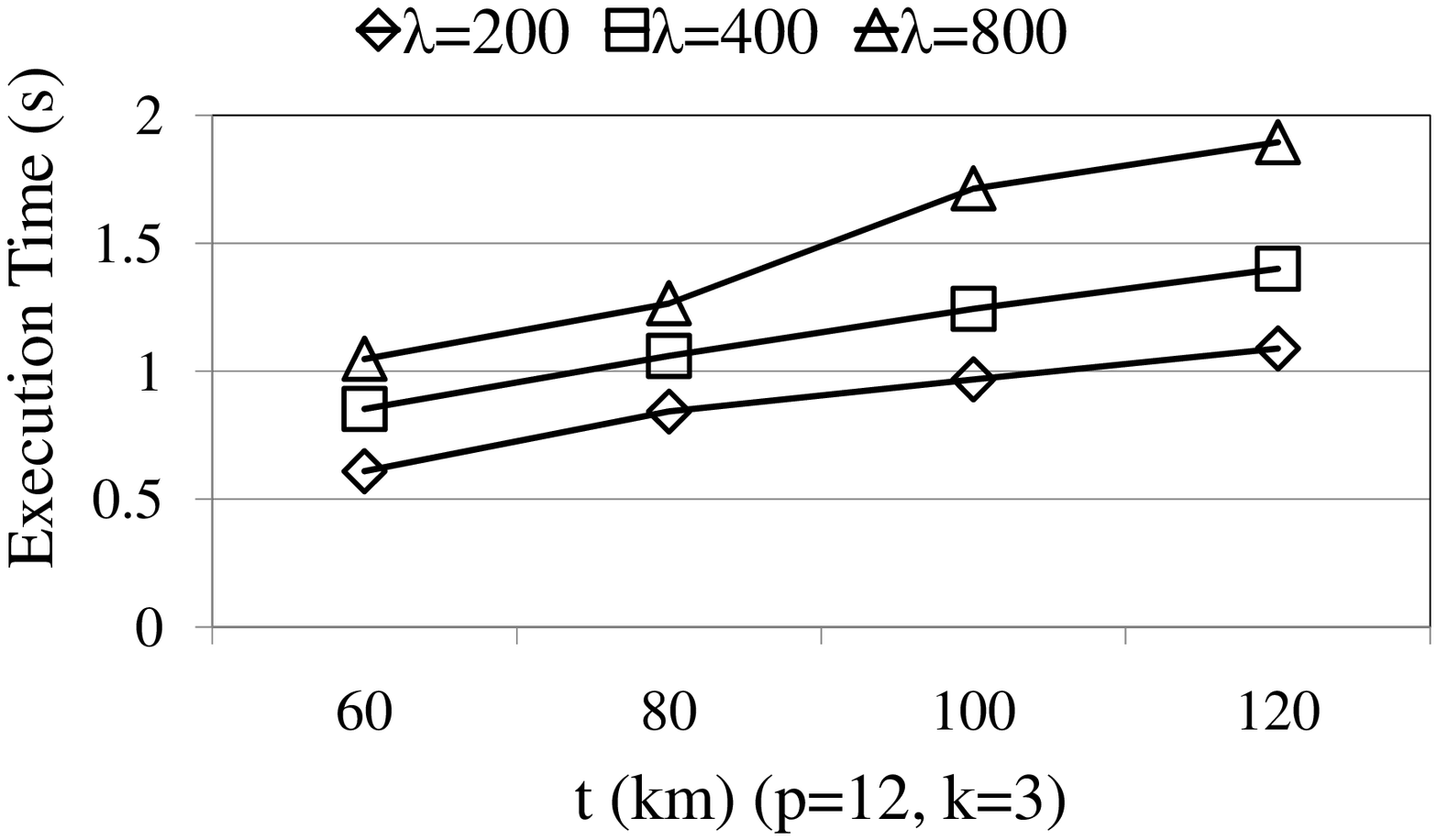} \label{FIG_Exp_Heur_Large_Time}}
\subfigure[Comparisons with OPT.] {\
\includegraphics[scale=0.2]{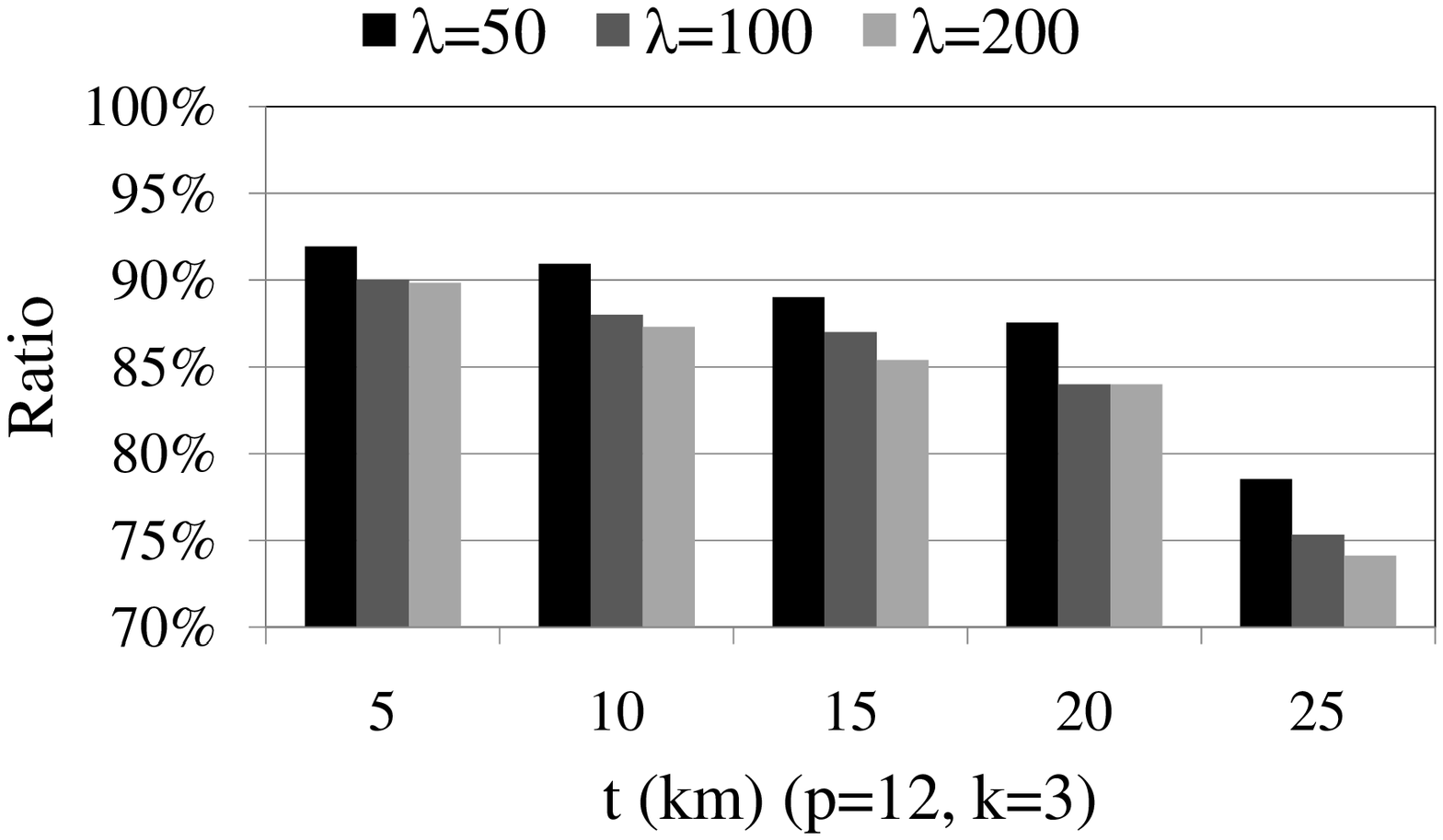} \label{FIG_Exp_Cmp_Heur_Sol}}
\subfigure[Comparisons with SSGS.] {\
\includegraphics[scale=0.2]{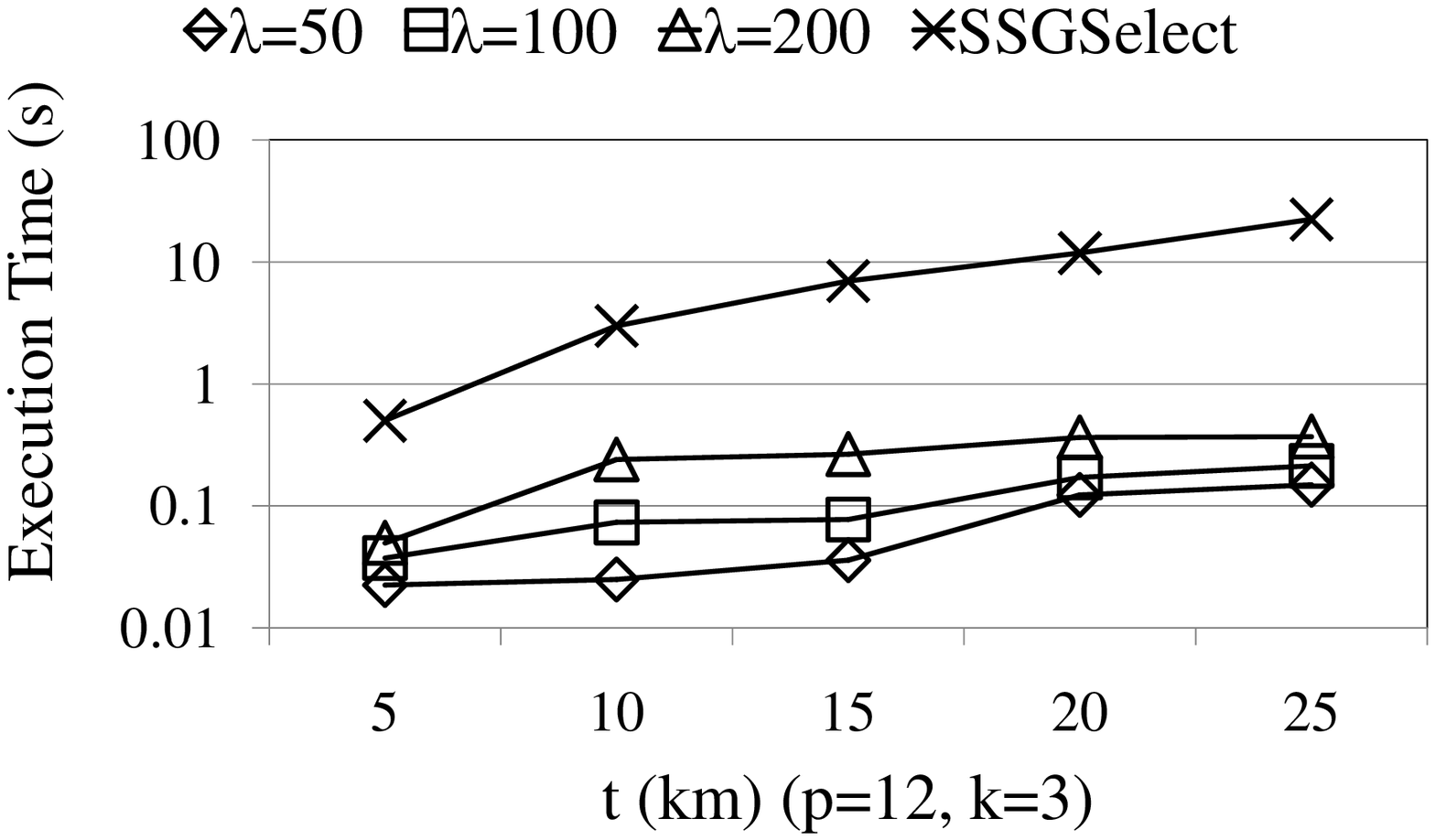} \label{FIG_Exp_Cmp_Heur_Time}}
\subfigure[Varying $w$.] {\
\includegraphics[scale=0.14]{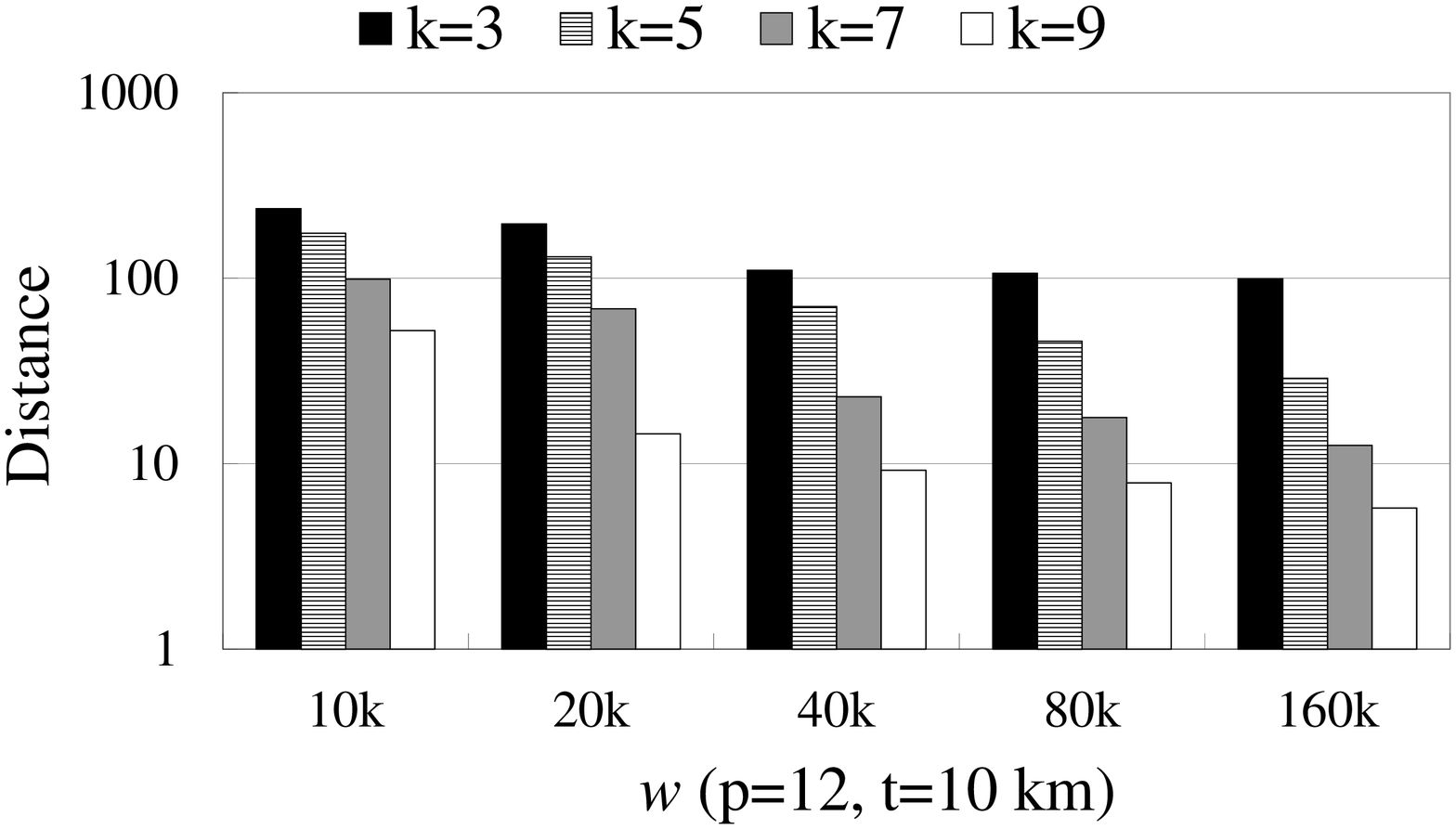} \label{FIG_Exp_Diff_T}}
\subfigure[Comparisons with the straightforward approach.] {\ 
\includegraphics[scale=0.14]{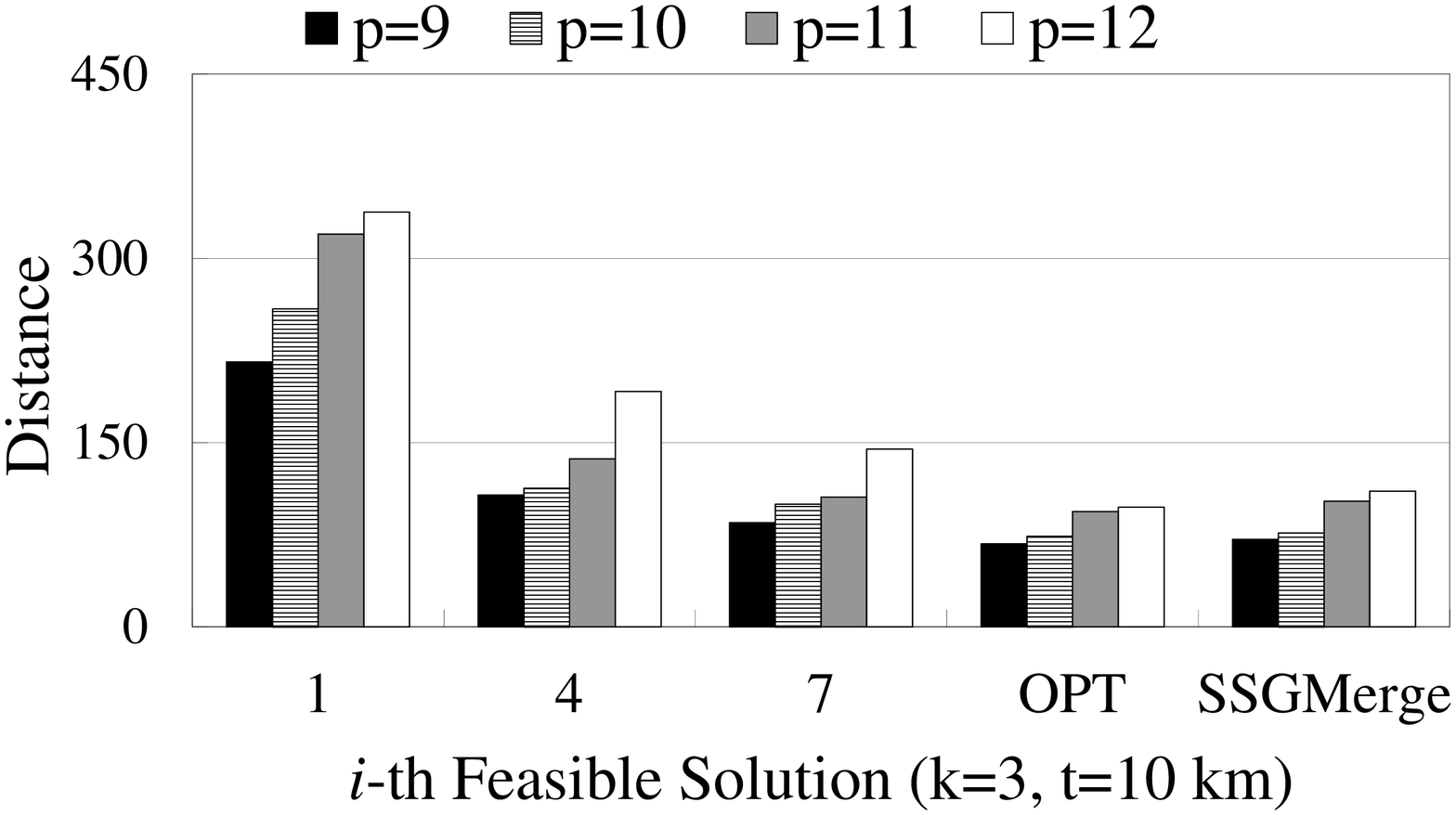} \label{FIG_Exp_Diff_I}}
\caption{Experimental results of SSGMerge.}
\label{FIG_Exp_Heur_Large}
\end{figure}

Figure \ref{FIG_Exp_Heur_Large} compares the solution quality and execution
time of SSGMerge with different settings, and the default value of $w$ is set to 20k.
In Figure \ref{FIG_Exp_Heur_Large_Sol} and \ref{FIG_Exp_Heur_Large_Time},
to compare the solution quality of SSGMerge and SSGS, we first measure the execution time of SSGMerge and then stop SSGS with the same length of time, and denote this solution as \textit{SSGTimeCut}.
Figure \ref{FIG_Exp_Heur_Large_Sol} shows the ratio
between SSGTimeCut and SSGMerge, i.e., the total spatial distance of solutions obtained by SSGMerge divided by the total spatial distance of solutions obtained by SSGTimeCut, with different $\lambda $. When 
$\lambda $ increases, SSGMerge can obtain better solutions since
it will examine more intermediate solutions and is more inclined to extract a better one.
Moreover, when $t$ grows, the improvement from SSGMerge becomes more significant.
This is because it becomes more difficult for SSGS to extract good feasible solutions at early stages, 
but SSGMerge can effectively merge existing intermediate
solutions to obtain good solutions. Figure \ref{FIG_Exp_Heur_Large_Time}
shows the execution time of SSGMerge. 
When $t$ grows, the execution time 
increases slowly. This is because the size of the state sets is fixed and
the extra computation of merging intermediate solutions incurred is thus limited.

Also, we compare the
solutions returned by SSGMerge with the optimal solutions returned by
SSGS in Figures \ref{FIG_Exp_Cmp_Heur_Sol} and \ref{FIG_Exp_Cmp_Heur_Time}.
Figure \ref{FIG_Exp_Cmp_Heur_Sol} displays the ratio of the optimal solution and
the solution obtained by SSGMerge, i.e., the optimal solution values divided by the solution values obtained by SSGMerge. The result manifests that the solutions obtained by SSGMerge are
close to the optimal solution. Figure \ref%
{FIG_Exp_Cmp_Heur_Time} compares the execution time of SSGMerge
and SSGS, where SSGMerge outperforms SSGS and the execution time of SSGMerge 
increases very slowly when $t$ grows.

Figure \ref{FIG_Exp_Diff_T} presents the solution quality of SSGMerge with different $w$.
Here we set $\lambda $ as 200. The total spatial distance decreases when $w$ grows, because with a larger $w$, 
SSGMerge can examine more distinct candidate attendees in different intermediate solutions, 
which enables SSGMerge to construct better solutions.

In Figure \ref{FIG_Exp_Diff_I}, 
we compare the solution quality for different $p$ of SSGMerge 
and the straightforward approach with specified $i$, 
i.e., the $i$-th feasible solution in SSGS. 
We first measure the execution time of SSGS to obtain different feasible solutions and then set 
proper $\lambda$ to stop SSGMerge with the same length of time, while \textit{OPT}
represents the optimal solution returned by SSGS. 
Figure \ref{FIG_Exp_Diff_I} shows that SSGMerge can obtain
solutions which are close to the optimal solution and outperform the straightforward approach. Moreover, 
although the solutions obtained by the straightforward approach converge quickly 
when $i$ increases, SSGMerge can still obtain much better solutions.
This is because SSGMerge each time combines a group with multiple attendees, 
which greatly reduces the time for expanding $S_{I}$ one by one. Moreover, Socio-Spatial Ordering
ensures that the early expanded groups incur small total spatial distances. Therefore, SSGMerge
can produce solutions which are close to the optimal solution.

\subsection{Comparisons of MRGQ with Relevant Works}
To compare with the state-of-the-art methods with different
parameters, we have conducted more experiments by varying $k$, $|Q|$, and $p$%
. The results are presented in Figure \ref{fig_exp_RW_new}. Figure \ref%
{fig_RW_K} compares the \textit{feasibility ratio} (i.e., the ratio of the
obtained solutions satisfying the familiarity constraint) of MAGS and the
other relevant approaches with different $k$. The proposed MAGS achieves
100\% of feasibility ratio with different $k$ because the proposed
Familiarity Pruning strategy effectively trims all the intermediate solutions
that will not satisfy the familiarity constraint at an early stage. As $k$
decreases, the feasibility ratios of gCoFQ and pNN drop because pNN does not
consider the social domain, while gCoFQ is designed to minimize a linear
combination of the social diameter and spatial distance of the group
members. It is worth noting that the feasibility ratio of pNN is low and
unacceptable even with a loose familiarity constraint (i.e., $k=6$) because
the groups returned by pNN are usually disconnected.

Figures \ref{fig_RW_Q} and \ref{fig_RW_Q_Fea} compare the solution
quality and feasibility ratio of different approaches. When the number $|Q|$
of candidate locations increases, all approaches can find the solutions with
smaller total spatial distances due to more potential good choices.
Nevertheless, the proposed MAGS outperforms gCoFQ in terms of solution quality
because gCoFQ does not examine activity locations but only tries to minimize
the maximum spatial distance between each pair of group members. Although
pNN acquires the groups with the minimum spatial distances, Figure \ref%
{fig_RW_Q_Fea} manifests that the feasibility ratio of pNN is very small
because the individuals who are closest to an activity location do not
satisfy the familiarity constraint in most cases. 

Figure \ref{fig_RW_P_Fea} presents the feasibility ratio with
different $p$. Given the familiarity constraint $k=4$, the proposed MAGS
always obtains the solutions satisfying the familiarity constraint (i.e.,
feasibility ratio is 100\%). However, when $p$ increases, the feasibility
ratios of gCoFQ and pNN drop. This is because more group members are
necessary to be connected to each other when $p$ is larger, and it is thus
more difficult for gCoFQ and pNN to follow the familiarity constraint. Moreover,
although gCoFQ is able to obtain the groups with small social diameters, a
few group members are still unacquainted with many other members. Therefore,
the feasibility ratio of gCoFQ is still not sufficient.

\begin{figure}[tp]
\centering
\subfigure[][Feasibility ratio of different $k$.] {\  \includegraphics[scale=0.15] {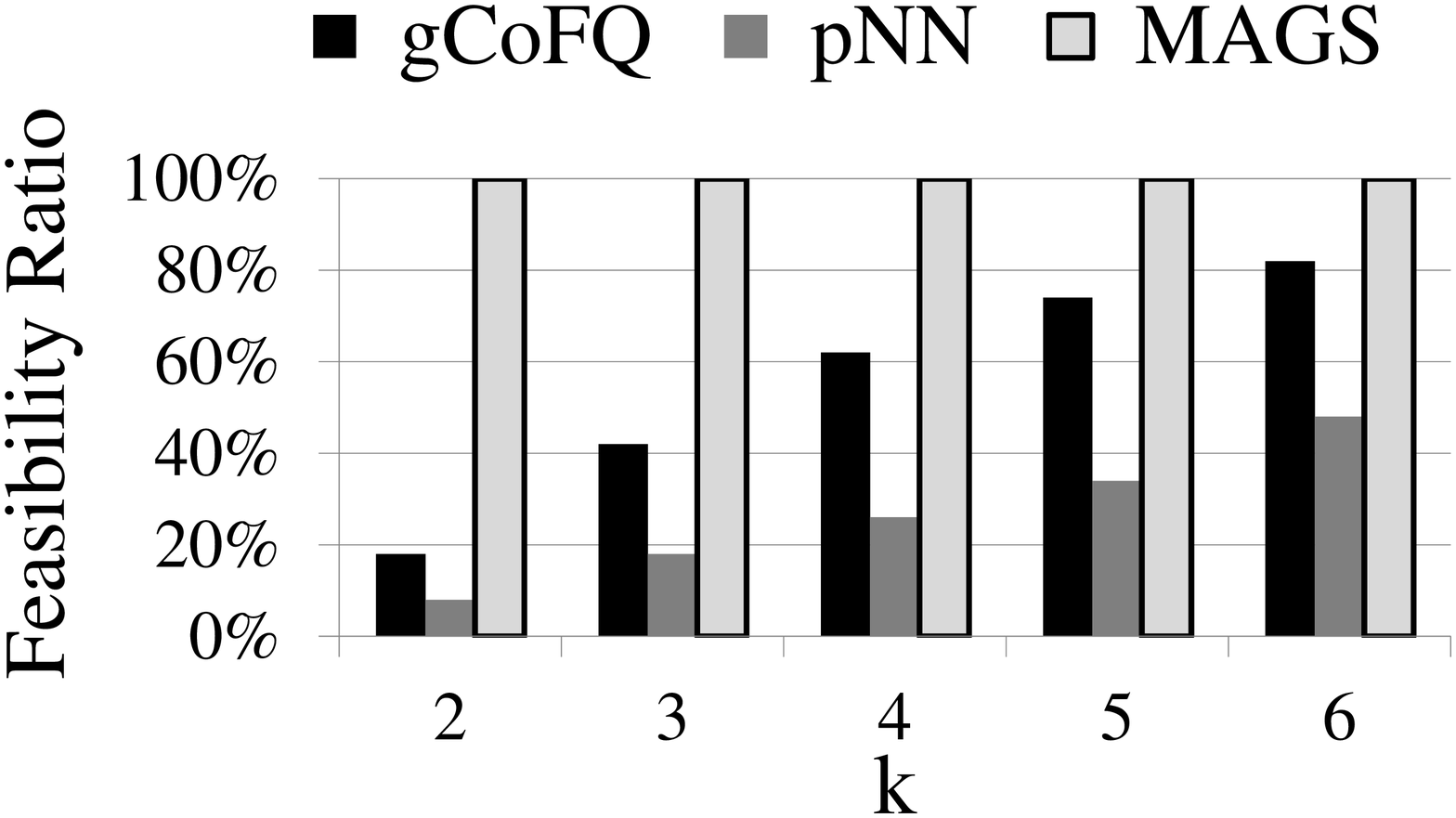}
\label{fig_RW_K} } 
\subfigure[][Solution quality of different $|Q|$.] {\  \includegraphics[scale=0.15] {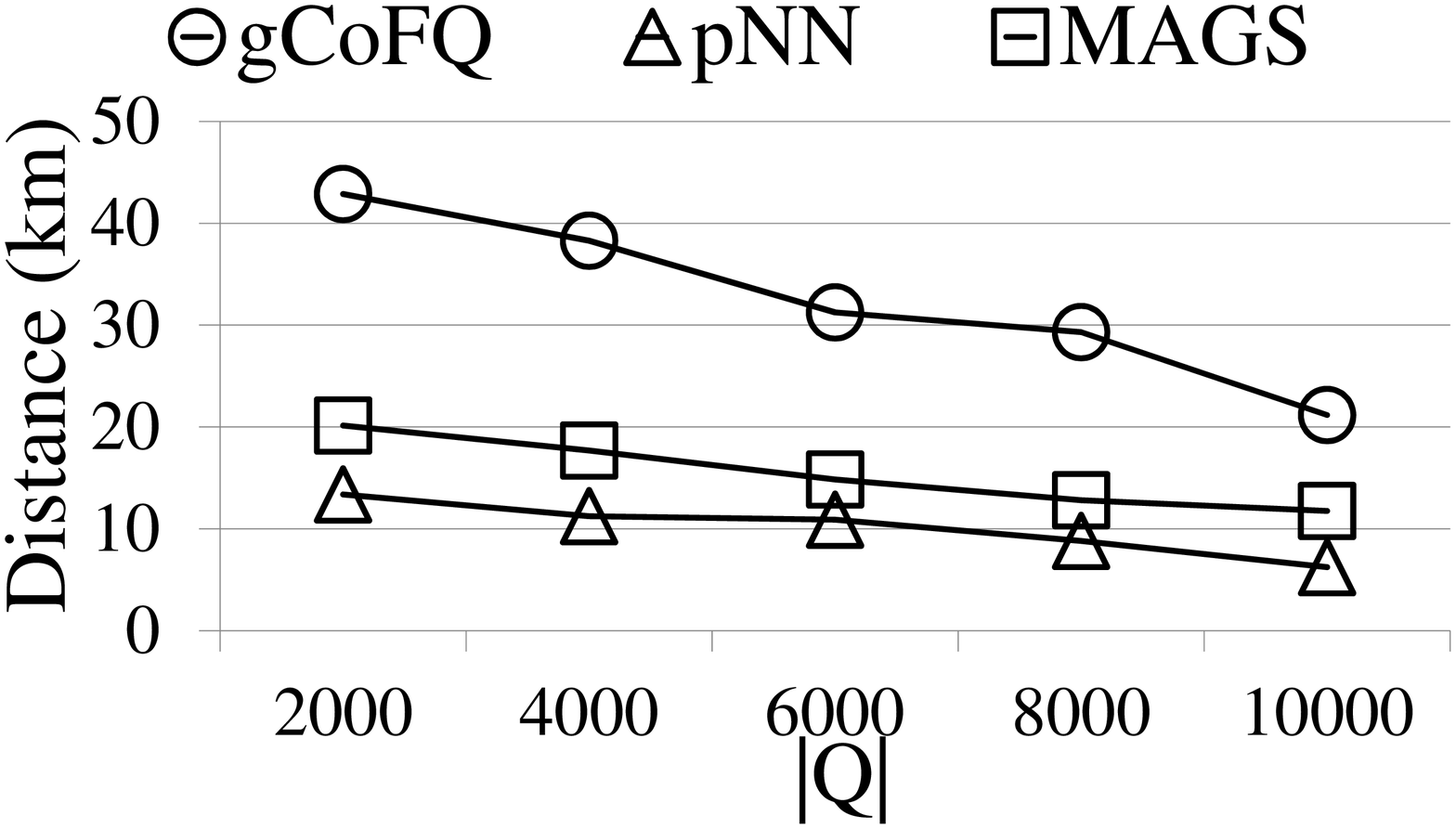}
\label{fig_RW_Q} } 
\subfigure[][Feasibility ratio of different $|Q|$.] {\  \includegraphics[scale=0.15] {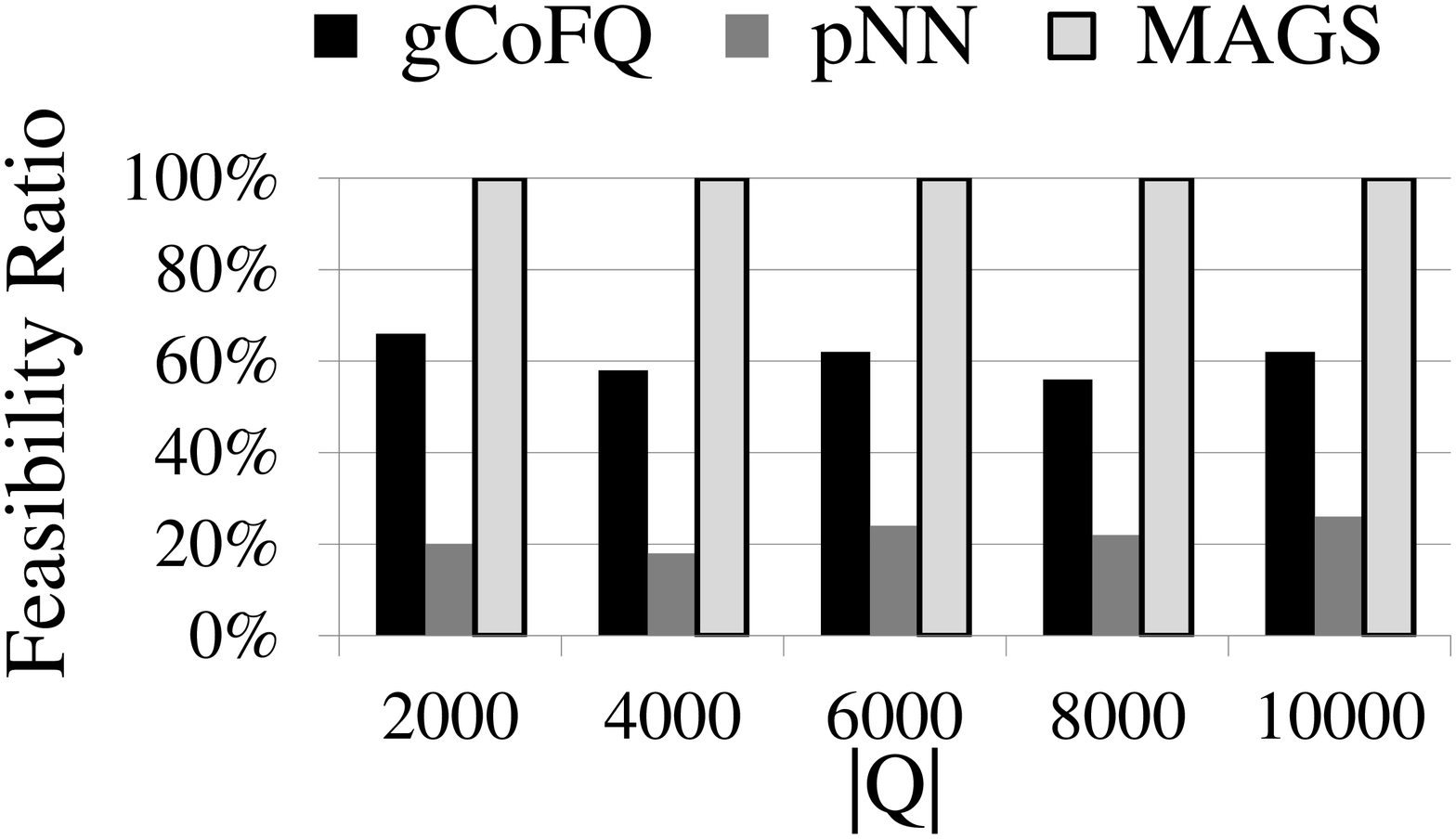}
\label{fig_RW_Q_Fea} } 
\subfigure[][Feasibility ratio of different $p$.] {\  \includegraphics[scale=0.15] {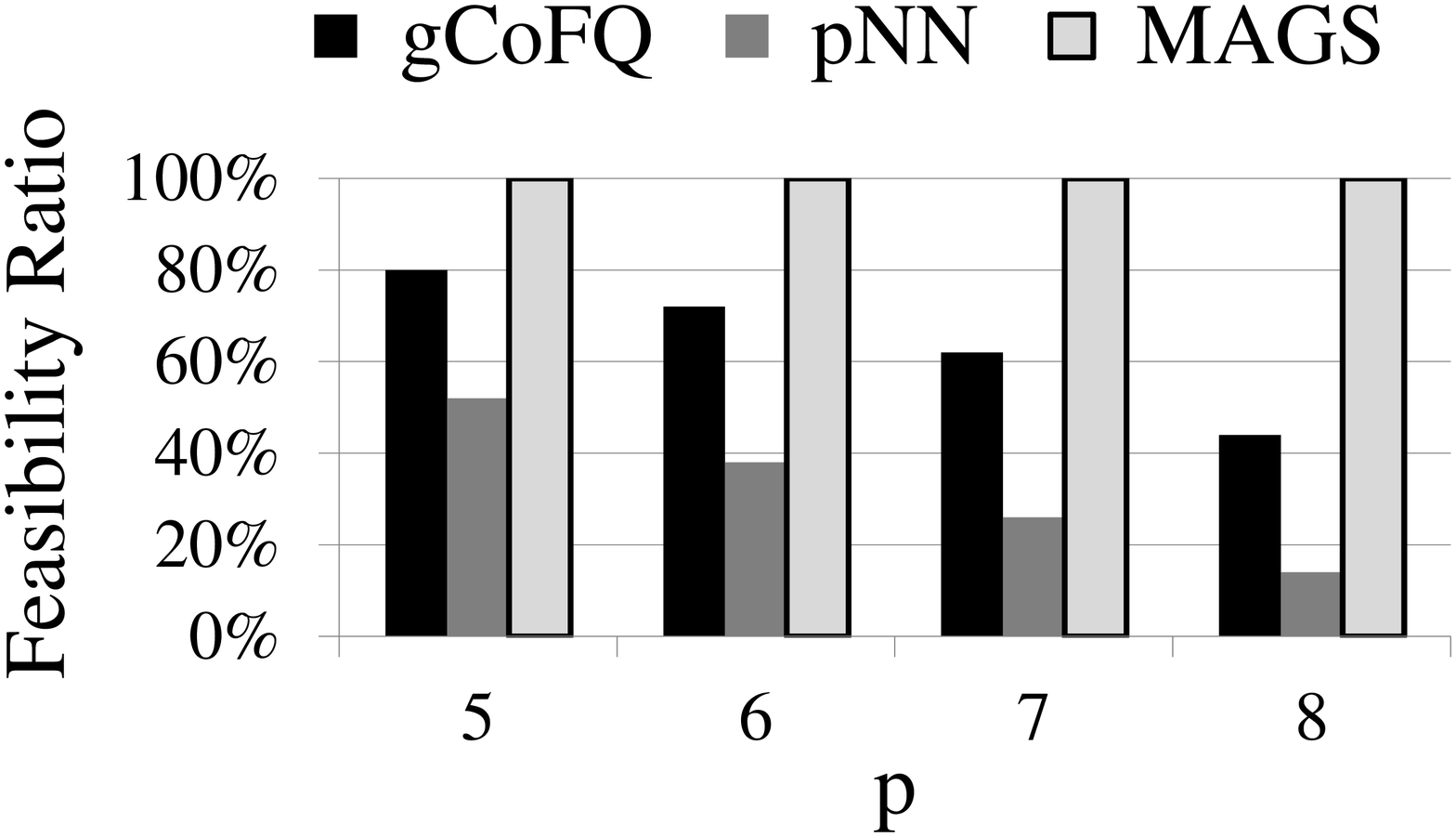}
\label{fig_RW_P_Fea} } 
\caption{Comparisons with relevant works with different parameters.}
\label{fig_exp_RW_new}
\end{figure}

\subsection{Experimental Results for MAGS in Dynamic Environments}
We perform experiments for MRGQ when the locations of the users are
dynamically changing. We generate the user trajectories according to [22] as follows. 
We first extract from $DataSet\_4SQ$ the locations visited by each individual, and each individual
is placed in one of its visited locations in equal probability with maximum speed of
0.75 km/min. The destination of each user is randomly
picked from other visited locations. During the first $1/6$ of the route, users accelerate from zero speed to 
their maximum speeds. During the middle $2/3$ of the route, they travel at their maximum speeds; and in the 
last $1/6$ of the route, they decelerate. When a user reaches her destination, a new destination
is assigned at random to her. 

We compare MRGQs with R*-Tree [24] and TPR-Tree [22] in our experiments by changing the ratio of moving users (i.e., a specific ratio of users are moving, and the rest are static), and measure the number of index updates for MRGQs in the two index structures. In addition, query execution time measures the time to process each query with the two index structures. In our experiments, we issue MRGQs randomly at 30 different time slices within 90 minutes from the start time. The query parameters are set as $t=9km$, $p=8$, $k=4$, $|Q|=10000$, and $t_f=0$. 

\begin{figure}[tp]
\centering
\subfigure[][Number of index updates.] {\  \includegraphics[scale=0.15] {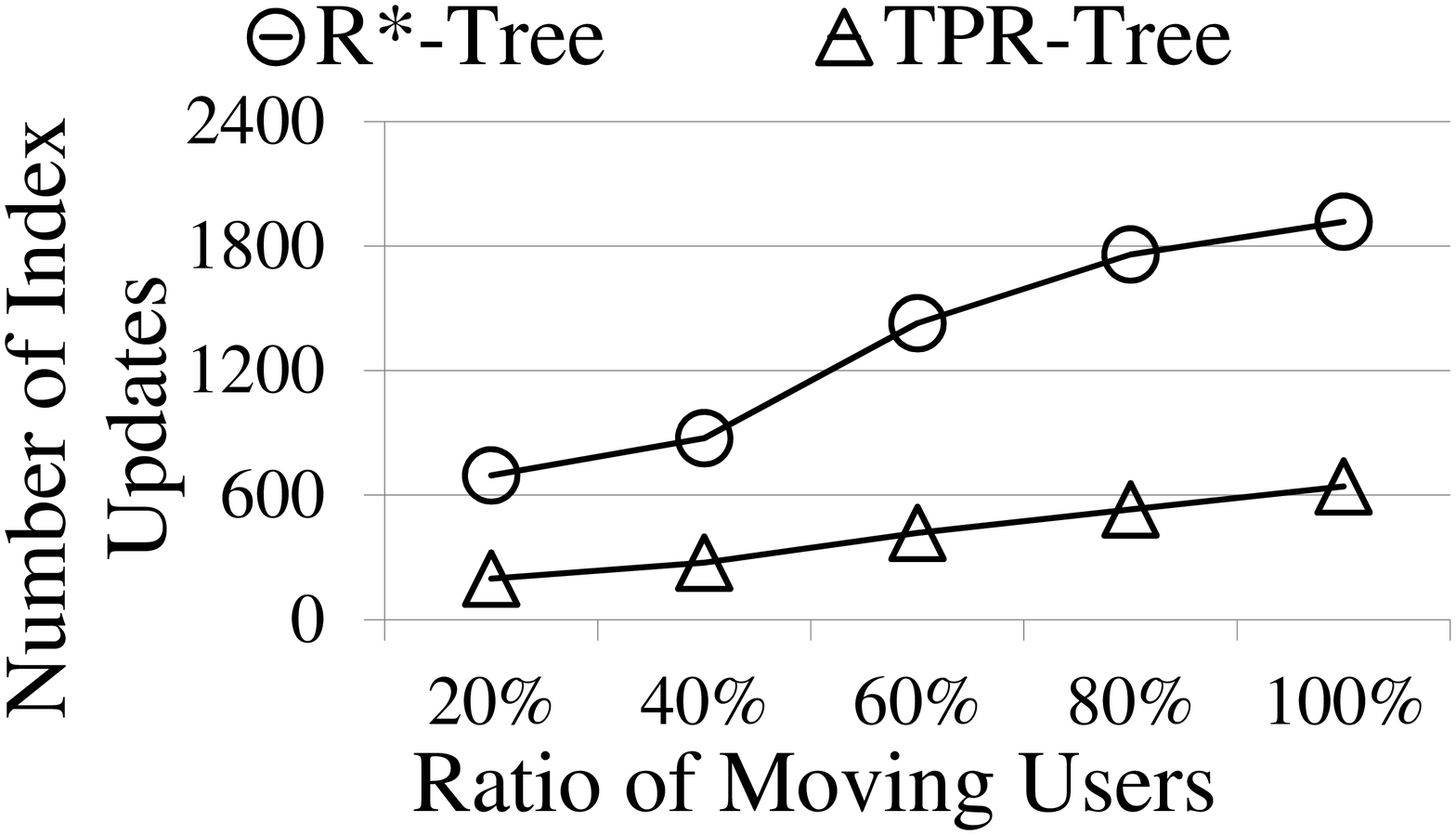}
\label{exp_indexupdate} } 
\subfigure[][Query execution time.] {\  \includegraphics[scale=0.15] {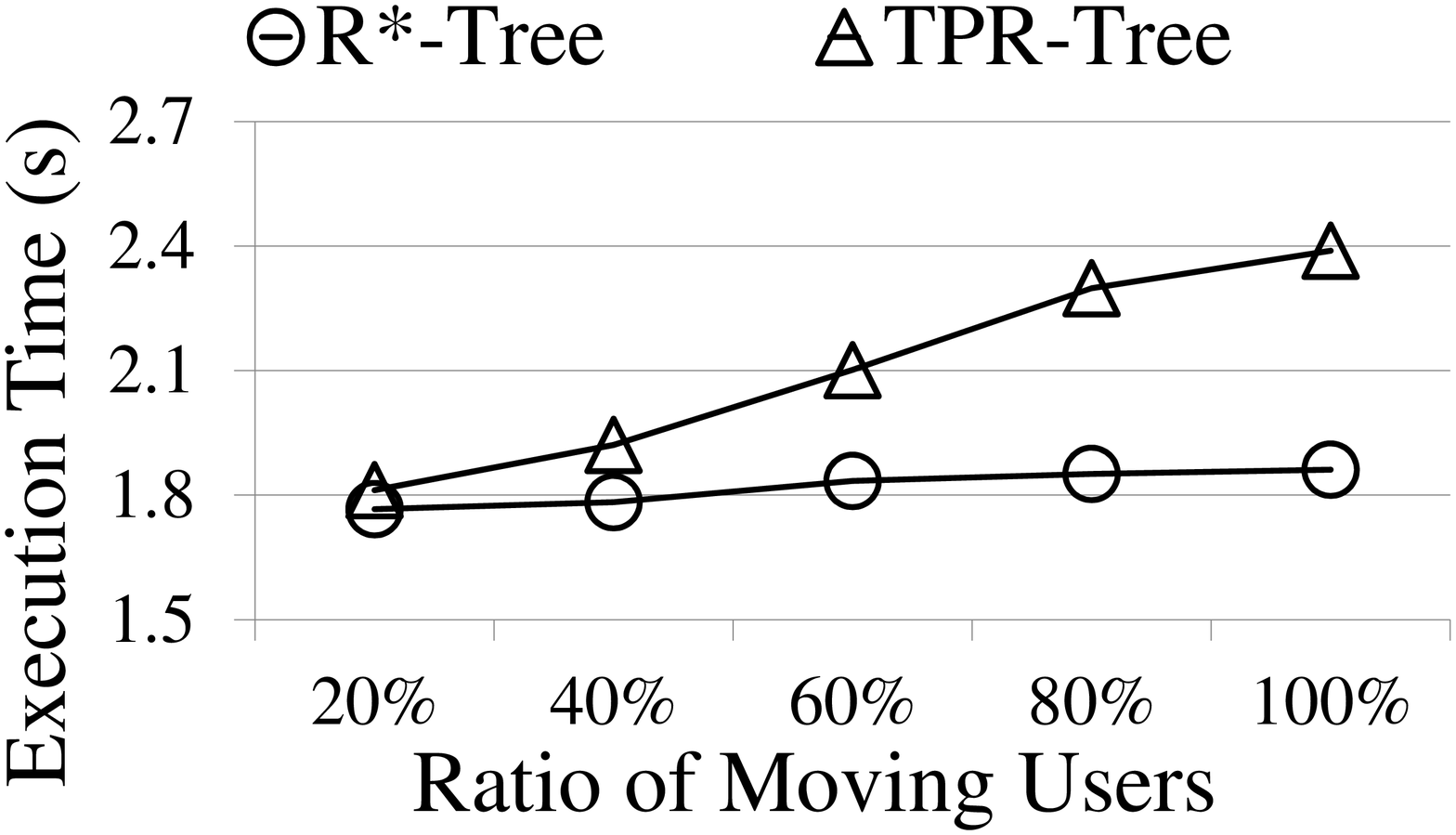}
\label{exp_exectime} }
\caption{Comparisons of R*-Tree and TPR-Tree.}
\label{TPR_exp}
\end{figure}

Figure \ref{exp_indexupdate} compares the number of index updates of R*-Tree and TPR-Tree with different ratios of moving users. As the ratio increases, the number of updates grows rapidly for R*-Tree. This is because when the number of moving users increases, more location updates occur, and R*-Tree updates the users locations, splits MBRs, and rebalances the index structure more frequently. In contrast, the number of index updates of TPR-Tree is small as compared to R*-Tree because TPR-Tree incorporates the velocity vectors for MBRs to avoid the frequent updates. Figure \ref{exp_exectime} compares the query execution time of R*-Tree and TPR-Tree with different ratios of moving users. Execution time of the MRGQs with R*-Tree and TPR-Tree are both small. The execution of MRGQs in R*-Tree is smaller than that in TPR-Tree because R*-Tree spends much time on updating the index structure and maintains smaller MBRs. The smaller MBRs in R*-Tree provide better index capability. In contrast, the execution time of TPR-Tree increases when the ratio of the number of moving users grows because in this case, the MBRs are larger and the index capability deteriorates. 

\subsection{Performance Comparisons for MAGS with Different Pruning Strategies}
We compare SRDO, APDONoOTDP, APDONoITDP, APDONoALDP, and APDO by changing parameters $k$
and $p$. Figure \ref{fig_exp_DP} presents the experimental results with the
default parameters $t=9$km, $p=8$, $k=4$, and $|Q|=10000$. When $k$ or $p$
increases, the computation time of APDO increases slowly. As shown, APDO
outperforms the other approaches, i.e., APDONoALDP, APDONoOTDP, APDONoITDP,
and SRDO because the proposed distance pruning strategies effectively avoid
redundant activity location examinations. When $k$ increases, many
approaches except APDO incur more computation time. This is because when $k$
becomes larger, Familiarity Pruning is less effective due to the loosen
familiarity constraint. Nevertheless, the proposed APDO with all the
distance pruning strategies is able to avoid redundant $S_{I}$ expansions.
Similarly, when $p$ increases, the search space also increases. APDO
with the pruning strategies can thus stop expanding $S_{I}$ earlier. Therefore,
APDO outperforms the other approaches. Finally, APDO explores many possible $%
q_{ref}$ and $v_{c}$ during the expansion of $S_{I}$ to
obtain good solutions much earlier than SRDO does. In other words, the distance
pruning strategies in APDO is very effective, which enables the proposed APDO to
significantly outperform SRDO.

\begin{figure}[tp]
\centering
\subfigure[][Distance pruning with different $k$ ($p=8$).] {\  \includegraphics[scale=0.15] {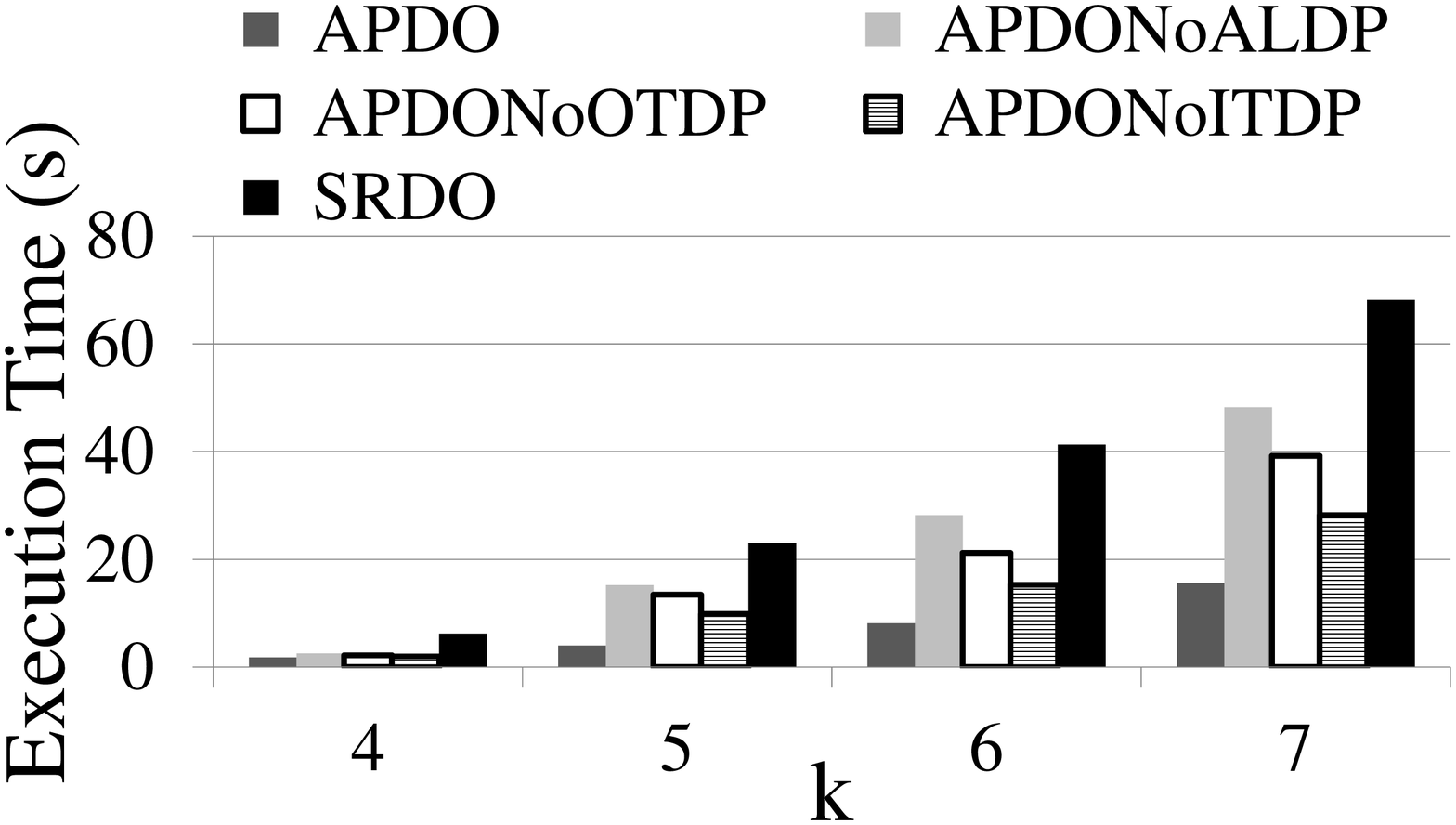}
\label{fig_DP_K} } 
\subfigure[][Distance pruning with different $p$ ($k=4$).] {\  \includegraphics[scale=0.15] {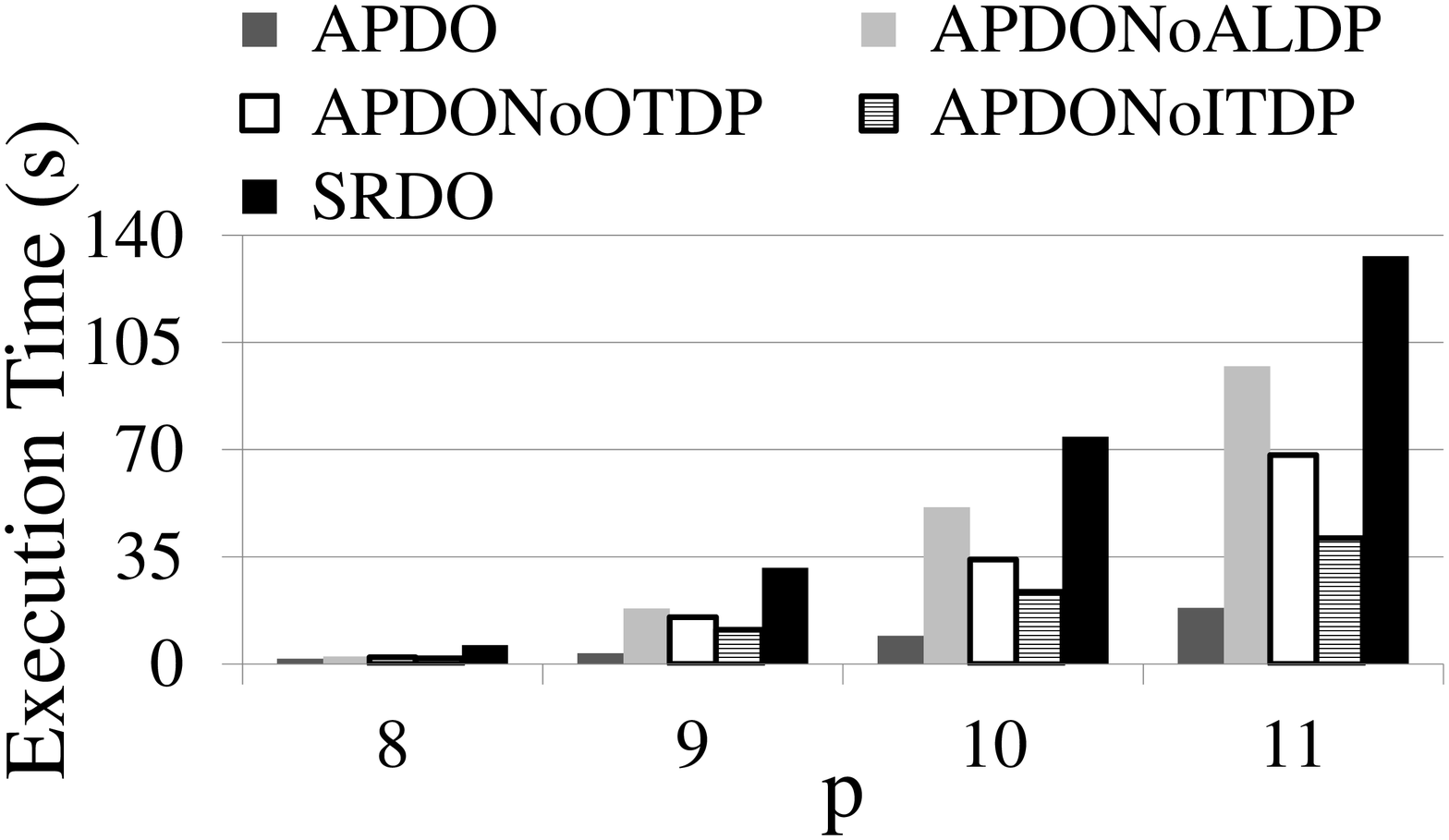}
\label{fig_DP_P} } 
\caption{Effectiveness of distance pruning strategies.}
\label{fig_exp_DP}
\end{figure}

\subsection{Experimental Results for MAGS on \textit{DataSet\_Youtube}}

We evaluate the proposed MAGS on \textit{DataSet\_Youtube}, which
is a social network extracted from Youtube video-sharing website with 1,134,890 individuals. Since 
there is no spatial information for this dataset, we randomly assign the spatial coordinates
to the individuals as her current location. In our experiments, we randomly extract the activity locations from 
\textit{DataSet\_4SQ}. In the following experiments, unless specifically indicated, we set $k=4$, $p=8$, $|Q|=10,000$, and the maximum value of $t$ is 15 km.

\begin{figure}[tbp]
\centering
\subfigure[Different algorithms.] {\
\includegraphics[scale=0.14]{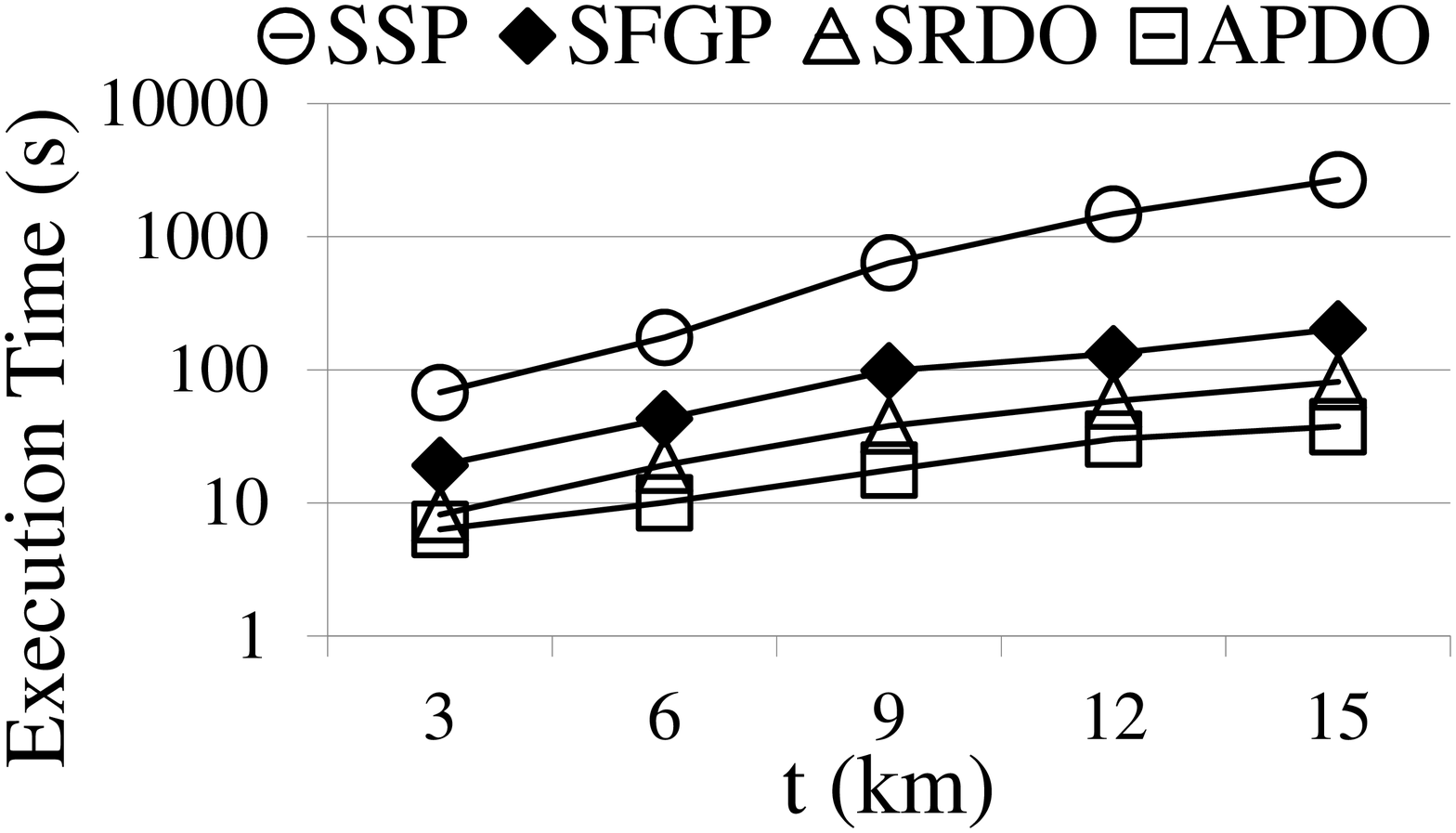} \label{FIG_YT_Algo}}
\subfigure[Varying $|Q|$.] {\
\includegraphics[scale=0.14]{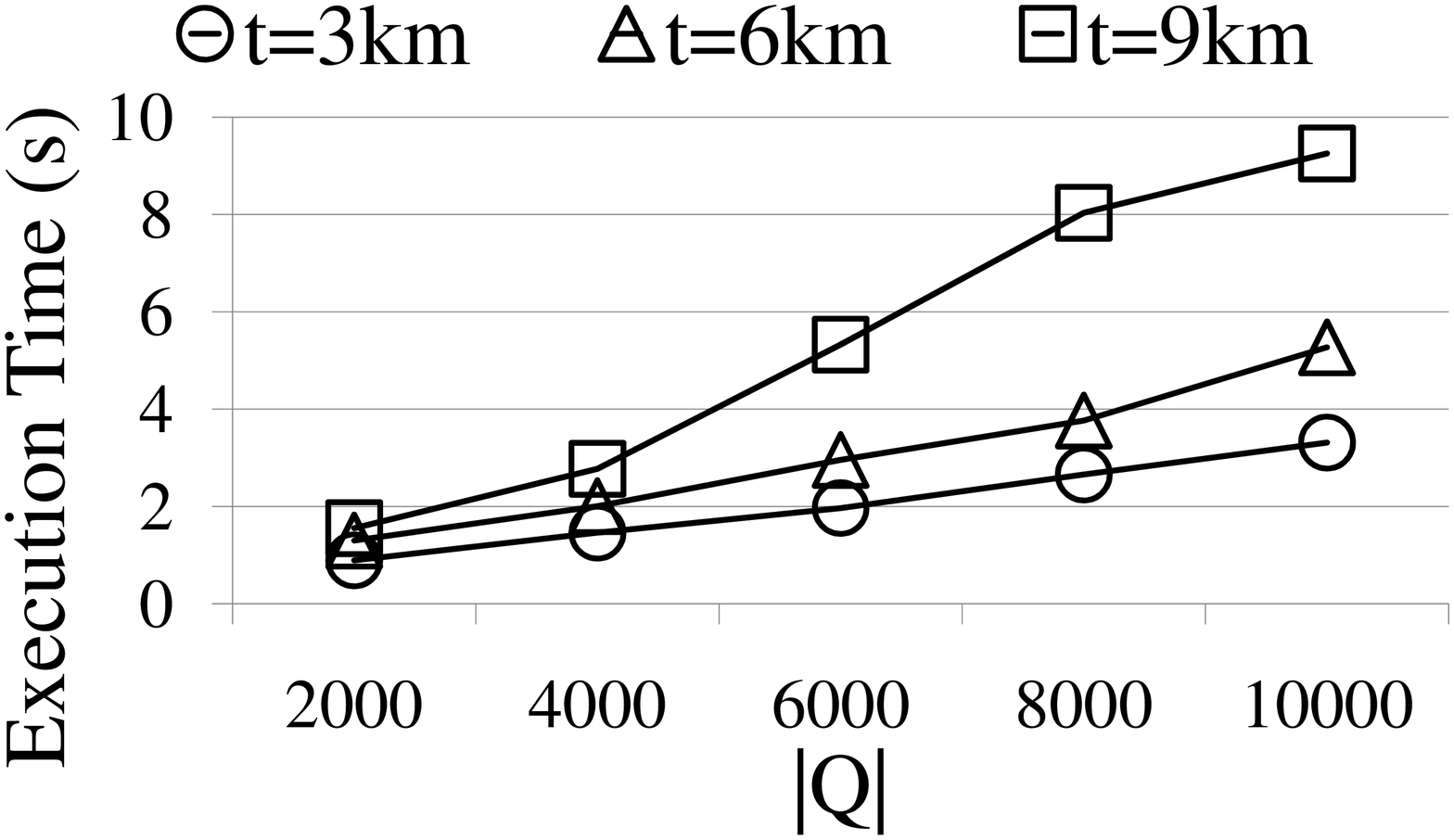} \label{FIG_YT_Q}} 
\subfigure[Varying $k$.] {\
\includegraphics[scale=0.14]{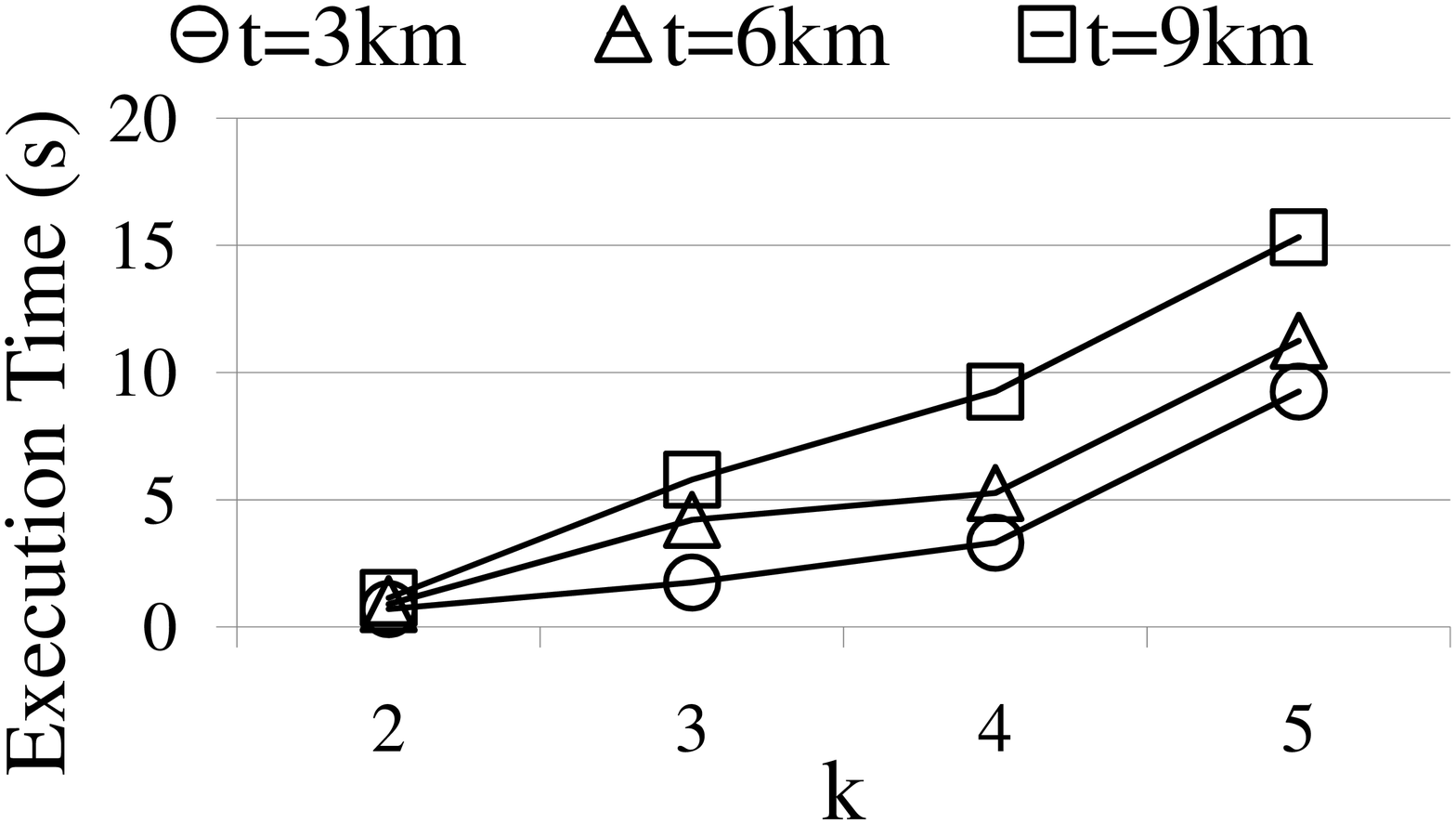} \label{FIG_YT_K}}
\subfigure[Varying $p$.] {\
\includegraphics[scale=0.14]{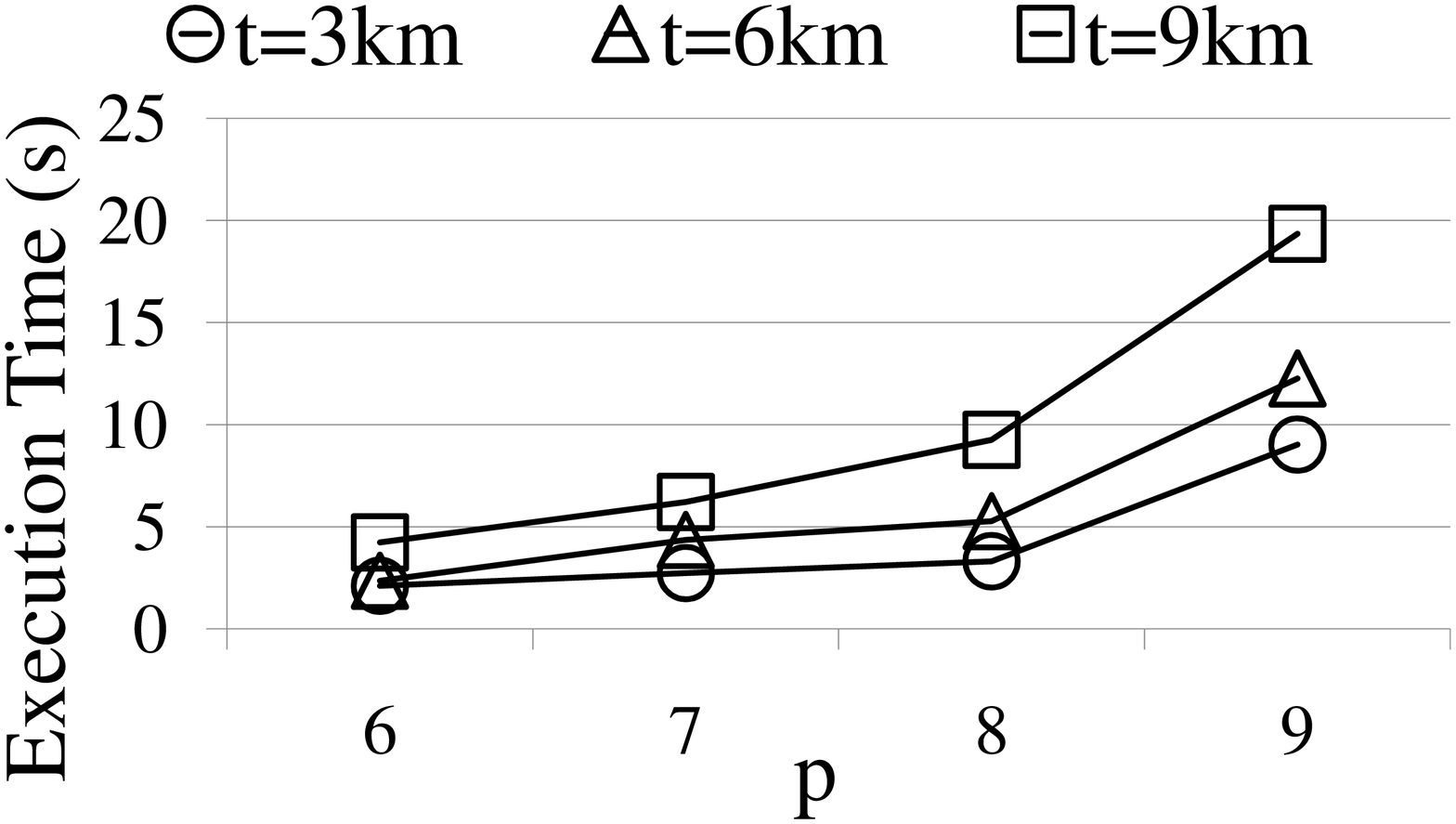} \label{FIG_YT_P}}
\caption{Sensitivity tests of MAGS on \textit{DataSet\_Youtube}.}
\label{FIG_YT}
\end{figure}

Figure \ref{FIG_YT} evaluates the proposed algorithms on \textit{DataSet\_Youtube}.
Figure \ref{FIG_YT_Algo} shows that, although \textit{DataSet\_Youtube} contains about 8 times the number of candidates as \textit{DataSet\_4SQ} does, the computation time of SRDO and APDO both
incur small computation time. This is because \textit{DataSet\_Youtube} is socially sparse (i.e., with an average degree 5.27), which enables the social pruning strategies to effectively remove redundant search space.
Figures \ref{FIG_YT}(b)-(d) compare APDO with different parameter settings. Since \textit{DataSet\_Youtube}
contains fewer spatially dense clusters, the computation time is limited with the distance pruning strategies. 
Moreover, Familiarity Pruning is able to quickly prune unqualified groups due to the large number of
low-degree nodes in the social graph, and the computation time is thus reduced.

\vspace{-10pt}
\section{Conclusion}

\label{Conclu} 
\baselineskip=11pt
To address the need of automatic
activity planning based on the social and spatial relationships of
attendees and activity locations, we define a new query, namely MRGQ, to jointly find the optimal set of attendees and the best activity location among multiple activity locations.
We also study a special case of MRGQ, namely SSGQ, which only features a single activity location. 
We show that processing MRGQ is NP-hard and inapproximable within any factor. We formulate MRGQ with Integer Linear Programming and propose an efficient algorithm, namely MAGS. In addition to indexing the candidate attendees in R-Tree, we propose to index the candidate locations in BallTree, and devise various ordering and pruning strategies based on the social and spatial relationships.
Experimental results show that the computation time required by single threaded MAGS is much smaller than using an IBM CPLEX parallel optimizer.
Moreover, we show that the problem of processing SSGQ is NP-hard and devise an efficient algorithm, namely SSGS, to process SSGQ. Various
strategies, including Distance Ordering, Socio-Spatial Ordering, Distance
Pruning, and Familiarity Pruning are proposed to prune redundant search
space and obtain the optimal solution efficiently.  
We also implement SSGQ in Facebook and conduct user studies for both SSGQ and MRGQ to demonstrate that the proposed algorithms significantly outperform manual coordination in terms of both solution quality and efficiency.

\appendices
\section{Pseudo Codes of the Proposed Algorithms}
The pseudo codes of the proposed algorithms, i.e., SSGS, SFGP, MAGS with Single-Reference Distance Ordering, and MAGS with All-Pair Distance Ordering, are presented as follows.

\begin{algorithm*}
\caption{SSGS algorithm}
\begin{algorithmic}[1]
\Require Graph $G=(V,E)$, location $l_v$ for each $v\in V$, the number of attendees $p$, activity location $q$, familiarity constraint $k$, and spatial radius $t$. The user locations $l_v,\forall v\in V$ are indexed by an R-Tree.
\Ensure Optimal group $F$.
\State $S_I\gets \varnothing$, $curDist\gets 0$, $F\gets \varnothing$, $D\gets \infty$, $\theta \gets k$
\State Employ R-Tree Range Query on $q$ to find the vertices within distance $t$ as $S_R$
\State \Call{FindGroup}{$S_I$,$S_R$,$curDist$}
\If{$D\neq \infty$}
  \State \textbf{output} $F$
\Else
  \State \textbf{output} "No Answer"
\EndIf

\Procedure{FindGroup}{$inS_I$,$inS_R$,$curDist$}
\State $S_I \gets inS_I$, $S_R\gets inS_R$
\While{$|S_I|+|S_R|\geq p$}
  \If{there is any unvisited vertex in $S_R$}
    \State \parbox[t]{\dimexpr\linewidth-\algorithmicindent}{Employ R-Tree distance browsing to extract from $S_R$ the next unvisited vertex $u$ \\ which has the minimum spatial distance to $q$; mark $u$ as visited\strut}
  \ElsIf{$\theta<p-1$}
    \State increase $\theta$ and mark the remaining vertices in $S_R$ as unvisited
  \Else
    \State \textbf{break}
  \EndIf

\If{$u$ satisfies the condition of Socio-Spatial Ordering in Eq. (1)}
  \State $S_I\gets S_I+\{u\}$, $S_R\gets S_R-\{u\}$, $curDist\gets curDist+d_{u,q}$
  \If{Familiarity Pruning in Eq. (2) or Distance Pruning in Eq. (3) is satisfied}
    \State \textbf{break}
  \ElsIf{$|S_I|<p$}
    \State \Call{FindGroup}{$S_I$,$S_R$,$curDist$}
  \Else
    \State $D\gets curDist$, $F\gets S_I$
    \State \textbf{break}
  \EndIf
\ElsIf{$\theta=p-1$}
  \State $S_R\gets S_R-\{u\}$
\EndIf

\EndWhile
\EndProcedure
\end{algorithmic}
\end{algorithm*}

\begin{algorithm*}
\caption{SFGP algorithm}
\begin{algorithmic}[1]
\Require Graph $G=(V,E)$, location $l_v$ for each $v\in V$, the number of attendees $p$, activity locations $Q$, familiarity constraint $k$, and spatial radius $t$. The user locations $l_v,\forall v\in V$ are indexed by an R-Tree.
\Ensure Optimal group $F$ and the corresponding activity location $q^{*}$
\State $S_I\gets \varnothing$, $S_R\gets V$, $Q_I\gets Q$, $F\gets \varnothing$, $D\gets \infty$, $\theta\gets k$
\State \parbox[t]{\dimexpr\linewidth-\algorithmicindent}{find $u\in S_R$ and $q_{ref}\in Q_I$ such that $u$ and $q_{ref}$ are the spatially closest pair\strut}
\State \Call{FindGroupAndLoc\_SFGP}{$S_I$,$S_R$,$Q_I$,$q_{ref}$}
\If{$D\neq \infty$}
  \State \textbf{output} $\langle F,q^{*}\rangle$
\Else
  \State \textbf{output} "No Answer"
\EndIf

\Procedure{FindGroupAndLoc\_SFGP}{$inS_I$,$inS_R$,$inQ_I$, $q_{ref}$}
\State $S_I \gets inS_I$, $S_R\gets inS_R$, $Q_I\gets inQ_I$
\While{$|S_I|+|S_R|\geq p$}
  \If{there is any unvisited vertex in $S_R$}
      \State \parbox[t]{\dimexpr\linewidth-\algorithmicindent}{Employ R-Tree distance browsing to extract from $S_R$ the next unvisited vertex $u$ \\ which has the minimum spatial distance to $q_{ref}$\strut} 
    \State mark $u$ as visited
  \ElsIf{$\theta<p-1$}
    \State increase $\theta$ and mark the remaining vertices in $S_R$ as unvisited
  \Else
    \State \textbf{break}
  \EndIf

\If{$u$ satisfies the condition of Socio-Spatial Ordering in Eq. (1)}
  \State $S_I\gets S_I+\{u\}$, $S_R\gets S_R-\{u\}$
  \If{Familiarity Pruning in Eq. (4) or Eq. (5) is satisfied}
    \State \textbf{break}
  \EndIf
  \ForAll{$q_i\in Q_I$}
    \If{$S_I$ and $q_i$ satisfies Distance Pruning in Eq. (3)}
      \State $Q_I\gets Q_I-\{q_i\}$
    \ElsIf{$d_{v,q_i}>t$}
      \State $Q_I\gets Q_I-\{q_i\}$
    \EndIf
  \EndFor
  \If{$Q_I=\varnothing$}
    \State \textbf{break}
  \EndIf
  \If{$|S_I|<p$} 
    \State \Call{FindGroupAndLoc\_SFGP}{$S_I$,$S_R$,$Q_I$,$q_{ref}$}
  \Else
    \If{$\min_{q_i\in Q_I}\sum_{v\in S_I}d_{v,q_i}<D$}
      \State $q^{*}\gets \arg\min_{q_i\in Q_I}\sum_{v\in S_I}d_{v,q_i}$
      \State $D\gets \sum_{v\in S_I}d_{v,q^{*}}$, $F\gets S_I$
    \EndIf    
    \State \textbf{break}
  \EndIf
\ElsIf{$\theta=p-1$}
  \State $S_R\gets S_R-\{u\}$
\EndIf

\EndWhile
\EndProcedure
\end{algorithmic}
\end{algorithm*}

\begin{algorithm*}
\caption{MAGS algorithm with APDO}
\begin{algorithmic}[1]
\Require Graph $G=(V,E)$, location $l_v$ for each $v\in V$, the number of attendees $p$, activity locations $Q$, familiarity constraint $k$, and spatial radius $t$. The user locations $l_v\in V$ are indexed by an R-Tree with root $M_0$. The activity locations $q_i\in Q$ are indexed by a BallTree with root $B_0$. 
\Ensure Optimal group $F$ and the corresponding activity location $q^{*}$
\State $S_I\gets \varnothing$, $S_R\gets V$, $Q_I\gets Q$, $F\gets \varnothing$, $D\gets \infty$, $\theta \gets k$
\State $\langle F,q^{\ast}\rangle \gets$ \Call{FindGroupAndLoc\_APDO}{$S_I$,$S_R$,$Q_I$,$B_0$}
\If{$D\neq \infty$}
  \State \textbf{output} $\langle F,q^{*}\rangle$
\Else
  \State \textbf{output} "No Answer"
\EndIf

\Procedure{FindGroupAndLoc\_APDO}{$inS_I$,$inS_R$,$inQ_I$,$inB_0$}
\State $S_I \gets inS_I$, $S_R\gets inS_R$, $Q_I\gets inQ_I$, $B_0\gets inB_0$
\While{$|S_I|+|S_R|\geq p$}
  \If{there is any unvisited vertex in $S_R$}
      \State $\langle v_{c}$,$q_{ref}\rangle\gets $ \textit{APDOandDistPruning($M_0$,$B_0$,$S_I$,$S_R$,$Q_I$)}, $u\gets v_{c}$
      \State \parbox[t]{\dimexpr\linewidth-\algorithmicindent}{let $Q_I$ be the set of leaf nodes in BallTree which are neither pruned \\ nor the descendants of a pruned ball\strut}
      \State mark $u$ as visited
  \ElsIf{$\theta<p-1$}
    \State increase $\theta$ and mark the remaining vertices in $S_R$ as unvisited
  \Else
    \State \textbf{break}
  \EndIf

\If{$u$ satisfies the condition of Socio-Spatial Ordering in Eq. (1)}
  \State $S_I\gets S_I+\{u\}$, $S_R\gets S_R-\{u\}$
  \If{Familiarity Pruning in Eq. (4) or Eq. (5) is satisfied}
    \State \textbf{break}
  \EndIf
  \ForAll{$q_i\in Q_I$}
    \If{$S_I$ and $q_i$ satisfies Distance Pruning in Eq. (3)}
      \State mark $q_i$ as pruned, $Q_I\gets Q_I-\{q_i\}$
    \ElsIf{$\exists v\in S_I$ such that $d_{v,q_i}>t$}
      \State mark $q_i$ as pruned, $Q_I\gets Q_I-\{q_i\}$
    \EndIf
  \EndFor
  \If{$Q_I=\varnothing$}
    \State \textbf{break}
  \EndIf
  \If{$|S_I|<p$} 
    \State \Call{FindGroupAndLoc\_APDO}{$S_I$,$S_R$,$Q_I$,$B_0$}
  \Else
    \If{$\min_{q_i\in Q_I}\sum_{v\in S_I}d_{v,q_i}<D$}
      \State $q^{*}\gets \arg\min_{q_i\in Q_I}\sum_{v\in S_I}d_{v,q_i}$
      \State $D\gets \sum_{v\in S_I}d_{v,q^{*}}$, $F\gets S_I$
    \EndIf
    \State \textbf{break}
  \EndIf
\ElsIf{$\theta=p-1$}
  \State $S_R\gets S_R-\{u\}$
\EndIf

\EndWhile
\EndProcedure
\end{algorithmic}
\end{algorithm*}

\begin{algorithm*}
\caption*{APDOandDistPruning}
\begin{algorithmic}[1]
\Procedure{APDOandDistPruning}{$M_0$,$B_0$,$S_I$,$S_R$,$Q_I$}
\State let $U_R$ and $U_B$ be two lists
\State $M\gets M_0$, $B\gets B_0$
\State $U_R\gets M_0$, $U_B\gets B_0$
\State $v_{c}\gets \varnothing$, $q_{ref}\gets \varnothing$
\While{$M$ and $B$ are not both leaf nodes}
  \State \parbox[t]{\dimexpr\linewidth-\algorithmicindent}{pop MBR $M_i$ from $U_R$ and pop ball $B_j$ from $U_B$ such that \\ $\sum_{v\in S_{I}}MINDIST(v,B_{j})+ MINDIST(M_{i},B_{j})$ is minimum, and $B_{j}$ is not pruned\strut}
  \State $M\gets M_i$, $B\gets B_j$
  \If{$M$ and $B_x\gets B$ satisfy OTDP Lemma 2 and prune $B_y$ in $U_B$}
    \State remove $B_y$ from $U_B$, mark $B_y$ as pruned
  \ElsIf{ITDP in Lemma 4 prunes $B_x$ in $U_B$}
    \State remove $B_x$ from $U_B$, mark $B_x$ as pruned
  \ElsIf{ALDP in Lemma 5 prunes $B_x$ in $U_B$}
    \State remove $B_x$ from $U_B$, mark $B_x$ as pruned
  \EndIf
  \If{$M$ is not a leaf node}
    \ForAll{child MBR $M_i$ of $M$}
      \If{$M_i$ contains $l_v$ where $v\in S_R$}
        \State push $M_i$ into $U_R$
      \EndIf
    \EndFor
  \EndIf
  \If{$B$ is not a leaf node}
    \ForAll{child ball $B_j$ of $B$}
        \State push $B_j$ into $U_B$
    \EndFor
  \EndIf
\EndWhile
\State $v_{c}\gets M$, $q_{ref}\gets B$
\State \textbf{return} $\langle v_{c}$,$q_{ref}\rangle$
\EndProcedure
\end{algorithmic}
\end{algorithm*}

\end{document}